\documentclass[aps,prx,twocolumn,superscriptaddress,nofootinbib,longbibliography]{revtex4-2}

\usepackage{graphicx} 
\usepackage{amsmath}
\usepackage{amsfonts}
\usepackage{mathrsfs}
\usepackage{amsthm}
\usepackage{bm}
\usepackage{enumitem}
\usepackage{float}
\makeatletter
\let\newfloat\newfloat@ltx
\makeatother
\usepackage{algorithm}
\usepackage{algpseudocode}
\usepackage{xcolor}
\usepackage{hyperref}
\hypersetup{
    colorlinks=true,
    linkcolor=blue,
    citecolor=blue,
    urlcolor=blue
}
\usepackage{subcaption}
\usepackage{ragged2e}

\providecommand{\calH}{\ensuremath{\mathcal{H}}}

\providecommand{\calM}{\ensuremath{\mathcal{M}}}
\providecommand{\calN}{\ensuremath{\mathcal{N}}}
\providecommand{\calO}{\ensuremath{\mathcal{O}}}
\providecommand{\calP}{\ensuremath{\mathcal{P}}}

\providecommand{\calS}{\ensuremath{\mathcal{S}}}

\providecommand{\sfA}{\ensuremath{\mathsf{A}}}

\providecommand{\sfS}{\ensuremath{\mathsf{S}}}

\providecommand{\bbZ}{\ensuremath{\mathbb{Z}}}

\newcommand{\ket}[1]{\left\lvert #1 \right\rangle}
\newcommand{\bra}[1]{\left\langle #1 \right\rvert}

\newcommand{\ketbra}[2]{\left\lvert #1 \right\rangle \! \left\langle #2 \right\rvert}
\newcommand{\ketbrat}[1]{\left\lvert #1 \right\rangle \! \left\langle #1 \right\rvert}
\newcommand{\norm}[1]{\left\lVert#1\right\rVert}
\newcommand{\abs}[1]{\left\lvert#1\right\rvert}
\newcommand{\set}[1]{\left\{ #1\right\}}
\newcommand{\Sym}{\mathrm{Sym}}
\newcommand{\mel}[3]{\left\langle #1 \middle| #2 \middle| #3 \right\rangle}
\DeclareMathOperator{\Tr}{Tr}
\newcommand{\trace}[2]{\Tr_{#1}\left( #2 \right)}
\DeclareMathOperator*{\E}{{\mathbb{E}}}
\newcommand{\Exp}[2]{\E_{#1}\left[ #2 \right]}

\newcommand{\add}{\mathrm{add}}
\newcommand{\rel}{\mathrm{rel}}

\newtheorem{theorem}{Theorem}
\newtheorem{definition}[theorem]{Definition}

\newtheorem{lemma}[theorem]{Lemma}

\newtheorem{corollary}[theorem]{Corollary}

\providecommand{\M}{\ensuremath{\mathfrak{M}}}

\usepackage{tikz}

\definecolor{outline}{HTML}{566779}
\definecolor{ink}{HTML}{1B2738}
\definecolor{palegreen}{HTML}{DEEEE6}
\definecolor{palepink}{HTML}{F3E5E1}

\newcommand{\tableau}[5]{%
  \begin{tikzpicture}[
    x=8mm, y=8mm,
    draw=outline, text=ink,
    line width=0.6pt, font=\Large
  ]
    \fill[#3] (0,0) rectangle (#4,1);
    \fill[#5] (0,-1) rectangle (1,0);

    \draw (0,-1) -- (1,-1) -- (1,0) -- (4,0)
          -- (4,1) -- (0,1) -- cycle;

    \draw (0,0) -- (1,0);
    \foreach \x in {1,2,3}
      \draw (\x,0) -- (\x,1);

    \foreach \entry [count=\i from 0] in {#1}
      \node at ({\i+0.5},0.5) {\entry};
    \node at (0.5,-0.5) {#2};
  \end{tikzpicture}%
}

\begin{document}
\title{Predicting properties of Scrooge ensembles with high accuracy and low sample complexity}

\author{Yue Wu}
\thanks{These authors contributed equally.}
  \affiliation{Chen-Ning Yang Institute for Advanced Study, Tsinghua University, Beijing, 100084, China}
  \affiliation{Beijing Key Laboratory of Cold Atom Quantum Computation, Tsinghua University, Beijing 100084, China}

  \author{Yuzhi Tong}
  \thanks{These authors contributed equally.}
  \affiliation{Chen-Ning Yang Institute for Advanced Study, Tsinghua University, Beijing, 100084, China}
  \affiliation{Beijing Key Laboratory of Cold Atom Quantum Computation, Tsinghua University, Beijing 100084, China}

  \author{Helen Propson}
  \thanks{These authors contributed equally.}
  \affiliation{Institute of Physics, \'Ecole Polytechnique F\'ed\'erale de Lausanne (EPFL), CH-1015 Lausanne, Switzerland}

  \author{Yihui Quek}
  \thanks{
  \href{yihuiquek3.14@gmail.com}{{yihuiquek3.14@gmail.com}}}
  \affiliation{School of Computer and Communication Sciences, \'Ecole Polytechnique F\'ed\'erale de Lausanne (EPFL), CH-1015 Lausanne, Switzerland}
  \affiliation{Institute of Physics, \'Ecole Polytechnique F\'ed\'erale de Lausanne (EPFL), CH-1015 Lausanne, Switzerland}

  \author{Liang Mao}
  \thanks{
  \href{liangmao.physics@gmail.com}{{liangmao.physics@gmail.com}}}
  \affiliation{Chen-Ning Yang Institute for Advanced Study, Tsinghua University, Beijing, 100084, China}
  
\date{\today}
\begin{abstract}
   The Scrooge ensemble associated with a density matrix $\rho$ is the maximally random ensemble whose average density matrix is $\rho$. 
   It has attracted growing interest for its applications to quantum many-body dynamics and quantum information. 
   The intrinsic statistical properties of the Scrooge ensembles are encoded in their higher-order moments, yet they are nontrivial to analyze and prepare for they feature a non-polynomial Haar integrand, blocking direct use of Haar integration machinery. Existing physical mechanisms to sample multiple copies of states from the Scrooge ensembles, meanwhile, typically require costly postselection, rendering them inefficient for large system size. 
In this work, we develop efficient and accurate methods for analyzing and implementing Scrooge moments. We first introduce a polynomial approximation to the Scrooge $k$-th moments whose error is tunable down to a threshold that improves from inverse polynomial to exponential in $1/||\rho||_\infty$ over previous results. Building on our approximation, we then give the first efficient quantum algorithm that estimates observable expectation values using $\widetilde{\mathcal{O}}(k^3/\epsilon^5)$ copies of $\rho$ for any target error $\epsilon$ above this threshold.
   We further provide a construction of a block encoding of the Scrooge $k$-th moment when the preparation circuit of $\rho$ is known. Of independent interest could be a key subroutine in our algorithms, which efficiently implements projection onto the $k$-fold symmetric subspace for $\rho^{\otimes k}$ using $\mathcal{O}(k^3/\epsilon^2)$ samples of $\rho$, avoiding the naive $k!$ postselection cost of such a projection. These results provide new tools for both the theoretical analysis and practical prediction of properties of Scrooge ensembles, with potential applications to quantum many-body dynamics and quantum information processing.
\end{abstract}
\maketitle

\section{Introduction}

Haar-random states are a canonical model of quantum randomness in many-body physics and quantum information. They model states generated by chaotic many-body dynamics~\cite{deutsch1991quantum,fisher2023random,srednicki1994chaos,rigol2008thermalization,nahum2017quantum,roberts2017chaos,cotler2017chaos,nakata2017efficient,DAlessio2016QuantumChaosEigenstateThermalization,cotler_emergent_2023,shaw2025experimental,ho2022exact} and are conjectured to capture aspects of black holes and quantum gravity~\cite{page1993average, page1994black,hayden2007black,sekino2008fast}. They also underpin quantum device benchmarking~\cite{emerson2005scalable, knill2008randomized, dankert2009exact,mark2023benchmarking}, quantum supremacy demonstrations~\cite{Arute2019QuantumSupremacy,morvan2024phase,google2025observation}, quantum learning~\cite{Huang2020PredictingManyPropertiesQuantum, elben2023randomized, huang2025certifying}, and quantum cryptography~\cite{ ji2018pseudorandom,ananth2022cryptography,kretschmer2023quantum}. Yet Haar randomness is completely structureless, limiting its ability to describe quantum systems subject to nontrivial physical constraints.

Scrooge ensembles provide a structured, physically motivated generalization of Haar random states: the Scrooge ensemble of $\rho$ is the "maximally random" ensemble of pure states whose density matrix is $\rho$~\cite{JRW94}. Specifically,
\begin{align}\label{eq:main:scrooge}
    \mathrm{Sc}(\rho):=\set{\ket{\psi}=\frac{\sqrt{\rho}\ket{\phi}}{\norm{\sqrt{\rho}\ket{\phi}}},\nu(\psi)=D\bra{\phi}\rho\ket{\phi}\nu_\mathrm{H}(\phi)},
\end{align}
where $\nu(\psi)$ is the probability measure, $\nu_\mathrm{H}$ is the Haar measure, and $D$ is the Hilbert-space dimension. Among all pure-state ensembles with first moment $\rho$, the Scrooge ensemble minimizes accessible information~\cite{JRW94}; the same distribution appears in quantum statistical mechanics as the Gaussian adjusted projected measure~\cite{goldstein2006distribution,goldstein2016universal}. More recently, Scrooge ensembles have emerged as a universal description of quantum deep thermalization and Hilbert-space ergodicity~\cite{mark2024maximum,mcginley2025scroogeensemblemanybodyquantum,chang2025deep,mok2026nature, mark2026deep, yan2026characterizing,mark2026deep,liu2026observation,manna2025projected,chang2025deep,varikuti2024unraveling,liu2024deep,bejan2025matchgate,lucas2023generalized,mark2026deep,liu2026emergence,liu2026conditional,wu2026exact,pawlik2026restricted}, with applications to shadow tomography~\cite{tran2023measuring, innocenti2023shadow, mcginley2023shadow}, quantum simulator benchmarking~\cite{mark2023benchmarking, Choi2023PreparingRandom, mcginley2025scroogeensemblemanybodyquantum}, and approximate quantum error correction~\cite{Venkatesh2026ErrorCorrection}.

Together, these developments highlight the growing importance of Scrooge ensembles in quantum many-body physics and quantum information science. 
This motivates the basic computational task of accessing Scrooge ensembles, namely, estimating the properties of their higher moment operators, quantities that encode the intrinsic statistical properties.
Yet, an algorithmic way to estimate such properties remains elusive. 
Known mechanisms for the emergence of Scrooge ensembles in many-body systems fall broadly into two classes, neither of which can be readily exploited for the sampling task: projected ensembles obtained by measuring part of an entangled quantum state~\cite{chang2025deep, mark2024maximum, mok2026nature, mark2026deep}, and temporal ensembles formed by sampling states along a system's time evolution~\cite{mark2024maximum, mcginley2025scroogeensemblemanybodyquantum,mok2026nature}. For projected ensembles, preparing multiple copies of the same sampled state generally incurs a substantial postselection cost that recently proved to be inevitable~\cite{du2026no}. 
Temporal ensembles avoid this postselection overhead, but accurately realizing the desired distribution can instead require quantum evolution over extremely long timescales, set by the inverse minimum energy gap. 

In this work, we systematically address this gap by presenting the first efficient, high-accuracy quantum algorithm for estimating observable expectation values of Scrooge $k$-th moments using copies of the unknown state $\rho$. 
The starting point for our investigation is a better approximation of the Scrooge $k$-th moment operator. 
While Haar moment operators have a pristine analytical formulation, their Scrooge counterparts take a less familiar form:
\begin{align}
    \label{eq-main:scrooge-moment0}
    \chi_{\mathrm{Sc}}^{(k)}(\rho)=\Exp{\psi\sim\mathrm{Haar}}{\frac{D\left(\sqrt{\rho}\ketbrat{\psi}\sqrt{\rho}\right)^{\otimes k}}{\bra{\psi}\rho\ket{\psi}^{k-1}}}.
\end{align}
The denominator of Eq. \ref{eq-main:scrooge-moment0} directly blocks the use of Haar-averaging machinery to compute $k$-fold averages over the Scrooge ensemble. 
This  motivates us to come up with good approximations of Eq. \ref{eq-main:scrooge-moment0} that could be more amenable to analytical treatment. Ref. \cite{mcginley2025scroogeensemblemanybodyquantum} was the first to tackle this question for general $k$, putting forth an approximation that amounted to setting the denominator of Eq. \ref{eq-main:scrooge-moment0} to 1. This resulted in relative error scaling roughly as $\norm{\rho}_{\infty}^{1/2}$, where $\norm{\rho}_{\infty}$ is the largest eigenvalue of $\rho$. This approximation therefore becomes accurate only when $\rho$ is sufficiently mixed.

Our first contribution is a polynomial approximation to the Scrooge $k$-th moment operator that achieves an exponential improvement in accuracy.
Our method replaces the inverse denominator of Eq.~\ref{eq-main:scrooge-moment0} by a tunable polynomial approximation. 
By increasing the expansion order, the approximation rapidly becomes more accurate, ultimately reaching an error that can be exponentially small in $1/\norm{\rho}_{\infty}$. This substantially extends the regime of states for which Scrooge moments can be accurately approximated. We complement these guarantees with numerical results for physically motivated Gibbs states, where the observed errors are often considerably smaller than our theoretical upper bounds.

Building on this approximation method, we construct an efficient, Schur transform-based quantum algorithm~\cite{harrow2005applications, burchardt2025high, bacon2006efficient} for estimating observable expectation values of the Scrooge $k$-th moments. 
The number of samples of $\rho$ required is $\widetilde{\calO}(k^3/\epsilon^5)$ for any error $\epsilon$ above the prescribed threshold. A key ingredient of our algorithm is an efficient procedure for projecting $\rho^{\otimes k}$ onto the fully symmetric subspace, requiring only $\mathcal{O}(k^3/\epsilon^2)$ copies of $\rho$. 
This avoids the postselection cost of a naive projection-and-postselection method, which incurs a tiny success probability about $1/k!$.
Furthermore,
when a preparation circuit for a purification of $\rho$ is additionally available, we construct a block encoding of the Scrooge $k$-th moment operator, bringing Scrooge moments into the standard block-encoding framework for quantum algorithms. Together, these results provide efficient tools for accessing Scrooge moments under two natural input models, with potential applications to quantum many-body dynamics and quantum information processing.


\textbf{Organization of this paper ---} In Section~\ref{sec:approximation}, we analyze the Scrooge $k$-th moment in detail and develop a polynomial approximation method to address the denominator in its defining expression. Using the concentration properties of Haar-random states, we derive bounds on both the additive and relative approximation errors. We also numerically evaluate the method's performance on Gibbs states of physical interest. A crucial step in translating this method into a quantum algorithm is to implement projection onto the $k$-fold symmetric subspace for $\rho^{\otimes k}$. In Section~\ref{sec:sym-projection}, we introduce Schur–Weyl duality and use it to develop an efficient implementation of this projection. Section~\ref{sec:expectation-value} combines these results to construct an algorithm for estimating observable expectation values and presents its sample and circuit complexities. Section~\ref{sec:be} discusses how to extend this method to construct a block encoding of the Scrooge $k$-th moment. Finally, Section~\ref{sec:discussion} discusses potential applications and concludes this paper.

\section{Approximating the Scrooge $k$-th moment with high accuracy}\label{sec:approximation}

To estimate the observable expectation values of the Scrooge $k$-th moment $\chi_{\mathrm{Sc}}^{(k)}(\rho)$, we first analyze the moment operator in detail. From Eq.~\eqref{eq:main:scrooge}, we have
\begin{align}
\label{eq-main:scrooge-moment}
\chi_{\mathrm{Sc}}^{(k)}(\rho)=\Exp{\phi\sim\mathrm{Haar}}{\frac{D\left(\sqrt{\rho}\ketbrat{\phi}\sqrt{\rho}\right)^{\otimes k}}{\bra{\phi}\rho\ket{\phi}^{k-1}}}.
\end{align}
This expression presents an immediate obstacle to a more precise analysis of $\chi_{\mathrm{Sc}}^{(k)}$: the denominator complicates the Haar average, making it difficult to obtain a closed-form expression for $\chi_{\mathrm{Sc}}^{(k)}$ for arbitrary $\rho$. Indeed, evaluating this average requires knowledge of the full spectrum of $\rho$~\cite{mark2024maximum}. Previous work~\cite{mcginley2025scroogeensemblemanybodyquantum} derived an approximation formula with a fixed error bound of $\calO(k^2\norm{\rho}_{\infty}^{1/2})$. Below, we first develop a refined polynomial approximation method for $\chi_{\mathrm{Sc}}^{(k)}$ that enables access to the high-precision regime.
The approximation features a series-expansion formula with
 error  systematically controlled via the expansion order $\lambda$.  

\subsection{Handling the denominator}
We first outline our strategy to deal with the denominator of the Scrooge moment. For a fixed Hilbert space dimension $D$, we define
\begin{equation}\label{eq:rphi}
    r_\phi := D\bra{\phi}\rho\ket{\phi}.
\end{equation}
The Scrooge $k$-th moment operator Eq.~\eqref{eq-main:scrooge-moment} is
\begin{align}
    \label{eq-text:scrooge-moment}
    \chi_{\mathrm{Sc}}^{(k)}(\rho)=\Exp{\phi\sim\mathrm{Haar}}{\frac{\left(\sqrt{D\rho}\ketbrat{\phi}\sqrt{D\rho}\right)^{\otimes k}}{r_{\phi}^{k-1}}}.
\end{align}
The key idea is to replace $ 1/r_{\phi}^{k-1}$ with a polynomial involving only nonnegative powers of $r_{\phi}$. This would transform Eq.~\eqref{eq-text:scrooge-moment} into a  linear combination of moments of Haar-random states, which we are well-equipped to analyze using existing Haar integration machinery. 

To formulate this polynomial approximation, we ask {\em where} we must most accurately approximate the function, as $r_{\phi}$ takes values in the enormous range $[0,D]$. 
Observe that $\Exp{\phi\sim\mathrm{Haar}}{r_{\phi}} = 1$. Moreover, $r_\phi$ is expected to concentrate around $1$ due to the concentration properties of Haar-random states when $\rho$ has sufficiently low purity.
This suggests seeking an approximation to the function $f(r_{\phi}) = 1/r_{\phi}^{k-1}$ that is almost flat when $r_\phi$ is close to one. Taking $x=r^{k-1}$, we Taylor expand $1/x$ about $x=1$: 
$\frac{1}{x}= 1+(1-x)+(1-x)^2+\cdots$. Truncating the series after $\lambda$ terms and substituting the resulting approximation into Eq.~\eqref{eq-text:scrooge-moment} yields our approximation for the Scrooge $k$-th moment.

\begin{figure*}[t]
    \centering
        \includegraphics[width=\linewidth]{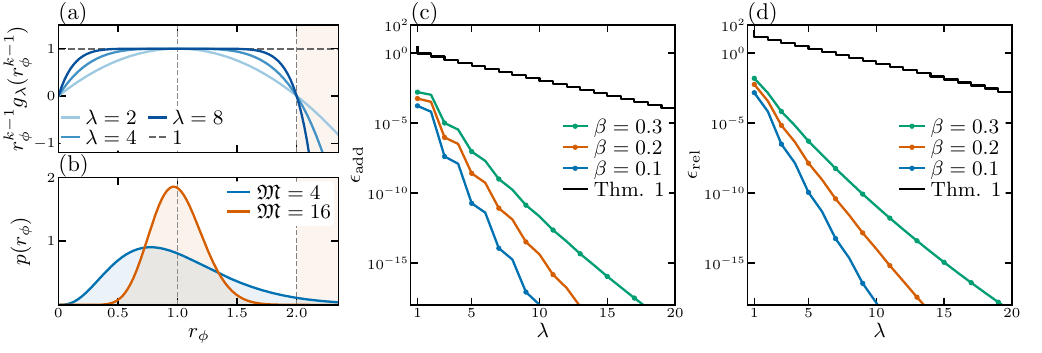}
    \captionsetup{justification=Justified,singlelinecheck=false}
    \caption{Numerical results                                                    of the polynomial approximation for $k=2$.
    (a) Comparison of $g_\lambda(r_\phi^{k-1})$ with $1/r_\phi^{k-1}$ for $\lambda=2,4,8$. The curves show $r_\phi^{k-1}g_\lambda(r_\phi^{k-1})$, and the reference line is the constant $1$.
    (b) Haar probability density of $r_\phi=D\bra{\phi}\rho\ket{\phi}$ for $\rho=\Pi_\M/\M$, where $\Pi_\M$ is a rank-$\M$ projector, with $D=64$ and $\M=4,16$.
    The shaded region $r_\phi>2$ marks the tail where increasing $\lambda$ amplifies the polynomial remainder.
    (c,d) Additive and relative errors, $\epsilon_x=\Delta_{2,\lambda}^{x}$, for Gibbs states of the periodic $N_{\rm site}=12$ transverse-field Ising chain at $J=h=1$ and $\beta=0.1,0.2,0.3$.
    Black staircases show the order prescription in Eq.~\eqref{eq:main-poly-lambda-choice}. We use $C_{\mathrm{add}}=e^{11/12}$ and $C_{\mathrm{rel}}=2^{33/8}e^{19/24}$ from Appendix~\ref{app:approximation}. Numerical values below the vertical-axis limit are omitted.}
    \label{fig:additive-error}
\end{figure*}

\subsection{The polynomial approximation method}
Formally, for an integer
$\lambda\geq 1$, we introduce the series
\begin{align}
    g_{\lambda}(x) &= \sum_{j=0}^{\lambda-1}(1-x)^j \nonumber
    = \frac{1-(1-x)^\lambda}{x}\\
    &=\sum_{i=1}^\lambda\binom{\lambda}{i}(-1)^{i-1} x^{i-1}
\end{align}
where the second equality follows from the geometric series sum. 
Then our $\lambda$-th order approximation is, with $r_{\phi}$ as defined in Eq.~\eqref{eq:rphi},
\begin{align}
    \chi_{\mathrm{PA}}^{(k)}(\lambda)
    &:=
    \Exp{\phi\sim\mathrm{Haar}}{
        \left(\sqrt{D\rho}\ketbra{\phi}{\phi}\sqrt{D\rho}\right)^{\otimes k}
        g_\lambda(r_{\phi}^{k-1})
    }\nonumber\\
    &=
    \sum_{i=1}^{\lambda}
    \binom{\lambda}{i}(-1)^{i-1}
    D^{(i-1)(k-1)} \times
    \nonumber\\
    &\qquad
    \Exp{\phi\sim\mathrm{Haar}}{
        \begin{aligned}
            &\left(\sqrt{D\rho}\ketbra{\phi}{\phi}\sqrt{D\rho}\right)^{\otimes k}\\
            &\quad\times
            \left(
                \Tr\left[\left(\sqrt{\rho}\ketbra{\phi}{\phi}\sqrt{\rho}\right)\right]
            \right)^{(i-1)(k-1)}
        \end{aligned}
    }\nonumber\\
&=    \sum_{i=1}^{\lambda}
    \binom{\lambda}{i}(-1)^{i-1}
    D^{k_i}\Tr_{\mathrm{aux}}\!\left[
    \rho^{\otimes k_i}\chi_{\mathrm{Haar}}^{(k_i)}
    \right],
    \label{eq:pm-decomposition}
\end{align}
where $k_i:=k+a_i=i(k-1)+1$, $a_i:=(i-1)(k-1)$, and
$\Tr_{\mathrm{aux}}$ traces out the $a_i$ auxiliary systems.

Each term in the above linear combination involves at most $\lambda(k-1)+1$ copies of $\rho$ and is written in terms of the product of those copies and a $k_i$-th moment operator of Haar-random states. Moreover, the first-order version ($\lambda=1$) of our approximation exactly recovers the previous method \cite{mcginley2025scroogeensemblemanybodyquantum}, up to a prefactor:
\begin{equation}
\chi_{\mathrm{PA}}^{(k)}(1)
=
\Exp{\phi\sim\mathrm{Haar}}{
\left(\sqrt{D\rho}\ketbra{\phi}{\phi}\sqrt{D\rho}\right)^{\otimes k}},
\end{equation}
Despite the simplicity of our method, this approximation ends up being the same as  the `polynomial modulation' technique of Ref. ~\cite{cotler_emergent_2023}, which was also developed to handle normalization denominators in projected-ensemble moments; 
our discussions above can be regarded as a clarified exposition of this technique.

We quantify the quality of our approximation using two complementary notions of error. The additive error is defined as
\begin{equation}
\Delta_{k,\lambda}^{\mathrm{add}}
:=
\norm{\chi_{\mathrm{Sc}}^{(k)}
-\chi_{\mathrm{PA}}^{(k)}(\lambda)}_1,
\end{equation}
which directly controls the error in the expectation value of any bounded observable. A stronger notion is the relative error, defined as the smallest
$\Delta_{k,\lambda}^{\mathrm{rel}}$ such that
\begin{equation}
(1-\Delta_{k,\lambda}^{\mathrm{rel}})
\chi_{\mathrm{Sc}}^{(k)}
\preceq
\chi_{\mathrm{PA}}^{(k)}(\lambda)
\preceq
(1+\Delta_{k,\lambda}^{\mathrm{rel}})
\chi_{\mathrm{Sc}}^{(k)},
\label{eq:main-relative-sandwich}
\end{equation}
where $A\preceq B$ means $B-A$ is positive semidefinite.

The fluctuation around $r_\phi=1$ can be expressed as $\operatorname{Var}_{\phi\sim\mathrm{Haar}}[r_\phi]=\frac{D\operatorname{Tr}(\rho^2)-1}{D+1}
    \approx \trace{}{\rho^2}$. So the error of our approximation also scales with the norms $\norm{\rho}_2$ or $\norm{\rho}_{\infty}$. For technical convenience, we introduce
    \begin{equation}
    \M := \left\lfloor \frac{1}{\norm{\rho}_\infty} \right\rfloor 
\end{equation}
to characterize the norm.

More precisely,
the truncation error of $g_\lambda$ gives the following bound on the additive error:
\begin{align}
\Delta_{k,\lambda}^{\mathrm{add}}
\leq
\Exp{\phi\sim\mathrm{Haar}}{
r_\phi\left|1-r_\phi^{k-1}\right|^\lambda
}.
\end{align}
For large $\M$, the quantity $r_\phi$ is sharply concentrated around unity, so that $\left|1-r_\phi^{k-1}\right|<1$ for the dominant part of the distribution. Increasing $\lambda$ therefore improves the polynomial approximation in this typical region, suppressing its contribution to the error. However, in the upper tail where $r_\phi^{k-1}>2$, one has $\left|1-r_\phi^{k-1}\right|>1$, and the same power instead amplifies the tail contribution. The resulting competition leads to a turnover of the error bound at sufficiently large $\lambda$. As $\M$ increases, the distribution of $r_\phi$ becomes more concentrated around unity, as shown in Fig.~\ref{fig:additive-error}(b), suppressing the upper tail and allowing higher expansion orders.

In Appendix~\ref{app:approximation}, we identify a range of $\lambda$ over which the error bounds decrease monotonically with increasing $\lambda$, 
\begin{equation}
    1\leq \lambda \leq 
    \left\lfloor
    \frac{\M}{12s_x(k-1)^2}
    \right\rfloor ,
\end{equation}
where $x\in\{\mathrm{add},\mathrm{rel}\}$ and $s_{\mathrm{add}}:=1,
s_{\mathrm{rel}}:=k$. For fixed $k$, the upper limit of this range scales as $\M$. Choosing the largest admissible order then yields an error bound that is exponentially small in $\M$. Consequently, for any target error $\epsilon_{\rm t}$ above a certain threshold, one can choose a sufficiently large admissible $\lambda$ to achieve the desired accuracy. An explicit sufficient choice of $\lambda$ is given below.

\begin{theorem}
[Additive and relative errors of the approximation method]
\label{thm-main:lambda-choice}
Let $k\geq2$ be an integer, $x\in\{\mathrm{add},\mathrm{rel}\}$ and $s_{\mathrm{add}}:=1,
s_{\mathrm{rel}}:=k$.
There exist universal constants $C_{\mathrm{add}},C_{\mathrm{rel}}>0$
such that, whenever $12s_x(k-1)^2\leq\M$, for every target
accuracy $\epsilon_{\rm t}>0$ satisfying
\begin{equation}
\epsilon_{\rm t}\geq
C_x
\exp\left[
-\frac{\M}{24s_x(k-1)^2}
\right],
\label{eq:main-poly-epsilon-range}
\end{equation}
the choice
\begin{equation}
\lambda=
\max\set{
\left\lfloor
2\log\left(C_x/\epsilon_{\rm t}\right)
\right\rfloor,
1}
\label{eq:main-poly-lambda-choice}
\end{equation}
guarantees $\Delta_{k,\lambda}^{x}\leq\epsilon_{\rm t}$.
\end{theorem}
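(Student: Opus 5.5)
Our plan is to reduce both error notions to a scalar Haar average of the truncation remainder, and then bound that average using concentration of $r_\phi$.

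\textbf{Step 1: the additive error.} Since $g_\lambda(x)=\frac{1-(1-x)^\lambda}{x}$, we have
\[
\chi_{\mathrm{Sc}}^{(k)}-\chi_{\mathrm{PA}}^{(k)}(\lambda)=\Exp{\phi}{\bigl(\sqrt{D\rho}\ketbrat{\phi}\sqrt{D\rho}\bigr)^{\otimes k}\,\frac{(1-r_\phi^{k-1})^\lambda}{r_\phi^{k-1}}}.
\]
The operator inside has trace norm $r_\phi^k$. The triangle inequality then gives the displayed bound $\Delta^{\mathrm{add}}_{k,\lambda}\le \E[r_\phi|1-r_\phi^{k-1}|^\lambda]$.

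\textbf{Step 2: the relative error.} Each term is a positive operator $P_\phi=(\cdot)^{\otimes k}/r_\phi^{k-1}$ weighted by $1-(1-r_\phi^{k-1})^\lambda$. On the typical set $T=\{|1-r_\phi^{k-1}|\le\delta\}$ the weight lies in $[1-\delta^\lambda,1+\delta^\lambda]$. This gives the sandwich \eqref{eq:main-relative-sandwich} up to the atypical contribution.

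To absorb the atypical contribution, we need an operator lower bound $\chi_{\mathrm{Sc}}^{(k)}\succeq c\,\E[(\cdot)^{\otimes k}]$ on the symmetric subspace. We would then bound $\E_{\phi\notin T}[r_\phi^{k}|1-r_\phi^{k-1}|^\lambda\,\cdot]$ in operator order against $\chi_{\mathrm{Sc}}^{(k)}$. This could go through Cauchy--Schwarz together with the fact that $(\sqrt{D\rho}\ketbrat{\phi}\sqrt{D\rho})^{\otimes k}\preceq r_\phi^{k}\,(\ldots)$ restricted to the symmetric subspace. We expect the extra factor $s_{\mathrm{rel}}=k$ in the exponent to come from this step, because the operator comparison costs an additional power of $r_\phi^{k}$ in the tail.

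\textbf{Step 3: the scalar tail estimate.} Write $\rho=\sum_i p_i\ketbrat{i}$ with $p_i\le 1/\M$. The quantity $r_\phi=D\sum_i p_i|\phi_i|^2$ is a weighted Dirichlet/Gamma combination. Comparing with independent exponentials, its moment generating function satisfies
\[
\E[e^{t(r_\phi-1)}]\le \exp\!\Bigl(\frac{t^2}{\M(1-t/\M)}\Bigr)
\]
roughly for $t<\M$. This yields the sub-gamma tail $\Pr[|r_\phi-1|>u]\le 2e^{-\M u^2/(4(1+u))}$.

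Split the expectation at $|1-r_\phi^{k-1}|=\delta$ with $\delta=e^{-1/2}$, and bound:
\begin{itemize}
\item the typical part by $\E[r_\phi]\,\delta^\lambda=e^{-\lambda/2}$;
\item the tail part by H\"older, using the moments $\E[r_\phi^{p}]\le (1+p/\M)^{p}$ (from the Gamma comparison) and $|1-r_\phi^{k-1}|^\lambda\le (1+r_\phi)^{(k-1)\lambda}$.
\end{itemize}
Since $|r^{k-1}-1|\ge\delta$ forces $|r-1|\gtrsim \delta/(k-1)$, the tail probability is $\exp(-c\M/(k-1)^2)$. The moment factor grows like $e^{O(\lambda^2 (k-1)^2/\M)}$, or more crudely $C^{\lambda}$. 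For $\lambda\le \M/(12 s_x(k-1)^2)$, this growth is dominated, and the total is $\le C_x(e^{-\lambda/2}+e^{-\M/(24 s_x(k-1)^2)})$ for suitable constants.

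\textbf{Step 4: choosing $\lambda$.} Take $\lambda=\lfloor 2\log(C_x/\epsilon_{\rm t})\rfloor$. Then $e^{-\lambda/2}\lesssim\epsilon_{\rm t}/C_x$. The threshold condition \eqref{eq:main-poly-epsilon-range} guarantees that $\lambda$ is admissible ($\le \M/(12s_x(k-1)^2)$) and that the tail term is below $\epsilon_{\rm t}$; the constants are adjusted to absorb factors of 2 and the floor.

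\textbf{Main obstacle.} We expect the main difficulty to be the tail control in Step 3, namely keeping the $\lambda$-dependence of the tail term, of order $2^{(k-1)\lambda}$ from the polynomial blow-up, beneath the concentration $e^{-c\M/(k-1)^2}$. We would try first to bound $\E[r_\phi^{m}]$ for $m\le (k-1)\lambda+1$ via the Gamma comparison. This is exactly where the admissibility window $\lambda\le \M/(12s_x(k-1)^2)$ should emerge.
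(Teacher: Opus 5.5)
Your Step 1 matches the paper. Step 4 also works, provided you have a bound $\Delta^x_{k,\lambda}\le Ae^{-\lambda/2}+Be^{-\M/(24s_x(k-1)^2)}$ for all admissible $\lambda$; then take $C_x\ge\sqrt{e}A+B$. However, Steps 2 and 3 each contain a genuine gap.

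\textbf{Relative error (Step 2).} Strip the conjugation by $(\sqrt{D\rho})^{\otimes k}$. The sandwich then becomes $\abs{\E[r_\phi^{1-k}(1-r_\phi^{k-1})^\lambda y_x(\phi)]}\le\Delta\,\E[r_\phi^{1-k}y_x(\phi)]$ for every unit $\ket{x}\in\Sym^k(\mathbb{C}^D)$, where $y_x(\phi)=\abs{\langle x|\phi^{\otimes k}\rangle}^2$.

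Every estimate therefore has to hold under the tilted measures $y_x(\phi)\,d\phi$, uniformly in $x$. Your Step 3 only controls the untilted Haar law of $r_\phi$. The comparison with $r_\phi^k(\cdots)$ that you invoke is just the normalization identity and does not bridge this gap.

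The lower bound you say you need, $\chi^{(k)}_{\mathrm{Sc}}\succeq c\,\E[(\sqrt{D\rho}\ketbrat{\phi}\sqrt{D\rho})^{\otimes k}]$, is exactly $\E[r_\phi^{1-k}y_x]\ge c\,\E[y_x]$. You give no way to prove it.

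The missing ingredient is a hypercontractivity estimate for $y_x$. First, $\E[y_x^2]=\bra{x^{\otimes 2}}\chi^{(2k)}_{\mathrm{Haar}}\ket{x^{\otimes 2}}\le 4^k(\E[y_x])^2$. By $L^1$--$L^2$ interpolation this gives $\norm{y_x}_{L^{k/(k-1)}}\le 4\,\E[y_x]$.

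With this estimate, Cauchy--Schwarz $(\E[y_x])^2\le\E[r_\phi^{1-k}y_x]\,\E[r_\phi^{k-1}y_x]$ plus H\"older yields your lower bound. H\"older with exponents $(2k,2k,k/(k-1))$ then bounds the remainder.

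This is also where $s_{\mathrm{rel}}=k$ really comes from. The factor $\norm{\abs{1-r_\phi}^\lambda}_{L^{2k}}$ needs central moments of order $2k\lambda$, which effectively replaces $\M$ by $\M/k$. It is not an extra power $r_\phi^k$ in the tail.

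\textbf{The exponent (Step 3).} In the statement only $C_x$ is free. The exponent $1/(24s_x(k-1)^2)$ is fixed, so a weaker tail cannot be absorbed by ``adjusting constants''. Near the threshold your $\lambda$ is about $\M/(12(k-1)^2)$.

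The bound you actually write, $\abs{1-r^{k-1}}^\lambda\le(1+r)^{(k-1)\lambda}$, costs $2^{(k-1)\lambda}$, not $e^{O(\lambda^2(k-1)^2/\M)}$. Its cost exponent scales like $\M/(k-1)$, while any concentration gain scales like $\M/(k-1)^2$, so it cannot work.

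Even with the sharper $\max\{1,r^{(k-1)\lambda}\}$, your constants fall short. Those constants are the tail $2e^{-\M u^2/(4(1+u))}$, the cut $\delta=e^{-1/2}$, moment growth $e^{\lambda/6}$ for $\norm{r_\phi\abs{1-r_\phi^{k-1}}^\lambda}_{L^2}$, and Cauchy--Schwarz. Together they make the tail term only about $e^{-\M/(70(k-1)^2)}$, far above $e^{-\M/(24(k-1)^2)}$.

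A fixed-$\delta$ split might be rescued with near-optimal Chernoff constants. The paper avoids the split entirely:
\begin{itemize}
\item It factors $1-r^{k-1}=(1-r)\sum_{i=0}^{k-2}r^i$ and applies the power-mean inequality.
\item It uses Cauchy--Schwarz to separate $\E\abs{1-r_\phi}^{2\lambda}\le\lambda!\,(6/\M)^\lambda$ from $\E[r_\phi^{2i\lambda+2}]\le e^{(2i\lambda+1)(2i\lambda+2)/(2\M)}$.
\item Both moment bounds come from majorizing the spectrum by the flat one on $\M$ levels, which gives a Beta law.
\end{itemize}
The result is $\Delta^{\mathrm{add}}_{k,\lambda}\le C\bigl(6\lambda(k-1)^2/(e^{1/3}\M)\bigr)^{\lambda/2}\le Ce^{-\lambda/2}$ for every $\lambda\le\lambda_*=\lfloor\M/(12(k-1)^2)\rfloor$. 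There is no separate tail term. The threshold on $\epsilon_{\rm t}$ serves only to ensure that the chosen $\lambda$ is at most $\lambda_*$.
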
\noindent
This theorem combines the Theorems~\ref{thm:additive-error} and~\ref{thm:relative-error} in Appendix~\ref{app:approximation}.
Within this regime, for fixed $k$ and $\M$, Eq.~\eqref{eq:main-poly-lambda-choice} shows that the required $\lambda$ grows only logarithmically with $1/\epsilon_{\rm t}$ within the stated accuracy range. 

\subsection{Numerics}
To validate our theoretical predictions,
we numerically calculate the errors of the approximation for thermal states $\rho_\beta=e^{-\beta H}/Z_\beta$ of the periodic transverse-field Ising chain
\begin{equation}
    H=-J\sum_{j=1}^{N_{\rm site}}Z_jZ_{j+1}-h\sum_{j=1}^{N_{\rm site}}X_j ,
    \label{eq:numerics-ising}
\end{equation}
and compare them with our theoretical bounds.

The additive and relative errors for Scrooge second moment associated with $\rho_\beta$ are plotted against $\lambda$ in Fig.~\ref{fig:additive-error}(c,d).
The black curves show the error prescribed by Theorem~\ref{thm-main:lambda-choice}, and colored curves show the errors obtained from exact numerical calculation.
Here $\M=997,306,116$ for $\beta=0.1,0.2,0.3$, respectively. 
Both errors decrease rapidly over a range of orders and are smaller at higher temperatures (smaller $\beta$) for a fixed $\lambda$. At $\lambda=10$, the additive and relative errors are below $2.2\times10^{-12}$ and $1.2\times10^{-11}$, respectively, for all three temperatures. Thus, a modest expansion order already gives a high-accuracy approximation to the Scrooge second moment in these examples.
The numerical errors are  smaller than the theoretical predictions, indicating that the actual performance of our method could be substantially better than the theoretical bounds for those physical states.

\section{Efficient implementation of symmetric subspace projection}
\label{sec:sym-projection}

\begin{figure*}[t]
    \centering
    \includegraphics[width=\linewidth]{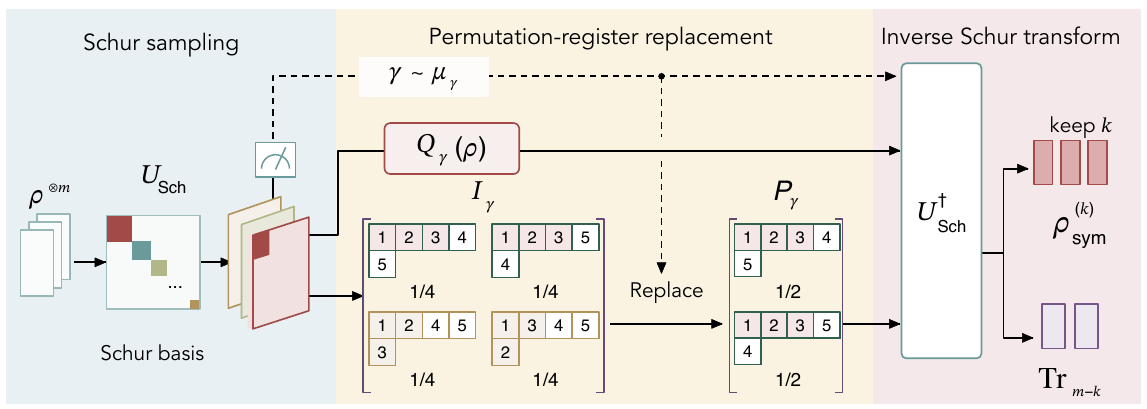}
    \captionsetup{justification=Justified,singlelinecheck=false}
    \caption{
    Schematic illustration of our implementation of symmetric subspace projection.
    The first step applies the Schur transform to $\rho^{\otimes m}$ and measures the sector label, obtaining $\gamma$ with probability $\mu_{\gamma}(\rho)$. Then the maximally mixed permutation register $I_{\gamma}/r_{\gamma}$ is replaced by $P_{\gamma}^{(k)
    }/\trace{}{P_{\gamma}^{(k)}}$, while the unitary group register $Q_{\gamma}(\rho)$ is preserved.  Here $P_\gamma^{(k)}$ selects tableaux whose first $k$ entries lie in the first row. The illustrated sector $\gamma=(4,1)$, with $m=5$ and $k=3$, replaces the uniform mixture over four tableaux with a uniform mixture over two tableaux. The retained label and both registers are supplied to the inverse Schur transform, after which the last $m-k$ systems are discarded. The remaining $k$ systems approximate the target $\rho_{\mathrm{sym}}^{(k)}$, with accuracy controlled by $m$ as stated in Theorem~\ref{Efficient implementation of symmetric subspace projection; informal}. }
    \label{fig:symmetric-preparation-main}
\end{figure*}

Building on this approximation method, we next translate it to a quantum algorithm to estimate expectation values of any observable $O_k$ supported on $k$-fold tensor-product Hilbert space. The key idea is to estimate $\trace{}{\chi_{\mathrm{PA}}^{(k)}(\lambda) O_k}$ as an approximation to the target quantity $\trace{}{\chi_{\mathrm{Sc}}^{(k)}(\rho) O_k}$.
 
Using Eq.~\eqref{eq:pm-decomposition}, we can write
\begin{equation}
\begin{aligned}
    \Tr\!\left[\chi_{\rm PA}^{(k)}(\lambda)O_k\right]
    &=
    \sum_{i=1}^{\lambda}
    \binom{\lambda}{i}(-1)^{i-1}\frac{D^{k_i}}{\binom{D+k_i-1}{k_i}}p_{\mathrm{sym}}^{(k_i)} \\ &\times
    \frac{\Tr\!\left[
        \rho^{\otimes k_i}
        \Pi_{\mathrm{sym}}^{(k_i)}
        \left(
            O_k\otimes I^{\otimes a_i}
        \right)
    \right]}{p_{\mathrm{sym}}^{(k_i)}},
\end{aligned}
\label{eq:main-pm-expectation-decomposition}
\end{equation}
where $\Pi_{\mathrm{sym}}^{(k_i)}$ is the projection onto $k_i$-fold symmetric subspace that arises from taking the Haar average. $p_{\mathrm{sym}}^{(k_i)}=\trace{}{\rho^{\otimes k_i}
        \Pi_{\mathrm{sym}}^{(k_i)}}$ is the normalization factor. Hence, our 
next step is to implement the state $\rho_{\mathrm{sym}}^{(k_i)}=\rho^{\otimes k_i}
        \Pi_{\mathrm{sym}}^{(k_i)}/p_{\mathrm{sym}}^{(k_i)}$.

To prepare
$\rho_{\mathrm{sym}}^{(k_i)}
=\rho^{\otimes k_i}\Pi_{\mathrm{sym}}^{(k_i)}/p_{\mathrm{sym}}^{(k_i)}$
from copies of $\rho$, a natural approach is to take $k_i$ copies
and measure the symmetric subspace projector
$\Pi_{\mathrm{sym}}^{(k_i)}$ using the Schur transform~\cite{harrow2005applications, keyl2001estimating, bacon2006efficient, harrow2013church}.
Conditioned on obtaining the symmetric outcome, the resulting
state is precisely $\rho_{\mathrm{sym}}^{(k_i)}$.
However, this outcome occurs with probability
\begin{align}
p_{\mathrm{sym}}^{(k_i)}
=\mathrm{Tr}\!\left[
\rho^{\otimes k_i}\Pi_{\mathrm{sym}}^{(k_i)}
\right].
\end{align}
For example, when $\rho=I/D$ and $k_i=k$, this probability is
\begin{align}
p_{\mathrm{sym}}^{(k_i)}
=\frac{\binom{D+k-1}{k}}{D^k}
=\frac{1}{k!}\prod_{j=0}^{k-1}\left(1+\frac{j}{D}\right)
=\frac{1+\mathcal{O}(1/D)}{k!}.
\end{align}
Thus, for a highly mixed state in this regime, the expected
number of attempts is approximately $k!$, making direct
post-selection costly as $k$ grows.

To avoid this post-selection cost, we draw ideas from Schur-Weyl duality.

\subsection{Schur-Weyl duality}\label{maintextsec:schur-transform}

The Schur-Weyl duality provides a basis in which the symmetries of a multi-copy Hilbert space are explicitly resolved~\cite{harrow2005applications, keyl2001estimating, bacon2006efficient, harrow2013church}. For $k$ copies of a $D$-dimensional Hilbert space $\mathbb{C}^D$, there are two natural commuting group actions: the collective action $U^{\otimes k}$ of the unitary group $\mathcal U_D$ and the permutation action of the symmetric group $\mathcal{S}_k$ on the $k$ copies. By Schur--Weyl duality, the Hilbert space can therefore be decomposed into sectors labeled by irreducible representations of these two groups,
\begin{align}
(\mathbb{C}^D)^{\otimes k}
\cong
\bigoplus_{\gamma \vdash k,\ \ell(\gamma)\le D}
\mathcal Q_\gamma^D\otimes \mathcal P_\gamma.
\end{align}
Here, $\gamma\vdash k$ denotes a partition of $k$, which labels the corresponding irreducible sector, and $\ell(\gamma)$ is the number of its nonzero parts. The spaces $\mathcal Q_\gamma^D$ and $\mathcal P_\gamma$ carry the irreducible representations of $\mathcal U_D$ and $\mathcal{S}_k$, respectively. 
  
The Schur transform $U_{\mathrm{Sch}}$ maps the computational basis of $(\mathbb{C}^D)^{\otimes k}$ to the Schur basis $\ket{\gamma,q_\gamma,p_\gamma}$, where $q_\gamma$ and $p_\gamma$ label basis states in $\mathcal Q_\gamma^D$ and $\mathcal P_\gamma$, respectively. The Schur transform is known to be efficiently implementable, with circuit depth $\operatorname{poly}\!\left(
k,
\log D,
\log\frac{1}{\epsilon_{\mathrm{Sch}}}
\right)$ to achieve an error $\epsilon_{\mathrm{Sch}}$~\cite{harrow2005applications, burchardt2025high, bacon2006efficient}.

\subsection{Efficient implementation of symmetric subspace projection}
To avoid the post-selection cost, our method exploits the structure
of the target state in the Schur basis. The key observation is that
within each representation sector, the required state of the
permutation register is known and can be prepared directly,
without post-selection.

Suppose we have $\rho^{\otimes m}$, with $m\geq k$, and consider
the target state
\begin{align}\label{eq:main:schur1}
\frac{\Pi_{\mathrm{sym}}^{m\rightarrow k}\rho^{\otimes m}}
{\trace{}{\Pi_{\mathrm{sym}}^{m\rightarrow k}\rho^{\otimes m}}}
=
\rho_{\mathrm{sym}}^{(k)}
\otimes \rho^{\otimes (m-k)},
\end{align}
where we define $\Pi_{\mathrm{sym}}^{k\rightarrow k'}=\Pi_{\mathrm{sym}}^{(k')}\otimes I^{\otimes(k-k')}$.
Tracing out the last $m-k$ registers therefore yields the
desired state $\rho_{\mathrm{sym}}^{(k)}$.

The input state has the following Schur decomposition
\begin{align}\label{eq:main:schur2}
\rho^{\otimes m}
\cong
\bigoplus_{\substack{\gamma\vdash m,\ \ell(\gamma)\leq D}}
\mu_\gamma(\rho)Q_\gamma(\rho)
\otimes \frac{I_\gamma}{r_\gamma},
\end{align}
where $Q_\gamma(\rho)$ is a normalized density matrix,
the non-negative weights $\mu_\gamma(\rho)$ sum to one,
and $r_\gamma$ is the dimension of $\mathcal{P}_\gamma$.
Meanwhile, the projector $\Pi_{\mathrm{sym}}^{k\rightarrow k'}$ also admits a Schur decomposition
\begin{align}\label{eq:main:schur3}
\Pi_{\mathrm{sym}}^{(k')}\otimes I^{\otimes(k-k')}
\cong
\bigoplus_{\substack{\gamma\vdash k, \ell(\gamma)\leq D}}
I_\gamma\otimes P_\gamma^{(k')},
\end{align}
where $P_\gamma^{(k')}$ denotes the representation of $\Pi_\mathrm{sym}^{(k')}$.
Note that $\Pi_{\mathrm{sym}}^{k\rightarrow k'}$ can be represented as a sum of permutation group elements, and hence its representation lies fully on the register $\calP_\gamma$.

Combining Eq.~\eqref{eq:main:schur1}, \eqref{eq:main:schur2}, and \eqref{eq:main:schur3}, we have
\begin{gather}
\rho_{\mathrm{sym}}^{(k)}\otimes\rho^{\otimes(m-k)}\notag\\
\cong
\bigoplus_{\substack{\gamma\vdash m,\ \ell(\gamma)\leq D}}
\frac{\mu_\gamma(\rho)a_\gamma^{(k)}}
{\sum_\gamma\mu_\gamma(\rho)a_\gamma^{(k)}}
Q_\gamma(\rho)\otimes
\frac{P_\gamma^{(k)}}{\trace{}{P_\gamma^{(k)}}},
\label{eq:main:schur4}
\end{gather}
where
$
    a_\gamma^{(k)}
=\frac{\trace{}{P_\gamma^{(k)}}}{r_\gamma}
$
is the probability of obtaining the symmetric subspace
conditioned on the sector $\gamma$. On the other hand, we can prepare an approximation of
$\rho_{\mathrm{sym}}^{(k)}\otimes\rho^{\otimes(m-k)}$
by replacing $\frac{I_\gamma}{r_\gamma}$ with
$\frac{P_\gamma^{(k)}}{\trace{}{P_\gamma^{(k)}}}$
conditioned on the Schur label $\gamma$ whenever
$\trace{}{P_\gamma^{(k)}}>0$; otherwise, we prepare an arbitrary fixed
state in the permutation register and set $a_\gamma^{(k)}=0$. This operation preserves $Q_\gamma(\rho)$ and leaves the information of $\rho$ with no loss. The final state is
\begin{align}\label{eq:main:schur5}
    \bigoplus_{\substack{\gamma\vdash m,\ \ell(\gamma)\leq D}}
\mu_\gamma(\rho)Q_\gamma(\rho)\otimes
\frac{P_\gamma^{(k)}}{\trace{}{P_\gamma^{(k)}}}.
\end{align}
The entire procedure is illustrated in Fig.~\ref{fig:symmetric-preparation-main}.

Bounding the difference between coefficients of Eq.~\eqref{eq:main:schur4} and \eqref{eq:main:schur5}, we demonstrate the efficiency of this method.
\begin{theorem}
[Efficient implementation of symmetric subspace projection; informal]
\label{Efficient implementation of symmetric subspace projection; informal}
    For $n\in\bbZ^+$, $1\leq k<m$,
    and $0<\epsilon<1$, there exists an efficient quantum channel $\mathcal M_{m,k}:
\mathcal L\!\left((\mathbb C^{2^n})^{\otimes m}\right)
\longrightarrow
\mathcal L\!\left((\mathbb C^{2^n})^{\otimes k}\right)$ such that for every density operator $\rho$ on $\calH=\mathbb{C}^{2^n}$,
    \begin{align}
        \norm{\calM_{m,k}(\rho^{\otimes m})
        -\rho_{\mathrm{sym}}^{(k)}
        }_1\leq \epsilon
    \end{align}
    holds
    as long as 
    \begin{align}
        m\geq\frac{5k^2(k-1)}{\epsilon^2}.
    \end{align}
\end{theorem}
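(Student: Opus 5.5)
The plan is to reduce the trace-distance error to a classical total-variation distance between two distributions over the Schur label $\gamma\vdash m$, and then to show that the conditional acceptance probability $a_\gamma^{(k)}$ concentrates around its mean. By construction, $\calM_{m,k}$ applies the Schur transform, measures $\gamma$, and replaces the permutation register by $P_\gamma^{(k)}/\trace{}{P_\gamma^{(k)}}$. It then undoes the Schur transform and traces out the last $m-k$ systems. Since the partial trace is contractive in trace norm, it suffices to compare Eq.~\eqref{eq:main:schur5} with Eq.~\eqref{eq:main:schur4}. Both states are block diagonal in $\gamma$ and have identical normalized blocks $Q_\gamma(\rho)\otimes P_\gamma^{(k)}/\trace{}{P_\gamma^{(k)}}$ whenever $a_\gamma^{(k)}>0$; sectors with $a_\gamma^{(k)}=0$ get weight $0$ in the target. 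Writing $p:=\sum_\gamma\mu_\gamma(\rho)a_\gamma^{(k)}=p_{\mathrm{sym}}^{(k)}$ (by Eq.~\eqref{eq:main:schur1}), I obtain
\begin{equation*}
\norm{\calM_{m,k}(\rho^{\otimes m})-\rho_{\mathrm{sym}}^{(k)}}_1\le\sum_\gamma\mu_\gamma(\rho)\left|1-\frac{a_\gamma^{(k)}}{p}\right|=\frac{\E_{\gamma\sim\mu}\left|a_\gamma^{(k)}-p\right|}{p}\le\frac{\sqrt{\operatorname{Var}_{\gamma\sim\mu}\bigl(a_\gamma^{(k)}\bigr)}}{p}.
\end{equation*}
The goal is therefore a relative concentration bound, $\operatorname{Var}(a_\gamma)\le \frac{5k^2(k-1)}{m}\,p^2$.

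To get a handle on $a_\gamma^{(k)}$, I will use two descriptions of it. Since $P_\gamma^{(k)}$ is the image of $\frac{1}{k!}\sum_{\pi\in\calS_k}\pi$, we have $a_\gamma^{(k)}=\frac{1}{k!}\sum_{\pi\in\calS_k}\hat\chi_\gamma(\pi)$, where $\hat\chi_\gamma=\chi_\gamma/r_\gamma$ is the normalized character. Equivalently, by the branching rule, $a_\gamma^{(k)}=\dim(\gamma/(k))/\dim\gamma$, which is the fraction of standard tableaux of shape $\gamma$ whose entries $1,\dots,k$ lie in the first row. This ratio is Kerov–Olshanski's normalized shifted Schur function $s^*_{(k)}(\gamma)/(m)_k$, a polynomial in the row lengths $\gamma_i$ (equivalently, in the content power sums). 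The tableau picture also gives a probabilistic interpretation. Draw $\gamma\sim\mu$ together with a uniform standard tableau; this reproduces the law of the sequence of prefix shapes obtained by weak Schur sampling of copies $1,\dots,j$ for $j\le m$. Under this coupling $a_\gamma^{(k)}=\Pr[\text{prefix shape at }k=(k)\mid\gamma]$, i.e. a conditional expectation in a reverse martingale over $j$.

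For the variance I would first try the character route. I expand $a_\gamma^2$ and use $\hat\chi_\gamma(\pi)\hat\chi_\gamma(\sigma)$ together with the identity $\E_{\gamma\sim\mu}\bigl[\hat\chi_\gamma(\pi)\bigr]=\trace{}{\rho^{\otimes m}\pi}$. Products of normalized characters of permutations acting on disjoint supports are close to the normalized character of the product permutation, with a correction of order (support size)$^2/m$ (Kerov's asymptotic factorization). I would realize this by moving $\sigma$ onto copies $k+1,\dots,2k$ via conjugation invariance. Then $\E[a_\gamma^2]$ equals $\trace{}{\rho^{\otimes 2k}\,\Pi^{(k)}_{\mathrm{sym}}\otimes\Pi^{(k)}_{\mathrm{sym}}}=p^2$ up to an error of order $k^2\cdot k/m$ times $p^2$. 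The $k^2(k-1)$ in the statement counts $\binom{k}{2}$ transposition-type overlaps per pair, with $k$ positions each. The fallback is the martingale route: bound the increments of $j\mapsto\Pr[\text{prefix at }k=(k)\mid\text{shape at }j]$ by $O(k^2/j^2)$ relative to $p$ and sum to get $O(k^3/m)$.

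The main obstacle is obtaining the error \emph{relative} to $p$, uniformly in $\rho$ and $D$: $p$ can be as small as roughly $1/k!$, so additive estimates are useless. The factorization corrections must therefore be shown to be multiplicative. The intended way to do this is to write every correction term as $\trace{}{\rho^{\otimes 2k}\,\Pi^{(k)}_{\mathrm{sym}}\,X\,\Pi^{(k)}_{\mathrm{sym}}}$ with $\norm{X}_\infty\lesssim k^3/m$. This makes the correction at most $(k^3/m)\,p^2$, using $\trace{}{\rho^{\otimes 2k}\Pi_{\mathrm{sym}}^{(2k)}}\le p^2$-type monotonicity. Tracking constants then gives $m\ge 5k^2(k-1)/\epsilon^2$. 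Efficiency of the channel follows from the $\operatorname{poly}(m,n,\log 1/\epsilon_{\mathrm{Sch}})$ Schur transform. The step "prepare $P_\gamma^{(k)}/\trace{}{P_\gamma^{(k)}}$" is efficient in the Young–Yamanouchi basis, because it amounts to sampling a uniform standard tableau of the skew shape $\gamma/(k)$ with the first row prefilled by $1,\dots,k$.
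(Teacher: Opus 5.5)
Your reduction matches the paper's Lemma~\ref{lem:channel-error}. The error is at most $\sum_\gamma\mu_\gamma(\rho)\,|1-a_\gamma^{(k)}/p|\le\sqrt{\operatorname{Var}(a_\gamma^{(k)})}/p$ with $p=p_{\mathrm{sym}}^{(k)}$. Your tableau description of the replacement state also agrees with the paper. The gap is the step you flag as the main obstacle: the relative variance bound (the paper's Lemma~\ref{lem:relative-concentration}) carries essentially all the content, and you assert it rather than prove it. Kerov's factorization gives only additive $O_k(1/m)$ errors, as you note. Your repair writes the corrections as $\Tr[\rho^{\otimes 2k}PXP]$ with $P=\Pi^{(k)}_{\mathrm{sym}}\otimes\Pi^{(k)}_{\mathrm{sym}}$ and $\norm{X}_\infty\lesssim k^3/m$. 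The H\"older step there is fine, but producing such an $X$ uniformly in $\rho$ and $D$ is exactly the problem. The individual correction terms are \emph{not} small relative to $p^2$. Use the central element $A_{m,k}=\binom{m}{k}^{-1}\sum_{|S|=k}\Pi^{(S)}_{\mathrm{sym}}$, which acts as $a_\gamma^{(k)}$ on $\mathcal{P}_\gamma$, so that $\sum_\gamma\mu_\gamma(\rho)(a_\gamma^{(k)})^2=\Tr[\rho^{\otimes m}A_{m,k}^2]$. The overlap term with $S=T$ equals $p=p^2\cdot p^{-1}$, and $p^{-1}$ is about $k!$ for nearly maximally mixed $\rho$. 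Partial overlaps $|S\cap T|=j$ likewise exceed $p^2$ by $\rho$-dependent factors. Smallness appears only after weighting by the overlap probabilities, so you need a bound on these ratios that is uniform in $\rho$ and $D$. The inequality you cite, $p^{(2k)}_{\mathrm{sym}}\le p^2$, points the wrong way: it upper-bounds a higher-order symmetric probability. What is needed is a lower bound on $p^{(k)}_{\mathrm{sym}}$ in terms of lower-order ones.

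The paper supplies this as follows. For $|S\cap T|=j$, tracing out the unshared registers gives $\Tr[\rho^{\otimes m}\Pi^{(S)}_{\mathrm{sym}}\Pi^{(T)}_{\mathrm{sym}}]=\Tr[\rho^{\otimes j}B_j^2]$. Here $B_j$ is $\Pi^{(k)}_{\mathrm{sym}}$ sandwiched by $\rho^{1/2}$ on $k-j$ registers and then partially traced over them. Since $\Pi^{(k)}_{\mathrm{sym}}\preceq I\otimes\Pi^{(k-j)}_{\mathrm{sym}}$, one gets $0\preceq B_j\preceq p^{(k-j)}_{\mathrm{sym}}I$, so the term is at most $p^{(k-j)}_{\mathrm{sym}}p^{(k)}_{\mathrm{sym}}$. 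The monomial-counting inequality $p^{(t)}_{\mathrm{sym}}\le(t+1)p^{(t+1)}_{\mathrm{sym}}$ gives $p^{(k-j)}_{\mathrm{sym}}\le k^jp^{(k)}_{\mathrm{sym}}$, hence each term is at most $k^jp^2$. Let $L=|S\cap T|$ for independent uniform $k$-subsets of $[m]$. Then $\E[k^L]\le(1+k(k-1)/m)^k\le e^{k^2(k-1)/m}$, which yields $\operatorname{Var}(a_\gamma^{(k)})/p^2\le e^{k^2(k-1)/m}-1$. So $k^2(k-1)$ is the expected overlap $k^2/m$ times an excess factor $k-1$ per shared register, not a count of transpositions. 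Your martingale fallback has the same hole: the $O(k^2/j^2)$ relative increment bound is unsupported and meets the same relative-versus-additive difficulty. A minor point: your target $\operatorname{Var}\le 5k^2(k-1)p^2/m$ would consume the whole budget $\epsilon$ and leave nothing for the approximate Schur transform. The paper instead imposes $e^{k^2(k-1)/m}-1\le\epsilon^2/4$ and spends the other $\epsilon/2$ on $4\epsilon_{\mathrm{Sch}}$.
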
\noindent
See Theorem~\ref{thm:efficient-projection} in Appendix~\ref{app:projection} for a formal statement and proof.

\section{Observable expectation value estimation}\label{sec:expectation-value}

Combining all the results above, we show that the expectation value of an observable on the Scrooge $k$-th moment,
$\Tr\!\left[\chi_{\rm Sc}^{(k)} O_k\right]$,
can be estimated up to a prescribed error using the polynomial approximation. Our method estimates $\Tr\!\left[\chi_{\rm PA}^{(k)} O_k\right]$  as a good approximation for $\Tr\!\left[\chi_{\rm Sc}^{(k)} O_k\right]$. Our algorithm works by first estimating the expectation value of each term and then summing them up.

For each $1\leq i\leq\lambda$, denote
\begin{equation}
o_i
:=
\Tr\!\left[
\rho_{\mathrm{sym}}^{(k_i)}
\left(
O_k\otimes I^{\otimes a_i}
\right)
\right].
\end{equation}
Instead of estimating $p_{\mathrm{sym}}^{(k_i)}$ and $o_i$
separately, we directly estimate their product using samples obtained
from all Schur sectors. Set $m_i:=k_i^2(k_i-1)$.

The estimator is constructed by the following procedure.
\begin{enumerate}
    \item Prepare $\rho^{\otimes m_i}$ and apply the Schur transform
    $U_{\mathrm{Sch}}^{(m_i)}$. Measuring the irrep register produces a
    partition $\gamma\vdash m_i$ with probability $\mu_\gamma(\rho)$.

    \item Conditioned on $\gamma$,
    replace the permutation register by $P_\gamma^{(k_i)}/\trace{}{P_\gamma^{(k_i)}}$
    whenever $\trace{}{P_\gamma^{(k_i)}}>0$. If
    $\trace{}{P_\gamma^{(k_i)}}=0$, prepare an arbitrary fixed state in
    the permutation register and set $a_\gamma^{(k_i)}=0$.

    \item Apply the inverse Schur transform and measure
    $\left(O_k\otimes I^{\otimes a_i}\right)
        \otimes I^{\otimes(m_i-k_i)}
        =O_k\otimes I^{\otimes(m_i-k)}$. In repetition $(r,\ell)$, denote the
    measured Schur label and measurement outcome by
    $\gamma_{i,r,\ell}$ and $W_{i,r,\ell}$, respectively, and define
    the corresponding estimator sample by
    \begin{equation}
        X_{i,r,\ell}
        :=
        a_{\gamma_{i,r,\ell}}^{(k_i)}W_{i,r,\ell}.
    \end{equation}

    \item Repeat the above procedure independently and use the
    median-of-means estimator. Specifically, divide the samples into
    $R_i$ groups, each containing $L_i$ samples. For the $r$-th group,
    define the sample mean
    \begin{equation}
        Y_{i,r}
        :=
        \frac{1}{L_i}\sum_{\ell=1}^{L_i}X_{i,r,\ell},
    \end{equation}
    and define
    \begin{equation}
        \widehat{p_{\mathrm{sym}}^{(k_i)}o_i}
        :=
        \operatorname{median}\left\{
        Y_{i,1},\ldots,Y_{i,R_i}
        \right\}.
    \end{equation}
\end{enumerate}

Algorithm~\ref{alg:symmetric-sector-estimation} summarizes the
estimation procedure.

\begin{algorithm}[t!]
\caption{Estimation of the $i$-th symmetric-sector contribution}
\label{alg:symmetric-sector-estimation}
\begin{algorithmic}
\For{$r=1,\ldots,R_i$}
    \For{$\ell=1,\ldots,L_i$}
        \State Prepare $\rho^{\otimes m_i}$ and apply
        $U_{\mathrm{Sch}}^{(m_i)}$.
        \State Measure the irrep register and denote the outcome by
        $\gamma_{i,r,\ell}$.
        \State Compute $a_{\gamma_{i,r,\ell}}^{(k_i)}$.
        \If{$\Tr P_{\gamma_{i,r,\ell}}^{(k_i)}>0$}
            \State Replace the permutation register by
            $P_{\gamma_{i,r,\ell}}^{(k_i)}
            /\Tr P_{\gamma_{i,r,\ell}}^{(k_i)}$.
            \State Apply $U_{\mathrm{Sch}}^{(m_i)\dagger}$.
            \State Measure
            $O_k\otimes I^{\otimes(m_i-k)}$
            and denote the outcome by $W_{i,r,\ell}$.
            \State Set
            $X_{i,r,\ell}
            =a_{\gamma_{i,r,\ell}}^{(k_i)}W_{i,r,\ell}$.
        \Else
            \State Set $X_{i,r,\ell}=0$.
        \EndIf
    \EndFor
    \State Set
    $Y_{i,r}
    =L_i^{-1}\sum_{\ell=1}^{L_i}X_{i,r,\ell}$.
\EndFor
\State \Return
$\widehat{p_{\mathrm{sym}}^{(k_i)}o_i}
=\operatorname{median}
\left(Y_{i,1},\ldots,Y_{i,R_i}\right)$.
\end{algorithmic}
\end{algorithm}

Each sample satisfies
\begin{equation}
\mathbb{E}[X_{i,r,\ell}]
=
p_{\mathrm{sym}}^{(k_i)}o_i.
\end{equation}
Hence, the algorithm directly estimates
$p_{\mathrm{sym}}^{(k_i)}o_i$ without separately estimating its two
factors. Combining the estimators for $1\leq i\leq\lambda$ yields an
estimator for the target observable expectation value.

\begin{theorem}
[Quantum algorithm for observable expectation evaluation for Scrooge moments; informal]
\label{Sample and circuit complexities of operator expectation value estimation algorithm-maintext}
Let $\rho$ be an $n$-qubit density matrix, $k$ be an integer satisfying $k\geq 2$ and  $12(k-1)^2\leq \M$. 
There exists a quantum algorithm that, for any efficiently measurable Hermitian $k$-copy operator $O_k$ with $\norm{O_k}_\infty\leq 1$ and any $0<\epsilon_\mathrm{t}<1$ satisfying 
$ 
\epsilon_{\mathrm t} 
\geq 
2C_{\add} 
\exp\left[ 
-\M/\left(24(k-1)^2\right) 
\right]
$ 
, estimates $\trace{}{\chi_\mathrm{Sc}^{(k)}(\rho)O_k}$ up to additive error $\epsilon_\mathrm{t}$.
The algorithm requires at most
$\widetilde{\calO}(k^3/\epsilon_{\rm t}^5)$
total number of copies of $\rho$, and a sufficient total circuit complexity is $\operatorname{poly}\!\left(
n,k,1/\epsilon_{\mathrm t}
\right).$
\end{theorem}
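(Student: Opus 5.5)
We split the target error $\epsilon_{\rm t}$ into two halves: a bias half, controlled by the polynomial approximation, and a statistical half, controlled by sampling. For the bias, we invoke Theorem~\ref{thm-main:lambda-choice} with $x=\mathrm{add}$ and accuracy $\epsilon_{\rm t}/2$. The hypothesis $\epsilon_{\rm t}\geq 2C_{\add}\exp[-\M/(24(k-1)^2)]$ is exactly the condition that $\epsilon_{\rm t}/2$ lies in the admissible range, and $12(k-1)^2\le\M$ is assumed. So with
\begin{equation*}
\lambda=\max\{\lfloor 2\log(2C_{\add}/\epsilon_{\rm t})\rfloor,1\}=\calO(\log(1/\epsilon_{\rm t}))
\end{equation*}
we get $\norm{\chi_{\rm Sc}^{(k)}-\chi_{\rm PA}^{(k)}(\lambda)}_1\le\epsilon_{\rm t}/2$. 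Since $\norm{O_k}_\infty\le1$, Hölder's inequality then gives $|\Tr[(\chi_{\rm Sc}^{(k)}-\chi_{\rm PA}^{(k)})O_k]|\le\epsilon_{\rm t}/2$.

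It remains to estimate $\Tr[\chi_{\rm PA}^{(k)}O_k]$ to accuracy $\epsilon_{\rm t}/2$ using Eq.~\eqref{eq:main-pm-expectation-decomposition}. Write it as $\sum_{i=1}^\lambda c_i\, p_{\rm sym}^{(k_i)}o_i$, where
\begin{equation*}
c_i=\binom{\lambda}{i}(-1)^{i-1}\frac{D^{k_i}}{\binom{D+k_i-1}{k_i}}.
\end{equation*}
Each product $p_{\rm sym}^{(k_i)}o_i$ is estimated by Algorithm~\ref{alg:symmetric-sector-estimation}.

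\emph{Bias of each term.} By the Schur decomposition, Eq.~\eqref{eq:main:schur5}, the exact expectation is $\E X=\sum_\gamma\mu_\gamma a_\gamma\Tr[(Q_\gamma\otimes P_\gamma/\Tr P_\gamma)(O_k\otimes I)]$, which matches Eq.~\eqref{eq:main:schur4} times its normalizer $p_{\rm sym}^{(k_i)}$. So $X$ is unbiased, provided the Schur transform is exact. I would rather use the Schur-transform error $\epsilon_{\rm Sch}$ in diamond norm and absorb it into the bias, choosing $\epsilon_{\rm Sch}=\mathrm{poly}(\epsilon_{\rm t}/\lambda\max|c_i|)$; this only costs polylog depth. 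Theorem~\ref{Efficient implementation of symmetric subspace projection; informal} is actually not needed for unbiasedness, only the identity above. The choice $m_i=k_i^2(k_i-1)$ is just a valid $m\ge k_i$; I would double-check that nothing in the sample count depends on it beyond the copies consumed per sample.

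\emph{Variance and coefficient size.} We have $|X|\le a_\gamma\le1$. The weight $|c_i|$ is the main obstacle. Using $D^{k}/\binom{D+k-1}{k}\le k!$ naively blows up, so we must exploit the relation $c_ip_{\rm sym}^{(k_i)}o_i=\binom{\lambda}{i}\times(\text{bounded Haar term})$. In fact, $\frac{D^{k_i}}{\binom{D+k_i-1}{k_i}}p_{\rm sym}^{(k_i)}=\E_\phi[r_\phi^{k_i}]$, which may be large. I would try instead to rescale the estimator: note that $|X|\le a_\gamma$ and $\E[a_\gamma]=p_{\rm sym}$, so $\E X^2\le p_{\rm sym}$. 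Then $c_i^2\mathrm{Var}\le c_i^2 p_{\rm sym}\le|c_i|\binom{\lambda}{i}\E[r_\phi^{k_i}]$. Under $k_i\lambda\lesssim\M/(k-1)$, the concentration bounds of Appendix~\ref{app:approximation} give $\E r_\phi^{k_i}=\calO(1)$. With $\lambda=\calO(\log 1/\epsilon_{\rm t})$, the factors $\binom{\lambda}{i}\le2^\lambda=\mathrm{poly}(1/\epsilon_{\rm t})$ accumulate; this is plausibly where the $\epsilon_{\rm t}^{-5}$ comes from, since $2^{2\lambda}\sim\epsilon_{\rm t}^{-2\cdot 2\log 2}$ combined with the $\epsilon_{\rm t}^{-2}$ Chebyshev factor.

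\emph{Sample count and circuit cost.} Allocating error $\epsilon_{\rm t}/(2\lambda)$ to each term and failure probability $\delta/\lambda$, median-of-means gives $L_i=\calO(\lambda^2 c_i^2\mathrm{Var}_i/\epsilon_{\rm t}^2)$ and $R_i=\calO(\log(\lambda/\delta))$. Each sample consumes $m_i$ copies. Summing over $i$, with $k_i\le\lambda k$, would give a total of $\widetilde{\calO}(k^3\,\mathrm{poly}(1/\epsilon_{\rm t}))$ copies, with the powers of $\lambda$ hidden in the tilde. I would track the exponents carefully to land on $\epsilon_{\rm t}^{-5}$. The circuit complexity is $\mathrm{poly}(n,k,1/\epsilon_{\rm t})$ because the Schur transforms on $m_i=\mathrm{poly}(k,\log 1/\epsilon_{\rm t})$ copies are efficient, $a_\gamma^{(k_i)}$ is computable by counting tableaux, and preparing $P_\gamma^{(k_i)}/\Tr P_\gamma^{(k_i)}$ in the Young–Yamanouchi basis is a uniform superposition-free mixture over a classically enumerable set. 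The main risk is controlling $|c_i|$ without factorial loss; if the approach above fails, I would bound $c_ip_{\rm sym}^{(k_i)}$ directly via $\E[r_\phi^{k_i}]$ moment bounds.
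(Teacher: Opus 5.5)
Your bias step matches the paper: the choice of $\lambda$ via Theorem~\ref{thm-main:lambda-choice} at accuracy $\epsilon_{\rm t}/2$, unbiasedness of $X$, and absorbing $\epsilon_{\rm Sch}$. Your identity $\frac{D^{k_i}}{\binom{D+k_i-1}{k_i}}p_{\rm sym}^{(k_i)}=\E_\phi[r_\phi^{k_i}]$, combined with Lemma~\ref{lem:tommy's}, is a legitimate substitute for Lemma~\ref{lem:weighted-symmetric-probability} in bounding the mean: with your $c_i$, which includes $\binom{\lambda}{i}$, it gives $|c_i|p_{\rm sym}^{(k_i)}\le\binom{\lambda}{i}e^{k_i(k_i-1)/(2\M)}$.

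The gap is the variance. From $\E X^2\le\E[a_\gamma]=p_{\rm sym}^{(k_i)}$ you get $c_i^2\mathrm{Var}(X)\le|c_i|\cdot|c_i|p_{\rm sym}^{(k_i)}$. The surviving factor $|c_i|=\binom{\lambda}{i}\,k_i!\,D^{k_i}/D^{(k_i)}$ is essentially $\binom{\lambda}{i}k_i!$, so the factorial has only been moved, not removed. Equivalently, your bound gives relative variance $\E X^2/(p_{\rm sym}^{(k_i)})^2\le1/p_{\rm sym}^{(k_i)}$, and $1/p_{\rm sym}^{(k_i)}\ge k_i!\,e^{-k_i(k_i-1)/(2\M)}$. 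That is exactly what naive post-selection on $k_i$ copies achieves, and it does not depend on $m_i$, which is why the choice of $m_i$ looked irrelevant to you. The sample count you can certify this way scales like $k_\lambda!/\epsilon_{\rm t}^2$ with $k_\lambda\approx2(k-1)\log(2C_{\add}/\epsilon_{\rm t})$. That is $(1/\epsilon_{\rm t})^{\Omega(k\log(k\log(1/\epsilon_{\rm t})))}$, not $\widetilde{\calO}(k^3/\epsilon_{\rm t}^5)$. Your fallback of bounding $c_ip_{\rm sym}^{(k_i)}$ through moments of $r_\phi$ controls only the first moment and cannot repair this.

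The missing ingredient is a \emph{relative} second-moment bound, $\sum_\gamma\mu_\gamma(\rho)(a_\gamma^{(k_i)})^2=\calO\bigl((p_{\rm sym}^{(k_i)})^2\bigr)$. This is precisely where $m_i=k_i^2(k_i-1)$ enters; it is not ``just a valid $m\ge k_i$''. Lemma~\ref{lem:relative-concentration} proves $\sum_\gamma\mu_\gamma(a_\gamma^{(k)}/p_{\rm sym}^{(k)}-1)^2\le e^{k^2(k-1)/m}-1$ in three steps:
\begin{itemize}
\item $a_\gamma^{(k)}$ is the scalar by which the subset-averaged projector $A_{m,k}=\binom{m}{k}^{-1}\sum_{|S|=k}\Pi_{\rm sym}^{(S)}$ acts on $\calP_\gamma$, so $\sum_\gamma\mu_\gamma(a_\gamma^{(k)})^2=\Tr[\rho^{\otimes m}A_{m,k}^2]$.
\item The pairwise overlaps obey $\Tr[\rho^{\otimes m}\Pi_{\rm sym}^{(S)}\Pi_{\rm sym}^{(T)}]\le(p_{\rm sym}^{(k)})^2k^{|S\cap T|}$.
\item For independent random $k$-subsets, $\E[k^{|S\cap T|}]\le(1+k(k-1)/m)^k$.
\end{itemize}
With $m_i=k_i^2(k_i-1)$ this gives $\E X^2\le e\,(p_{\rm sym}^{(k_i)})^2$. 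With the approximate Schur transform it gives $(e+1)(p_{\rm sym}^{(k_i)})^2$, provided $\epsilon_{\rm Sch}^i\le(k_i!)^{-2}\le(p_{\rm sym}^{(k_i)})^2$. Hence $c_i^2\mathrm{Var}(X)=\calO\bigl(\binom{\lambda}{i}^2e^{k_\lambda(k_\lambda-1)/\M}\bigr)$ with no factorial.

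It then follows that $L_i\propto4^\lambda e^{k_\lambda(k_\lambda-1)/\M}/\epsilon_{\rm t}^2$ and $m_i\le k_\lambda^3$. Using $\lambda\le2\log(2C_{\add}/\epsilon_{\rm t})$ and $k_\lambda(k_\lambda-1)/\M\le\frac16\log(2C_{\add}/\epsilon_{\rm t})+\frac1{12}$, the exponent of $1/\epsilon_{\rm t}$ is $2+2\log4+\tfrac16<5$. Note also that $\E[r_\phi^{k_\lambda}]$ is not $\calO(1)$ in this regime, only $\calO\bigl((2C_{\add}/\epsilon_{\rm t})^{1/12}\bigr)$; its square is what supplies the $\tfrac16$.
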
\noindent
See Theorem~\ref{Sample and circuit complexities of operator expectation value estimation algorithm} in Appendix~\ref{app:expectation-value} for a formal statement and proof.

\section{Block encoding}\label{sec:be}

In this section, we discuss how to extend our method to construct a block encoding of the Scrooge $k$-th moment.
Block encoding provides a way to coherently access a generally nonunitary operator by embedding it into a larger unitary~\cite{low2019hamiltonian,chakraborty2019power,gilyen2019quantum}. Once such a block encoding is available, the encoded operator can be further processed using quantum algorithmic techniques such as quantum singular value transformation (QSVT)~
\cite{low2017optimal,gilyen2019quantum,martyn2021grand}. 

To construct a block encoding of the Scrooge $k$-th moment, we first construct a block encoding of its polynomial approximation. By choosing the approximation order appropriately, this yields a block encoding that approximates the Scrooge $k$-th moment to the prescribed accuracy. A detailed and rigorous derivation of the construction and complexity analysis is provided in Appendix~\ref{ap:Construction of block encoding}.

An $(\alpha,s,\epsilon)$ block encoding of an $n$-qubit operator $A$
is an $(n+s)$-qubit unitary $U$ satisfying
\begin{equation}
    \bigl\|A-\alpha(\bra{0}^{\otimes s}\otimes I)
U(\ket{0}^{\otimes s}\otimes I)\bigr\|_\infty\leq\epsilon.
\end{equation}

Suppose now that \(A\) is a density operator and that we have coherent access to a unitary \(U_A\) preparing its purification $\ket{\Psi_A}_{\sf R,\sf S}$, where $\sf R$ is an $s$-qubit
ancilla register, $\sf S$ is the $n$-qubit system register, and
$\Tr_{\sf R}[\ket{\Psi_A}\bra{\Psi_A}]=A$. Then
\begin{equation}
\left(U_A^\dagger\otimes I_{\sf S'}\right)
\left(I_{\sf R}\otimes\operatorname{SWAP}_{\sf S,\sf S'}\right)
\left(U_A\otimes I_{\sf S'}\right)
\end{equation}
is an exact $(1,s+n,0)$ block-encoding unitary of $A$ acting on the
additional $n$-qubit register $\sf S'$.

The polynomial approximation can be written as
\begin{align}
\chi_{\mathrm{PA}}^{(k)}(\lambda)
=
\sum_{i=1}^{\lambda}
\binom{\lambda}{i}(-1)^{i-1}
c_i p_{\mathrm{sym}}^{(k_i)}\sigma_i,
\label{eq:main-block-direct-decomposition}
\end{align}
where
$\sigma_i:=\Tr_{\mathrm{aux}}\!\left[\rho_{\mathrm{sym}}^{(k_i)}\right]$ and $c_i=D^{k_i}/\binom{D+k_i-1}{k_i}$.
For each $i$, we first estimate
$p_{\mathrm{sym}}^{(k_i)}$ using efficient implementation of symmetric subspace
projection (see Appendix~\ref{Corollary: precise estimation of p} for details) and denote the resulting estimate by
$\widehat p_{\mathrm{sym}}^{(k_i)}$. Next, we use a purified implementation of the channel
in Theorem~\ref{Efficient implementation of symmetric subspace projection; informal} to prepare a state approximating
$\rho_{\mathrm{sym}}^{(k_i)}$. By treating the last $a_i$ output registers as ancillas, the same circuit prepares a purification of a
state approximating $\sigma_i$ on the first $k$ registers. The
purification-based construction introduced above then yields an
approximate block encoding of $\sigma_i$. 

We combine the individual block encodings using the standard
linear-combination-of-unitaries (LCU) technique~\cite{gilyen2019quantum, childs2012hamiltonian,childs2017quantum}. In our construction,
the LCU coefficients and the resulting normalization factor are
\begin{equation}
\widehat y_i
:=
\binom{\lambda}{i}(-1)^{i-1}
c_i\widehat p_{\mathrm{sym}}^{(k_i)},
\end{equation}
\begin{equation}
\alpha
:=
\sum_{i=1}^{\lambda}|\widehat y_i|
=
\sum_{i=1}^{\lambda}
\binom{\lambda}{i}
c_i\widehat p_{\mathrm{sym}}^{(k_i)}.
\end{equation}
The LCU construction requires an additional
$\lceil\log_2\lambda\rceil$ ancilla qubits for the index register.

Taking these considerations together, and accounting for the errors 
introduced by the polynomial approximation and the approximate 
Schur transform, the normalization factor $\alpha$, the number of ancilla 
qubits, and the circuit complexity required to achieve a target error 
are bounded by the following theorem.

\begin{theorem}
[Complexity of the block encoding of the Scrooge $k$-th moment; informal]
\label{thm-main:block-encoding-PA}
Let $\rho$ be an $n$-qubit density matrix, $k$ be an integer satisfying $k\geq 2$ and  $12(k-1)^2\leq \M$. If we have query access to an efficient purification
circuit $U_\rho$ for $\rho$ and $U_\rho^\dagger$, then
there exists a quantum algorithm that, for any   $0<\epsilon_\mathrm{t}<1$ satisfying
$
\epsilon_{\mathrm t}
\geq
2C_{\add}
\exp\left[
-\M/\left(24(k-1)^2\right)
\right]
$
, constructs an
$(\alpha,a_{\mathrm{BE}},\epsilon_{\mathrm t})$ block encoding of
$\chi_{\mathrm{Sc}}^{(k)}(\rho)$, where
$
\alpha
=
\mathcal O\left[
\left(
2C_{\add}/\epsilon_{\mathrm t}
\right)^2
\right]$,
$
a_{\mathrm{BE}}
=
\operatorname{poly}\left(
n,k,1/\epsilon_{\mathrm t}
\right)$. The total number of queries to
$U_\rho$ and $U_\rho^\dagger$ is
$
\operatorname{poly}\left(
k,1/\epsilon_{\mathrm t}\right)$,
and the total circuit complexity is
$
\operatorname{poly}\left(
n,k,1/\epsilon_{\mathrm t}
\right)
$.
\end{theorem}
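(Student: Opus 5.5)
Our plan has three layers. First we use Theorem~\ref{thm-main:lambda-choice} to replace the Scrooge moment by its polynomial approximation. Then we build approximate block encodings of each term of Eq.~\eqref{eq:main-block-direct-decomposition}. Finally we combine them by LCU and track every error source against a budget that splits $\epsilon_{\mathrm t}$ in half.

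\textbf{Approximation layer.} We apply Theorem~\ref{thm-main:lambda-choice} with $x=\mathrm{add}$ and target $\epsilon_{\mathrm t}/2$. The hypothesis $\epsilon_{\mathrm t}\ge 2C_{\add}\exp[-\M/(24(k-1)^2)]$ is exactly the range condition there, so with $\lambda=\max\{\lfloor 2\log(2C_{\add}/\epsilon_{\mathrm t})\rfloor,1\}$ we get $\norm{\chi_{\mathrm{Sc}}^{(k)}-\chi_{\mathrm{PA}}^{(k)}(\lambda)}_\infty\le\norm{\cdot}_1\le\epsilon_{\mathrm t}/2$. It then remains to block-encode $\chi_{\mathrm{PA}}^{(k)}(\lambda)$ to error $\epsilon_{\mathrm t}/2$.

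\textbf{Normalization factor.} We bound $\alpha=\sum_i\binom{\lambda}{i}c_i\widehat p^{(k_i)}_{\mathrm{sym}}$. The key observation is that $c_ip^{(k_i)}_{\mathrm{sym}}=D^{k_i}\Tr[\rho^{\otimes k_i}\chi^{(k_i)}_{\mathrm{Haar}}]=\mathbb E_\phi[r_\phi^{k_i}]$. A Haar-moment estimate (the same concentration used in Appendix~\ref{app:approximation}) gives $\mathbb E[r_\phi^{k_i}]\le e^{\mathcal O(k_i^2/\M)}$. Since $k_i\le\lambda(k-1)+1$ and the admissible range $\lambda\lesssim\M/(12(k-1)^2)$ holds, this is $\mathcal O(1)$. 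Hence $\alpha\lesssim\sum_i\binom{\lambda}{i}\mathcal O(1)=\mathcal O(2^\lambda)=\mathcal O((2C_{\add}/\epsilon_{\mathrm t})^{2\log 2})$, which is within $\mathcal O((2C_{\add}/\epsilon_{\mathrm t})^2)$ since $2\log 2<2$. I expect this step to be the main obstacle. The bound on $\mathbb E[r_\phi^{k_i}]$ must be uniform up to order $k_i\sim\lambda k$, and the available concentration may only control it when $k_i^2\lesssim\M$. That is precisely the regime guaranteed by the $\lambda$ choice, but verifying it requires care. The estimates $\widehat p$ must also be shown not to inflate $\alpha$, which we handle by requiring relative accuracy in them.

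\textbf{Individual encodings and LCU errors.}
\begin{enumerate}[label=(\roman*)]
\item \emph{Estimating $p^{(k_i)}_{\mathrm{sym}}$.} We estimate each $p^{(k_i)}_{\mathrm{sym}}$ to additive error $\delta_p$ by averaging $a_\gamma^{(k_i)}$ over Schur-label measurements on $\rho^{\otimes m}$, as in the expectation-value algorithm with $O=I$. By Hoeffding this costs $\mathcal O(\log(\lambda/\eta)/\delta_p^2)$ repetitions, each using the purification circuit $\mathrm{poly}(k)$ times.
\item \emph{Purified channel.} We purify the channel $\mathcal M_{m_i,k_i}$ of Theorem~\ref{Efficient implementation of symmetric subspace projection; informal}. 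This means Schur transform, a coherent controlled preparation of a purification of $P_\gamma^{(k_i)}/\Tr P_\gamma^{(k_i)}$ (an isometry on the tableau register), and the inverse Schur transform, applied to $U_\rho^{\otimes m_i}\ket0$. Taking $m_i=\mathcal O(k_i^3/\delta^2)$ gives a state $\delta$-close in trace norm to $\rho^{(k_i)}_{\mathrm{sym}}$. Its marginal on the first $k$ registers is $\delta$-close to $\sigma_i$, and the SWAP construction converts this into a $(1,\cdot,\delta)$ block encoding of $\sigma_i$, since operator norm is at most trace norm.
\item \emph{Combining.} LCU with coefficients $\widehat y_i$ gives an operator whose error is at most $\sum_i\binom{\lambda}{i}c_i(\delta p_i+\delta_p)+\alpha\,\epsilon_{\mathrm{Sch}}\cdot\mathrm{poly}$. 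Here $c_i\le e^{k_i^2/D}$ is bounded.
\end{enumerate}
Choosing $\delta,\delta_p,\epsilon_{\mathrm{Sch}}=\Theta(\epsilon_{\mathrm t}/2^{\lambda+2})=\mathrm{poly}(\epsilon_{\mathrm t})$ makes the total at most $\epsilon_{\mathrm t}/2$.

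\textbf{Resource count.} Query counts are $\sum_i m_i=\mathrm{poly}(k,\lambda,1/\delta)=\mathrm{poly}(k,1/\epsilon_{\mathrm t})$. Ancillas number $\lceil\log_2\lambda\rceil+\sum$ of the purification and Schur workspaces, which is $\mathrm{poly}(n,k,1/\epsilon_{\mathrm t})$. Circuit size is polynomial by the efficient Schur transform with $\mathrm{poly}(k,\log D,\log 1/\epsilon_{\mathrm{Sch}})$ depth.
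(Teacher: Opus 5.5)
Your overall architecture matches the paper's: a budget of $\epsilon_{\mathrm t}/2$ for the polynomial approximation, a purified symmetric-projection channel plus the SWAP construction for each $\sigma_i$, and LCU with estimated coefficients. Bounding $c_ip^{(k_i)}_{\mathrm{sym}}=\mathbb E_\phi[r_\phi^{k_i}]\le e^{k_i(k_i-1)/(2\M)}$ via Lemma~\ref{lem:tommy's} is a clean substitute for Lemma~\ref{lem:weighted-symmetric-probability}.

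The genuine gap is in estimating the coefficients. Your claim $c_i\le e^{k_i^2/D}$ is false, because you have dropped a factor $k_i!$: in fact $c_i=D^{k_i}/\binom{D+k_i-1}{k_i}=k_i!\,D^{k_i}/D^{(k_i)}\approx k_i!$. Correspondingly, $1/k_i!\le p^{(k_i)}_{\mathrm{sym}}\le e^{k_i(k_i-1)/(2\M)}/k_i!$. An additive error $\delta_p$ in $\widehat p^{(k_i)}_{\mathrm{sym}}$ is therefore amplified to roughly $\binom{\lambda}{i}k_i!\,\delta_p$, both in the encoded operator and in $\alpha$. With your choice $\delta_p=\Theta(\epsilon_{\mathrm t}/2^{\lambda+2})$, this is of order $k_\lambda!\,\epsilon_{\mathrm t}$, far over budget. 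If you instead shrink $\delta_p$ to about $\epsilon_{\mathrm t}/(2^\lambda k_\lambda!)$, Hoeffding needs about $(k_\lambda!)^2$ samples. Since $k_\lambda=\Theta(k\log(1/\epsilon_{\mathrm t}))$, that is superpolynomial in $k$ and $1/\epsilon_{\mathrm t}$. The same amplification hits the Schur-transform error in the $p$-estimate, so $\epsilon_{\mathrm{Sch}}=\Theta(\epsilon_{\mathrm t}/2^{\lambda+2})$ is also insufficient there.

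What is missing is relative-error estimation. Your remark about requiring relative accuracy presupposes it, but Hoeffding cannot supply it, since it only uses $a_\gamma^{(k_i)}\in[0,1]$. The paper obtains it from Lemma~\ref{lem:relative-concentration}. For $m=k_i^2(k_i-1)$, the Schur-label sample $a_\gamma^{(k_i)}$ satisfies $\operatorname{Var}\le(e-1)\bigl(p^{(k_i)}_{\mathrm{sym}}\bigr)^2$. Chebyshev plus median-of-means (Lemma~\ref{lem:approximate-schur-mom}) then gives relative accuracy $\varepsilon$ with $\mathcal O(k_i^3\varepsilon^{-2}\log(\lambda/\delta))$ copies. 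This requires $\epsilon_{\mathrm{Sch}}\le\min\{\varepsilon/(2k_i!),1/(k_i!)^2\}$, which is cheap because the Schur cost is polylogarithmic in $1/\epsilon_{\mathrm{Sch}}$. The coefficient error becomes $\binom{\lambda}{i}c_ip^{(k_i)}_{\mathrm{sym}}\varepsilon$, which your moment bound controls.

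There is also a smaller slip: $\mathbb E[r_\phi^{k_i}]$ is not $\mathcal O(1)$ over the admissible range. Lemma~\ref{lem:tommy's} holds for every integer $s$, so your worry about needing $k_i^2\lesssim\M$ is unfounded. However, the $\lambda$ choice does not place you in that regime either. One only has $k_\lambda(k_\lambda-1)/\M\le\tfrac16\log(2C_{\add}/\epsilon_{\mathrm t})+\tfrac1{12}$, which gives a factor up to $e^{1/24}(2C_{\add}/\epsilon_{\mathrm t})^{1/12}$. This still yields $\alpha=\mathcal O\bigl((2C_{\add}/\epsilon_{\mathrm t})^{2\log2+1/12}\bigr)$, within the claimed bound. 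But your per-term accuracies, both the channel error and $\varepsilon$, must carry the extra factor $\exp[-k_\lambda(k_\lambda-1)/(2\M)]$, as in the paper's choice of $\varepsilon_i^{(p)}=\varepsilon_i^{(\sigma)}$.
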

See Theorem~\ref{thm:block-encoding-PA} in Appendix~\ref{ap:Construction of block encoding} for details.

\section{Discussion}\label{sec:discussion}

In this work, we develop methods for efficiently predicting the properties of Scrooge ensembles from copies of an unknown state $\rho$. We first introduce an analytical approximation to the higher-order moments of these ensembles. Combining this approximation with techniques from Schur--Weyl duality, we construct quantum algorithms for estimating observable expectation values with respect to Scrooge $k$-th moments. When a preparation circuit for $\rho$ is available, we also construct quantum circuits that implement block encodings of these moments.

Our approximation is motivated by the concentration properties of Haar-random states and takes the full density matrix $\rho$ as input. It offers a tunable error whose minimum achievable value is exponentially small in $1/\|\rho\|_\infty$, yielding an exponential improvement over previous results~\cite{mcginley2025scroogeensemblemanybodyquantum}. For fixed $k$, the approximation achieves negligible error when $\|\rho\|_\infty$ decreases sufficiently rapidly with system size. We expect this approach to facilitate the theoretical analysis of Scrooge ensembles at high precision. 
We expect potential applications in, e.g.,
 refined characterizations of the information-theoretic quantities studied in Refs.~\cite{mark2024maximum,mcginley2025scroogeensemblemanybodyquantum,mok2026nature} and more precise benchmarking protocols for quantum simulators~\cite{mark2023benchmarking,Choi2023PreparingRandom}.

Building on this approximation and Schur--Weyl duality, our quantum algorithm estimates observable expectation values with respect to Scrooge $k$-th moments to accuracy $\epsilon$ using $\widetilde{\mathcal{O}}(k^3/\epsilon^5)$ copies of $\rho$. This provides a way to predict properties of Scrooge ensembles without directly sampling from them. For example, it may enable the study of deeply thermalized ensembles using only copies of their associated density matrices, which can usually be efficiently prepared~\cite{chang2025deep, mark2024maximum, mok2026nature, mark2026deep, yan2026characterizing}.

Access to a preparation circuit for $\rho$ further allows us to construct block encodings of the Scrooge $k$-th moments. Previous work has shown that reweighting classical-shadow estimators using Scrooge second moments can improve the sample complexity of shadow tomography~\cite{tran2023measuring,innocenti2023shadow,mcginley2023shadow}. Our block-encoding construction may therefore provide a route to more efficient shadow-tomography protocols, although concrete implementations and their complexity guarantees remain to be established.

A key subroutine in our algorithms efficiently implements the projection of $\rho^{\otimes k}$ onto the $k$-fold symmetric subspace. To the best of our knowledge, this method has not been reported previously. We expect it to serve as a useful building block for other quantum algorithms, and similar techniques may be adapted to implement projections onto other irreducible representation spaces.

Several questions remain open. From a practical perspective, it would be valuable to identify further applications of Scrooge ensembles in quantum information processing. As mentioned, existing work demonstrates that when the knowledge of input states is partially known, the properties of Scrooge ensembles can be used to improve certain tasks like shadow tomography~\cite{tran2023measuring,innocenti2023shadow,mcginley2023shadow}. Extending this idea beyond shadow tomography may potentially improve other tasks, and our approximation methods and algorithms may help make such applications accessible.

From a complexity-theoretic perspective, an interesting question is to extend our guarantees to high-purity states. Although the minimum achievable approximation error is exponentially small in $1/\|\rho\|_\infty$, this guarantee becomes weak when $\|\rho\|_\infty$ is large. Whether this limitation can be overcome to obtain efficient algorithms for high-purity states, or even arbitrary density matrices, remains unclear. Finally, several bounds used in our complexity analysis are loose, suggesting that the sample complexity scaling of our algorithms may be improved. Determining the optimal sample complexity for this problem and developing algorithms that attain it are natural directions for future work.

\section{Acknowledgements}

We thank Thomas Schuster, Hui Zhai, Wen Wei Ho, Chang Liu, Yongtao Zhan, and Wai-Keong Mok for insightful discussions. We are especially grateful to Daniel Mark for pointing out the applications of Scrooge ensembles in shadow tomography.

\textbf{AI disclosure} --- GPT-5.6 Sol and -6 Astra (OpenAI) help generate candidate proofs for a subset of results in Appendix~\ref{app:projection} and intermediate proof steps for several lemmas and theorems in Appendix~\ref{app:expectation-value}, \ref{ap:Construction of block encoding} from targeted statements. All proofs involving AI use are verified, refined,
and reformulated by the authors. The same models are used to generate the codes for Fig.~\ref{fig:additive-error}. The authors also acknowledge the use of AI tools to improve the writing of the manuscript.
The authors take full responsibility for all content.

\bibliography{info_references}

@article{ho2022exact,
  author = {Ho, Wen Wei and Choi, Soonwon},
  title = {Exact emergent quantum state designs from quantum chaotic dynamics},
  journal = {Phys. Rev. Lett.},
  volume = {128},
  pages = {060601},
  year = {2022},
  doi = {10.1103/physrevlett.128.060601},
}

@article{shaw2025experimental,
  author = {Shaw, Adam L and Mark, Daniel K and Choi, Joonhee and Finkelstein, Ran and Scholl, Pascal and Choi, Soonwon and Endres, Manuel},
  title = {Experimental signatures of {Hilbert}-space ergodicity: Universal bitstring distributions and applications in noise learning},
  journal = {Phys. Rev. X},
  volume = {15},
  pages = {031001},
  year = {2025},
  doi = {10.1103/h6xy-zpx4},
}

@article{emerson2005scalable,
  author = {Emerson, Joseph and Alicki, Robert and {\.Z}yczkowski, Karol},
  title = {Scalable noise estimation with random unitary operators},
  journal = {J. Opt. B: Quantum Semiclassical Opt.},
  volume = {7},
  pages = {S347--S352},
  year = {2005},
  doi = {10.1088/1464-4266/7/10/021},
}

@article{knill2008randomized,
  author = {Knill, Emanuel and Leibfried, Dietrich and Reichle, Rolf and Britton, Joe and Blakestad, R Brad and Jost, John D and Langer, Chris and Ozeri, Roee and Seidelin, Signe and Wineland, David J},
  title = {Randomized benchmarking of quantum gates},
  journal = {Phys. Rev. A},
  volume = {77},
  pages = {012307},
  year = {2008},
  doi = {10.1103/physreva.77.012307},
}

@inproceedings{gilyen2019quantum,
  author = {Gily{\'e}n, Andr{\'a}s and Su, Yuan and Low, Guang Hao and Wiebe, Nathan},
  title = {Quantum singular value transformation and beyond: exponential improvements for quantum matrix arithmetics},
  booktitle = {Proceedings of the 51st Annual ACM SIGACT Symposium on Theory of Computing},
  pages = {193--204},
  publisher = {ACM},
  year = {2019},
  doi = {10.1145/3313276.3316366},
}

@article{innocenti2023shadow,
  author = {Innocenti, Luca and Lorenzo, Salvatore and Palmisano, Ivan and Albarelli, Francesco and Ferraro, Alessandro and Paternostro, Mauro and Palma, G Massimo},
  title = {Shadow tomography on general measurement frames},
  journal = {PRX Quantum},
  volume = {4},
  pages = {040328},
  year = {2023},
  doi = {10.1103/prxquantum.4.040328},
}

@article{mark2024maximum,
  author = {Mark, Daniel K and Surace, Federica and Elben, Andreas and Shaw, Adam L and Choi, Joonhee and Refael, Gil and Endres, Manuel and Choi, Soonwon},
  title = {Maximum entropy principle in deep thermalization and in {Hilbert}-space ergodicity},
  journal = {Phys. Rev. X},
  volume = {14},
  pages = {041051},
  year = {2024},
  doi = {10.1103/physrevx.14.041051},
}

@article{mark2023benchmarking,
  author = {Mark, Daniel K and Choi, Joonhee and Shaw, Adam L and Endres, Manuel and Choi, Soonwon},
  title = {Benchmarking quantum simulators using ergodic quantum dynamics},
  journal = {Phys. Rev. Lett.},
  volume = {131},
  pages = {110601},
  year = {2023},
  doi = {10.1103/physrevlett.131.110601},
}

@article{JRW94,
  author = {Jozsa, Richard and Robb, Daniel and Wootters, William K.},
  title = {Lower bound for accessible information in quantum mechanics},
  journal = {Phys. Rev. A},
  volume = {49},
  pages = {668--677},
  year = {1994},
  doi = {10.1103/physreva.49.668},
}

@misc{harrow2005applications,
  author = {Harrow, Aram W},
  title = {Applications of coherent classical communication and the {Schur} transform to quantum information theory},
  year = {2005},
  eprint = {quant-ph/0512255},
  archivePrefix = {arXiv},
  primaryClass = {quant-ph},
}

@article{sykora1974quantum,
  author = {S{\'y}kora, Stanislav},
  title = {Quantum theory and the {Bayesian} inference problems},
  journal = {J. Stat. Phys.},
  volume = {11},
  pages = {17--27},
  year = {1974},
  doi = {10.1007/bf01019475},
}

@article{cotler_emergent_2023,
  author = {Cotler, Jordan S. and Mark, Daniel K. and Huang, Hsin-Yuan and Hern{\'a}ndez, Felipe and Choi, Joonhee and Shaw, Adam L. and Endres, Manuel and Choi, Soonwon},
  title = {Emergent quantum state designs from individual many-body wave functions},
  journal = {PRX Quantum},
  volume = {4},
  pages = {010311},
  year = {2023},
  doi = {10.1103/prxquantum.4.010311},
}

@article{deutsch1991quantum,
  author = {Deutsch, Josh M},
  title = {Quantum statistical mechanics in a closed system},
  journal = {Phys. Rev. A},
  volume = {43},
  pages = {2046--2049},
  year = {1991},
  doi = {10.1103/physreva.43.2046},
}

@article{fisher2023random,
  author = {Fisher, Matthew P. A. and Khemani, Vedika and Nahum, Adam and Vijay, Sagar},
  title = {Random quantum circuits},
  journal = {Annu. Rev. Condens. Matter Phys.},
  volume = {14},
  pages = {335--379},
  year = {2023},
  doi = {10.1146/annurev-conmatphys-031720-030658},
}

@article{srednicki1994chaos,
  author = {Srednicki, Mark},
  title = {Chaos and quantum thermalization},
  journal = {Phys. Rev. E},
  volume = {50},
  pages = {888--901},
  year = {1994},
  doi = {10.1103/physreve.50.888},
}

@article{rigol2008thermalization,
  author = {Rigol, Marcos and Dunjko, Vanja and Olshanii, Maxim},
  title = {Thermalization and its mechanism for generic isolated quantum systems},
  journal = {Nature},
  volume = {452},
  pages = {854--858},
  year = {2008},
  doi = {10.1038/nature06838},
}

@article{nahum2017quantum,
  author = {Nahum, Adam and Ruhman, Jonathan and Vijay, Sagar and Haah, Jeongwan},
  title = {Quantum entanglement growth under random unitary dynamics},
  journal = {Phys. Rev. X},
  volume = {7},
  pages = {031016},
  year = {2017},
  doi = {10.1103/physrevx.7.031016},
}

@article{roberts2017chaos,
  author = {Roberts, Daniel A and Yoshida, Beni},
  title = {Chaos and complexity by design},
  journal = {J. High Energy Phys.},
  volume = {2017},
  number = {4},
  pages = {121},
  year = {2017},
  doi = {10.1007/jhep04(2017)121},
}

@article{cotler2017chaos,
  author = {Cotler, Jordan and Hunter-Jones, Nicholas and Liu, Junyu and Yoshida, Beni},
  title = {Chaos, complexity, and random matrices},
  journal = {J. High Energy Phys.},
  volume = {2017},
  number = {11},
  pages = {048},
  year = {2017},
  doi = {10.1007/jhep11(2017)048},
}

@article{nakata2017efficient,
  author = {Nakata, Yoshifumi and Hirche, Christoph and Koashi, Masato and Winter, Andreas},
  title = {Efficient quantum pseudorandomness with nearly time-independent {Hamiltonian} dynamics},
  journal = {Phys. Rev. X},
  volume = {7},
  pages = {021006},
  year = {2017},
  doi = {10.1103/physrevx.7.021006},
}

@article{hayden2007black,
  author = {Hayden, Patrick and Preskill, John},
  title = {Black holes as mirrors: quantum information in random subsystems},
  journal = {J. High Energy Phys.},
  volume = {2007},
  number = {9},
  pages = {120},
  year = {2007},
  doi = {10.1088/1126-6708/2007/09/120},
}

@article{DAlessio2016QuantumChaosEigenstateThermalization,
  author = {D'Alessio, L. and Kafri, Y. and Polkovnikov, A. and Rigol, M.},
  title = {From quantum chaos and eigenstate thermalization to statistical mechanics and thermodynamics},
  journal = {Adv. Phys.},
  volume = {65},
  pages = {239--362},
  year = {2016},
  doi = {10.1080/00018732.2016.1198134},
}

@article{page1993average,
  author = {Page, Don N},
  title = {Average entropy of a subsystem},
  journal = {Phys. Rev. Lett.},
  volume = {71},
  pages = {1291--1294},
  year = {1993},
  doi = {10.1103/physrevlett.71.1291},
}

@inproceedings{page1994black,
  author = {Page, Don N},
  title = {Black hole information},
  booktitle = {Proceedings of the 5th Canadian Conference on General Relativity and Relativistic Astrophysics},
  editor = {Mann, R. B. and McLenaghan, R. G.},
  pages = {1--41},
  publisher = {World Scientific},
  address = {Singapore},
  year = {1994},
  eprint = {hep-th/9305040},
  archivePrefix = {arXiv},
  primaryClass = {hep-th},
}

@article{Choi2023PreparingRandom,
  author = {Choi, J. and Shaw, A. L. and Madjarov, I. S. and Xie, X. and Finkelstein, R. and Covey, J. P. and Cotler, J. S. and Mark, D. K. and Huang, H.-Y. and Kale, A. and Pichler, H. and Brand{\~{a}}o, F. G. S. L. and Choi, S. and Endres, M.},
  title = {Preparing random states and benchmarking with many-body quantum chaos},
  journal = {Nature},
  volume = {613},
  pages = {468--473},
  year = {2023},
  doi = {10.1038/s41586-022-05442-1},
}

@article{Arute2019QuantumSupremacy,
  author = {Arute, F. and Arya, K. and Babbush, R. and Bacon, D. and Bardin, J. C. and Barends, R. and Biswas, R. and Boixo, S. and Brand{\~{a}}o, F. G. S. L. and Buell, D. A. and Burkett, B. and Chen, Y. and Chen, Z. and Chiaro, B. and Collins, R. and Courtney, W. and Dunsworth, A. and Farhi, E. and Foxen, B. and Fowler, A. and Gidney, C. and Giustina, M. and Graff, R. and Guerin, K. and Habegger, S. and Harrigan, M. P. and Hartmann, M. J. and Ho, A. and Hoffmann, M. and Huang, T. and Humble, T. S. and Isakov, S. V. and Jeffrey, E. and Jiang, Z. and Kafri, D. and Kechedzhi, K. and Kelly, J. and Klimov, P. V. and Knysh, S. and Korotkov, A. N. and Kostritsa, F. and Landhuis, D. and Lindmark, M. and Lucero, E. and Lyakh, D. and Mandr{\`{a}}, S. and McClean, J. R. and McEwen, M. and Megrant, A. and Mi, X. and Michielsen, K. and Mohseni, M. and Mutus, J. and Naaman, O. and Neeley, M. and Neill, C. and Niu, M. Y. and Ostby, E. and Petukhov, A. and Platt, J. C. and Quintana, C. and Rieffel, E. G. and Roushan, P. and Rubin, N. C. and Sank, D. and Satzinger, K. J. and Smelyanskiy, V. and Sung, K. J. and Trevithick, M. D. and Vainsencher, A. and Villalonga, B. and White, T. and Yao, Z. J. and Yeh, P. and Zalcman, A. and Neven, H. and Martinis, J. M.},
  title = {Quantum supremacy using a programmable superconducting processor},
  journal = {Nature},
  volume = {574},
  pages = {505--510},
  year = {2019},
  doi = {10.1038/s41586-019-1666-5},
}

@article{Huang2020PredictingManyPropertiesQuantum,
  author = {Huang, Hsin-Yuan and Kueng, Richard and Preskill, John},
  title = {Predicting many properties of a quantum system from very few measurements},
  journal = {Nat. Phys.},
  volume = {16},
  pages = {1050--1057},
  year = {2020},
  doi = {10.1038/s41567-020-0932-7},
}

@article{elben2023randomized,
  author = {Elben, Andreas and Flammia, Steven T and Huang, Hsin-Yuan and Kueng, Richard and Preskill, John and Vermersch, Beno{\^\i}t and Zoller, Peter},
  title = {The randomized measurement toolbox},
  journal = {Nat. Rev. Phys.},
  volume = {5},
  pages = {9--24},
  year = {2023},
  doi = {10.1038/s42254-022-00535-2},
}

@article{huang2025certifying,
  author = {Huang, Hsin-Yuan and Preskill, John and Soleimanifar, Mehdi},
  title = {Certifying almost all quantum states with few single-qubit measurements},
  journal = {Nat. Phys.},
  volume = {21},
  pages = {1834--1841},
  year = {2025},
  doi = {10.1038/s41567-025-03025-1},
}

@article{tran2023measuring,
  author = {Tran, Minh C and Mark, Daniel K and Ho, Wen Wei and Choi, Soonwon},
  title = {Measuring arbitrary physical properties in analog quantum simulation},
  journal = {Phys. Rev. X},
  volume = {13},
  pages = {011049},
  year = {2023},
  doi = {10.1103/physrevx.13.011049},
}

@misc{Venkatesh2026ErrorCorrection,
  author = {Venkatesh, A. and Allen, R. R. and Pilatowsky-Cameo, S. and Ye, B. and Choi, S.},
  title = {Quantum thermalization achieves optimal approximate quantum error correction},
  year = {2026},
  eprint = {2609.04121},
  archivePrefix = {arXiv},
  primaryClass = {quant-ph},
}

@article{dankert2009exact,
  author = {Dankert, Christoph and Cleve, Richard and Emerson, Joseph and Livine, Etera},
  title = {Exact and approximate unitary 2-designs and their application to fidelity estimation},
  journal = {Phys. Rev. A},
  volume = {80},
  pages = {012304},
  year = {2009},
  doi = {10.1103/physreva.80.012304},
}

@inproceedings{ji2018pseudorandom,
  author = {Ji, Zhengfeng and Liu, Yi-Kai and Song, Fang},
  title = {Pseudorandom quantum states},
  booktitle = {Advances in Cryptology -- CRYPTO 2018},
  pages = {126--152},
  publisher = {Springer International Publishing},
  year = {2018},
  doi = {10.1007/978-3-319-96878-0_5},
}

@article{goldstein2006distribution,
  author = {Goldstein, Sheldon and Lebowitz, Joel L and Tumulka, Roderich and Zangh{\`\i}, Nino},
  title = {On the distribution of the wave function for systems in thermal equilibrium},
  journal = {J. Stat. Phys.},
  volume = {125},
  pages = {1193--1221},
  year = {2006},
  doi = {10.1007/s10955-006-9210-z},
}

@article{goldstein2016universal,
  author = {Goldstein, Sheldon and Lebowitz, Joel L and Mastrodonato, Christian and Tumulka, Roderich and Zangh{\`\i}, Nino},
  title = {Universal probability distribution for the wave function of a quantum system entangled with its environment},
  journal = {Commun. Math. Phys.},
  volume = {342},
  pages = {965--988},
  year = {2016},
  doi = {10.1007/s00220-015-2536-0},
}

@article{mcginley2023shadow,
  author = {McGinley, Max and Fava, Michele},
  title = {Shadow tomography from emergent state designs in analog quantum simulators},
  journal = {Phys. Rev. Lett.},
  volume = {131},
  pages = {160601},
  year = {2023},
  doi = {10.1103/physrevlett.131.160601},
}

@article{du2026no,
  author = {Du, Zhenyu and Cheng, Siyuan and Zhao, Qi and Ma, Xiongfeng and Yuan, Xiao},
  title = {No cloning of quantum ensembles},
  journal = {Nat. Commun.},
  volume = {17},
  pages = {8052},
  year = {2026},
  doi = {10.1038/s41467-026-74924-x},
}

@article{bacon2006efficient,
  author = {Bacon, Dave and Chuang, Isaac L and Harrow, Aram W},
  title = {Efficient quantum circuits for {Schur} and {Clebsch--Gordan} transforms},
  journal = {Phys. Rev. Lett.},
  volume = {97},
  pages = {170502},
  year = {2006},
  doi = {10.1103/physrevlett.97.170502},
}

@misc{burchardt2025high,
  author = {Burchardt, Adam and Fei, Jiani and Grinko, Dmitry and Larocca, Martin and Ozols, Maris and Timmerman, Sydney and Visnevskyi, Vladyslav},
  title = {High-dimensional quantum Schur transforms},
  year = {2025},
  eprint = {2509.22640},
  archivePrefix = {arXiv},
  primaryClass = {quant-ph},
}

@misc{harrow2013church,
  author = {Harrow, Aram W},
  title = {The church of the symmetric subspace},
  year = {2013},
  eprint = {1308.6595},
  archivePrefix = {arXiv},
  primaryClass = {quant-ph},
}

@article{keyl2001estimating,
  author = {Keyl, Michael and Werner, Reinhard F},
  title = {Estimating the spectrum of a density operator},
  journal = {Phys. Rev. A},
  volume = {64},
  pages = {052311},
  year = {2001},
  doi = {10.1103/physreva.64.052311},
}

@misc{childs2012hamiltonian,
  author = {Childs, Andrew M and Wiebe, Nathan},
  title = {{Hamiltonian} simulation using linear combinations of unitary operations},
  year = {2012},
  eprint = {1202.5822},
  archivePrefix = {arXiv},
  primaryClass = {quant-ph},
}

@article{yan2026characterizing,
  author = {Yan, Zhiguang and Ge, Zi-Yong and Li, Rui and Zhang, Yu-Ran and Nori, Franco and Nakamura, Yasunobu},
  title = {Characterizing many-body dynamics with projected ensembles on a superconducting quantum processor},
  journal = {Sci. Adv.},
  volume = {12},
  pages = {eaeb8213},
  year = {2026},
  doi = {10.1126/sciadv.aeb8213},
}

@article{liu2026observation,
  author = {Liu, Wenquan and Pan, Zou-Wei and Fu, Yue and Ho, Wen Wei and Rong, Xing},
  title = {Observation of hierarchy of {Hilbert}-space ergodicities in the quantum dynamics of a single spin},
  journal = {Phys. Rev. Lett.},
  volume = {136},
  pages = {020401},
  year = {2026},
  doi = {10.1103/6msb-cxbc},
}

@article{morvan2024phase,
  author = {Morvan, A. and Villalonga, B. and Mi, X. and Mandr{\`a}, S. and Bengtsson, A. and Klimov, P. V. and Chen, Z. and Hong, S. and Erickson, C. and Drozdov, I. K. and Chau, J. and Laun, G. and Movassagh, R. and Asfaw, A. and Brand{\~{a}}o, L. T. A. N. and Peralta, R. and Abanin, D. and Acharya, R. and Allen, R. and Andersen, T. I. and Anderson, K. and Ansmann, M. and Arute, F. and Arya, K. and Atalaya, J. and Bardin, J. C. and Bilmes, A. and Bortoli, G. and Bourassa, A. and Bovaird, J. and Brill, L. and Broughton, M. and Buckley, B. B. and Buell, D. A. and Burger, T. and Burkett, B. and Bushnell, N. and Campero, J. and Chang, H.-S. and Chiaro, B. and Chik, D. and Chou, C. and Cogan, J. and Collins, R. and Conner, P. and Courtney, W. and Crook, A. L. and Curtin, B. and Debroy, D. M. and Barba, A. Del Toro and Demura, S. and Paolo, A. Di and Dunsworth, A. and Faoro, L. and Farhi, E. and Fatemi, R. and Ferreira, V. S. and Burgos, L. Flores and Forati, E. and Fowler, A. G. and Foxen, B. and Garcia, G. and Genois, {\'E}. and Giang, W. and Gidney, C. and Gilboa, D. and Giustina, M. and Gosula, R. and Dau, A. Grajales and Gross, J. A. and Habegger, S. and Hamilton, M. C. and Hansen, M. and Harrigan, M. P. and Harrington, S. D. and Heu, P. and Hoffmann, M. R. and Huang, T. and Huff, A. and Huggins, W. J. and Ioffe, L. B. and Isakov, S. V. and Iveland, J. and Jeffrey, E. and Jiang, Z. and Jones, C. and Juhas, P. and Kafri, D. and Khattar, T. and Khezri, M. and Kieferov{\'a}, M. and Kim, S. and Kitaev, A. and Klots, A. R. and Korotkov, A. N. and Kostritsa, F. and Kreikebaum, J. M. and Landhuis, D. and Laptev, P. and Lau, K.-M. and Laws, L. and Lee, J. and Lee, K. W. and Lensky, Y. D. and Lester, B. J. and Lill, A. T. and Liu, W. and Livingston, W. P. and Locharla, A. and Malone, F. D. and Martin, O. and Martin, S. and McClean, J. R. and McEwen, M. and Miao, K. C. and Mieszala, A. and Montazeri, S. and Mruczkiewicz, W. and Naaman, O. and Neeley, M. and Neill, C. and Nersisyan, A. and Newman, M. and Ng, J. H. and Nguyen, A. and Nguyen, M. and Niu, M. Yuezhen and O'Brien, T. E. and Omonije, S. and Opremcak, A. and Petukhov, A. and Potter, R. and Pryadko, L. P. and Quintana, C. and Rhodes, D. M. and Rocque, C. and Rosenberg, E. and Rubin, N. C. and Saei, N. and Sank, D. and Sankaragomathi, K. and Satzinger, K. J. and Schurkus, H. F. and Schuster, C. and Shearn, M. J. and Shorter, A. and Shutty, N. and Shvarts, V. and Sivak, V. and Skruzny, J. and Smith, W. C. and Somma, R. D. and Sterling, G. and Strain, D. and Szalay, M. and Thor, D. and Torres, A. and Vidal, G. and Heidweiller, C. Vollgraff and White, T. and Woo, B. W. K. and Xing, C. and Yao, Z. J. and Yeh, P. and Yoo, J. and Young, G. and Zalcman, A. and Zhang, Y. and Zhu, N. and Zobrist, N. and Rieffel, E. G. and Biswas, R. and Babbush, R. and Bacon, D. and Hilton, J. and Lucero, E. and Neven, H. and Megrant, A. and Kelly, J. and Roushan, P. and Aleiner, I. and Smelyanskiy, V. and Kechedzhi, K. and Chen, Y. and Boixo, S.},
  title = {Phase transitions in random circuit sampling},
  journal = {Nature},
  volume = {634},
  pages = {328--333},
  year = {2024},
  doi = {10.1038/s41586-024-07998-6},
}

@article{google2025observation,
  author = {{Google Quantum AI and Collaborators}},
  title = {Observation of constructive interference at the edge of quantum ergodicity},
  journal = {Nature},
  volume = {646},
  pages = {825--830},
  year = {2025},
  doi = {10.1038/s41586-025-09526-6},
}

@article{manna2025projected,
  author = {Manna, Sandipan and Roy, Sthitadhi and Sreejith, G. J.},
  title = {Projected ensemble in a system with locally supported conserved charges},
  journal = {Phys. Rev. B},
  volume = {111},
  pages = {144302},
  year = {2025},
  doi = {10.1103/physrevb.111.144302},
}

@misc{liu2026conditional,
  author = {Liu, Yinchen and McGinley, Max and Schuster, Thomas and Gosset, David},
  title = {Conditional dependence and {Scrooge} ensembles in shallow random quantum circuits},
  year = {2026},
  eprint = {2608.12255},
  archivePrefix = {arXiv},
  primaryClass = {quant-ph},
}

@misc{liu2026emergence,
  author = {Liu, Zeyu and Zhang, Pengfei},
  title = {Emergence of the {Scrooge} ensemble in the {Sachdev--Ye--Kitaev} model},
  year = {2026},
  eprint = {2607.04864},
  archivePrefix = {arXiv},
  primaryClass = {quant-ph},
}

@misc{mcginley2025scroogeensemblemanybodyquantum,
  author = {McGinley, Max and Schuster, Thomas},
  title = {The {Scrooge} ensemble in many-body quantum systems},
  year = {2025},
  eprint = {2511.17172},
  archivePrefix = {arXiv},
  primaryClass = {quant-ph},
}

@misc{mok2026nature,
  author = {Mok, Wai-Keong and Haug, Tobias and Ho, Wen Wei and Preskill, John},
  title = {Nature is stingy: Universality of {Scrooge} ensembles in quantum many-body systems},
  year = {2026},
  eprint = {2601.00266},
  archivePrefix = {arXiv},
  primaryClass = {quant-ph},
}

@article{chang2025deep,
  author = {Chang, Rui-An and Shrotriya, Harshank and Ho, Wen Wei and Ippoliti, Matteo},
  title = {Deep thermalization under charge-conserving quantum dynamics},
  journal = {PRX Quantum},
  volume = {6},
  pages = {020343},
  year = {2025},
  doi = {10.1103/prxquantum.6.020343},
}

@article{varikuti2024unraveling,
  author = {Varikuti, Naga Dileep and Bandyopadhyay, Soumik},
  title = {Unraveling the emergence of quantum state designs in systems with symmetry},
  journal = {Quantum},
  volume = {8},
  pages = {1456},
  year = {2024},
  doi = {10.22331/q-2024-08-29-1456},
}

@article{liu2024deep,
  author = {Liu, Chang and Huang, Qi Camm and Ho, Wen Wei},
  title = {Deep thermalization in {Gaussian} continuous-variable quantum systems},
  journal = {Phys. Rev. Lett.},
  volume = {133},
  pages = {260401},
  year = {2024},
  doi = {10.1103/physrevlett.133.260401},
}

@article{bejan2025matchgate,
  author = {Bejan, Mircea and B{\'e}ri, Benjamin and McGinley, Max},
  title = {Matchgate circuits deeply thermalize},
  journal = {Phys. Rev. Lett.},
  volume = {135},
  pages = {020401},
  year = {2025},
  doi = {10.1103/v8kp-39ry},
}

@article{lucas2023generalized,
  author = {Lucas, Maxime and Piroli, Lorenzo and De Nardis, Jacopo and De Luca, Andrea},
  title = {Generalized deep thermalization for free fermions},
  journal = {Phys. Rev. A},
  volume = {107},
  pages = {032215},
  year = {2023},
  doi = {10.1103/physreva.107.032215},
}

@misc{mark2026deep,
  author = {Mark, Daniel K and Endres, Manuel and Ippoliti, Matteo and Ho, Wen Wei and Choi, Soonwon},
  title = {Deep thermalization and {Hilbert}-space ergodicity},
  year = {2026},
  eprint = {2609.30248},
  archivePrefix = {arXiv},
  primaryClass = {quant-ph},
}

@misc{wu2026exact,
  author = {Wu, Yue and Tong, Yuzhi and Mao, Liang and Zhang, Pengfei},
  title = {Exact {Hilbert}-space ergodicity from continuous monitoring},
  year = {2026},
  eprint = {2606.29042},
  archivePrefix = {arXiv},
  primaryClass = {quant-ph},
}

@inproceedings{ananth2022cryptography,
  author = {Ananth, Prabhanjan and Qian, Luowen and Yuen, Henry},
  title = {Cryptography from pseudorandom quantum states},
  booktitle = {Annual International Cryptology Conference},
  pages = {208--236},
  publisher = {Springer},
  year = {2022},
}

@inproceedings{kretschmer2023quantum,
  author = {Kretschmer, William and Qian, Luowen and Sinha, Makrand and Tal, Avishay},
  title = {Quantum cryptography in algorithmica},
  booktitle = {Proceedings of the 55th Annual {ACM} Symposium on Theory of Computing},
  pages = {1589--1602},
  year = {2023},
}

@article{sekino2008fast,
  author = {Sekino, Yasuhiro and Susskind, Leonard},
  title = {Fast scramblers},
  journal = {Journal of High Energy Physics},
  volume = {2008},
  number = {10},
  pages = {065},
  year = {2008},
  doi = {10.1088/1126-6708/2008/10/065},
}

@article{low2019hamiltonian,
  author = {Low, Guang Hao and Chuang, Isaac L.},
  title = {Hamiltonian simulation by qubitization},
  journal = {Quantum},
  volume = {3},
  pages = {163},
  year = {2019},
  doi = {10.22331/q-2019-07-12-163},
}

@inproceedings{chakraborty2019power,
  author = {Chakraborty, Shantanav and Gily{\'e}n, Andr{\'a}s
            and Jeffery, Stacey},
  title = {The power of block-encoded matrix powers:
           Improved regression techniques via faster
           {Hamiltonian} simulation},
  booktitle = {46th International Colloquium on Automata,
               Languages, and Programming (ICALP 2019)},
  series = {Leibniz International Proceedings in Informatics},
  volume = {132},
  pages = {33:1--33:14},
  year = {2019},
  doi = {10.4230/LIPIcs.ICALP.2019.33},
}

@article{martyn2021grand,
  author = {Martyn, John M. and Rossi, Zane M. and
            Tan, Andrew K. and Chuang, Isaac L.},
  title = {Grand unification of quantum algorithms},
  journal = {PRX Quantum},
  volume = {2},
  pages = {040203},
  year = {2021},
  doi = {10.1103/PRXQuantum.2.040203},
}

@article{low2017optimal,
  author = {Low, Guang Hao and Chuang, Isaac L.},
  title = {Optimal {Hamiltonian} simulation by quantum signal processing},
  journal = {Physical Review Letters},
  volume = {118},
  pages = {010501},
  year = {2017},
  doi = {10.1103/PhysRevLett.118.010501},
}

@article{childs2017quantum,
  author = {Childs, Andrew M. and Kothari, Robin and Somma, Rolando D.},
  title = {Quantum algorithm for systems of linear equations
           with exponentially improved dependence on precision},
  journal = {SIAM Journal on Computing},
  volume = {46},
  number = {6},
  pages = {1920--1950},
  year = {2017},
  doi = {10.1137/16M1087072},
}

@article{pawlik2026restricted,
  title={Restricted typicality in non-equilibrium quantum many-body systems},
  author={Pawlik, Konrad and Sierant, Piotr and Zakrzewski, Jakub},
  journal={arXiv preprint arXiv:2609.07832},
  year={2026}
}
\clearpage
\appendix
\onecolumngrid

\numberwithin{equation}{section}
\setcounter{theorem}{0}
\setcounter{figure}{0}
\renewcommand{\thefigure}{S\arabic{figure}}
\renewcommand{\thelemma}{S\arabic{lemma}}
\renewcommand{\thetheorem}{S\arabic{theorem}}
\renewcommand{\thecorollary}{S\arabic{corollary}}

\makeatletter
\begingroup
\let\SavedContentsLine\contentsline
\let\SavedTocRestore\tocdepth@restore
\renewcommand{\contentsline}[4]{}
\def\tocdepth@restore{%
  \SavedTocRestore
  \let\contentsline\SavedContentsLine
}
\tableofcontents
\endgroup
\makeatother

\section{Notations}
\label{sec:common-definitions}

\begin{itemize}
    \item $\rho$: Density operator on a $D$-dimensional Hilbert space $\mathcal H\simeq\mathbb C^D$.

    \item $\M:=\lfloor\|\rho\|_\infty^{-1}\rfloor$: Floored inverse of the largest eigenvalue of $\rho$.


    \item $D$: Dimension of the Hilbert space.

    \item $n:=\lceil\log_2D\rceil$: Number of qubits required to encode the system. For an $n$-qubit system, $D=2^n$.

    \item $k$: Target moment order.

    \item $\lambda$: Polynomial-approximation order.

    \item $a_i:=(i-1)(k-1)$: Number of auxiliary subsystems in the $i$-th polynomial term.

    \item $k_i:=k+a_i=i(k-1)+1$: Total number of subsystems in the $i$-th polynomial term.

    \item $c_i:=D^{k_i}/\binom{D+k_i-1}{k_i}$: Coefficient associated with the $i$-th polynomial term.


    \item $r_\phi:=D\bra{\phi}\rho\ket{\phi}$: Rescaled expectation value of $\rho$ in the state $\ket{\phi}$.

    \item $\Exp{\phi\sim\mathrm{Haar}}{f(\phi)}$: Haar expectation of $f(\phi)$.

    \item $\norm{f(\phi)}_{L^p}:=\bigl(\Exp{\phi\sim\mathrm{Haar}}{|f(\phi)|^p}\bigr)^{1/p}$: $L^p$ norm with respect to the Haar measure.

    \item $\Pi_{\mathrm{sym}}^{(t)}$: Projector onto the totally symmetric subspace $\mathrm{Sym}^t(\mathbb C^D)$.

    \item $\chi_{\mathrm{Haar}}^{(t)}:=\Exp{\phi\sim\mathrm{Haar}}{\ketbrat{\phi}^{\otimes t}}=\Pi_{\mathrm{sym}}^{(t)}/\binom{D+t-1}{t}$: Haar $t$-th moment.

    \item $\chi_{\mathrm{Sc}}^{(k)}(\rho)$: Scrooge $k$-th moment associated with $\rho$. The dependence on $\rho$ may be omitted when unambiguous.


    \item $\chi_{\mathrm{PA}}^{(k)}(\lambda)$: Polynomial approximation of order $\lambda$ to the Scrooge $k$-th moment.

    \item $p_{\mathrm{sym}}^{(t)}:=\Tr[\rho^{\otimes t}\Pi_{\mathrm{sym}}^{(t)}]$: Probability of projecting $\rho^{\otimes t}$ onto the totally symmetric subspace, with $p_{\mathrm{sym}}^{(0)}:=1$.
    

    \item $\rho_{\mathrm{sym}}^{(t)}:=\Pi_{\mathrm{sym}}^{(t)}\rho^{\otimes t}\Pi_{\mathrm{sym}}^{(t)}/p_{\mathrm{sym}}^{(t)}$: Normalized projection of $\rho^{\otimes t}$ onto the totally symmetric subspace.

    \item $\Tr_{\mathrm{aux}}$: Partial trace over the last $a_i=k_i-k$ subsystems of the $i$-th polynomial term.


    \item $D^{(t)}:=D(D+1)\cdots(D+t-1)$: Rising factorial.

    \item $\epsilon_{\mathrm t}$: Total target error.

    \item $\epsilon_{\mathrm{Sch}}$: Schur-transform implementation error.

    \item $\delta$: Failure probability.
\end{itemize}

\section{Polynomial approximation for Scrooge moments}
\label{app:approximation}
In this appendix, we derive additive and relative error bounds for the polynomial approximation introduced in the main text. We use the notation of Appendix~\ref{sec:common-definitions} and consider integers $k\geq2$ and $\lambda\geq1$. Recall that the Scrooge $k$-th moment is
\begin{equation}
\chi_{\mathrm{Sc}}^{(k)}(\rho)
=\mathbb{E}_{\phi\sim\mathrm{Haar}}\left[
\frac{\left(\sqrt{D\rho}\ketbra{\phi}{\phi}\sqrt{D\rho}\right)^{\otimes k}}{r_\phi^{k-1}}
\right].
\label{eq:scrooge-moment}
\end{equation}
The polynomial
\begin{equation}
g_\lambda(x)
:=\sum_{j=0}^{\lambda-1}(1-x)^j
=\sum_{i=1}^{\lambda}\binom{\lambda}{i}(-1)^{i-1}x^{i-1}
\end{equation}
approximates the normalization factor by replacing $1/r_\phi^{k-1}$ with $g_\lambda(r_\phi^{k-1})$. This gives
\begin{align}
\chi_{\mathrm{PA}}^{(k)}(\lambda)
&:=\mathbb{E}_{\phi\sim\mathrm{Haar}}\left[
\left(\sqrt{D\rho}\ketbra{\phi}{\phi}\sqrt{D\rho}\right)^{\otimes k}
g_\lambda(r_\phi^{k-1})
\right]\nonumber\\
&=\sum_{i=1}^{\lambda}\binom{\lambda}{i}(-1)^{i-1}D^{k_i}
\mathrm{Tr}_{\mathrm{aux}}\!\left[
\rho^{\otimes k_i}\chi_{\mathrm{Haar}}^{(k_i)}
\right],
\end{align}
as in Eq.~\eqref{eq:pm-decomposition}, where $\mathrm{Tr}_{\mathrm{aux}}$ traces out the last $a_i=k_i-k$ systems in each term.

The error analysis starts from the exact identity
\begin{equation}
\frac{1}{r^{k-1}}-g_\lambda(r^{k-1})
=\frac{(1-r^{k-1})^{\lambda}}{r^{k-1}},
\qquad r>0.
\end{equation}
This identity holds for every finite $\lambda$, without a convergence assumption on the infinite geometric series. It reduces the approximation error to Haar averages involving $(1-r_\phi^{k-1})^{\lambda}$. We first bound the additive error in trace norm and then establish a relative error bound in the positive-semidefinite order. Theorems~\ref{thm:additive-error} and~\ref{thm:relative-error} identify ranges of $\lambda$ over which the corresponding upper bounds decrease monotonically.

\subsection{Additive error}
Here we derive the additive error bound for the polynomial approximation formula. Let
\begin{equation}
\M:=\left\lfloor \|\rho\|_\infty^{-1}\right\rfloor \le D.
\label{eq:poly-M-def}
\end{equation}
We first establish the Haar moment bounds used throughout the proof.
These bounds follow from Lemma 2 in
Ref.~\cite{mcginley2025scroogeensemblemanybodyquantum}; a self-contained
proof with the constants needed here is given below.
\begin{lemma}
[Haar moment bounds]
\label{lem:tommy's}
For integers $s\geq1$, $1\leq t\leq\M$, and $0\leq q<\M$,
\begin{align}
\mathbb{E}_{\phi\sim\mathrm{Haar}}[r_\phi^s]
&\le
\exp\!\left(\frac{s(s-1)}{2\M}\right),
\label{eq:poly-mcg-positive-moment}\\
\mathbb{E}_{\phi\sim\mathrm{Haar}}\left[|1-r_\phi|^t\right]
&\le
\,\frac{\Gamma(t/2+1)}{(\M/6)^{t/2}},
\label{eq:poly-mcg-central-moment}\\
\mathbb{E}_{\phi\sim\mathrm{Haar}}[r_\phi^{-q}]
&\le
\exp\!\left(\frac{q(q+1)}{2(\M-q)}\right).
\label{eq:poly-mcg-reciprocal-moment}
\end{align}
\end{lemma}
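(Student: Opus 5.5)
The plan is to reduce everything to an unnormalized Gaussian model. Write a Haar state as $\ket{\phi}=\ket{g}/\norm{g}$, where $g\in\mathbb C^D$ has i.i.d.\ standard complex Gaussian entries. Work in the eigenbasis of $\rho$, with eigenvalues $\lambda_j\le 1/\|\rho\|_\infty^{-1}\le 1/\M$. Define
\begin{equation*}
Y:=\sum_j \lambda_j X_j=\langle g|\rho|g\rangle,\qquad X_j:=|g_j|^2\sim\mathrm{Exp}(1)\ \text{i.i.d.}
\end{equation*}
Then $Y=r_\phi\cdot(\norm{g}^2/D)$, and by rotational invariance $r_\phi$ is independent of $\norm{g}^2\sim\mathrm{Gamma}(D,1)$. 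Hence for any real power $p$,
\begin{equation*}
\mathbb E[Y^p]=\mathbb E[r_\phi^p]\,\mathbb E[(\norm{g}^2/D)^p].
\end{equation*}
For $p=s\ge1$ and for $p=-q<0$, Jensen gives $\mathbb E[(\norm{g}^2/D)^p]\ge 1$. So in both cases $\mathbb E[r_\phi^p]\le \mathbb E[Y^p]$, and the first and third bounds reduce to moment bounds for the weighted sum $Y$ of exponentials.

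\emph{Positive moments.} The cumulant generating function is $\log\mathbb E e^{\theta Y}=\sum_{m\ge1}\theta^m\, p_m(\lambda)/m$ with power sums $p_m(\lambda)=\sum_j\lambda_j^m$. Since $p_1=1$ and $\lambda_j\le1/\M$, we have $p_m\le \M^{-(m-1)}$. These are exactly the cumulants of $Z\sim\mathrm{Gamma}(\M,1/\M)$. Moments are polynomials with nonnegative coefficients in the cumulants, so
\begin{equation*}
\mathbb E[Y^s]\le \mathbb E[Z^s]=\prod_{j=0}^{s-1}\Bigl(1+\tfrac{j}{\M}\Bigr)\le \exp\!\Bigl(\tfrac{s(s-1)}{2\M}\Bigr).
\end{equation*}

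\emph{Reciprocal moments.} Use $Y^{-q}=\Gamma(q)^{-1}\int_0^\infty t^{q-1}e^{-tY}\,dt$, which gives
\begin{equation*}
\mathbb E[Y^{-q}]=\Gamma(q)^{-1}\int_0^\infty t^{q-1}\prod_j(1+t\lambda_j)^{-1}\,dt.
\end{equation*}
The map $\lambda\mapsto\log(1+t\lambda)/\lambda$ is decreasing, so $\sum_j\log(1+t\lambda_j)=\sum_j\lambda_j\cdot\frac{\log(1+t\lambda_j)}{\lambda_j}\ge \M\log(1+t/\M)$. Substituting and evaluating the resulting Beta integral gives
\begin{equation*}
\mathbb E[Y^{-q}]\le \M^q\,\frac{\Gamma(\M-q)}{\Gamma(\M)}=\prod_{j=1}^q\frac{\M}{\M-j}\le\exp\Bigl(\sum_{j=1}^q\frac{j}{\M-j}\Bigr)\le\exp\Bigl(\frac{q(q+1)}{2(\M-q)}\Bigr).
\end{equation*}
Here $q<\M$ is needed for convergence. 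For non-integer $\M$-type issues, the floor in $\M$ suffices throughout, since only $\lambda_j\le 1/\M$ is used.

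\emph{Central moments.} This is the main obstacle, since exact constants ($6$ and $\Gamma(t/2+1)$) must come out. I plan to derive sub-Gaussian-type tails for $r_\phi$ from the two moment bounds above via Markov's inequality, and then integrate. For the upper tail, $\Pr[r_\phi\ge1+u]\le (1+u)^{-s}e^{s^2/(2\M)}$; choosing $s\approx\M\log(1+u)$ gives $\exp(-\M\log^2(1+u)/2)$. For the lower tail, $\Pr[r_\phi\le1-u]\le(1-u)^{q}\exp\!\bigl(q(q+1)/(2(\M-q))\bigr)$ with $q\approx\M u/2$, giving roughly $\exp(-\M u^2/c)$. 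Both should reduce to $\Pr[|r_\phi-1|\ge u]\le 2e^{-\M u^2/6}$ in the relevant range, say $u\le$ a constant. Then
\begin{equation*}
\mathbb E|1-r_\phi|^t=\int_0^\infty t u^{t-1}\Pr[|1-r_\phi|\ge u]\,du\le \Gamma(t/2+1)\,(6/\M)^{t/2}
\end{equation*}
from the Gaussian part. The large-$u$ upper tail only decays like $e^{-c\M\log^2 u}$, so its contribution must be absorbed separately. That is where I expect the hypothesis $t\le\M$ to enter: the heavy-tail piece is controlled by $\mathbb E[r_\phi^t]\le e^{t(t-1)/(2\M)}$ and bounded by a fraction of the main term. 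If the tail constants do not close, the fallback is to prove the Gamma-dominated tail directly for $Y$ (a Bernstein bound with variance $1/\M$ and scale $1/\M$) and transfer it to $r_\phi$ using the concentration of $\norm{g}^2/D$.
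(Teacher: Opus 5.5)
\textbf{Verdict: genuine gap.} The first and third bounds are correct and proved by a different route; the central-moment bound is not proved, and the plan you sketch for it cannot reach the stated constant on the full range $1\le t\le\M$.

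\textbf{Where the proposal holds and where it breaks.} For the first and third bounds you lift to the Gaussian $Y=\langle g|\rho|g\rangle=r_\phi W$, with $W=\|g\|^2/D$ independent of $r_\phi$. Jensen gives $\mathbb{E}[W^p]\ge1$, so $\mathbb{E}[r_\phi^p]\le\mathbb{E}[Y^p]$. You then dominate $Y$ by $Z\sim\mathrm{Gamma}(\M,1/\M)$: via cumulants for $s\ge1$, and via $\prod_j(1+\theta\lambda_j)^{-1}\le(1+\theta/\M)^{-\M}$ for $-q$. The paper instead uses Schur-convexity of $\mathbb{E}[f(r_\phi)]$ in the eigenvalue vector (for convex $f$) to pass to the flat rank-$\M$ spectrum, then reads off exact $\mathrm{Beta}(\M,D-\M)$ moments.

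The central-moment bound is left as a plan, and three concrete steps fail:
\begin{itemize}
\item[(i)] Even granting $\Pr[|r_\phi-1|\ge u]\le 2e^{-\M u^2/6}$ for all $u$, integrating this bound gives exactly $2\Gamma(t/2+1)(6/\M)^{t/2}$, twice the target.
\item[(ii)] The upper tail is not sub-Gaussian. From $\mathbb{E}[r_\phi^s]\le e^{s(s-1)/(2\M)}$, Markov only yields roughly $e^{-\M\log^2(1+u)/2}$. At $t=\M$, the integral $\int_0^\infty \M u^{\M-1}e^{-\M\log^2(1+u)/2}\,du$ is of order $e^{0.14\M}$ up to polynomial factors (Laplace's method, peak near $u\approx2.9$). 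The target $\Gamma(\M/2+1)(6/\M)^{\M/2}\approx\sqrt{\pi\M}\,(3/e)^{\M/2}$ is only of order $e^{0.05\M}$. So no bookkeeping with these tails reaches $t=\M$, which is exactly where the hypothesis $t\le\M$ matters.
\item[(iii)] Absorbing the heavy tail with $\mathbb{E}[r_\phi^t]\le e^{t(t-1)/(2\M)}$ cannot work: that quantity is at least $1$, while the target is $O(\M^{-t/2})$ for fixed $t$.
\end{itemize}
The Bernstein fallback, with an unspecified transfer from $Y$ to $r_\phi$, is not carried out either. The underlying reason the plan stalls is that your two tools, the factorization $\mathbb{E}[Y^p]=\mathbb{E}[r_\phi^p]\,\mathbb{E}[W^p]$ and cumulant domination, handle pure powers only. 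They do not by themselves reach $\mathbb{E}|1-r_\phi|^t$.

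\textbf{The missing idea.} What is needed is a comparison in convex order followed by an exact recursion for the extremal law.
\begin{itemize}
\item In the paper, Schur-convexity applied to $|1-r|^t$ reduces the problem to $R=\frac{D}{\M}\sum_{j\le\M}|\langle j|\phi\rangle|^2$.
\item The Beta density of $R$ satisfies $\frac{d}{dr}\bigl[r(1-\M r/D)p(r)\bigr]=\M(1-r)p(r)$. Integrating by parts gives $\mathbb{E}|R-1|^t=\frac{t-1}{\M}\mathbb{E}\bigl[R(1-\M R/D)|R-1|^{t-2}\bigr]$.
\item Using $R\le1+|R-1|$ and AM--GM on the $(t-1)$-st moment, this becomes $\mathbb{E}|R-1|^t\le\frac{3t}{\M}\mathbb{E}|R-1|^{t-2}$ for $2\le t\le\M$. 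This is where $t\le\M$ is used.
\item The factor $3t/\M$ is exactly the ratio of consecutive values of $\Gamma(t/2+1)(6/\M)^{t/2}$. Induction from $t=0$ and $t=1$ (the latter via $\mathbb{E}|R-1|\le\sqrt{\operatorname{Var}R}\le\M^{-1/2}$) closes the argument.
\end{itemize}

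You can also stay in your Gaussian picture:
\begin{itemize}
\item Conditional Jensen over $W$ (convexity of $w\mapsto|rw-1|^t$ and $\mathbb{E}W=1$) gives $\mathbb{E}|r_\phi-1|^t\le\mathbb{E}|Y-1|^t$.
\item Schur-convexity in the weights gives $\mathbb{E}|Y-1|^t\le\mathbb{E}|Z-1|^t$.
\item The Gamma density obeys $\bigl(z\,p(z)\bigr)'=\M(1-z)p(z)$, which yields the same recursion with $R(1-\M R/D)$ replaced by $Z$.
\end{itemize}
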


\begin{proof}
If $\M=D$, then $\rho=I/D$ and $r_\phi=1$, so the claims are immediate. Assume $\M<D$. Write the eigenvalues of
$\rho$ as $\beta_j$, so that
\begin{equation}
 r_\phi=D\sum_{j=1}^D\beta_j|\langle j|\phi\rangle|^2,
 \qquad 0\leq\beta_j\leq1/\M,\qquad \sum_j\beta_j=1.
\end{equation}
For convex $f$, the expectation $\mathbb E[f(r_\phi)]$, viewed as a function of the eigenvalues, is convex and invariant under permutations of the eigenvalues. It is therefore Schur-convex: $\mathbb E_{\phi}\!\left[f(r_\phi(\boldsymbol{x}))\right]
\ge
\mathbb E_{\phi}\!\left[f(r_\phi(\boldsymbol{y}))\right]$ whenever the eigenvalue vector $\boldsymbol{x}$ majorizes $\boldsymbol{y}$. Here, majorization means that, after arranging both vectors in nonincreasing order, the sum of the largest $k$ eigenvalues of
$\boldsymbol{x}$ is at least that of $\boldsymbol{y}$ for every
$k=1,\ldots,D-1$, while their total sums are equal. The spectrum of $\rho$ is majorized by that of the flat density matrix
\begin{equation}
 \operatorname{diag}\!\left(
 \underbrace{\frac1\M,\ldots,\frac1\M}_{\M\ \mathrm{entries}},
 0,\ldots,0\right).
\end{equation}
Indeed, the largest $\ell$ eigenvalues sum to at most $\ell/\M$ when
$\ell\leq\M$, and to at most one when $\ell>\M$. Hence
\begin{equation}
 \mathbb E[f(r_\phi)]\leq\mathbb E[f(R)],
 \qquad R:=\frac D\M\sum_{j=1}^{\M}|\langle j|\phi\rangle|^2.
\end{equation}
We apply this to the convex functions $r^s$, $r^{-q}$, and $|1-r|^t$.

A Haar state can be generated by normalizing independent complex
Gaussian amplitudes. Their squared moduli are independent unit-rate exponential variables; the normalized weights are uniform on the probability simplex~\cite{sykora1974quantum}. Thus the fraction of the norm in the first $\M$ components satisfies
$\sum_{j=1}^{\M}|\langle j|\phi\rangle|^2 = \frac{\M}{D}R\sim\operatorname{Beta}(\M,D-\M)$.
We use $r$ to denote a possible value of $R$, and $p(r)$ to denote the probability density of $R$ at $r$:
\begin{equation}
 p(r)=\frac{(\M/D)^\M}{\mathrm B(\M,D-\M)}
 r^{\M-1}\left(1-\frac{\M r}{D}\right)^{D-\M-1},
 \qquad 0<r<\frac D\M.
\end{equation}
The beta density therefore gives the following exact moment formula:
\begin{equation}
 \mathbb E[R^s]
 =\int_0^{D/\M}r^s p(r)\,dr
 =\frac{D^s\Gamma(D)}{\Gamma(D+s)}
   \frac{\Gamma(\M+s)}{\M^s\Gamma(\M)}.
 \label{eq:haar-flat-exact-moment}
\end{equation}
Here we use the fact that $\mathrm B(a,b)=\Gamma(a)\Gamma(b)/\Gamma(a+b)$.

For integer $s\geq1$, the gamma recurrence yields
\begin{equation}
 \mathbb E[r_\phi^s]
 \leq\prod_{j=0}^{s-1}\frac{1+j/\M}{1+j/D}
 \leq\exp\!\left(\sum_{j=0}^{s-1}\frac j\M\right)
 =e^{s(s-1)/(2\M)},
\end{equation}
where we used $\log(1+x)\leq x$.
Similarly, taking $s=-q$ in Eq.~\eqref{eq:haar-flat-exact-moment} gives,
for integer $0\leq q<\M$,
\begin{equation}
 \mathbb E[r_\phi^{-q}]
 \leq\prod_{j=1}^{q}\frac{1-j/D}{1-j/\M}
 \leq\exp\!\left(\sum_{j=1}^{q}\frac j{\M-j}\right)
 \leq e^{q(q+1)/(2(\M-q))}.
\end{equation}
The last two steps use $-\log(1-x)\leq x/(1-x)$ and
$\M-j\geq\M-q$.

For the central moments, we first derive a recurrence relating
orders $t$ and $t-2$, then iterate it from the zeroth and first moments. For integer $t\geq2$,  $ \mathbb E[|R-1|^t]
 =\int_0^{D/\M}|r-1|^t p(r)\,dr$. To lower the order of this moment by integration by parts, we write
$|r-1|^t=(r-1)|r-1|^{t-2}(r-1)$ and use the identity
\begin{equation}
 \frac{d}{dr}\left[r\left(1-\frac{\M r}{D}\right)p(r)\right]
 =\M(1-r)p(r),
\end{equation}
then 
\begin{equation}
 \begin{aligned}
 \mathbb E[|R-1|^t]
 &=-\frac1\M\int_0^{D/\M}(r-1)|r-1|^{t-2}
 \frac{d}{dr}\left[r\left(1-\frac{\M r}{D}\right)p(r)\right]dr\\
 &=\frac{t-1}{\M}\int_0^{D/\M}
 |r-1|^{t-2}r\left(1-\frac{\M r}{D}\right)p(r)\,dr\\
 &=\frac{t-1}{\M}
 \mathbb E\!\left[R\left(1-\frac{\M R}{D}\right)|R-1|^{t-2}\right].
 \end{aligned}
 \label{eq:beta-central-identity}
\end{equation}
The boundary term vanishes. Finally, $0\leq1-\M R/D\leq1$ and $R\leq1+|R-1|$ imply
\begin{equation}
 \mathbb E[|R-1|^t]
 \leq\frac{t-1}{\M}
 \left(\mathbb E[|R-1|^{t-2}]+\mathbb E[|R-1|^{t-1}]\right).
\end{equation}
To obtain a recurrence involving only orders $t$ and $t-2$, we bound
the $(t-1)$-th moment using
\begin{equation}
 \frac{t-1}{\M}|R-1|^{t-1}
 \leq\frac12|R-1|^t
 +\frac{(t-1)^2}{2\M^2}|R-1|^{t-2}.
\end{equation}
Therefore, 
\begin{equation}
 \begin{aligned}
 \mathbb E[|R-1|^t]
 &\leq\frac{t-1}{\M}\left(2+\frac{t-1}{\M}\right)
        \mathbb E[|R-1|^{t-2}]\\
 &\leq\frac{3t}{\M}\mathbb E[|R-1|^{t-2}],
 \qquad 2\leq t\leq\M.
 \end{aligned}
 \label{eq:beta-central-recursion}
\end{equation}
Here $t\leq\M$ ensures $(t-1)/\M\leq1$, which gives the last inequality. 
The zeroth and first moments satisfy
$\mathbb E[|R-1|^0]=1$ and
\begin{equation}
 \mathbb E[|R-1|]\leq\sqrt{\operatorname{Var}(R)}
 =\sqrt{\frac{D-\M}{\M(D+1)}}
 \leq\frac1{\sqrt\M}\leq\Gamma(3/2)\sqrt{\frac6\M}.
\end{equation}
The factor $3t/\M$ in Eq.~\eqref{eq:beta-central-recursion} is exactly
the ratio of $\Gamma(t/2+1)(6/\M)^{t/2}$ to the same expression at
order $t-2$, since $\Gamma(t/2+1)=(t/2)\Gamma(t/2)$.
Thus induction from these two initial estimates gives
\begin{equation}
 \mathbb E[|1-r_\phi|^t]\leq\mathbb E[|R-1|^t]
 \leq\Gamma(t/2+1)\left(\frac6\M\right)^{t/2},
 \qquad 1\leq t\leq\M.
\end{equation}
This proves the central-moment bound.
\end{proof}

With these bounds, we can prove the additive error bound.

\begin{lemma}
\label{lem:additive}
Assume $k\ge 2$ and $2\lambda \le \M$. Denote the additive error as
\begin{equation}
\Delta_{k,\lambda}^{\mathrm{add}}
:=
\norm{\chi_{\mathrm{Sc}}^{(k)}-\chi_{\mathrm{PA}}^{(k)}(\lambda)}_1.
\end{equation}
Then
\begin{equation}
\Delta_{k,\lambda}^{\mathrm{add}}
\le
 6^{\lambda /2} \frac{\sqrt{\lambda!}}{\M^{\lambda/2}}
\,(k-1)^{\lambda-1}
\sum_{i=0}^{k-2}
\exp\!\left(\frac{(2i\lambda+1)(2i\lambda+2)}{4\M}\right).
\label{eq:poly-add-explicit}
\end{equation}
\end{lemma}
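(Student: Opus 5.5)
The plan is to write the error as a single Haar average and pull the trace norm inside. By the exact identity $1/r^{k-1}-g_\lambda(r^{k-1})=(1-r^{k-1})^\lambda/r^{k-1}$ for $r>0$, we have
\begin{equation}
\chi_{\mathrm{Sc}}^{(k)}-\chi_{\mathrm{PA}}^{(k)}(\lambda)
=\mathbb{E}_{\phi\sim\mathrm{Haar}}\!\left[\left(\sqrt{D\rho}\ketbrat{\phi}\sqrt{D\rho}\right)^{\otimes k}\frac{(1-r_\phi^{k-1})^\lambda}{r_\phi^{k-1}}\right].
\end{equation}
The set $r_\phi=0$ has Haar measure zero, so we can ignore it.

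\textbf{Step 1: reduce to a scalar average.} Apply the triangle inequality for the trace norm under the expectation. Since $\sqrt{D\rho}\ketbrat{\phi}\sqrt{D\rho}$ is positive semidefinite with trace $r_\phi$, its $k$-fold tensor power has trace norm $r_\phi^k$. Hence
\begin{equation}
\Delta^{\mathrm{add}}_{k,\lambda}\le\mathbb{E}_{\phi}\!\left[r_\phi\,\abs{1-r_\phi^{k-1}}^\lambda\right],
\end{equation}
which is the bound quoted in the main text.

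\textbf{Step 2: isolate the central fluctuation.} Factor $1-r^{k-1}=(1-r)\sum_{i=0}^{k-2}r^i$. By convexity of $x\mapsto x^\lambda$ (the power-mean inequality over $k-1$ nonnegative terms),
\begin{equation}
\Bigl(\sum_{i=0}^{k-2}r^i\Bigr)^\lambda\le(k-1)^{\lambda-1}\sum_{i=0}^{k-2}r^{i\lambda}.
\end{equation}
This gives
\begin{equation}
\Delta^{\mathrm{add}}_{k,\lambda}\le(k-1)^{\lambda-1}\sum_{i=0}^{k-2}\mathbb{E}_{\phi}\!\left[\abs{1-r_\phi}^\lambda r_\phi^{i\lambda+1}\right].
\end{equation}

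\textbf{Step 3: decouple with Cauchy--Schwarz and apply Lemma~\ref{lem:tommy's}.} Each summand is bounded by
\begin{equation}
\sqrt{\mathbb{E}\abs{1-r_\phi}^{2\lambda}}\,\sqrt{\mathbb{E}\,r_\phi^{2i\lambda+2}}.
\end{equation}
For the first factor, use the central-moment bound \eqref{eq:poly-mcg-central-moment} with $t=2\lambda$, which is admissible because $2\lambda\le\M$. This gives $\Gamma(\lambda+1)(6/\M)^{\lambda}$, whose square root is $6^{\lambda/2}\sqrt{\lambda!}/\M^{\lambda/2}$. For the second factor, use the positive-moment bound \eqref{eq:poly-mcg-positive-moment} with $s=2i\lambda+2$. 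This gives $\exp\bigl((2i\lambda+2)(2i\lambda+1)/(2\M)\bigr)$, whose square root produces exactly the exponent $(2i\lambda+1)(2i\lambda+2)/(4\M)$ in \eqref{eq:poly-add-explicit}. Summing over $i$ completes the proof.

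\textbf{Main obstacle.} The delicate point is the choice of splitting in Step 3. Cauchy--Schwarz must be used so that the central moment order is $2\lambda$, which stays within the range $t\le\M$ of Lemma~\ref{lem:tommy's} guaranteed by the hypothesis. Meanwhile, the positive moments may have arbitrary order, since Eq.~\eqref{eq:poly-mcg-positive-moment} requires no restriction on $s$. Everything else is routine bookkeeping.
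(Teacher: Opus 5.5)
Your proposal is correct and follows the paper's proof essentially step for step. Both use the pointwise trace-norm reduction to $\mathbb{E}_{\phi}\bigl[r_\phi\,|1-r_\phi^{k-1}|^\lambda\bigr]$, then the factorization $1-r^{k-1}=(1-r)\sum_{i=0}^{k-2}r^i$ with the Jensen/power-mean bound, and finally the same Cauchy--Schwarz split into the Haar moment bounds with $t=2\lambda$ (admissible since $2\lambda\le\M$) and $s=2i\lambda+2$.
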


\begin{proof}
Comparing the Scrooge $k$-th moment and our approximation formula, we have
\begin{align}
    \Delta_{k,\lambda}^{\mathrm{add}}&=\norm{\chi_{\mathrm{Sc}}^{(k)}-\chi_{\mathrm{PA}}^{(k)}(\lambda)}_1\notag\\
    &\leq \mathbb{E}_{\phi\sim\mathrm{Haar}}\left[\norm{
\frac{\left(\sqrt{D\rho}\ketbra{\phi}{\phi}\sqrt{D\rho}\right)^{\otimes k}}{r_\phi^{k-1}}\left(1-r^{k-1}_\phi\right)^\lambda}_1
\right]\notag\\
&=\Exp{\phi\sim\mathrm{Haar}}{
\frac{\left|1-r^{k-1}_\phi\right|^\lambda}{r_\phi^{k-1}}
\trace{}{\left(\sqrt{D\rho}\ketbra{\phi}{\phi}\sqrt{D\rho}\right)^{\otimes k}}
}\notag\\
&=\Exp{\phi\sim\mathrm{Haar}}{
r_\phi\left|1-r^{k-1}_\phi\right|^\lambda
}
\end{align}

Using $1-r^{k-1}=(1-r)(1+r+r^2+\cdots+r^{k-2})$ and Jensen's inequality,
\begin{align}
    \left(\sum_{i=1}^n\nu_ir_i\right)^\lambda\leq \sum_{i=1}^n\nu_ir_i^\lambda
\end{align}
for $\lambda\geq 1$, $\nu_i\geq 0$ and $\sum_{i=1}^n\nu_i=1$, we obtain
\begin{align}
\Delta_{k,\lambda}^{\mathrm{add}}
&\leq\mathbb{E}_{\phi\sim\mathrm{Haar}}
\left[r_\phi|1-r_\phi|^\lambda(1+r_\phi+\cdots+r_\phi^{k-2})^\lambda\right]
\notag\\
&\leq (k-1)^{\lambda-1}\Exp{\phi\sim\mathrm{Haar}}{|1-r_\phi|^\lambda\sum_{i=0}^{k-2}r_\phi^{i\lambda+1}}
\notag\\
&=
(k-1)^{\lambda-1}
\sum_{i=0}^{k-2}
\mathbb{E}_{\phi\sim\mathrm{Haar}}
\left[
|1-r_\phi|^\lambda r_\phi^{i\lambda+1}
\right]
\nonumber\\
&\le
(k-1)^{\lambda-1}
\sum_{i=0}^{k-2}
\left[
\mathbb{E}_{\phi\sim\mathrm{Haar}}\left[|1-r_\phi|^{2\lambda}\right]
\right]^{1/2}
\left[
\mathbb{E}_{\phi\sim\mathrm{Haar}}\left[r_\phi^{2i\lambda+2}\right]
\right]^{1/2}.
\label{eq:poly-cs-reduction-reorg}
\end{align}
In the last inequality we use the Cauchy--Schwarz inequality.

We now use Lemma~\ref{lem:tommy's} to bound the two factors separately. First, by
\eqref{eq:poly-mcg-central-moment} with $t=2\lambda$,
\begin{equation}
\left[
\mathbb{E}_{\phi\sim\mathrm{Haar}}\left[|1-r_\phi|^{2\lambda}\right]
\right]^{1/2}
\le
\left(
\frac{\Gamma(\lambda+1)}{(\M/6)^\lambda}
\right)^{1/2}
=
 6^{\lambda /2} \frac{\sqrt{\lambda!}}{\M^{\lambda/2}}.
\label{eq:poly-central-moment-reorg}
\end{equation}
Second, by \eqref{eq:poly-mcg-positive-moment},
\begin{equation}
\left[
\mathbb{E}_{\phi\sim\mathrm{Haar}}\left[r_\phi^{2i\lambda+2}\right]
\right]^{1/2}
\le
\exp\!\left(\frac{(2i\lambda+1)(2i\lambda+2)}{4\M}\right).
\label{eq:poly-positive-moment-reorg}
\end{equation}
Substituting \eqref{eq:poly-central-moment-reorg} and
\eqref{eq:poly-positive-moment-reorg} into
\eqref{eq:poly-cs-reduction-reorg} yields \eqref{eq:poly-add-explicit}.
\end{proof}

\begin{theorem}
[Additive error of polynomial approximation]
\label{thm:additive-error}
For any integer $k\geq2$ satisfying
$12(k-1)^2\leq \M$, define
\begin{equation}
    \lambda_* :=
    \left\lfloor
    \frac{\M}{12(k-1)^2}
    \right\rfloor
    \geq 1.
    \label{eq:poly-lambda-star}
\end{equation}
Then there exists a universal constant
\begin{equation}
    C=e^{5/12}
\end{equation}
such that, for every integer $1\leq \lambda\leq\lambda_*$,
\begin{equation}
    \Delta_{k,\lambda}^{\mathrm{add}}
    =
    \norm{
        \chi_{\mathrm{Sc}}^{(k)}
        -
        \chi_{\mathrm{PA}}^{(k)}(\lambda)
    }_1
    \leq
    C
    \left(
        \frac{6\lambda(k-1)^2}
        {e^{1/3}\M}
    \right)^{\lambda/2}.
    \label{eq:poly-add-monotone-bound}
\end{equation}
The right-hand side is monotonically decreasing with $\lambda$ for
$1\leq\lambda\leq\lambda_*$.  In particular, the smallest upper bound is
obtained at $\lambda=\lambda_*$:
\begin{equation}
\Delta_{k,\lambda_*}^{\mathrm{add}}
\leq
C_{\add}\exp\!\left[-\frac{\M}{24(k-1)^2}\right],
\label{eq:poly-add-min-bound}
\end{equation}
where
\begin{equation}
C_{\add}:=\sqrt{e}C=e^{11/12}.
\end{equation}
\end{theorem}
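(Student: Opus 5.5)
We start from Lemma~\ref{lem:additive}, which applies because $\lambda\le\lambda_*\le \M/12$ gives $2\lambda\le\M$. The proof then needs three things: a uniform bound on the exponential sum, a bound on $\lambda!$, and a monotonicity argument.

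\textbf{The exponential sum.} For $0\le i\le k-2$ we have $2i\lambda+2\le 2(k-1)\lambda$. Hence
\begin{equation}
\frac{(2i\lambda+1)(2i\lambda+2)}{4\M}\le \frac{(k-1)^2\lambda^2}{\M}.
\end{equation}
Using $\lambda(k-1)^2\le \M/12$, this is at most $\lambda/12$. So the whole sum is bounded by $(k-1)e^{\lambda/12}$. Multiplying by the prefactor $(k-1)^{\lambda-1}$ gives $(k-1)^{\lambda}e^{\lambda/12}$.

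\textbf{The factorial.} Use the Stirling-type bound $\lambda!\le e\sqrt{\lambda}\,(\lambda/e)^\lambda$. This gives
\begin{equation}
\sqrt{\lambda!}\le e^{1/2}\lambda^{1/4}\lambda^{\lambda/2}e^{-\lambda/2}.
\end{equation}
Collecting everything,
\begin{equation}
\Delta^{\add}_{k,\lambda}\le e^{1/2}\lambda^{1/4}\Big(\tfrac{6\lambda(k-1)^2}{\M}\Big)^{\lambda/2}e^{-\lambda/2+\lambda/12}.
\end{equation}
The exponent $e^{-\lambda/2+\lambda/12}$ is $e^{-5\lambda/12}$. The target bound only uses $e^{-\lambda/6}$ inside the $\lambda/2$ power, so there is spare $e^{-\lambda/4}$. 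Absorb $\lambda^{1/4}$ into this spare factor using $\lambda^{1/4}e^{-\lambda/4}\le e^{-1/4}\cdot(\text{const})$; indeed $\lambda^{1/4}e^{-\lambda/4}$ is maximized at $\lambda=1$. The prefactor then comes out at most $e^{1/2-1/4}$, which is within the claimed $C=e^{5/12}$. Getting the constants to match exactly is bookkeeping: one can split $e^{-5\lambda/12}$ as $e^{-\lambda/6}\cdot e^{-\lambda/4}$ and tune, or use slightly cruder Stirling constants.

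\textbf{Monotonicity.} Let $b_\lambda:=(\lambda/(e^{1/3}L))^{\lambda/2}$ with $L=\M/(6(k-1)^2)$. Then
\begin{equation}
\frac{b_{\lambda+1}}{b_\lambda}=\Big(\tfrac{\lambda+1}{e^{1/3}L}\Big)^{1/2}\Big(1+\tfrac1\lambda\Big)^{\lambda/2}\le \Big(\tfrac{e^{2/3}(\lambda+1)}{L}\Big)^{1/2}.
\end{equation}
For $\lambda+1\le\lambda_*$ we have $\lambda+1\le L/2$, so the ratio is at most $(e^{2/3}/2)^{1/2}<1$. Thus the bound decreases on $1\le\lambda\le\lambda_*$.

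\textbf{Evaluation at $\lambda_*$.} Here $6\lambda_*(k-1)^2/\M\le 1/2$, so the bound is at most
\begin{equation}
C\,(2e^{1/3})^{-\lambda_*/2}\le C e^{-\lambda_*/2},
\end{equation}
since $\log 2+1/3\ge1$. Finally $\lambda_*\ge \M/(12(k-1)^2)-1$. This gives $C e^{1/2}\exp[-\M/(24(k-1)^2)]$, which is exactly $C_{\add}\exp[-\M/(24(k-1)^2)]$.

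The main obstacle is the constant bookkeeping in the second step, namely absorbing $\lambda^{1/4}$ and the factor $e^{\lambda/12}$ into $C=e^{5/12}$ and the $e^{-1/3}$ base correction. If the Stirling constant does not close, fall back on $\lambda!\le \lambda^{\lambda+1/2}e^{1-\lambda}$ and redistribute the exponents.
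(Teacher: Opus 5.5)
Your proof is correct and follows essentially the same route as the paper's: start from Lemma~\ref{lem:additive}, bound the exponential sum by $(k-1)e^{\lambda/12}$, apply $\lambda!\le e\,\lambda^{\lambda+1/2}e^{-\lambda}$, and absorb the $\lambda^{1/4}$ factor via $\lambda^{1/4}e^{-\lambda/4}\le e^{-1/4}$. You then check monotonicity with a discrete ratio test where the paper differentiates (both rest on $1/(2e^{1/3})<e^{-1}$), and evaluate at $\lambda_*$ using $\lambda_*\ge \M/(12(k-1)^2)-1$. Your per-term estimate $2i\lambda+2\le 2(k-1)\lambda$ is slightly sharper than the paper's, which incurs an extra $e^{1/6}$, so you actually get $C=e^{1/4}\le e^{5/12}$ and your closing hedge about constant bookkeeping is unnecessary, since the bound already closes.
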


\begin{proof}
    This result is obtained by simplifying Lemma~\ref{lem:additive}. Since the
exponential factor is increasing in $i$, the sum is bounded by its largest term:
\begin{equation}
\sum_{i=0}^{k-2}
\exp\!\left(\frac{(2i\lambda+1)(2i\lambda+2)}{4\M}\right)
\le
(k-1)
\exp\!\left(\frac{(2\lambda(k-2)+1)(2\lambda(k-2)+2)}{4\M}\right).
\label{eq:poly-add-simplified-reorg}
\end{equation}

Now we use the Robbins bound $n!\le e\,n^{n+1/2}e^{-n}$:
\begin{equation}
\sqrt{\lambda!}
\le
\sqrt{e}        \lambda^{1/4}\left(\frac{\lambda}{e}\right)^{\lambda/2}.
\end{equation}
Substituting this into \eqref{eq:poly-add-simplified-reorg} gives
\begin{equation}
\Delta_{k,\lambda}^{\mathrm{add}}
\le
\sqrt{e}\lambda^{1/4}
\left(\frac{6\lambda(k-1)^2}{e\M}\right)^{\lambda/2}
\exp\!\left(\frac{(2\lambda(k-2)+1)(2\lambda(k-2)+2)}{4\M}\right).
\label{eq:poly-add-robbins-reorg}
\end{equation}

For $1\leq\lambda\leq\lambda_*$,
\begin{equation}
    \lambda(k-2)
    \leq
    \lambda(k-1)
    \leq
    \frac{\M}{12(k-1)},
\end{equation}
so
\begin{align}
    \frac{(2\lambda(k-2)+1)(2\lambda(k-2)+2)}{4\M}
    &\leq \frac{(2\lambda(k-1)+1)(2\lambda(k-1)+2)}{4\M} \nonumber\\
    &\leq \frac{\lambda_* (k-1)^2}{\M} \lambda+\frac{3\lambda_* (k-1)}{2\M}+\frac{1}{2\M} \nonumber\\
    &\leq \frac{\lambda}{12}+\frac{1}{8(k-1)}+\frac{1}{2\M}
\end{align}
The assumptions $12(k-1)^2\leq \M$ and $k\geq2$ give
\begin{equation}
    \frac{1}{8(k-1)} \leq \frac{1}{8} \qquad \text{and} \qquad \frac{1}{2\M} \leq \frac{1}{24} .
\end{equation}
Hence
\begin{equation}
\exp\!\left(\frac{(2\lambda(k-2)+1)(2\lambda(k-2)+2)}{4\M}\right)
    \leq
    \exp\left(\frac{\lambda}{12} +\frac{1}{6}\right).
\label{eq:poly-add-exp-bound}
\end{equation}

Combining Eqs.~\eqref{eq:poly-add-robbins-reorg} and
\eqref{eq:poly-add-exp-bound}, we obtain
\begin{align}
    \Delta_{k,\lambda}^{\mathrm{add}}
    &\leq
    e^{2/3}
    \lambda^{1/4}
    \left(\frac{6\lambda(k-1)^2}{e^{5/6}\M}\right)^{\lambda/2}
    \nonumber\\
    &=
    e^{2/3}
    \left(\lambda^{1/4}e^{-\lambda/4}\right)
    \left(
        \frac{6\lambda(k-1)^2}
        {e^{1/3}\M}
    \right)^{\lambda/2}.
\end{align}
Since $x^{1/4}e^{-x/4}$ is monotonically decreasing for $x\geq1$,
\begin{equation}
    \lambda^{1/4}e^{-\lambda/4}
    \leq e^{-1/4}.
\end{equation}
Thus
\begin{equation}
    \Delta_{k,\lambda}^{\mathrm{add}}
    \leq
    C
    \left(
        \frac{6\lambda(k-1)^2}
        {e^{1/3}\M}
    \right)^{\lambda/2},
\end{equation}
where
\begin{equation}
    C=e^{5/12}.
\end{equation}

To see that the right-hand side decreases monotonically with $\lambda$, 
we consider
\begin{equation}
    \lambda
    \log\left(
        \frac{6\lambda(k-1)^2}
        {e^{1/3}\M}
    \right).
\end{equation}
Its derivative with respect to $\lambda$ is
\begin{equation}
    \log\left(
        \frac{6\lambda(k-1)^2}
        {e^{1/3}\M}
    \right)+1.
\end{equation}
It is non-positive for $1\leq\lambda\leq\lambda_*$, so the right-hand side of
Eq.~\eqref{eq:poly-add-monotone-bound}
is monotonically decreasing with $\lambda$ throughout
$1\leq\lambda\leq\lambda_*$.

At $\lambda=\lambda_*$,
\begin{equation}
\frac{6\lambda_*(k-1)^2}{e^{1/3}\M}
\leq\frac{1}{2e^{1/3}}\leq e^{-1}.
\end{equation}
So, 
\begin{align}
    C
    \left(
        \frac{6\lambda_*(k-1)^2}
        {e^{1/3}\M}
    \right)^{\lambda_*/2} &\leq C \exp{\left(-\frac{\lambda
    _*}{2}\right)} \leq C \exp{\left(-\frac{\left(\frac{\M}{12(k-1)^2}-1\right)}{2}\right)} \nonumber \\
    & = \sqrt{e}C \exp{\left(-\frac{\M}{24(k-1)^2}\right)},
\end{align}
which proves Eq.~\eqref{eq:poly-add-min-bound}.
\end{proof}

\begin{corollary}
[Sufficient $\lambda$ for a given additive error]
\label{coro:minimal-additive-error}
Let $k\geq2$ be an integer with $12(k-1)^2\leq\M$, and let
$C_{\add}$ be the constant in Theorem~\ref{thm:additive-error}.
If $\epsilon>0$ satisfies
\begin{equation}
\epsilon\geq
C_{\add}\exp\!\left[-\frac{\M}{24(k-1)^2}\right],
\label{eq:additive-certified-floor}
\end{equation}
it suffices to choose
\begin{equation}
\lambda=\max\left\{
\left\lfloor
2\log\left(\frac{C_{\add}
}{\epsilon}\right)
\right\rfloor,1
\right\}.
\label{eq:additive-order-choice}
\end{equation}
Then $\norm{\chi_{\mathrm{Sc}}^{(k)}
-\chi_{\mathrm{PA}}^{(k)}(\lambda)}_1\leq\epsilon$.
\end{corollary}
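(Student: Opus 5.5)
The plan is to combine the two conclusions of Theorem~\ref{thm:additive-error}: the per-$\lambda$ bound in Eq.~\eqref{eq:poly-add-monotone-bound}, valid for $1\le\lambda\le\lambda_*$, and the observation used at the end of its proof, that the base of that power is at most $e^{-1}$ throughout the admissible range. Two things must be checked: that the prescribed $\lambda$ is admissible, and that the resulting bound is at most $\epsilon$.

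\textbf{Admissibility.} Taking logarithms in the hypothesis \eqref{eq:additive-certified-floor} gives $\log(C_{\add}/\epsilon)\le \M/(24(k-1)^2)$. Hence
\begin{equation}
\left\lfloor 2\log\left(C_{\add}/\epsilon\right)\right\rfloor \le \left\lfloor \frac{\M}{12(k-1)^2}\right\rfloor=\lambda_* .
\end{equation}
The assumption $12(k-1)^2\le\M$ gives $\lambda_*\ge1$, so taking the maximum with $1$ keeps $1\le\lambda\le\lambda_*$. Theorem~\ref{thm:additive-error} therefore applies at this $\lambda$.

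\textbf{Bounding the error.} For $\lambda\le\lambda_*$ we have $6\lambda(k-1)^2/\M\le 1/2$, so
\begin{equation}
\frac{6\lambda(k-1)^2}{e^{1/3}\M}\le \frac{1}{2e^{1/3}}\le e^{-1}.
\end{equation}
Eq.~\eqref{eq:poly-add-monotone-bound} then yields $\Delta^{\add}_{k,\lambda}\le C e^{-\lambda/2}$ with $C=e^{5/12}=C_{\add}/\sqrt e$. It remains to show $\lambda\ge 2\log(C_{\add}/\epsilon)-1$, which I split into two cases.
\begin{itemize}
\item If the floor term is at least $1$, then $\lambda=\lfloor 2\log(C_{\add}/\epsilon)\rfloor\ge 2\log(C_{\add}/\epsilon)-1$ directly.
\item Otherwise $\lambda=1$ and $2\log(C_{\add}/\epsilon)<1$, so the inequality holds trivially. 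This case also covers $\epsilon\ge C_{\add}$, where the logarithm is negative.
\end{itemize}
In both cases
\begin{equation}
\Delta^{\add}_{k,\lambda}\le C e^{-\lambda/2}\le C\, e^{1/2}\,\frac{\epsilon}{C_{\add}}=\epsilon .
\end{equation}

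The only delicate point is bookkeeping. The loss of one unit from the floor is exactly absorbed by the factor $\sqrt e$ relating $C_{\add}$ to $C$. The lower bound on $\epsilon$ is exactly what keeps $\lambda$ within $\lambda_*$, so that the monotone bound of Theorem~\ref{thm:additive-error} is valid.
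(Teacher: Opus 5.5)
Your proposal is correct and follows the paper's own proof. Like the paper, you verify $1\le\lambda\le\lambda_*$ from the threshold on $\epsilon$, bound the base of the power in Theorem~\ref{thm:additive-error} by $e^{-1}$, and use $\lambda\ge 2\log(C_{\mathrm{add}}/\epsilon)-1$ with $C_{\mathrm{add}}=\sqrt{e}\,C$ to conclude; your case split on the $\max\{\cdot,1\}$ just spells out what the paper states in one line.
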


\begin{proof}
The assumption on
$\epsilon$ gives
\begin{equation}
2\log\left(\frac{C_{\add}
}{\epsilon}\right)\leq\frac{\M}{12(k-1)^2}.
\end{equation}
Since $\lambda_*\geq1$, taking the floor and then the maximum
with $1$ yields $1\leq\lambda\leq\lambda_*$.
Theorem~\ref{thm:additive-error} therefore applies. Moreover,
\begin{equation}
\frac{6\lambda(k-1)^2}{e^{1/3}\M}
\leq\frac{1}{2e^{1/3}}\leq e^{-1},
\qquad
\lambda\geq2\log\left(\frac{C_{\add}
}{\epsilon}\right)-1.
\end{equation}
Using these inequalities and $C_{\add}=\sqrt{e}C$, we obtain
\begin{equation}
\Delta_{k,\lambda}^{\mathrm{add}}
\leq C
    \left(
        \frac{6\lambda(k-1)^2}
        {e^{1/3}\M}
    \right)^{\lambda/2}
\leq C e^{-\lambda/2}
\leq C e^{-\log\left(\frac{C_{\add}
}{\epsilon}\right)+1/2}
=\epsilon.
\end{equation}
\end{proof}

\subsection{Relative error}
Now we give a relative error upper bound.

For this subsection, we introduce the following notations.
\begin{equation}
\Pi_\phi^{(k)} := \ketbra{\phi}{\phi}^{\otimes k},
\qquad
\Xi_{\mathrm{Sc}}^{(k)}
:=
\mathbb{E}_{\phi\sim\mathrm{Haar}}
\left[
 r_\phi^{1-k}\Pi_\phi^{(k)}
\right],
\end{equation}
and similarly
\begin{equation}
\Xi_{\mathrm{PA}}^{(k)}
:=
\mathbb{E}_{\phi\sim\mathrm{Haar}}
\left[
 g_\lambda(r_\phi^{k-1})\Pi_\phi^{(k)}
\right].
\end{equation}
By the bound on negative moments in Lemma~\ref{lem:tommy's}, these operators are well defined for $k-1<\M$, which follows from the assumptions below. Here $\phi\sim\mathrm{Haar}(D)$. Their difference is
\begin{equation}
\Xi_{\mathrm{diff}}
:=
\Xi_{\mathrm{Sc}}^{(k)}-\Xi_{\mathrm{PA}}^{(k)}
=
\mathbb{E}_{\phi\sim\mathrm{Haar}}
\left[
 r_\phi^{1-k}(1-r_\phi^{k-1})^\lambda\Pi_\phi^{(k)}
\right].
\end{equation}
For a general positive integer $\lambda$, the operator $\Xi_{\mathrm{diff}}$ need not be positive semidefinite, so we estimate its quadratic form in absolute value.

Both $\Xi_{\mathrm{Sc}}^{(k)}$ and $\Xi_{\mathrm{PA}}^{(k)}$ are supported on $\Sym^k(\mathbb C^D)$, the symmetric subspace of $(\mathbb C^D)^{\otimes k}$. Let $\norm{f(\phi)}_{L^p}:=(\Exp{\phi\sim\mathrm{Haar}}{|f(\phi)|^p})^{1/p}$ define the $L^p$ norm over Haar measure. We first state a supporting lemma.
\begin{lemma}
\label{lem:holder}
For any normalized $\ket{x}\in\Sym^k(\mathbb C^D)$, define
\begin{equation}
y_x(\phi):=\mel{x}{\Pi_\phi^{(k)}}{x}.
\end{equation}
Then
\begin{align}
    \norm{y_x(\phi)}_{L^s}\leq 4^{k(1-1/s)}\frac{k!}{D^{(k)}},\qquad \forall s\in[1,2],
\end{align}
\end{lemma}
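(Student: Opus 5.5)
We interpolate between the two endpoints $s=1$ and $s=2$ using log-convexity of $L^p$ norms (Hölder/Lyapunov): for $1\le s\le 2$, writing $1/s=\theta+(1-\theta)/2$ with $\theta=2/s-1$, we have $\norm{y}_{L^s}\le\norm{y}_{L^1}^{\theta}\norm{y}_{L^2}^{1-\theta}$. Since $1-\theta=2(1-1/s)$, it suffices to show
\begin{equation}
\norm{y_x}_{L^1}=\frac{k!}{D^{(k)}},\qquad \norm{y_x}_{L^2}\leq 4^{k/2}\frac{k!}{D^{(k)}}.
\end{equation}
Indeed, then $\norm{y_x}_{L^s}\le \frac{k!}{D^{(k)}}\,4^{(k/2)\cdot 2(1-1/s)}=4^{k(1-1/s)}\frac{k!}{D^{(k)}}$.

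\textbf{The $L^1$ endpoint.} Since $y_x\ge0$, we have $\norm{y_x}_{L^1}=\mel{x}{\chi_{\mathrm{Haar}}^{(k)}}{x}=\mel{x}{\Pi_{\mathrm{sym}}^{(k)}}{x}/\binom{D+k-1}{k}$. Because $\ket{x}\in\Sym^k$ is normalized, this equals $1/\binom{D+k-1}{k}=k!/D^{(k)}$ exactly.

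\textbf{The $L^2$ endpoint.} Here $\E[y_x^2]=\bra{x}^{\otimes 2}\chi_{\mathrm{Haar}}^{(2k)}\ket{x}^{\otimes 2}=\frac{(2k)!}{D^{(2k)}}\bra{x,x}\Pi_{\mathrm{sym}}^{(2k)}\ket{x,x}$. Since $\Pi_{\mathrm{sym}}^{(2k)}$ is a projector, the last factor is at most $1$, so $\E[y_x^2]\le (2k)!/D^{(2k)}$. It remains to compare this with $(k!/D^{(k)})^2\,4^k$. We have $(2k)!/(k!)^2=\binom{2k}{k}\le 4^k$. We also have $D^{(2k)}=D^{(k)}\cdot(D+k)(D+k+1)\cdots(D+2k-1)\ge (D^{(k)})^2$, since $D+k+j\ge D+j$ factor by factor. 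Together these give $\E[y_x^2]\le 4^k (k!/D^{(k)})^2$, and taking the square root yields the $L^2$ claim.

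Both endpoint computations are exact Haar-moment identities, so no step is delicate. The only points needing care are the interpolation exponent bookkeeping and the observation that $\ket{x}\otimes\ket{x}$ need not lie in $\Sym^{2k}$; this is harmless because we only use $\Pi_{\mathrm{sym}}^{(2k)}\preceq I$.
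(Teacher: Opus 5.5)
Your proof is correct and follows the paper's argument essentially step for step. You use the exact $L^1$ value $k!/D^{(k)}$ from the Haar $k$-th moment, and the $L^2$ bound $(2k)!/D^{(2k)}\le 4^k\,(k!/D^{(k)})^2$, which rests on $\Pi_{\mathrm{sym}}^{(2k)}\preceq I$, $\binom{2k}{k}\le 4^k$ and $D^{(2k)}\ge (D^{(k)})^2$; you then interpolate between $L^1$ and $L^2$ with $\theta=2/s-1$, exactly as the paper does, and you also make explicit two points the paper leaves implicit, namely the $L^1$ equality and the fact that $\ket{x}\otimes\ket{x}$ need not lie in $\Sym^{2k}$.
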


\begin{proof}
We use the Haar-moment identity
\begin{equation}
 \chi_{\mathrm{Haar}}^{(k)}
 =\frac{\Pi_{\mathrm{sym}}^{(k)}}{\binom{D+k-1}{k}}
 =\frac{k!}{D^{(k)}}\Pi_{\mathrm{sym}}^{(k)},
 \label{eq:common-haar-moment}
\end{equation}
where $\Pi_{\mathrm{sym}}^{(k)}$ projects onto the totally symmetric subspace.
Each tensor power is symmetric, and unitary invariance makes its Haar
average proportional to this projector; unit trace fixes the coefficient.
Observe that
\begin{equation}
\norm{y_x(\phi)}^2_{L^2}=
\mathbb{E}_{\phi\sim\mathrm{Haar}}[y_x(\phi)^2]
=\mel{x^{\otimes 2}}{\chi_{\mathrm{Haar}}^{(2k)}}{x^{\otimes 2}}
\le
\frac{(2k)!}{D^{(2k)}}
\le
4^k\left(\frac{k!}{D^{(k)}}\right)^2,
\label{eq:poly-y-second-moment}
\end{equation}
since $D^{(2k)}\ge (D^{(k)})^2$ and $(2k)!\le 4^k(k!)^2$. Using the interpolation inequality between $L^1$ and $L^2$, we have for $s\in[1,2]$
\begin{equation}
\norm{y_x}_{L^s}
\le
\norm{y_x}_{L^1}^{2/s-1}\norm{y_x}_{L^2}^{2-2/s}\leq
\left(\frac{k!}{D^{(k)}}\right)^{2/s-1}
\left(2^k\frac{k!}{D^{(k)}}\right)^{2-2/s}=4^{k(1-1/s)}\frac{k!}{D^{(k)}}.
\end{equation}
\end{proof}

\begin{lemma}
\label{lem:relative-error}
Assume $k\geq2$, $\lambda\geq1$, $2k\lambda\leq\M$, and
$2k(k-1)<\M$.
Then
\begin{equation}
(1-\Delta^{\mathrm{rel}}_{k,\lambda})\chi_{\mathrm{Sc}}^{(k)}
\preceq
\chi_{\mathrm{PA}}^{(k)}(\lambda)
\preceq
(1+\Delta^{\mathrm{rel}}_{k,\lambda})\chi_{\mathrm{Sc}}^{(k)},
\label{eq:poly-rel-main}
\end{equation}
where $\Delta^{\mathrm{rel}}_{k,\lambda}$ satisfies
\begin{equation}
\Delta^{\mathrm{rel}}_{k,\lambda}
\le
16(k-1)^\lambda
\exp\!\left(\frac{(k-1)(k(k-1)-1)}{2\M}\right)
\left(\frac{\Gamma(\lambda k+1)}{(\M/6)^{\lambda k}}\right)^{\!1/(2k)}
\Phi_{k,\lambda}^{\max},
\label{eq:poly-rel-explicit}
\end{equation}
with

\begin{equation}
\Phi_{k,\lambda}^{\max}
=
\max\set{
\exp\left[
\frac{(k-1)(1+2k(k-1))}
{2(\M-2k(k-1))}
\right],
\exp\left[
\frac{
((k-2)\lambda+1-k)
\left(2k((k-2)\lambda+1-k)-1\right)
}
{2\M}
\right]
}.
\label{eq:poly-rel-theta-max}
\end{equation}
\end{lemma}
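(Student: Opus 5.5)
The plan is to reduce the operator sandwich to a scalar inequality on the symmetric subspace, and then bound the resulting ratio with Hölder's inequality together with Lemmas~\ref{lem:tommy's} and~\ref{lem:holder}.

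\textbf{Reduction to quadratic forms.} Write $\chi_{\mathrm{Sc}}^{(k)}=(\sqrt{D\rho})^{\otimes k}\,\Xi_{\mathrm{Sc}}^{(k)}(\sqrt{D\rho})^{\otimes k}$, and similarly for $\chi_{\mathrm{PA}}^{(k)}$ with $\Xi_{\mathrm{PA}}^{(k)}$. Conjugation by a fixed operator preserves the PSD order, so it suffices to show
\begin{equation}
\pm\,\Xi_{\mathrm{diff}}\preceq\Delta\,\Xi_{\mathrm{Sc}}^{(k)} .
\end{equation}
All three operators live on $\Sym^k(\mathbb C^D)$. It is therefore enough to prove, for every normalized $\ket{x}\in\Sym^k(\mathbb C^D)$,
\begin{equation}
\bigl|\mathbb E[r_\phi^{1-k}(1-r_\phi^{k-1})^\lambda y_x(\phi)]\bigr|\le \Delta\,\mathbb E[r_\phi^{1-k}y_x(\phi)] .
\end{equation}

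\textbf{Upper bound on the numerator.} Factor $1-r^{k-1}=(1-r)\sum_{i=0}^{k-2}r^i$. Apply Jensen as in Lemma~\ref{lem:additive}, which gives
\begin{equation}
|1-r^{k-1}|^\lambda\le(k-1)^{\lambda-1}|1-r|^\lambda\sum_i r^{i\lambda}.
\end{equation}
Then apply a three-factor Hölder inequality with exponents $2k$ on $|1-r_\phi|^\lambda$, some $p$ on the power $r_\phi^{i\lambda+1-k}$, and $s\in[1,2]$ on $y_x$. The choice $s=2k/(2k-1)$ with $p=2k$ looks natural; I would tune it to match the constants in \eqref{eq:poly-rel-explicit}.
\begin{itemize}
\item The first factor is bounded by Eq.~\eqref{eq:poly-mcg-central-moment} with $t=2\lambda k\le\M$. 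This yields $(\Gamma(\lambda k+1)/(\M/6)^{\lambda k})^{1/(2k)}$.
\item The power $r^{i\lambda+1-k}$ can be negative (small $i$) or positive (large $i$).
  \begin{itemize}
  \item Negative exponents are handled by Eq.~\eqref{eq:poly-mcg-reciprocal-moment} with $q\le 2k(k-1)<\M$. This produces the first entry of $\Phi^{\max}_{k,\lambda}$.
  \item Positive exponents are handled by Eq.~\eqref{eq:poly-mcg-positive-moment}. The largest exponent, $i=k-2$, produces the second entry.
  \end{itemize}
  Bounding every term by $\Phi^{\max}$ absorbs the sum over $i$ into the factor $(k-1)^\lambda$.
\item The last factor is bounded by Lemma~\ref{lem:holder}, giving $4^{k(1-1/s)}k!/D^{(k)}\le 4\cdot k!/D^{(k)}$.
\end{itemize}

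\textbf{Lower bound on the denominator.} Here I need a lower bound of the form $c\,k!/D^{(k)}$. I would use reverse Hölder (or Cauchy–Schwarz) in the form
\begin{equation}
\mathbb E[y_x]=\mathbb E[r^{(1-k)/2}y^{1/2}\cdot r^{(k-1)/2}y^{1/2}]\le\mathbb E[r^{1-k}y]^{1/2}\,\mathbb E[r^{k-1}y]^{1/2}.
\end{equation}
This gives $\mathbb E[r^{1-k}y]\ge \mathbb E[y]^2/\mathbb E[r^{k-1}y]$. The numerator is $\mathbb E[y]=k!/D^{(k)}$ exactly, by Eq.~\eqref{eq:common-haar-moment}. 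The denominator $\mathbb E[r^{k-1}y]$ is bounded by Hölder again, splitting it into $\|r^{k-1}\|_{L^{2k}}\|y\|_{L^{2k/(2k-1)}}$ and using Eq.~\eqref{eq:poly-mcg-positive-moment} together with Lemma~\ref{lem:holder}. This is where the prefactor $\exp((k-1)(k(k-1)-1)/(2\M))$ should come from, with another constant $4$. Combining the two constants gives the $16$.

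\textbf{Main obstacle.} The delicate step is the denominator lower bound. $y_x$ is not bounded below pointwise, so it is essential that the $L^s$ bound of Lemma~\ref{lem:holder} carries the same $k!/D^{(k)}$ scale as $\mathbb E[y_x]$. Only then does the ratio become dimension-free. Aligning the Hölder exponents so that the $r$-powers land exactly at $2k(k-1)$ and the stated exponentials is the other fiddly point. The hypothesis $2k(k-1)<\M$ is presumably chosen precisely so that these reciprocal moments are admissible.
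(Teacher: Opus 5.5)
Your route is the same as the paper's proof. You reduce to quadratic forms on $\Sym^k(\mathbb C^D)$, apply Jensen and then a three-factor H\"older with exponent $2k$ on $|1-r_\phi|^\lambda$ and $2k$ on $r_\phi^{i\lambda+1-k}$. You then lower-bound $\langle x|\Xi_{\mathrm{Sc}}^{(k)}|x\rangle$ by Cauchy--Schwarz using $\mathbb E[y_x]=k!/D^{(k)}$.

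What does not work is the exponent bookkeeping you left to be ``tuned.'' As written, it fails in both places.

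\textbf{Numerator.} The triple $(2k,2k,2k/(2k-1))$ is not a H\"older triple, since $\frac{1}{2k}+\frac{1}{2k}+\frac{2k-1}{2k}>1$. You need the first two exponents equal to $2k$: they are what give $t=2k\lambda\le\M$ and $q=2k(k-1)<\M$. Once they are fixed, the third is forced to be $k/(k-1)$. Lemma~\ref{lem:holder} then gives exactly $\|y_x\|_{L^{k/(k-1)}}\le 4^{k\cdot(1/k)}k!/D^{(k)}=4\,k!/D^{(k)}$.

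\textbf{Denominator.} The split $\|r_\phi^{k-1}\|_{L^{2k}}\|y_x\|_{L^{2k/(2k-1)}}$ is a legitimate H\"older pair. However, it gives $2\exp[(k-1)(2k(k-1)-1)/(2\M)]$, not the $4\exp[(k-1)(k(k-1)-1)/(2\M)]$ you attribute to it. The resulting final bound, with $8$ in place of $16$ but a larger exponent, does not imply \eqref{eq:poly-rel-explicit} in general, for instance when $k\ge4$ and $\M$ is close to $2k(k-1)$.

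To obtain the stated prefactor, use the exponents $(k,k/(k-1))$ instead. Then $\|r_\phi^{k-1}\|_{L^k}\le\exp[(k-1)(k(k-1)-1)/(2\M)]$ by Eq.~\eqref{eq:poly-mcg-positive-moment} with $s=k(k-1)$, and $\|y_x\|_{L^{k/(k-1)}}\le 4\,k!/D^{(k)}$. With these corrected exponents the two factors of $4$ combine to give the $16$, and the rest of your argument goes through as you describe. This includes the branch analysis producing $\Phi^{\max}_{k,\lambda}$ and the final conjugation by $(\sqrt{D\rho})^{\otimes k}$.
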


\begin{proof}
Denote $y_x(\phi):=\mel{x}{\Pi_\phi^{(k)}}{x}$.
For any normalized $\ket{x}\in\Sym^k(\mathbb C^D)$, we obtain
\begin{align}
\abs{\mel{x}{\Xi_{\mathrm{diff}}}{x}}
&\le
(k-1)^{\lambda-1}\sum_{i=0}^{k-2}
\mathbb{E}_{\phi\sim\mathrm{Haar}}
\left[
|1-r_\phi|^{\lambda} r_\phi^{i\lambda+1-k} y_x(\phi)
\right] \nonumber\\
&\le
(k-1)^{\lambda-1}\sum_{i=0}^{k-2}
\norm{|1-r_\phi|^\lambda}_{L^{2k}}
\norm{r_\phi^{i\lambda+1-k}}_{L^{2k}}
\norm{y_x}_{L^{k/(k-1)}}.
\label{eq:poly-rel-numerator-holder}
\end{align}
On the first line we use Jensen's inequality. On the second we use H\"older's inequality with exponents
$(2k,2k,k/(k-1))$.

Since $2k\lambda\leq\M$, Eq.~\eqref{eq:poly-mcg-central-moment}
with $t=2k\lambda$ gives
\begin{equation}
\norm{|1-r_\phi|^\lambda}_{L^{2k}}
\le
\left(\frac{\Gamma(\lambda k+1)}{(\M/6)^{\lambda k}}\right)^{\!1/(2k)}.
\label{eq:poly-rel-central-L2k}
\end{equation}
For $u_i:=i\lambda+1-k$, if $u_i\ge 0$, we have
\begin{equation}
\norm{r_\phi^{u_i}}_{L^{2k}}
\le
\exp\!\left(\frac{u_i(2ku_i-1)}{2\M}\right),
\label{eq:poly-rel-positive-L2k}
\end{equation}
where the case $u_i=0$ is immediate, and the case $u_i\geq1$ follows from
Eq.~\eqref{eq:poly-mcg-positive-moment} with $s=2ku_i$. If $u_i<0$, the assumption $2k(k-1)<\M$ ensures
$2k(-u_i)<\M$, and Eq.~\eqref{eq:poly-mcg-reciprocal-moment} yields
\begin{equation}
\norm{r_\phi^{u_i}}_{L^{2k}}
\le
\exp\!\left(-
\frac{u_i(1-2ku_i)}{2(\M+2ku_i)}
\right).
\label{eq:poly-rel-negative-L2k}
\end{equation}
Define
\begin{equation}
\Phi_i:=
\begin{cases}
\exp\!\left(\dfrac{u_i(2ku_i-1)}{2\M}\right),
& u_i\geq0, \\
\exp\!\left(\dfrac{-u_i(1-2ku_i)}{2(\M+2ku_i)}\right),
& u_i<0.
\end{cases}
\label{eq:poly-rel-theta}
\end{equation}
Since $u_i=i\lambda+1-k$ increases with $i$, the negative branch is maximized at $i=0$, while the nonnegative branch, when present, is maximized at $i=k-2$. Hence, for $\lambda\geq2$ and $k\geq3$,
\begin{equation}
\Phi_{k,\lambda}^{\max}
=
\max\set{
\exp\left[
\frac{(k-1)(1+2k(k-1))}
{2(\M-2k(k-1))}
\right],
\exp\left[
\frac{
((k-2)\lambda+1-k)
\left(2k((k-2)\lambda+1-k)-1\right)
}
{2\M}
\right]
}.
\end{equation}
For $\lambda=1$ or $k=2$, only the negative branch is present. Therefore,
\begin{equation}
\sum_{i=0}^{k-2}\Phi_i
\leq
(k-1)\Phi_{k,\lambda}^{\max}.
\end{equation}

Combining all the above with Lemma~\ref{lem:holder}, we get
\begin{align}
\abs{\mel{x}{\Xi_{\mathrm{diff}}}{x}}
&\leq
(k-1)^\lambda
\left(\frac{\Gamma(\lambda k+1)}{(\M/6)^{\lambda k}}\right)^{\!1/(2k)}
\Phi_{k,\lambda}^{\max}
\norm{y_x}_{L^{\frac{k}{k-1}}}
\nonumber\\
&\leq
4(k-1)^\lambda
\left(\frac{\Gamma(\lambda k+1)}{(\M/6)^{\lambda k}}\right)^{\!1/(2k)}
\Phi_{k,\lambda}^{\max}
\frac{k!}{D^{(k)}}.
\label{eq:poly-rel-numerator-final}
\end{align}

To lower-bound the expectation value over $\Xi_{\mathrm{Sc}}^{(k)}$, Cauchy--Schwarz gives
\begin{align}
\left(\mathbb{E}_{\phi\sim\mathrm{Haar}}[y_x(\phi)]\right)^2
&\le
\mathbb{E}_{\phi\sim\mathrm{Haar}}\left[r_\phi^{1-k}y_x(\phi)\right]
\mathbb{E}_{\phi\sim\mathrm{Haar}}\left[r_\phi^{k-1}y_x(\phi)\right] \nonumber\\
&=
\mel{x}{\Xi_{\mathrm{Sc}}^{(k)}}{x}
\mathbb{E}_{\phi\sim\mathrm{Haar}}\left[r_\phi^{k-1}y_x(\phi)\right].
\label{eq:poly-rel-cauchy}
\end{align}
Using H\"older's inequality with exponents $(k,k/(k-1))$, together with
Lemma~\ref{lem:tommy's},
\begin{equation}
\mathbb{E}_{\phi\sim\mathrm{Haar}}\left[r_\phi^{k-1}y_x(\phi)\right]
\leq
\norm{r^{k-1}_\phi}_{L^k}\cdot\norm{y_x}_{L^{\frac{k}{k-1}}}
\le 4
\exp\!\left(\frac{(k-1)(k(k-1)-1)}{2\M}\right)
\cdot \frac{k!}{D^{(k)}}.
\label{eq:poly-rel-denominator-upper}
\end{equation}
These give
\begin{equation}
\mel{x}{\Xi_{\mathrm{Sc}}^{(k)}}{x}
\ge
\frac14
\exp\!\left(-\frac{(k-1)(k(k-1)-1)}{2\M}\right)
\frac{k!}{D^{(k)}}.
\label{eq:poly-rel-denominator-lower}
\end{equation}

Comparing \eqref{eq:poly-rel-numerator-final} with \eqref{eq:poly-rel-denominator-lower}, we obtain
\begin{equation}
\abs{\mel{x}{\Xi_{\mathrm{diff}}}{x}}
\le
\Delta^{\mathrm{rel}}_{k,\lambda}\mel{x}{\Xi_{\mathrm{Sc}}^{(k)}}{x}
\end{equation}
for every normalized $\ket{x}\in\Sym^k(\mathbb C^D)$, with
$\Delta^{\mathrm{rel}}_{k,\lambda}$ given by \eqref{eq:poly-rel-explicit}. Hence
\begin{equation}
-\Delta^{\mathrm{rel}}_{k,\lambda}\Xi_{\mathrm{Sc}}^{(k)}
\preceq
\Xi_{\mathrm{diff}}
\preceq
\Delta^{\mathrm{rel}}_{k,\lambda}\Xi_{\mathrm{Sc}}^{(k)}
\qquad\text{on }\Sym^k(\mathbb C^D).
\end{equation}
Since both operators are supported on $\Sym^k(\mathbb C^D)$, the same inequality holds on $(\mathbb C^D)^{\otimes k}$. By definition,
\begin{align*}
\chi_{\mathrm{Sc}}^{(k)}
&=(\sqrt{D\rho})^{\otimes k}\Xi_{\mathrm{Sc}}^{(k)}(\sqrt{D\rho})^{\otimes k},\\
\chi_{\mathrm{PA}}^{(k)}(\lambda)
&=(\sqrt{D\rho})^{\otimes k}\Xi_{\mathrm{PA}}^{(k)}(\sqrt{D\rho})^{\otimes k}.
\end{align*}
Multiplying each operator on the left and right by $(\sqrt{D\rho})^{\otimes k}$ does not increase the relative error and yields Eq.~\eqref{eq:poly-rel-main}. No full-rank assumption on $\rho$ is required.
\end{proof}

\begin{theorem}
[Relative error of polynomial approximation]
\label{thm:relative-error}
For any integer $k\geq2$ satisfying
$12k(k-1)^2\leq\M$, define
\begin{equation}
    \lambda_*
    :=
    \left\lfloor
    \frac{\M}
    {12k(k-1)^2}
    \right\rfloor
    \geq1.
    \label{eq:poly-rel-lambda-star}
\end{equation}
Then there exists a universal constant
\begin{equation}
    C=
    2^{33/8} e^{7/24}
\end{equation}
such that, for every integer
$1\leq\lambda\leq\lambda_*$,
\begin{equation}
    \Delta_{k,\lambda}^{\mathrm{rel}}
    \leq
    C
    \left(
        \frac{6\lambda k(k-1)^2}
        {e^{1/3}\M}
    \right)^{\lambda/2},
    \label{eq:poly-rel-monotone-bound}
\end{equation}
and
\begin{equation}
    (1-\Delta_{k,\lambda}^{\mathrm{rel}})
    \chi_{\mathrm{Sc}}^{(k)}
    \preceq
    \chi_{\mathrm{PA}}^{(k)}(\lambda)
    \preceq
    (1+\Delta_{k,\lambda}^{\mathrm{rel}})
    \chi_{\mathrm{Sc}}^{(k)}.
\end{equation}
The right-hand side of
Eq.~\eqref{eq:poly-rel-monotone-bound}
is monotonically decreasing with $\lambda$ for
$1\leq\lambda\leq\lambda_*$.
In particular, the smallest upper bound is obtained at
$\lambda=\lambda_*$:
\begin{align}
    \Delta_{k,\lambda_*}^{\mathrm{rel}}
    \leq
    C_{\rel}
    \exp\left[
        -\frac{\M}{24k(k-1)^2}
    \right],
    \label{eq:poly-rel-min-bound}
\end{align}
where $C_{\rel}=\sqrt{e}C$.
\end{theorem}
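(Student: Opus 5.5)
The plan is to simplify Lemma~\ref{lem:relative-error}, in the same way Theorem~\ref{thm:additive-error} simplified Lemma~\ref{lem:additive}.

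\textbf{Hypotheses.} For $1\leq\lambda\leq\lambda_*$ we have $2k\lambda\leq \M/(6(k-1)^2)\leq\M$. Since $12k(k-1)^2\leq\M$, we also get $2k(k-1)<\M$. So Lemma~\ref{lem:relative-error} applies, and the work is to bound each factor in Eq.~\eqref{eq:poly-rel-explicit}.

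\textbf{Gamma factor.} Apply the Robbins bound $n!\le e\,n^{n+1/2}e^{-n}$ with $n=\lambda k$:
\begin{equation}
\left(\frac{\Gamma(\lambda k+1)}{(\M/6)^{\lambda k}}\right)^{1/(2k)}\leq e^{1/(2k)}(\lambda k)^{1/(4k)}\left(\frac{6\lambda k}{e\M}\right)^{\lambda/2}.
\end{equation}
Multiplying by $(k-1)^\lambda=((k-1)^2)^{\lambda/2}$ produces the base $6\lambda k(k-1)^2/(e\M)$.

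\textbf{Exponential prefactors.} The prefactor $\exp[(k-1)(k(k-1)-1)/(2\M)]$ is at most $\exp[k(k-1)^2/(2\M)]\leq e^{1/24}$. For $\Phi^{\max}_{k,\lambda}$:
\begin{itemize}
\item The negative branch is bounded by a constant, since $2k(k-1)\leq \M/6$ gives $\M-2k(k-1)\geq 5\M/6$ and the numerator is about $k(k-1)^2\leq\M/12$.
\item The positive branch has exponent at most $k(k-2)^2\lambda^2/\M\leq \lambda\cdot k(k-1)^2\lambda_*/\M\leq\lambda/12$.
\end{itemize}
Thus $\Phi^{\max}_{k,\lambda}\leq c\,e^{\lambda/12}$ for a constant $c$.

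\textbf{Collecting terms.} The factor $e^{\lambda/12}$ combines with the $e^{-\lambda/2}$ from Robbins, which sits inside the $(\cdot)^{\lambda/2}$ as $e^{-1}$, to give $e^{-5/12\cdot\lambda}$. I then write this as $e^{-\lambda/6}\cdot e^{-\lambda/4}$:
\begin{itemize}
\item The $e^{-\lambda/6}$ part moves inside the base as $e^{-1/3}$, turning $e^{-1}$ into $e^{-1/3}$ as the statement requires.
\item The $e^{-\lambda/4}$ part absorbs the residual $(\lambda k)^{1/(4k)}$, since $(\lambda k)^{1/(4k)}\leq(\lambda k)^{1/4}$ and $x^{1/4}e^{-x/4}$ is bounded.
\end{itemize}
This yields a universal constant, and tracking the constants ($16$, $e^{1/(2k)}\leq e^{1/4}$, and so on) should reproduce $C=2^{33/8}e^{7/24}$.

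\textbf{Main obstacle.} The bookkeeping of $\Phi^{\max}$ and the constant is the hardest step. I will bound the positive-branch exponent crudely by $\lambda/12+O(1)$, using $\lambda\leq\lambda_*$, and then absorb the leftover $O(1)$ into $C$.

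\textbf{Monotonicity and the minimum.} These go exactly as in Theorem~\ref{thm:additive-error}. The derivative of $\lambda\log(6\lambda k(k-1)^2/(e^{1/3}\M))$ is $\log(\cdot)+1$, and at $\lambda\leq\lambda_*$ the argument is at most $1/(2e^{1/3})\leq e^{-1}$, so the derivative is $\leq 0$. Hence the bound is at most $Ce^{-\lambda_*/2}$. Using $\lambda_*\geq \M/(12k(k-1)^2)-1$, this is at most $\sqrt e\,C\exp[-\M/(24k(k-1)^2)]$.
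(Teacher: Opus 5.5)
Your route is the same as the paper's. You invoke Lemma~\ref{lem:relative-error}, apply Robbins with $n=\lambda k$, and bound the $\M$-dependent exponentials by $e^{1/24}$, $e^{1/8}$ and $e^{\lambda/12}$. You then trade part of the resulting $e^{-5\lambda/12}$ against the polynomial prefactor and finish with the identical monotonicity and $\lambda=\lambda_*$ argument.

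\textbf{The failing step.} You bound $(\lambda k)^{1/(4k)}\le(\lambda k)^{1/4}$ and appeal to boundedness of $x^{1/4}e^{-x/4}$. But the decay you actually have is $e^{-\lambda/4}$, not $e^{-\lambda k/4}$. So you are left with $(\lambda k)^{1/4}e^{-\lambda/4}\le k^{1/4}e^{-1/4}$, which grows with $k$ and does not give the universal constant the theorem asserts.

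\textbf{The repair.} Keep the exponent $1/(4k)$ on the $k$-part by writing $(\lambda k)^{1/(4k)}=\lambda^{1/(4k)}k^{1/(4k)}$.
\begin{itemize}
\item Using $e^{1/(2k)}\le e^{1/4}$, $\lambda^{1/(4k)}\le\lambda^{1/4}$ and $\lambda^{1/4}e^{-\lambda/4}\le e^{-1/4}$, you get $e^{1/(2k)}\lambda^{1/(4k)}e^{-\lambda/4}\le 1$.
\item Bound $k^{1/(4k)}\le e^{1/(4e)}$.
\end{itemize}
The paper instead spends only $e^{-\lambda/(4k)}$ on $\lambda^{1/(4k)}$, leaving $e^{5/6-1/(2k)}\ge e^{1/3}$ in the base. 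It then controls the leftover as $k^{1/(4k)}e^{1/(4k)}\le 2^{1/8}e^{1/8}$.

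\textbf{The constant.} Make explicit $\Phi^{\max}_{k,\lambda}\le e^{1/8}e^{\lambda/12}$, i.e.\ the negative-branch exponent is at most $1/8$. Your split then yields the prefactor $16e^{1/6}k^{1/(4k)}\le 16e^{1/6+1/(4e)}$. This is below $2^{33/8}e^{7/24}$, so the stated bound follows. You do not reproduce $C$ exactly, as you suggested you would, but you do not need to.
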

We note that the constant $C$ in
Theorem~\ref{thm:relative-error} can be quantitatively different from
the constants in Theorem~\ref{thm:additive-error}.
\begin{proof}
The proof begins from Lemma~\ref{lem:relative-error}.
Indeed, $1\leq\lambda\leq\lambda_*$ and $12k(k-1)^2\leq\M$ imply
$2k\lambda\leq\M$ and $2k(k-1)<\M$, so all hypotheses of that lemma hold.
Using the Robbins bound
$n!\leq e\,n^{n+1/2}e^{-n}$
with $n=\lambda k$,
\begin{equation}
\left(
\frac{\Gamma(\lambda k+1)}
{(\M/6)^{\lambda k}}
\right)^{\!1/(2k)}
\leq
e^{1/(2k)}(\lambda k)^{1/(4k)}
\left(
\frac{6\lambda k}{e\M}
\right)^{\lambda/2}.
\label{eq:poly-rel-gamma-robbins}
\end{equation}
Since both exponents in Eq.~\eqref{eq:poly-rel-theta-max} are nonnegative, their sum gives an upper bound on the maximum. Moreover, since $(k-2)\lambda+1-k<(k-2)\lambda$, we obtain
\begin{equation}
\Phi_{k,\lambda}^{\max}
\leq
\exp\left(
\frac{(k-1)(2k(k-1)+1)}{2(\M-2k(k-1))}
+
\frac{k\lambda^2(k-2)^2}{\M}
\right).
\label{eq:poly-rel-theta-upper}
\end{equation}
Substituting Eqs.~\eqref{eq:poly-rel-gamma-robbins} and
\eqref{eq:poly-rel-theta-upper} into
Eq.~\eqref{eq:poly-rel-explicit} gives
\begin{align}
\Delta_{k,\lambda}^{\mathrm{rel}}
&\leq
2^{4}e^{1/(2k)}(\lambda k)^{1/(4k)}
\left(
\frac{6\lambda k(k-1)^2}
{e\M}
\right)^{\lambda/2}
\nonumber\\
&\quad\times
\exp\left(
\frac{(k-1)(k(k-1)-1)}{2\M}
+
\frac{(k-1)(2k(k-1)+1)}{2(\M-2k(k-1))}
+
\frac{k\lambda^2(k-2)^2}{\M}
\right).
\label{eq:poly-rel-robbins-explicit}
\end{align}

For
$1\leq\lambda\leq\lambda_*$,
\begin{equation}
    \lambda
    \leq
    \frac{\M}
    {12k(k-1)^2},
\end{equation}
while the assumption
$12k(k-1)^2\leq\M$ implies
\begin{equation}
\frac{(k-1)(k(k-1)-1)}{2\M}
\leq\frac{1}{24},
\end{equation}
\begin{equation}
\frac{(k-1)(2k(k-1)+1)}{2(\M-2k(k-1))}
\leq\frac{1}{8},
\end{equation}
and
\begin{equation}
\frac{k\lambda^2(k-2)^2}{\M}
\leq\frac{\lambda}{12}.
\end{equation}
Therefore,
\begin{equation}
\exp\left(
\frac{(k-1)(k(k-1)-1)}{2\M}
+
\frac{(k-1)(2k(k-1)+1)}{2(\M-2k(k-1))}
+
\frac{k\lambda^2(k-2)^2}{\M}
\right)
\leq
\exp\left(\frac{\lambda}{12}+\frac16\right).
\label{eq:poly-rel-exp-bound}
\end{equation}

Combining Eqs.~\eqref{eq:poly-rel-robbins-explicit} and
\eqref{eq:poly-rel-exp-bound}, we obtain
\begin{align}
\Delta_{k,\lambda}^{\mathrm{rel}}
&\leq
2^{4}
\exp\left(
\frac{1}{2k}+\frac16
\right)
k^{1/(4k)}
\lambda^{1/(4k)}
\left(
\frac{6\lambda k(k-1)^2}
{e^{5/6}\M}
\right)^{\lambda/2}
\nonumber\\
&=
2^{4}
\exp\left(
\frac{1}{2k}+\frac16
\right)
k^{1/(4k)}
\left(
\lambda^{1/(4k)}e^{-\lambda/(4k)}
\right)
\left(
\frac{6\lambda k(k-1)^2}
{e^{5/6-1/(2k)}\M}
\right)^{\lambda/2}.
\end{align}
Since $x^{1/(4k)}e^{-x/(4k)}$ is monotonically decreasing for
$x\geq1$,
\begin{equation}
    \lambda^{1/(4k)}e^{-\lambda/(4k)}
    \leq
    e^{-1/(4k)}.
\end{equation}
Moreover, for $k\geq2$,
\begin{equation}
    \frac56-\frac{1}{2k}\geq\frac{7}{12}\geq\frac13,
\end{equation}
so $e^{5/6-1/(2k)}$ in the denominator can be replaced by $e^{1/3}$.
Also, because
$(\log k+1)/(4k)$ is decreasing for $k\geq2$,
\begin{equation}
    k^{1/(4k)}e^{1/(4k)}
    \leq
    2^{1/8}e^{1/8}.
\end{equation}
Thus
\begin{equation}
    \Delta_{k,\lambda}^{\mathrm{rel}}
    \leq
    C
    \left(
        \frac{6\lambda k(k-1)^2}
        {e^{1/3}\M}
    \right)^{\lambda/2},
\end{equation}
with
\begin{equation}
    C=
    2^{33/8} e^{7/24}.
\end{equation}

To see that the right-hand side decreases monotonically with $\lambda$,
we consider
\begin{equation}
    \frac{\lambda}{2}
    \log\left(
        \frac{6\lambda k(k-1)^2}
        {e^{1/3}\M}
    \right).
\end{equation}
Its derivative with respect to $\lambda$ is
\begin{equation}
    \frac12\left[
    \log\left(
        \frac{6\lambda k(k-1)^2}
        {e^{1/3}\M}
    \right)+1
    \right].
\end{equation}
It is non-positive for
$1\leq\lambda\leq\lambda_*$,
so the right-hand side of
Eq.~\eqref{eq:poly-rel-monotone-bound}
is monotonically decreasing with $\lambda$ throughout this interval.

Hence the smallest upper bound within this interval is obtained at
$\lambda=\lambda_*$. Since
\begin{equation}
\frac{6\lambda_*k(k-1)^2}{e^{1/3}\M}
\leq\frac{1}{2e^{1/3}}\leq e^{-1},
\end{equation}
then 
\begin{align}
    C
    \left(
        \frac{6\lambda_*k(k-1)^2}
        {e^{1/3}\M}
    \right)^{\lambda_*/2} &\leq C \exp{\left(-\frac{\lambda
    _*}{2}\right)} \leq C \exp{\left(-\frac{\left(\frac{\M}{12k(k-1)^2}-1\right)}{2}\right)} \nonumber \\
    & = \sqrt{e}C \exp{\left(-\frac{\M}{24k(k-1)^2}\right)},
\end{align}
which proves Eq.~\eqref{eq:poly-rel-min-bound}.
\end{proof}

In parallel with Corollary~\ref{coro:minimal-additive-error}, 
we obtain the following result for the relative error.

\begin{corollary}
[Sufficient $\lambda$ for a given relative error]
\label{coro:minimal-relative-error}
Let $k\geq2$ be an integer with $12k(k-1)^2\leq\M$, and let
$C_{\rel}$ be the constant in Theorem~\ref{thm:relative-error}.
If $\epsilon>0$ satisfies
\begin{equation}
\epsilon\geq
C_{\rel}\exp\!\left[-\frac{\M}{24k(k-1)^2}\right],
\end{equation}
it suffices to choose
\begin{equation}
\lambda=\max\left\{
\left\lfloor2\log\left(\frac{C_{\rel}}{\epsilon}\right)\right\rfloor,1
\right\}.
\end{equation}
Then
\begin{equation}
(1-\epsilon)\chi_{\mathrm{Sc}}^{(k)}
\preceq
\chi_{\mathrm{PA}}^{(k)}(\lambda)
\preceq
(1+\epsilon)\chi_{\mathrm{Sc}}^{(k)}.
\end{equation}
\end{corollary}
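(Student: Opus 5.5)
The plan mirrors the proof of Corollary~\ref{coro:minimal-additive-error}, with Theorem~\ref{thm:relative-error} replacing Theorem~\ref{thm:additive-error} and $\M/(12k(k-1)^2)$ replacing $\M/(12(k-1)^2)$.

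\emph{Step 1: the chosen order is admissible.} The goal is $1\leq\lambda\leq\lambda_*$, with $\lambda_*$ as in Eq.~\eqref{eq:poly-rel-lambda-star}. Taking logarithms in the assumed lower bound on $\epsilon$ gives
\begin{equation}
2\log\left(\frac{C_{\rel}}{\epsilon}\right)\leq\frac{\M}{12k(k-1)^2}.
\end{equation}
Flooring both sides shows that the floor is at most $\lambda_*$. The hypothesis $12k(k-1)^2\leq\M$ gives $\lambda_*\geq1$, so taking the maximum with $1$ keeps $\lambda$ in $[1,\lambda_*]$. Theorem~\ref{thm:relative-error} therefore applies and gives the sandwich inequality with $\Delta_{k,\lambda}^{\mathrm{rel}}$ bounded by Eq.~\eqref{eq:poly-rel-monotone-bound}.

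\emph{Step 2: bound the base and the order.} Since $\lambda\leq\lambda_*\leq\M/(12k(k-1)^2)$, the base satisfies
\begin{equation}
\frac{6\lambda k(k-1)^2}{e^{1/3}\M}\leq\frac{1}{2e^{1/3}}\leq e^{-1}.
\end{equation}
Hence
\begin{equation}
\Delta_{k,\lambda}^{\mathrm{rel}}\leq Ce^{-\lambda/2}.
\end{equation}
From the definition of $\lambda$ we also have $\lambda\geq 2\log(C_{\rel}/\epsilon)-1$. This holds in both branches of the maximum: the floor drops by less than $1$, and in the other branch $\lambda=1$ already exceeds the floor.

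\emph{Step 3: conclude.} Using $C_{\rel}=\sqrt{e}\,C$,
\begin{equation}
\Delta_{k,\lambda}^{\mathrm{rel}}\leq C\exp\left(-\log\frac{C_{\rel}}{\epsilon}+\frac12\right)=\epsilon.
\end{equation}
The relative error is defined as the smallest constant for which the sandwich holds, and the sandwich is monotone in that constant. So the sandwich with $\Delta_{k,\lambda}^{\mathrm{rel}}$ implies the one with $\epsilon$: $(1-\epsilon)\chi_{\mathrm{Sc}}^{(k)}\preceq(1-\Delta)\chi_{\mathrm{Sc}}^{(k)}$, and similarly on the upper side, because $\chi_{\mathrm{Sc}}^{(k)}\succeq0$.

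The only delicate point is the edge case where $\epsilon\geq C_{\rel}$. There the logarithm is nonpositive and $\lambda=1$. Step 1 is trivial, and the inequality $\lambda\geq2\log(C_{\rel}/\epsilon)-1$ still holds, so the same chain goes through and no separate argument is needed.
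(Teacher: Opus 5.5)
Your proposal is correct and follows the paper's route: the paper reruns the proof of Corollary~\ref{coro:minimal-additive-error} with $(k-1)^2$ replaced by $k(k-1)^2$, using the same three steps you give (admissibility $1\leq\lambda\leq\lambda_*$, the base bound $\frac{1}{2e^{1/3}}\leq e^{-1}$, and $\lambda\geq 2\log(C_{\rel}/\epsilon)-1$ with $C_{\rel}=\sqrt{e}\,C$). Your remark that the sandwich is monotone in its constant because $\chi_{\mathrm{Sc}}^{(k)}\succeq0$ makes explicit a step the paper leaves implicit.
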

\begin{proof}
The proof method is the same as Corollary~\ref{coro:minimal-additive-error}. The only difference is replacing $(k-1)^2$ with $k(k-1)^2$.
\end{proof}

\section{Efficient implementation of symmetric subspace projector}\label{app:projection}
Recall our approximation formula,
\begin{align}
\chi_{\mathrm{PA}}^{(k)}(\lambda)
=\sum_{i=1}^{\lambda}\binom{\lambda}{i}(-1)^{i-1} D^{k+a_i}
\mathrm{Tr}_{\mathrm{aux}}
\!\left[
\rho^{\otimes (k+a_i)}
\chi_{\mathrm{Haar}}^{(k+a_i)}
\right].
\end{align}
To predict properties of the Scrooge ensemble, we estimate $\trace{}{\chi_{\mathrm{PA}}^{(k)}(\lambda)O_k}$ as a good approximation for the expectation value $\mathrm{Tr}\!\left[{\chi_{\mathrm{Sc}}^{(k)}O_k}\right]$, for any $k$-copy operator $O_k$. 
To achieve this goal we evaluate the $\lambda$ terms separately. Namely, we measure $\mathrm{Tr}
\!\left[
\rho^{\otimes k_i}
\chi_{\mathrm{Haar}}^{(k_i)}\left(O_k\otimes I^{\otimes a_i}\right)
\right]$ and combine the resulting estimates with the corresponding coefficients. However, operationally, the expectation value of $O_k\otimes I^{\otimes a_i}$ must be evaluated with respect to a normalized density operator. Let 
\begin{equation}
\rho_{\mathrm{sym}}^{(k_i)}=\rho^{\otimes k_i}\Pi_{\mathrm{sym}}^{(k_i)}/p_{\mathrm{sym}}
^{(k_i)},
\end{equation}
where $p_{\mathrm{sym}}^{(k_i)}=\mathrm{Tr}\!\left[\rho^{\otimes k_i}\Pi_{\mathrm{sym}}^{(k_i)}\right]$ normalizes the density matrix.
Using the facts that $\rho^{\otimes k_i}$ commutes with $\Pi_{\mathrm{sym}}^{(k_i)}$ and that $\Pi_{\mathrm{sym}}^{(k_i)}\Pi_{\mathrm{sym}}^{(k_i)}=\Pi_{\mathrm{sym}}^{(k_i)}$, we can write $\rho_{\mathrm{sym}}^{(k_i)}$ as $\Pi_{\mathrm{sym}}^{(k_i)}\rho^{\otimes  k_i}\Pi_{\mathrm{sym}}^{(k_i)}/p_{\mathrm{sym}}
^{(k_i)}$, which is the normalized projection of $\rho^{\otimes k_i}$ onto the totally symmetric subspace. Meanwhile, $p_{\mathrm{sym}}^{(k_i)}=\mathrm{Tr}\!\left[\Pi_{\mathrm{sym}}^{(k_i)}\rho^{\otimes k_i}\Pi_{\mathrm{sym}}^{(k_i)}\right]$ is the probability of projecting onto the totally symmetric subspace. In this section, we focus on implementing this quantum state using many copies of $\rho$.

\subsection{Schur transform}\label{sec:schur-transform}

The central tool we use is the Schur transform.
Let $\mathbb{C}^D$ be the Hilbert space of a single $D$-dimensional
qudit. Schur transforms deal with the decomposition $(\mathbb{C}^D)^{\otimes k}$ in terms of irreducible representations of the symmetric group $\mathcal{S}_k$ and the unitary group $\mathcal{U}_D$
\begin{align}
(\mathbb{C}^D)^{\otimes k}
\cong
\bigoplus_{\gamma \vdash k,\ \ell(\gamma )\le D}
\mathcal Q_\gamma ^D\otimes \mathcal P_\gamma .
\end{align}
Here, $\gamma \vdash k$ means that $\gamma $ is a partition of $k$,
and $\ell(\gamma )$ denotes the number of nonzero parts of $\gamma $.
Every partition $\gamma$ corresponds to a \emph{Young diagram} of shape $\gamma$, see Fig.~\ref{fig:tableaux} for an example of $\gamma=(4,1)$.
The space $\mathcal Q_\gamma ^D$ carries the irreducible representation
of the unitary group $\mathcal{U}_D$ labeled by $\gamma $, while
$\mathcal P_\gamma $ carries the irreducible representation of the
symmetric group $\mathcal{S}_k$ labeled by the same partition. 

The Schur transform $U_{\mathrm{Sch}}$ maps the standard computational basis of
$(\mathbb{C}^D)^{\otimes k}$ to the Schur basis
$\ket{\gamma ,q_\gamma ,p_\gamma }$, where
$q_\gamma $ labels a basis vector of $\mathcal Q_\gamma ^D$, and
$p_\gamma $ labels a basis vector of $\mathcal P_\gamma $.

\begin{lemma}[Runtime of the Schur transform \cite{harrow2005applications}]
\label{lem:time-schur}
For any $0<\epsilon_{\mathrm{Sch}}<1$, the Schur transform on
$(\mathbb{C}^{D})^{\otimes k}$ can be approximated to error
$\epsilon_{\mathrm{Sch}}$ by a quantum circuit with runtime
\[
\operatorname{poly}\!\left(
k,\,
\log D,\,
\log\frac{1}{\epsilon_{\mathrm{Sch}}}
\right).
\]
Here, the approximation error $\epsilon_{\mathrm{Sch}}$ is defined by
\begin{equation}
\left\lVert
\widetilde{U}_{\mathrm{Sch}}
-
U_{\mathrm{Sch}}
\right\rVert_\infty
\leq \epsilon_{\mathrm{Sch}},
\end{equation}
where $\widetilde{U}_{\mathrm{Sch}}$ and $U_{\mathrm{Sch}}$ denote the implemented approximate and ideal Schur transforms, respectively.
\end{lemma}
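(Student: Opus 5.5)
The plan follows the standard Bacon--Chuang--Harrow construction~\cite{harrow2005applications,bacon2006efficient}, together with its high-dimensional refinement~\cite{burchardt2025high}. The Schur transform is built as a cascade of $k$ Clebsch--Gordan (CG) transforms for $\mathcal U_D$.

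\textbf{Cascade.} Starting from $(\mathbb C^D)^{\otimes 1}\cong\mathcal Q^D_{(1)}$, step $j$ takes a state $\ket{\gamma_{j-1}}\ket{q_{\gamma_{j-1}}}\otimes\ket{i}$, with $\gamma_{j-1}\vdash j-1$, to $\ket{\gamma_{j-1}}\ket{\gamma_j}\ket{q_{\gamma_j}}$. Here $\gamma_j$ is obtained from $\gamma_{j-1}$ by adding one box. The unitary-group irrep label is carried forward, and the recorded sequence of intermediate diagrams $\gamma_1\subset\gamma_2\subset\cdots\subset\gamma_k$ is a standard Young tableau. That tableau is exactly the Young--Yamanouchi basis label $p_\gamma$ of $\mathcal P_\gamma$.

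\textbf{Efficient encodings.} Schur--Weyl duality guarantees that this sequential coupling reproduces the decomposition $\bigoplus_\gamma\mathcal Q^D_\gamma\otimes\mathcal P_\gamma$. The remaining task is to show that each register has an efficient encoding:
\begin{itemize}
\item A partition has at most $k$ nonzero rows, so $\gamma$ is stored in $\mathcal O(k\log k)$ qubits.
\item A tableau $p_\gamma$ is stored in $\mathcal O(k\log k)$ qubits.
\item A basis vector of $\mathcal Q^D_\gamma$ is encoded as a Gelfand--Tsetlin pattern. Following~\cite{harrow2005applications,burchardt2025high}, this pattern is compressed so that only $\mathcal O(k)$ nontrivial rows are stored, each with $\mathcal O(\log D+\log k)$ bits, giving $\operatorname{poly}(k,\log D)$ qubits.
\end{itemize}

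\textbf{Each CG step.} Each CG transform is implemented recursively along the subgroup chain $\mathcal U_D\supset\mathcal U_{D-1}\supset\cdots$. By the Wigner--Eckart theorem, the CG coefficients factor into a product of reduced Wigner coefficients, one per level. Each reduced Wigner coefficient acts on at most $\mathcal O(k)$ candidate positions for the added box. It is an explicit, classically computable rational function of the pattern entries, so it can be implemented by a controlled rotation whose angle is computed reversibly with $\operatorname{poly}(k,\log D,\log 1/\epsilon')$ gates.

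\textbf{Main obstacle.} I expect the hardest step to be avoiding the naive $\operatorname{poly}(D)$ cost of recursing through all $D$ levels of the chain. Since only $\mathcal O(k)$ distinct values can appear in the relevant patterns, I would first try the compression of~\cite{burchardt2025high}: levels at which the pattern does not change are skipped in blocks, and the qudit index $i$ is treated as a $\log D$-bit string. The CG step then only involves comparisons against the $\mathcal O(k)$ stored breakpoints.

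\textbf{Error accounting.} The circuit consists of $G=\operatorname{poly}(k,\log D)$ elementary rotations. I would approximate each one to operator-norm precision $\epsilon_{\mathrm{Sch}}/G$, using the Solovay--Kitaev theorem or exact arithmetic for the angles, at cost $\operatorname{polylog}(G/\epsilon_{\mathrm{Sch}})$ each. The telescoping bound
\[
\norm{A_1\cdots A_G-\widetilde A_1\cdots\widetilde A_G}_\infty\le\sum_j\norm{A_j-\widetilde A_j}_\infty
\]
holds for unitaries, so the total error is at most $\epsilon_{\mathrm{Sch}}$. This gives the claimed runtime $\operatorname{poly}(k,\log D,\log(1/\epsilon_{\mathrm{Sch}}))$.
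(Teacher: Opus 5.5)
Your sketch is correct in outline and follows the same route as the paper, which gives no proof of its own and simply invokes the Bacon--Chuang--Harrow cascade of Clebsch--Gordan transforms~\cite{harrow2005applications,bacon2006efficient}. As in the paper, the $\operatorname{poly}(\log D)$ rather than $\operatorname{poly}(D)$ dependence rests on the high-dimensional refinement you correctly single out and defer to~\cite{burchardt2025high}, where the skipping of levels with unchanged Gelfand--Tsetlin rows is worked out in detail.
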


\begin{figure}[t!]
    \centering
    \tableau{1,2,3,4}{5}{palegreen}{3}{white}%
    \hspace{5mm}%
    \tableau{1,2,3,5}{4}{palegreen}{3}{white}%
    \hspace{5mm}%
    \tableau{1,2,4,5}{3}{palepink}{2}{palepink}%
    \hspace{5mm}%
    \tableau{1,3,4,5}{2}{palepink}{2}{palepink}
    \caption{Four tableaux of shape $(4,1)$.}
    \label{fig:tableaux}
\end{figure}
Crucially, the states $\ket{p_\gamma}$ form the
\emph{Young--Yamanouchi basis}, whose elements are indexed by
\emph{standard Young tableaux} $p_\gamma$ of shape $\gamma$.
A standard Young tableau is a filling of the boxes of $\gamma$
with the numbers $1,\ldots,k$, each appearing exactly once,
such that the entries increase from left to right along each row
and from top to bottom along each column.
Figure~\ref{fig:tableaux} shows the four standard Young tableaux
of shape $(4,1)$.

In any standard Young tableau, the boxes containing
$1,\ldots,k'$ form a Young diagram for every $k'\leq k$.
The shape of this smaller diagram identifies the irreducible
representation of $\calS_{k'}$ to which the corresponding basis
state belongs, where $\calS_{k'}\subseteq \calS_k$ is the subgroup
that permutes only the first $k'$ registers.
For example, Fig.~\ref{fig:tableaux} illustrates the restriction
from $\calS_5$ to $\calS_3$, with the boxes containing $1,2,3$ shaded.
Each of the two green tableaux has restricted shape $(3)$,
so each corresponding basis state spans a copy of the trivial
representation of $\calS_3$.
The two red tableaux have restricted shape $(2,1)$, and their
corresponding basis states together span the two-dimensional
irreducible representation of $\calS_3$ labeled by $(2,1)$.

Consequently, the projector onto the subspace symmetric under
permutations of the first $k'$ registers,
\begin{align}
    \Pi_{\mathrm{sym}}^{k\rightarrow k'}
=
\Pi_{\mathrm{sym}}^{(k')}\otimes I^{\otimes(k-k')},
\end{align}
takes the following form in the Schur basis:
\begin{align}\label{eq:projector}
\Pi_{\mathrm{sym}}^{(k')}\otimes I^{\otimes(k-k')}
\cong
\bigoplus_{\substack{\gamma\vdash k, \ell(\gamma)\leq D}}
I_\gamma\otimes P_\gamma^{(k')}.
\end{align}
Here, $P_\gamma^{(k')}$ is the orthogonal projector onto the
span of the Young--Yamanouchi basis states whose tableaux,
when restricted to the entries $1,\ldots,k'$, have shape $(k')$.
Equivalently, these are precisely the tableaux in which
$1,\ldots,k'$ all lie in the first row. In words, we can efficiently construct $P_\gamma^{(k')}$ for the symmetric subspace projector $\Pi_{\mathrm{sym}}^{k\rightarrow k'}$ with Young--Yamanouchi basis states, which is mapped to the standard computational basis after Schur transform. We will use this fact to implement the symmetric subspace projector below.

\subsection{Implementing the symmetric subspace projection at low cost}
\label{Implementing the symmetric subspace projection at low cost}

To prepare
$\rho_{\mathrm{sym}}^{(k_i)}
=\rho^{\otimes k_i}\Pi_{\mathrm{sym}}^{(k_i)}/p_{\mathrm{sym}}^{(k_i)}$
from copies of $\rho$, a natural approach is to take $k_i$ copies
and measure the symmetric subspace projector
$\Pi_{\mathrm{sym}}^{(k_i)}$ using the Schur transform.
Conditioned on obtaining the symmetric outcome, the resulting
state is precisely $\rho_{\mathrm{sym}}^{(k_i)}$.
However, this outcome occurs with probability
\begin{align}
p_{\mathrm{sym}}^{(k_i)}
=\mathrm{Tr}\!\left[
\rho^{\otimes k_i}\Pi_{\mathrm{sym}}^{(k_i)}
\right].
\end{align}
For example, when $\rho=I/D$ and $k_i=k$, this probability is
\begin{align}
p_{\mathrm{sym}}^{(k_i)}
=\frac{\binom{D+k-1}{k}}{D^k}
=\frac{1}{k!}\prod_{j=0}^{k-1}\left(1+\frac{j}{D}\right)
=\frac{1+\mathcal{O}(1/D)}{k!},
\end{align}
Thus, for a highly mixed state in this regime, the expected
number of attempts is approximately $k!$, making direct
post-selection costly as $k$ grows.

To avoid this post-selection cost, we exploit the structure
of the target state in the Schur basis.
Within each representation sector, the required state of the
permutation register is known and can be prepared directly,
without post-selection.
The remaining challenge is to reproduce the relative weights
of the representation sectors.
Using $m$ copies allows us to study and control the approximation
error through these weights.

Suppose we have $\rho^{\otimes m}$, with $m\geq k$, and consider
the target state
\begin{align}\label{eq:target}
\frac{\Pi_{\mathrm{sym}}^{m\rightarrow k}\rho^{\otimes m}}
{\trace{}{\Pi_{\mathrm{sym}}^{m\rightarrow k}\rho^{\otimes m}}}
=
\rho_{\mathrm{sym}}^{(k)}
\otimes \rho^{\otimes (m-k)}.
\end{align}
Tracing out the last $m-k$ registers therefore yields the
desired state $\rho_{\mathrm{sym}}^{(k)}$.

The Schur decomposition of the input state is
\begin{align}\label{eq:density-matrix}
\rho^{\otimes m}
\cong
\bigoplus_{\substack{\gamma\vdash m,\ \ell(\gamma)\leq D}}
\mu_\gamma(\rho)Q_\gamma(\rho)
\otimes \frac{I_\gamma}{r_\gamma},
\end{align}
where $Q_\gamma(\rho)$ is a normalized density matrix,
the non-negative weights $\mu_\gamma(\rho)$ sum to one,
and $r_\gamma$ is the dimension of $\mathcal{P}_\gamma$.
In particular, in every subspace labeled by  $\gamma$,
the permutation register is maximally mixed and independent
of the register containing $Q_\gamma(\rho)$.

Using Eqs.~\eqref{eq:projector} and \eqref{eq:target},
we can express the target state as
\begin{align}
\rho_{\mathrm{sym}}^{(k)}\otimes\rho^{\otimes(m-k)}
&\cong
\frac{1}{
\sum_\gamma
\frac{\mu_\gamma(\rho)\trace{}{P_\gamma^{(k)}}}{r_\gamma}
}
\bigoplus_{\substack{\gamma\vdash m,\ \ell(\gamma)\leq D}}
\mu_\gamma(\rho)Q_\gamma(\rho)
\otimes\frac{P_\gamma^{(k)}}{r_\gamma}
\notag\\
&=
\bigoplus_{\substack{\gamma\vdash m,\ \ell(\gamma)\leq D}}
\frac{\mu_\gamma(\rho)a_\gamma^{(k)}}
{\sum_\gamma\mu_\gamma(\rho)a_\gamma^{(k)}}
Q_\gamma(\rho)\otimes
\frac{P_\gamma^{(k)}}{\trace{}{P_\gamma^{(k)}}},
\end{align}
where
\begin{align}
    a_\gamma^{(k)}
=\frac{\trace{}{P_\gamma^{(k)}}}{r_\gamma}
\end{align}
is the probability of obtaining the symmetric subspace
conditioned on the sector $\gamma$.
Thus, post-selection has two effects: it replaces the
maximally mixed permutation state by
$P_\gamma^{(k)}/\trace{}{P_\gamma^{(k)}}$, and it reweights
the sectors by $a_\gamma^{(k)}$.
Whenever $\trace{}{P_\gamma^{(k)}}=0$, we interpret the normalized
factor $P_\gamma^{(k)}/\trace{}{P_\gamma^{(k)}}$ as an arbitrary
fixed density matrix on $\mathcal{P}_\gamma$ and set $a_\gamma^{(k)}=0$.
Such sectors have zero overlap in the target state.

Since $P_\gamma^{(k)}/\trace{}{P_\gamma^{(k)}}$ depends only on
$\gamma$ and $k$, it can be prepared without knowing $\rho$.
We therefore define a quantum channel that acts in the Schur basis and
performs this replacement:
\begin{align}
\calN_{m,k}\!\left(
U_{\mathrm{Sch}}\rho^{\otimes m}U_{\mathrm{Sch}}^\dagger
\right)
:=
\bigoplus_{\substack{\gamma\vdash m,\ \ell(\gamma)\leq D}}
\mu_\gamma(\rho)Q_\gamma(\rho)\otimes
\frac{P_\gamma^{(k)}}{\trace{}{P_\gamma^{(k)}}}.
\end{align}
The channel $\calN_{m,k}$ acts directly in the Schur basis:
it measures the Schur label $\gamma$ and replaces the permutation
register with the corresponding prescribed state, while leaving
$Q_\gamma(\rho)$ unchanged.

Its approximation error therefore comes entirely from the
difference between the original weights $\mu_\gamma(\rho)$
and the post-selected weights
\[
\frac{\mu_\gamma(\rho)a_\gamma^{(k)}}
{\sum_\gamma\mu_\gamma(\rho)a_\gamma^{(k)}}.
\]
The following lemma bounds this error by the relative
fluctuations of $a_\gamma^{(k)}$ under $\mu_\gamma(\rho)$.
\begin{lemma}
\label{lem:channel-error}
    \begin{align}
        \norm{U_{\mathrm{Sch}}^\dagger\calN_{m,k}\left(
U_{\mathrm{Sch}}\rho^{\otimes m}U_{\mathrm{Sch}}^\dagger\right)U_{\mathrm{Sch}}-\rho_{\mathrm{sym}}^{(k)}\otimes \rho^{\otimes (m-k)}}_1\leq 
        \sqrt{\sum_\gamma \mu_\gamma(\rho)\left(\frac{a_\gamma^{(k)}}{\sum_\gamma \mu_\gamma(\rho)a_\gamma^{(k)}}-1\right)^2}.
    \end{align}
\end{lemma}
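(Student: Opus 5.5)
Both operators are block diagonal in the Schur basis with identical blocks up to scalar weights, so the trace distance reduces to a classical total variation distance, which we then bound by Cauchy--Schwarz. Since the trace norm is unitarily invariant, we conjugate by $U_{\mathrm{Sch}}$ and compare the two Schur-basis expressions directly:
\begin{align*}
\bigoplus_{\gamma}\mu_\gamma(\rho)\,Q_\gamma(\rho)\otimes\frac{P_\gamma^{(k)}}{\trace{}{P_\gamma^{(k)}}}
\quad\text{and}\quad
\bigoplus_{\gamma}\frac{\mu_\gamma(\rho)a_\gamma^{(k)}}{Z}\,Q_\gamma(\rho)\otimes\frac{P_\gamma^{(k)}}{\trace{}{P_\gamma^{(k)}}},
\end{align*}
where $Z:=\sum_\gamma\mu_\gamma(\rho)a_\gamma^{(k)}$. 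In sectors with $a_\gamma^{(k)}=0$ the same fixed state appears in both operators, with weight zero in the second. Each block $Q_\gamma(\rho)\otimes P_\gamma^{(k)}/\trace{}{P_\gamma^{(k)}}$ is a normalized density matrix, and the trace norm of a direct sum is additive. The difference therefore has trace norm exactly
\begin{align*}
\sum_\gamma \mu_\gamma(\rho)\left|\frac{a_\gamma^{(k)}}{Z}-1\right|.
\end{align*}

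The remaining step is to bound this $L^1$ quantity by the $L^2$ one. Viewing $\mu_\gamma(\rho)$ as a probability distribution over $\gamma$, Jensen's (equivalently Cauchy--Schwarz) inequality gives $\mathbb{E}_\mu|X|\le\sqrt{\mathbb{E}_\mu X^2}$ with $X_\gamma=a_\gamma^{(k)}/Z-1$, which is exactly the claimed bound.

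There is no real obstacle, only bookkeeping. First, we must check that the channel output and the target share the same $Q_\gamma(\rho)$ and permutation-register state in every sector, which follows from Eqs.~\eqref{eq:density-matrix} and \eqref{eq:projector}. Second, we must confirm that $Z>0$ so the normalization is well defined. This holds because $Z=\trace{}{\Pi_{\mathrm{sym}}^{m\rightarrow k}\rho^{\otimes m}}=p_{\mathrm{sym}}^{(k)}>0$, since $\rho^{\otimes k}$ always has positive overlap with the symmetric subspace; for instance, $\bra{v}^{\otimes k}\rho^{\otimes k}\ket{v}^{\otimes k}>0$ for any eigenvector $\ket{v}$ of $\rho$ with nonzero eigenvalue.
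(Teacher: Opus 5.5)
Your proposal is correct and follows the paper's proof essentially step by step. Block-diagonality in the Schur basis reduces the trace norm exactly to $\sum_\gamma \mu_\gamma(\rho)\,\lvert 1-a_\gamma^{(k)}/Z\rvert$, and Cauchy--Schwarz with $\sum_\gamma\mu_\gamma(\rho)=1$ then gives the stated square root; your explicit checks that $Z=p_{\mathrm{sym}}^{(k)}>0$ and that sectors with $\Tr P_\gamma^{(k)}=0$ are handled correctly are details the paper leaves implicit.
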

\begin{proof}
    Since $Q_{\gamma}Q_{\gamma'}=0$ when $\gamma\neq \gamma'$, we have
    \begin{align}
        \norm{U_{\mathrm{Sch}}^\dagger\calN_{m,k}\left(
U_{\mathrm{Sch}}\rho^{\otimes m}U_{\mathrm{Sch}}^\dagger\right)U_{\mathrm{Sch}}-\rho_{\mathrm{sym}}^{(k)}\otimes \rho^{\otimes (m-k)}}_1
        &=\sum_\gamma \left|\mu_\gamma(\rho)-\frac{\mu_\gamma(\rho) a_\gamma^{(k)}}{\sum_\gamma \mu_\gamma(\rho) a_\gamma^{(k)}}\right|\norm{Q_\gamma(\rho)\otimes \frac{P^{(k)}_\gamma}{\trace{}{P^{(k)}_\gamma}}}_1
        \notag\\
        &=\sum_\gamma\mu_\gamma(\rho) \left|1-\frac{ a_\gamma^{(k)}}{\sum_\gamma \mu_\gamma(\rho) a_\gamma^{(k)}}\right|
        \notag\\
        &\leq 
        \sqrt{\sum_\gamma \mu_\gamma(\rho)}\cdot
        \sqrt{\sum_\gamma \mu_\gamma(\rho)\left(\frac{a_\gamma^{(k)}}{\sum_\gamma \mu_\gamma(\rho)a_\gamma^{(k)}}-1\right)^2}
        \notag\\
        &=\sqrt{\sum_\gamma \mu_\gamma(\rho)\left(\frac{a_\gamma^{(k)}}{\sum_\gamma \mu_\gamma(\rho)a_\gamma^{(k)}}-1\right)^2}.
    \end{align}
    In the third line we use the Cauchy--Schwarz inequality.
\end{proof}

The error has a state-independent upper bound.
\begin{lemma}[State-independent error bound]
\label{lem:relative-concentration}
For every density operator $\rho$ on $\mathbb{C}^D$ and
integers $m\geq k\geq1$,
\begin{align}
\sum_\gamma \mu_\gamma(\rho)
\left(
\frac{a_\gamma^{(k)}}
{\sum_\gamma\mu_\gamma(\rho)a_\gamma^{(k)}}-1
\right)^2
\leq
\exp\!\left(\frac{k^2(k-1)}{m}\right)-1.
\label{eq:relative-concentration-bound}
\end{align}
\end{lemma}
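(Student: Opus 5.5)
The plan is to recognize $a_\gamma^{(k)}$ as a normalized character value of a \emph{central} element of the group algebra $\mathbb{C}[\mathcal S_m]$. Then both $\sum_\gamma\mu_\gamma a_\gamma^{(k)}$ and $\sum_\gamma\mu_\gamma (a_\gamma^{(k)})^2$ become expectation values of $\rho^{\otimes m}$ against explicit permutation operators. The left-hand side of Eq.~\eqref{eq:relative-concentration-bound} equals
\[
\frac{\sum_\gamma\mu_\gamma (a_\gamma^{(k)})^2}{\bigl(\sum_\gamma\mu_\gamma a_\gamma^{(k)}\bigr)^2}-1,
\]
so it suffices to show that this ratio is at most $\exp(k^2(k-1)/m)$.

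\textbf{Step 1 (characters).} For a $k$-subset $S\subseteq\{1,\dots,m\}$, let $\Pi_S$ be the projector symmetrizing the registers in $S$, and let
\[
Z:=\binom{m}{k}^{-1}\sum_{|S|=k}\Pi_S .
\]
Since $\Pi_S=g\Pi_{\{1..k\}}g^{-1}$ for some $g\in\mathcal S_m$ and characters are class functions, the normalized character $\omega_\gamma:=\chi_\gamma/r_\gamma$ gives $\omega_\gamma(Z)=\omega_\gamma(\Pi_{\{1..k\}})=a_\gamma^{(k)}$, using Eq.~\eqref{eq:projector}. Because $Z$ is central, Schur's lemma makes it act as the scalar $a_\gamma^{(k)}$ on $\mathcal P_\gamma$. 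Hence $Z^2$ acts as $(a_\gamma^{(k)})^2$ there, and Eq.~\eqref{eq:density-matrix} yields
\[
\sum_\gamma\mu_\gamma a_\gamma^{(k)}=\Tr[\rho^{\otimes m}Z]=p_{\mathrm{sym}}^{(k)},\qquad
\sum_\gamma\mu_\gamma (a_\gamma^{(k)})^2=\mathbb E_{S,T}\,\Tr[\rho^{\otimes m}\Pi_S\Pi_T],
\]
where $S,T$ are independent uniformly random $k$-subsets.

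\textbf{Step 2 (splitting by overlap).} Let $j:=|S\cap T|$; it is hypergeometric, with
\[
\Pr[j]=\binom{k}{j}\binom{m-k}{k-j}\Big/\binom{m}{k}\le \frac{1}{j!}\Bigl(\frac{k^2}{m}\Bigr)^{j}.
\]
When $j=0$ the registers in $S$ and $T$ are disjoint, so the trace factorizes and equals $(p_{\mathrm{sym}}^{(k)})^2$. The whole bound therefore reduces to a per-overlap estimate of the form
\[
\Tr[\rho^{\otimes m}\Pi_S\Pi_T]\le (k-1)^j\,(p_{\mathrm{sym}}^{(k)})^2\qquad(|S\cap T|=j).
\]
Given this estimate, summing over $j$ gives at most $\sum_j \frac{1}{j!}\bigl(k^2(k-1)/m\bigr)^j=\exp(k^2(k-1)/m)$, which is the claim. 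The cases $k=1$ (trivially zero error) and $m<2k$ should be checked separately.

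\textbf{Step 3 (the overlap estimate; main obstacle).} I expect this step to be the crux. The naive bound $\Tr[\rho^{\otimes m}\Pi_S\Pi_T]\le p_{\mathrm{sym}}^{(k)}$ loses a factor $1/p_{\mathrm{sym}}^{(k)}\sim k!$, which is far too much.

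My first attempt is to expand both projectors as averages over permutations, $\Pi_S=\frac1{k!}\sum_{\sigma\in\mathcal S_S}\sigma$ and similarly for $\Pi_T$. Then $\Tr[\rho^{\otimes m}\sigma\tau]=\prod_{c}\Tr\rho^{|c|}$ over the cycles $c$ of $\sigma\tau$. Equivalently, I would work in the eigenbasis of $\rho$: compute $\Tr[\rho^{\otimes m}\Pi_S\Pi_T]$ as a weighted count of multisets (the symmetric-power basis) glued along the $j$ shared registers. Each shared register is then a point where the two symmetrized blocks interact. The aim is to show that each such point costs at most a factor about $(k-1)$ relative to the product $p_{\mathrm{sym}}^{(k)}\cdot p_{\mathrm{sym}}^{(k)}$, by a monotonicity argument on these multiset weights. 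I would pursue this with the $\rho$-diagonal (bosonic occupation number) computation. If the constant comes out as $k$ rather than $k-1$, I would revisit the hypergeometric bound in Step 2, which has slack (it uses $k-j\le k$ factors), to absorb the difference.
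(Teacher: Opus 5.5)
Your Step 1 is exactly the paper's setup: the averaged projector is central, acts as $a_\gamma^{(k)}$ on $\mathcal P_\gamma$, and so the left-hand side equals $\mathbb{E}_{S,T}\Tr[\rho^{\otimes m}\Pi_S\Pi_T]/(p_{\mathrm{sym}}^{(k)})^2-1$. There are two genuine gaps after that. Step 3 is only a plan, and Step 2 reduces the lemma to an inequality that is false.

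\textbf{Step 2 targets a false estimate.} The bound $\Tr[\rho^{\otimes m}\Pi_S\Pi_T]\le(k-1)^j(p_{\mathrm{sym}}^{(k)})^2$ already fails for $k=2$. With $S=T$ it says $p_{\mathrm{sym}}^{(2)}\le(p_{\mathrm{sym}}^{(2)})^2$, which forces $\rho$ to be pure. With $j=1$ it reads $\frac14(1+2\Tr\rho^2+\Tr\rho^3)\le\frac14(1+\Tr\rho^2)^2$, i.e.\ $\Tr\rho^3\le(\Tr\rho^2)^2$. Cauchy--Schwarz gives the reverse inequality, strictly for every non-flat spectrum.

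The correct per-overlap factor is $k^j$, and your fallback does not recover the stated constant with it. Using $\Pr[L=0]\le1$ and $\Pr[L=j]\le(k^2/m)^j/j!$ you only get $e^{k^3/m}$. The slack in the $j\geq1$ pmf bounds is second order: for example $\Pr[L=1]=\frac{k^2}{m}(1-O(k^2/m))$. So it cannot turn $k^3/m$ into $k^2(k-1)/m$ once $m\gg k^3$. What you discard is the deficit $1-\Pr[L=0]$ in the $j=0$ term.

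The paper recovers that deficit by expanding $k^L=(1+(k-1))^L=\sum_j(k-1)^j\binom{L}{j}$ and using the exact binomial moments $\mathbb{E}\binom{L}{j}=\binom{k}{j}(k)_j/(m)_j\le\binom{k}{j}(k/m)^j$. This gives $\mathbb{E}[k^L]\le(1+k(k-1)/m)^k\le e^{k^2(k-1)/m}$. So the factor $(k-1)^j$ belongs on the binomial moments of $L$, not on the overlap estimate.

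\textbf{Step 3 is missing its key idea.} A cycle-structure or occupation-number computation with an unspecified monotonicity argument is not a proof, and this estimate is the heart of the lemma. The idea you need is to push everything onto the $j$ shared registers. Set
$B_j:=\Tr_{k-j}\bigl[(I\otimes(\rho^{1/2})^{\otimes(k-j)})\,\Pi_{\mathrm{sym}}^{(k)}\,(I\otimes(\rho^{1/2})^{\otimes(k-j)})\bigr]$.
Then $\Tr[\rho^{\otimes m}\Pi_S\Pi_T]=\Tr[\rho^{\otimes j}B_j^2]$ and $\Tr[\rho^{\otimes j}B_j]=p_{\mathrm{sym}}^{(k)}$.

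The operator inequality $\Pi_{\mathrm{sym}}^{(k)}\preceq I\otimes\Pi_{\mathrm{sym}}^{(k-j)}$ gives $0\preceq B_j\preceq p_{\mathrm{sym}}^{(k-j)}I$. Hence $\Tr[\rho^{\otimes j}B_j^2]\le p_{\mathrm{sym}}^{(k-j)}\,p_{\mathrm{sym}}^{(k)}$.

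Finally, $p_{\mathrm{sym}}^{(t)}=h_t(\beta_1,\dots,\beta_D)$ is the complete homogeneous polynomial in the eigenvalues. The inequality $h_t=h_t\sum_i\beta_i\le(t+1)h_{t+1}$ yields $p_{\mathrm{sym}}^{(k-j)}\le(k)_j\,p_{\mathrm{sym}}^{(k)}\le k^j p_{\mathrm{sym}}^{(k)}$. Combined with the binomial-moment sum above, this closes the argument without any separate case analysis in $m$.
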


\begin{proof}
Define
\begin{align}
p_{\mathrm{sym}}^{(t)}:=\trace{}{\rho^{\otimes t}\Pi_{\mathrm{sym}}^{(t)}},
\qquad p_{\mathrm{sym}}^{(0)}:=1.
\end{align}
If $\beta_1,\ldots,\beta_D$ are the eigenvalues of $\rho$, then
\begin{align}
p_{\mathrm{sym}}^{(t)}
&=\sum_{\substack{n_1+\cdots+n_D=t\\n_i\geq0}}
\prod_{i=1}^D \beta_i^{n_i},
&
p_{\mathrm{sym}}^{(t)}=p_{\mathrm{sym}}^{(t)}\sum_{i=1}^D \beta_i
&\leq(t+1)p_{\mathrm{sym}}^{(t+1)}.
\end{align}
The inequality holds because each monomial of degree $t+1$
appears at most $t+1$ times in the product.
In particular, $p_{\mathrm{sym}}^{(k)}\geq1/k!>0$, and
\begin{align}
\frac{p_{\mathrm{sym}}^{(k-j)}}{p_{\mathrm{sym}}^{(k)}}\leq(k)_j\leq k^j,
\qquad 0\leq j\leq k,
\end{align}
where $(x)_j=x(x-1)\cdots(x-j+1)$ and $(x)_0=1$.

For a $k$-element subset $S\subseteq[m]$, let
$\Pi_{\mathrm{sym}}^{(S)}$ be the symmetric subspace projector onto the symmetric subspace of the registers in $S$
and act as the identity elsewhere.
We first show that, for $|S|=|T|=k$ and $j=|S\cap T|$,
\begin{align}
0\leq
\trace{}{
\rho^{\otimes m}
\Pi_{\mathrm{sym}}^{(S)}
\Pi_{\mathrm{sym}}^{(T)}
}
\leq \left(p_{\mathrm{sym}}^{(k)}\right)^2 k^j.
\label{eq:overlap-estimate}
\end{align}
For $j=0$, the trace equals $\left(p_{\mathrm{sym}}^{(k)}\right)^2$.
For $j\geq1$, define an operator on the shared registers by
\begin{align}
B_j:=
\operatorname{Tr}_{k-j}\!\left[
\left(I\otimes(\rho^{1/2})^{\otimes(k-j)}\right)
\Pi_{\mathrm{sym}}^{(k)}
\left(I\otimes(\rho^{1/2})^{\otimes(k-j)}\right)
\right].
\end{align}
Tracing out the nonshared registers gives
\begin{align}
\trace{}{
\rho^{\otimes m}
\Pi_{\mathrm{sym}}^{(S)}
\Pi_{\mathrm{sym}}^{(T)}
}
&=\trace{}{\rho^{\otimes j}B_j^2},
&
\trace{}{\rho^{\otimes j}B_j}&=p_{\mathrm{sym}}^{(k)}.
\end{align}
Since
$\Pi_{\mathrm{sym}}^{(k)}
\leq I\otimes\Pi_{\mathrm{sym}}^{(k-j)}$,
we have $0\leq B_j\leq p_{\mathrm{sym}}^{(k-j)}I$ and hence
\begin{align}
0\leq\trace{}{\rho^{\otimes j}B_j^2}
\leq p_{\mathrm{sym}}^{(k-j)}\trace{}{\rho^{\otimes j}B_j}
=p_{\mathrm{sym}}^{(k-j)}p_{\mathrm{sym}}^{(k)}
\leq \left(p_{\mathrm{sym}}^{(k)}\right)^2 k^j.
\end{align}
This proves Eq.~\eqref{eq:overlap-estimate}.

Now consider the averaged projector
\begin{align}
A_{m,k}:=
\binom{m}{k}^{-1}
\sum_{\substack{S\subseteq[m]\\|S|=k}}
\Pi_{\mathrm{sym}}^{(S)}.
\end{align}
It commutes with the entire permutation group,
so it acts as a scalar on each $\mathcal{P}_\gamma$ by Schur's lemma.
Every summand is conjugate to
$\Pi_{\mathrm{sym}}^{m\rightarrow k}$; taking the trace on
$\mathcal{P}_\gamma$ therefore identifies this scalar as
$a_\gamma^{(k)}$.
Consequently,
\begin{align}
\trace{}{\rho^{\otimes m}A_{m,k}}
&=\sum_\gamma\mu_\gamma(\rho)a_\gamma^{(k)}=p_{\mathrm{sym}}^{(k)},
\\
\trace{}{\rho^{\otimes m}A_{m,k}^2}
&=\sum_\gamma\mu_\gamma(\rho)
\left(a_\gamma^{(k)}\right)^2.
\end{align}

Let $S,T$ be independent  $k$-element subsets of $[m]$,
and let $L=|S\cap T|$.
Expanding $A_{m,k}^2$ and using
Eq.~\eqref{eq:overlap-estimate}, we obtain
\begin{align}
&\sum_\gamma\mu_\gamma(\rho)
\left(
\frac{a_\gamma^{(k)}}
{\sum_\gamma\mu_\gamma(\rho)a_\gamma^{(k)}}-1
\right)^2
\notag\\
&\qquad=
\frac{\trace{}{\rho^{\otimes m}A_{m,k}^2}}{\left(p_{\mathrm{sym}}^{(k)}\right)^2}-1
\leq\mathbb{E}[k^L]-1.
\end{align}
Conditioned on $S$, each fixed $j$-element subset of $S$
is contained in $T$ with probability $(k)_j/(m)_j$.
Thus,
\begin{align}
\mathbb{E}\!\left[\binom{L}{j}\right]
=\binom{k}{j}\frac{(k)_j}{(m)_j}.
\end{align}
Expanding $k^L=(1+(k-1))^L$ and using
$(k)_j/(m)_j\leq(k/m)^j$ yields
\begin{align}
\mathbb{E}[k^L]
&=\sum_{j=0}^k
(k-1)^j\binom{k}{j}\frac{(k)_j}{(m)_j}
\notag\\
&\leq
\sum_{j=0}^k\binom{k}{j}
\left(\frac{k(k-1)}{m}\right)^j
\notag\\
&=
\left(1+\frac{k(k-1)}{m}\right)^k
\leq
\exp\!\left(\frac{k^2(k-1)}{m}\right).
\end{align}
Combining the last two bounds proves the claim.
\end{proof}

With all the above results, we claim that there is an efficient construction of the symmetric subspace projection by consuming many copies of $\rho$.
\begin{theorem}
    [Efficient implementation of symmetric-subspace projection]
    \label{thm:efficient-projection}
    For $n\in\bbZ^+$, $1\leq k<m$,
    and $0<\epsilon<1$, there exists a quantum channel $\mathcal M_{m,k}:
\mathcal L\!\left((\mathbb C^{2^n})^{\otimes m}\right)
\longrightarrow
\mathcal L\!\left((\mathbb C^{2^n})^{\otimes k}\right)$ such that for every density operator $\rho$ on $\calH=\mathbb{C}^{2^n}$,
    \begin{align}
        \norm{\calM_{m,k}(\rho^{\otimes m})
        -\rho_{\mathrm{sym}}^{(k)}
        }_1\leq \epsilon
    \end{align}
    holds
    as long as 
    \begin{align}
        m\geq\frac{5k^2(k-1)}{\epsilon^2}.
    \end{align}
     $\calM_{m,k}$ can be implemented with $\mathrm{poly}\left(m,n,\log\frac{1}{\epsilon}\right)$ number of gates, computational-basis measurement and feedback, $\calO(mn)$ bits of classical randomness and efficient classical computing. 
\end{theorem}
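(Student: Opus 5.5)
We take $\calM_{m,k}:=\Tr_{m-k}\circ\,\mathrm{Ad}_{U_{\mathrm{Sch}}^\dagger}\circ\calN_{m,k}\circ\mathrm{Ad}_{U_{\mathrm{Sch}}}$. In words: apply the Schur transform, apply the replacement channel $\calN_{m,k}$ defined above, undo the Schur transform, and discard the last $m-k$ registers. The accuracy claim will follow from three ingredients:
\begin{itemize}
\item contractivity of the partial trace in trace norm;
\item Eq.~\eqref{eq:target}, which says that tracing out the last $m-k$ registers of $\rho_{\mathrm{sym}}^{(k)}\otimes\rho^{\otimes(m-k)}$ gives exactly $\rho_{\mathrm{sym}}^{(k)}$;
\item Lemmas~\ref{lem:channel-error} and~\ref{lem:relative-concentration}.
\end{itemize}

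For the ideal channel, these give
\begin{align}
\norm{\calM_{m,k}(\rho^{\otimes m})-\rho_{\mathrm{sym}}^{(k)}}_1\leq\sqrt{\exp\!\left(\frac{k^2(k-1)}{m}\right)-1}.
\end{align}
Now set $x:=k^2(k-1)/m\leq\epsilon^2/5<1/5$. For $x\le 1$ we have $e^x-1\leq x e^{x}\leq (e^{1/5})x$, so the bound is at most $\sqrt{e^{1/5}\epsilon^2/5}<\epsilon$. This leaves slack to absorb the implementation error.

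For the implementation, we use the approximate Schur transform of Lemma~\ref{lem:time-schur} on $m$ registers, with error $\epsilon_{\mathrm{Sch}}$. It costs $\operatorname{poly}(m,n,\log 1/\epsilon_{\mathrm{Sch}})$ gates. Replacing $U_{\mathrm{Sch}}$ and $U_{\mathrm{Sch}}^\dagger$ by their approximations changes the output by at most $4\epsilon_{\mathrm{Sch}}$ in trace norm. This follows from $\norm{U\sigma U^\dagger-\tilde U\sigma\tilde U^\dagger}_1\leq 2\norm{U-\tilde U}_\infty$ and channel contractivity.

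We choose $\epsilon_{\mathrm{Sch}}$ as a small constant times $\epsilon$. The margin between $\sqrt{e^{1/5}/5}\approx 0.49$ and $1$ then suffices. Indeed, the constant 5 in the hypothesis seems chosen for exactly this purpose.

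The channel $\calN_{m,k}$ is realized in four steps:
\begin{enumerate}
\item Measure the irrep label $\gamma$ in the computational basis; it is a partition of $m$ encoded in $O(m\log m)$ bits.
\item Discard the permutation register, i.e.\ reset it.
\item Classically sample a standard Young tableau of shape $\gamma$ uniformly from those whose entries $1,\dots,k$ lie in the first row.
\item Write that tableau into the permutation register. If no such tableau exists, write a fixed tableau.
\end{enumerate}
Because $P_\gamma^{(k)}/\Tr P_\gamma^{(k)}$ is the uniform mixture of the corresponding Young--Yamanouchi basis states, this prepares exactly the prescribed state, with no coherence required.

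The uniform sampling in step 3 is the main technical point to check for efficiency. Tableaux with $1,\dots,k$ in the first row are in bijection with standard skew tableaux of shape $\gamma/(k)$. We would sample these by the standard sequential method: at each step, place the next entry in an outer corner with probability proportional to the number of completions. These counts are computable by the hook-length formula for skew shapes, or by a dynamic program over subshapes. Each count is an integer of $O(m\log m)$ bits, so the classical computation is polynomial. Randomness is needed only to $O(mn)$-bit precision; any residual sampling bias can be made at most $\epsilon/10$ in total variation and folded into the error budget.

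Finally, we account for the quantum steps. The Schur basis encoding uses $O(mn)$ qubits, and writing the tableau is a computational-basis preparation. The total gate count is therefore $\operatorname{poly}(m,n,\log 1/\epsilon)$, with measurement, feedback, and classical randomness as stated.
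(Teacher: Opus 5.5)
Your proposal is correct and takes the same route as the paper: the same channel, the ideal-error bound from Lemmas~\ref{lem:channel-error} and~\ref{lem:relative-concentration}, and the same $4\epsilon_{\mathrm{Sch}}$ perturbation estimate. The only difference in the accounting is that you use $e^x-1\le xe^x$ to get an ideal error of about $0.49\epsilon$, whereas the paper splits the budget as $\epsilon/2+\epsilon/2$ and uses $\log(1+x)>x/(1+x)$.

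Your tableau-sampling sketch is more explicit than the paper's, but neither a dynamic program over all subshapes $\mu\subseteq\gamma$ nor the skew (Naruse) hook formula runs in polynomial time in general. Computing the counts with the Aitken determinant $f^{\gamma/\mu}=|\gamma/\mu|!\,\det\bigl[1/(\gamma_i-\mu_j-i+j)!\bigr]$ (with $1/r!:=0$ for $r<0$) makes that step polynomial.
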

\begin{proof}
    $\calM_{m,k}$ is ideally implemented by applying the Schur transform $U_{\mathrm{Sch}}$ on $m$ copies of $\rho$, followed by $\calN_{m,k}$ and $U^\dagger_{\mathrm{Sch}}$. Schur transform can be implemented up to $\epsilon_{\mathrm{Sch}}$ error in operator norm. Let $\widetilde{U}_{\mathrm{Sch}}$ denote the approximate Schur transform, then we implement
    \begin{align}
        \calM_{m,k}(\rho^{\otimes m})=\trace{\mathrm{aux}}{\widetilde{U}_{\mathrm{Sch}}^\dagger \calN_{m,k}(\widetilde{U}_{\mathrm{Sch}}\rho^{\otimes m} \widetilde{U}_{\mathrm{Sch}}^\dagger)
        \widetilde{U}_{\mathrm{Sch}}},
    \end{align}
    where $\trace{\mathrm{aux}}{\cdot}$ traces out all the qubits except for the first $k$ registers.

    Using Lemma~\ref{lem:channel-error} and \ref{lem:relative-concentration}, we have
    \begin{align}
        \norm{\trace{\mathrm{aux}}{U_{\mathrm{Sch}}^\dagger \calN_{m,k}(U_{\mathrm{Sch}}\rho^{\otimes m} U_{\mathrm{Sch}}^\dagger)
        U_{\mathrm{Sch}}}-\rho_{\mathrm{sym}}^{(k)}}_1\leq \sqrt{\exp\left(\frac{k^2(k-1)}{m}\right)-1}.
    \end{align}
    For any density matrix $\sigma$,
    \begin{align}
        \norm{\widetilde{U}\sigma\widetilde{U}^\dagger-U\sigma U^\dagger
        }_1\leq 2\norm{U-\widetilde{U}}_\infty.
    \end{align}
    Using this inequality twice, we have
    \begin{align}
        \norm{\calM_{m,k}(\rho^{\otimes m})-\rho_{\mathrm{sym}}^{(k)}}_1=
        \norm{\trace{\mathrm{aux}}{\widetilde{U}_{\mathrm{Sch}}^\dagger \calN_{m,k}(\widetilde{U}_{\mathrm{Sch}}\rho^{\otimes m} \widetilde{U}_{\mathrm{Sch}}^\dagger)
        \widetilde{U}_{\mathrm{Sch}}}-\rho_{\mathrm{sym}}^{(k)}}_1\leq \sqrt{\exp\left(\frac{k^2(k-1)}{m}\right)-1}+4\epsilon_{\mathrm{Sch}}.
    \end{align}
    In order to achieve the prescribed error $\epsilon$, we set $4\epsilon_{\mathrm{Sch}}=\epsilon/2$. We also demand $\sqrt{\exp\left(k^2(k-1)/m\right)-1}\leq \epsilon/2$, which gives
    \begin{align}
        m\geq \frac{k^2(k-1)}{\log(1+\epsilon^2/4)},
    \end{align}
    So it's sufficient to choose
    \begin{align}
    m\geq\frac{5k^2(k-1)}{\epsilon^2}.
    \end{align}
    Here we use $\log(1+x)>x/(1+x)$.
    
    According to Lemma~\ref{lem:time-schur}, $\widetilde{U}_{\mathrm{Sch}}$ can be implemented using $\mathrm{poly}(m,n,\log(1/\epsilon))$ gates. $\calN_{m,k}$ can be implemented using measurement-feedback and classical randomness. The classical randomness is introduced to prepare density matrix $P^{(k)}_\gamma/\trace{}{P^{(k)}_\gamma}$. Since $\trace{}{P^{(k)}_\gamma}$ cannot exceed $2^{mn}$, one needs at most $mn$ bits of classical randomness. The preparation of $P^{(k)}_\gamma/\trace{}{P^{(k)}_\gamma}$ is efficient because Young-Yamanouchi basis is mapped to computational basis via Schur transform.
\end{proof}

Since $\calM_{m,k}$ can be implemented using an efficient quantum circuit and classical computing, it can always be purified with at most polynomial overhead.
\begin{corollary}
\label{purification-of-channel}
    For $n\in\bbZ^+$, $1\leq k<m$,
    and $0<\epsilon<1$, there exists a quantum circuit $U_{m,k}$, such that for every density operator $\rho$ on $\calH=\mathbb{C}^{2^n}$,
\begin{equation}
\left\|
\Tr_{\mathrm{aux}}\!\left[
U_{m,k}
\left(
\rho^{\otimes m}\otimes\ketbra{0^s}{0^s}
\right)
U_{m,k}^\dagger
\right]
-
\rho_{\mathrm{sym}}^{(k)}
\right\|_1
\leq
\epsilon.
\end{equation}
    Here $\trace{\mathrm{aux}}{\cdot}$ traces out all the qubits except for the first $k$ registers. $s$ is the number of ancilla qubits.
     $U_{m,k}$ can be implemented with $\mathrm{poly}\left(m,n,\log\frac{1}{\epsilon}\right)$ number of gates with $s=\calO(\mathrm{poly}\left(m,n\right))$.
\end{corollary}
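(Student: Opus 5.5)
The plan is to purify the channel $\calM_{m,k}$ of Theorem~\ref{thm:efficient-projection} gate by gate, using the standard Stinespring/deferred-measurement construction. Recall that $\calM_{m,k}$ is the composition of four pieces: the approximate Schur transform $\widetilde U_{\mathrm{Sch}}$; the replacement channel $\calN_{m,k}$, which measures $\gamma$ and, conditioned on it, discards the permutation register and prepares $P_\gamma^{(k)}/\Tr P_\gamma^{(k)}$ using classical randomness and classical computation; the inverse transform $\widetilde U_{\mathrm{Sch}}^\dagger$; and a final partial trace. Each non-unitary piece will be replaced by a unitary acting on fresh ancillas initialized in $\ket{0}$, in such a way that tracing out all ancillas together with the last $m-k$ registers reproduces the output state $\calM_{m,k}(\rho^{\otimes m})$ exactly. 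The error bound $\epsilon$ is then inherited verbatim from Theorem~\ref{thm:efficient-projection}, under the same choice $m\geq 5k^2(k-1)/\epsilon^2$.

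The concrete steps are as follows.

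\emph{Step 1: the measurement of $\gamma$.} The label $\gamma$ is a classical string of $O(m\log m)$ bits in the computational basis after $\widetilde U_{\mathrm{Sch}}$. Instead of measuring it, we CNOT-copy it into an ancilla register. Tracing out the copy dephases in the $\gamma$ basis, which does not change the state, because the state is already block diagonal in $\gamma$ (up to the Schur error already accounted for).

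\emph{Step 2: the classical computation.} The map from $\gamma$ to the list of standard Young tableaux with $1,\dots,k$ in the first row is an efficient classical function. We implement it reversibly, with garbage kept in ancillas; this requires $\mathrm{poly}(m)$ gates.

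\emph{Step 3: the replacement of the permutation register.} We swap the old permutation register $\ket{p_\gamma}$ into a fresh ancilla, which is later traced out; this realizes the discard. In its place we prepare, coherently controlled on $\gamma$, a purification of the uniform mixture over the allowed tableaux. To do so, we prepare a uniform superposition $\sum_j \ket{j}$ over an index register of size $N_\gamma=\Tr P_\gamma^{(k)}$ and reversibly map $j$ to the $j$-th allowed tableau, writing it into the permutation register. The index register then serves as the purifying system, and the classical random bits become uniform-superposition ancillas. Two small technicalities arise here: uniform superpositions over non-power-of-two sizes are prepared only approximately, and the ranking/unranking of tableaux must be efficient.

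\emph{Step 4: the remaining unitaries.} We apply $\widetilde U_{\mathrm{Sch}}^\dagger$ and declare every register other than the first $k$ systems to be $\mathrm{aux}$.

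The main obstacle is Step 3, specifically the exactness and efficiency of the coherent state preparation. My plan for the exactness issue is to prepare the uniform superposition over $N_\gamma$ elements approximately, to precision $\epsilon/4$ using $\mathrm{poly}(\log N_\gamma,\log(1/\epsilon))$ gates via amplitude amplification or a rotation-based method. I then rebalance the budget of Theorem~\ref{thm:efficient-projection} so that $2\epsilon_{\mathrm{Sch}}$, the Schur-concentration term, and the preparation error together sum to at most $\epsilon$; this only changes constants, which are absorbed by enlarging $m$ or tightening $\epsilon_{\mathrm{Sch}}$. For efficiency, I would rank the tableaux via the hook-length/branching structure: restricting to $\gamma$ with $1,\dots,k$ in the first row reduces the problem to counting standard tableaux of the skew shape $\gamma/(k)$, which can be done recursively with polynomial-size integers.

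Counting resources, the total gate count is $\mathrm{poly}(m,n,\log(1/\epsilon))$, and the ancilla count is $s=\mathcal O(\mathrm{poly}(m,n))$. The latter comprises the Schur-transform workspace, the copied labels, the $\mathcal O(mn)$ qubits replacing the random bits, the discarded permutation register, and the reversible-computation garbage.
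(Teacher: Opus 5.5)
Your proposal is correct and is the same argument the paper gives. The paper only remarks, in one line, that the channel $\calM_{m,k}$ of Theorem~\ref{thm:efficient-projection} consists of efficient gates, measurement-feedback and classical randomness and can therefore be purified with polynomial overhead. That theorem requires $m\geq 5k^2(k-1)/\epsilon^2$, which the corollary leaves implicit and you rightly import. Your Steps 1--4, including the rebalancing of the error budget for approximately prepared uniform superpositions, just make this explicit.

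The one detail to tighten is the unranking in Step 3. Compute the skew counts by a closed form such as Aitken's determinant $f^{\nu/\mu}=(|\nu|-|\mu|)!\,\det\bigl[1/(\nu_i-\mu_j-i+j)!\bigr]_{i,j}$ with $\mu=(k)$ and $1/r!:=0$ for $r<0$. A memoized recursion over all intermediate shapes between $(k)$ and $\gamma$ can involve superpolynomially many shapes.
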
\noindent
Note that since we only need to purify $\calN_{m,k}$, the number of ancilla qubits does not depend on $\epsilon$.

\subsection{Corollary: precise estimation of $p_{\mathrm{sym}}$}
\label{Corollary: precise estimation of p}

As a useful by-product, the method introduced in the previous section
enables an ultra-precise estimation of the post-selection probability
$p_{\mathrm{sym}}$ by using median-of-means estimation, which will be
useful later.

\begin{lemma}[Relative-error estimation of the post-selection probability]
\label{lem:estimate-symmetric-probability}
Let $\rho$ be a density operator on $\mathbb{C}^{2^n}$,
$k\geq2$, $0<\varepsilon<1$, and $0<\delta<1$.
There exists an estimator $\widehat p_{\mathrm{sym}}^{(k)}$,
obtained from independent Schur-label measurements, such that
\begin{align}
\Pr\!\left[
\left|
\widehat p_{\mathrm{sym}}^{(k)}-p_{\mathrm{sym}}^{(k)}
\right|
>
\varepsilon p_{\mathrm{sym}}^{(k)}
\right]
\leq\delta.
\end{align}
The estimator uses
\begin{align}
N
&=
k^2(k-1)
\left\lceil
\frac{6(e-1)}{\varepsilon^2}
\right\rceil
\left[
2\left\lceil
\frac94\ln\frac1\delta-\frac12
\right\rceil+1
\right]
\end{align}
copies of $\rho$.
\end{lemma}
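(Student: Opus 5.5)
The plan is to use the Schur-label statistics from the proof of Lemma~\ref{lem:relative-concentration} as an unbiased estimator, and then boost confidence with median-of-means. Take $m:=k^2(k-1)$ copies of $\rho$, apply the Schur transform, and measure the irrep label $\gamma\vdash m$. This outcome occurs with probability $\mu_\gamma(\rho)$. Record the classical value $X:=a_\gamma^{(k)}=\Tr P_\gamma^{(k)}/r_\gamma$, which is computable by counting standard Young tableaux of shape $\gamma$ whose entries $1,\ldots,k$ all lie in the first row. The proof of Lemma~\ref{lem:relative-concentration} gives
\begin{align}
\mathbb E[X]=\sum_\gamma\mu_\gamma(\rho)a_\gamma^{(k)}=\Tr[\rho^{\otimes m}A_{m,k}]=p_{\mathrm{sym}}^{(k)}.
\end{align}
So $X$ is unbiased for $p_{\mathrm{sym}}^{(k)}$.

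Next I control the relative variance. By Lemma~\ref{lem:relative-concentration} with $m=k^2(k-1)$,
\begin{align}
\frac{\operatorname{Var}(X)}{\left(p_{\mathrm{sym}}^{(k)}\right)^2}=\sum_\gamma\mu_\gamma(\rho)\left(\frac{a_\gamma^{(k)}}{p_{\mathrm{sym}}^{(k)}}-1\right)^2\le e^{k^2(k-1)/m}-1=e-1.
\end{align}
The point of this choice of $m$ is that the relative variance is bounded by the constant $e-1$, independently of $\rho$, $D$, and $k$.

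Then I build the median-of-means estimator. Average $L=\lceil 6(e-1)/\varepsilon^2\rceil$ independent copies of $X$ to get $Y_r$. By Chebyshev,
\begin{align}
\Pr\!\left[|Y_r-p_{\mathrm{sym}}^{(k)}|>\varepsilon p_{\mathrm{sym}}^{(k)}\right]\le \frac{e-1}{L\varepsilon^2}\le\frac16.
\end{align}
Take $R=2\lceil\tfrac94\ln\tfrac1\delta-\tfrac12\rceil+1$ groups; this is odd. Output the median of $Y_1,\ldots,Y_R$. The median fails only if at least $(R+1)/2$ group means fail. A Hoeffding/Chernoff bound on $\mathrm{Bin}(R,1/6)$ exceeding $R/2$ gives failure probability at most $\exp(-2R(1/2-1/6)^2)=\exp(-2R/9)$. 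The choice of $R$ gives $R\ge\tfrac92\ln\tfrac1\delta$, so this is at most $\delta$. The total copy count is $mLR$, which matches $N$.

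The main obstacle is this last constant check. I need to verify that the stated form of $R$ really gives $2R/9\ge\ln(1/\delta)$. Since $R\ge 2(\tfrac94\ln\tfrac1\delta-\tfrac12)+1=\tfrac92\ln\tfrac1\delta$, it does. If the Hoeffding constant turned out insufficient, I would fall back to the sharper Chernoff relative-entropy bound $\exp(-R\,D(1/2\|1/6))$, whose exponent is larger. Everything else reduces to Lemma~\ref{lem:relative-concentration} and the identity $\mathbb E[X]=p_{\mathrm{sym}}^{(k)}$, both already established. Ideal Schur transforms are assumed here, since the statement concerns Schur-label measurement statistics.
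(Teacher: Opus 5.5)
Your proposal is correct and matches the paper's proof essentially step for step. It uses the same $m=k^2(k-1)$ and the same unbiased Schur-label estimator $X=a_\gamma^{(k)}$, whose relative variance is at most $e-1$ by Lemma~\ref{lem:relative-concentration}. It also uses the same median-of-means parameters $L,R$, with Chebyshev per group and then Hoeffding applied to $\sum_r(Z_r-\mathbb{E}Z_r)\geq R/3$ to get $\exp(-2R/9)\leq\delta$. One small wording point: the failure indicators are only dominated by, not equal to, $\mathrm{Bin}(R,1/6)$; that is why Hoeffding should be applied in this centered form.
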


\begin{proof}
Recall that
\begin{align}
p_{\mathrm{sym}}^{(k)}
:=
\trace{}{
\rho^{\otimes k}\Pi_{\mathrm{sym}}^{(k)}
}
\geq\frac{1}{k!}>0.
\end{align}
Choose
\begin{align}
m:=k^2(k-1).
\end{align}
A Schur-label measurement on $\rho^{\otimes m}$ produces
a partition $\gamma$ with probability $\mu_\gamma(\rho)$.
For each outcome, define
\begin{align}
X
:=
a_\gamma^{(k)}
=
\frac{\trace{}{P_\gamma^{(k)}}}{r_\gamma}.
\end{align}
The expectation of $X$ over the Schur-label measurement outcomes is
precisely the probability of projecting onto the subspace symmetric
under permutations of the first $k$ registers:
\begin{align}
\mathbb{E}[X]
&=
\sum_\gamma
\mu_\gamma(\rho)a_\gamma^{(k)}
\notag\\
&=
\trace{}{
\rho^{\otimes m}
\Pi_{\mathrm{sym}}^{m\rightarrow k}
}
=
p_{\mathrm{sym}}^{(k)},
\end{align}
where the sum runs over
$\gamma\vdash m$ with $\ell(\gamma)\leq 2^n$.

By Lemma~\ref{lem:relative-concentration} and the choice of $m$,
\begin{align}
\frac{\operatorname{Var}(X)}
{\left(p_{\mathrm{sym}}^{(k)}\right)^2}
&=
\frac{\sum_\gamma
\mu_\gamma(\rho)\left(a_{\gamma}^{(k)}-p_{\mathrm{sym}}^{(k)}\right)^2}
{\left(p_{\mathrm{sym}}^{(k)}\right)^2}
\notag\\
&=
\sum_\gamma
\mu_\gamma(\rho)
\left(
\frac{a_\gamma^{(k)}}{p_{\mathrm{sym}}^{(k)}}
-1
\right)^2
\notag\\
&\leq
\exp\!\left(
\frac{k^2(k-1)}{m}
\right)-1
=
e-1.
\end{align}

Set
\begin{align}
L
&:=
\left\lceil
\frac{6(e-1)}{\varepsilon^2}
\right\rceil,
\\
R
&:=
2\left\lceil
\frac94\ln\frac1\delta-\frac12
\right\rceil+1.
\end{align}
Then $R$ is odd and satisfies
\begin{align}
R\geq\frac92\ln\frac1\delta.
\end{align}

Perform $LR$ independent Schur-label measurements,
each on a fresh batch of $m$ copies.
Denote their outcomes by $\gamma_{r,\ell}$,
where $1\leq r\leq R$ and $1\leq\ell\leq L$.
Define the group means and their median by
\begin{align}
Y_r
&:=
\frac1L
\sum_{\ell=1}^L
a_{\gamma_{r,\ell}}^{(k)},
\\
\widehat p_{\mathrm{sym}}^{(k)}
&:=
\operatorname{median}(Y_1,\ldots,Y_R).
\end{align}
Since the samples within each group are independent,
\begin{align}
\mathbb{E}[Y_r]
&=
p_{\mathrm{sym}}^{(k)},
\\
\operatorname{Var}(Y_r)
&=
\frac{\operatorname{Var}(X)}{L}
\leq
\frac{e-1}{L}
\left(p_{\mathrm{sym}}^{(k)}\right)^2.
\end{align}
Chebyshev's inequality therefore gives
\begin{align}
\Pr\!\left[
\left|
Y_r-p_{\mathrm{sym}}^{(k)}
\right|
>
\varepsilon p_{\mathrm{sym}}^{(k)}
\right]
\leq
\frac{e-1}{L\varepsilon^2}
\leq
\frac16.
\end{align}

Let $Z_r$ be the indicator that the $r$th group mean lies
outside the interval
$\left[
(1-\varepsilon)p_{\mathrm{sym}}^{(k)},
(1+\varepsilon)p_{\mathrm{sym}}^{(k)}
\right]$.
Then $Z_1,\ldots,Z_R$ are independent,
take values in $\{0,1\}$, and satisfy
\begin{align}
\mathbb{E}[Z_r]\leq\frac16.
\end{align}
Since $R$ is odd, the median lies outside this interval only if
at least $(R+1)/2$ group means lie outside it. Hence
\begin{align}
&\Pr\!\left[
\left|
\widehat p_{\mathrm{sym}}^{(k)}-p_{\mathrm{sym}}^{(k)}
\right|
>
\varepsilon p_{\mathrm{sym}}^{(k)}
\right]
\notag\\
&\qquad\leq
\Pr\!\left[
\sum_{r=1}^R Z_r
\geq
\frac{R+1}{2}
\right]
\notag\\
&\qquad\leq
\Pr\!\left[
\sum_{r=1}^R
\left(
Z_r-\mathbb{E}[Z_r]
\right)
\geq
\frac{R}{3}
\right]
\notag\\
&\qquad\leq
\exp\!\left(-\frac{2R}{9}\right)
\leq
\delta,
\end{align}
where the last two inequalities follow from Hoeffding's inequality
and the choice of $R$.

Finally, each Schur-label sample consumes $m$ copies, so the total
number of copies is
\begin{align}
N
&=
mLR
\notag\\
&=
k^2(k-1)
\left\lceil
\frac{6(e-1)}{\varepsilon^2}
\right\rceil
\left[
2\left\lceil
\frac94\ln\frac1\delta-\frac12
\right\rceil+1
\right].
\end{align}
\end{proof}

In practice, however, only an approximate Schur transform
$\widetilde U_{\mathrm{Sch}}^{(m)}$ is implemented.
We now show that the resulting error remains controlled.

Define the completely positive trace-nonincreasing map
\begin{equation}
\mathcal{R}_{m,k}(X)
:=
\bigoplus_{\substack{\gamma\vdash m\\ \ell(\gamma)\leq D}}
a_\gamma^{(k)}
\left[
\Tr_{\mathcal{P}_\gamma}
\left(
\Pi_\gamma X\Pi_\gamma
\right)
\otimes
\frac{P_\gamma^{(k)}}
{\trace{}{P_\gamma^{(k)}}}
\right],
\label{eq:Rmk}
\end{equation}
where the term corresponding to
$\trace{}{P_\gamma^{(k)}}=0$ is defined to be zero.
The map $\mathcal{R}_{m,k}$ differs from $\mathcal{N}_{m,k}$
only by the additional sector-dependent factor $a_\gamma^{(k)}$,
which accounts for the estimator weight.

Let
\begin{align}
\rho_{\mathrm{Sch}}
&:=
U_{\mathrm{Sch}}^{(m)}
\rho^{\otimes m}
U_{\mathrm{Sch}}^{(m)\dagger},
\\
\widetilde{\rho}_{\mathrm{Sch}}
&:=
\widetilde U_{\mathrm{Sch}}^{(m)}
\rho^{\otimes m}
\widetilde U_{\mathrm{Sch}}^{(m)\dagger},
\end{align}
and define
\begin{align}
\widetilde p_{\mathrm{sym}}^{(k)}
:=
\trace{}{
\mathcal{R}_{m,k}
\left(
\widetilde{\rho}_{\mathrm{Sch}}
\right)
}.
\end{align}
For the ideal Schur transform,
\begin{align}
p_{\mathrm{sym}}^{(k)}
=
\trace{}{
\mathcal{R}_{m,k}
\left(
\rho_{\mathrm{Sch}}
\right)
}.
\end{align}

\begin{lemma}[Error induced by an approximate Schur transform]
\label{lem:estimation-schur-error}
Suppose that
\begin{equation}
\left\|
\widetilde U_{\mathrm{Sch}}^{(m)}
-
U_{\mathrm{Sch}}^{(m)}
\right\|_\infty
\leq
\epsilon_{\mathrm{Sch}}.
\end{equation}
Then
\begin{equation}
\left|
\widetilde p_{\mathrm{sym}}^{(k)}
-
p_{\mathrm{sym}}^{(k)}
\right|
\leq
\epsilon_{\mathrm{Sch}}.
\end{equation}
\end{lemma}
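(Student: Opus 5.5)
The plan is to write both quantities as a trace of the map $\mathcal{R}_{m,k}$ and then bound $|\Tr[\mathcal{R}_{m,k}(X)]|$ by a constant times $\|X\|_1$. First I compute the trace of $\mathcal{R}_{m,k}$ on an arbitrary Hermitian $X$. The normalized state $P_\gamma^{(k)}/\Tr P_\gamma^{(k)}$ has unit trace, so
\begin{equation}
\Tr\!\left[\mathcal{R}_{m,k}(X)\right]=\sum_{\gamma}a_\gamma^{(k)}\,\Tr\!\left[\Pi_\gamma X\Pi_\gamma\right]=\Tr\!\left[\Big(\sum_\gamma a_\gamma^{(k)}\Pi_\gamma\Big)X\right],
\end{equation}
where $\Pi_\gamma$ is the projector onto the $\gamma$ sector in the Schur basis. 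The operator $W:=\sum_\gamma a_\gamma^{(k)}\Pi_\gamma$ satisfies $0\preceq W\preceq I$, because $a_\gamma^{(k)}=\Tr P_\gamma^{(k)}/r_\gamma\in[0,1]$ (it is a conditional probability). Sectors with $\Tr P_\gamma^{(k)}=0$ have $a_\gamma^{(k)}=0$, which is consistent with their terms in $\mathcal{R}_{m,k}$ being defined as zero.

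By linearity, $\widetilde p_{\mathrm{sym}}^{(k)}-p_{\mathrm{sym}}^{(k)}=\Tr[W(\widetilde\rho_{\mathrm{Sch}}-\rho_{\mathrm{Sch}})]$. I would not use Hölder in the form $\|W\|_\infty\|\widetilde\rho_{\mathrm{Sch}}-\rho_{\mathrm{Sch}}\|_1\le 2\epsilon_{\mathrm{Sch}}$, since that loses a factor $2$.

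To get exactly $\epsilon_{\mathrm{Sch}}$, the key step is to use that $0\preceq W\preceq I$. Write $W=W-\tfrac12 I+\tfrac12 I$. Both $\widetilde\rho_{\mathrm{Sch}}$ and $\rho_{\mathrm{Sch}}$ have unit trace, provided $\widetilde U_{\mathrm{Sch}}^{(m)}$ is unitary, which I take as given since it is an implemented circuit. The $\tfrac12 I$ part therefore cancels, and since $\|W-\tfrac12 I\|_\infty\le\tfrac12$,
\begin{equation}
\left|\widetilde p_{\mathrm{sym}}^{(k)}-p_{\mathrm{sym}}^{(k)}\right|\le\tfrac12\left\|\widetilde\rho_{\mathrm{Sch}}-\rho_{\mathrm{Sch}}\right\|_1 .
\end{equation}

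The last step bounds this trace distance. Write $\widetilde U\sigma\widetilde U^\dagger-U\sigma U^\dagger=(\widetilde U-U)\sigma\widetilde U^\dagger+U\sigma(\widetilde U-U)^\dagger$. Each term has trace norm at most $\|\widetilde U-U\|_\infty\|\sigma\|_1$, so together with the previous display this gives exactly $\epsilon_{\mathrm{Sch}}$.

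I expect the main obstacle to be the bookkeeping of which map the estimator's expectation actually equals under the approximate transform. If $\widetilde U$ is unitary, the factor-$2$ loss in the last step is offset by the $\tfrac12$ from centering $W$, giving the stated bound. If $\widetilde U$ is not exactly unitary, the traces no longer cancel, and one needs a pure-state fidelity bound instead. That bound gives $\tfrac12\|\widetilde\rho_{\mathrm{Sch}}-\rho_{\mathrm{Sch}}\|_1\le\sqrt{1-F}\le\epsilon_{\mathrm{Sch}}$, obtained by purifying $\rho^{\otimes m}$ and using $\|(\widetilde U-U)\ket\psi\|\le\epsilon_{\mathrm{Sch}}$.
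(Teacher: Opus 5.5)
Your argument is correct and is essentially the paper's proof. Your $W=\sum_\gamma a_\gamma^{(k)}\Pi_\gamma$ is exactly $\mathcal{R}_{m,k}^\dagger(I)$, and the paper likewise combines $0\preceq\mathcal{R}_{m,k}^\dagger(I)\preceq I$, the tracelessness of $\widetilde\rho_{\mathrm{Sch}}-\rho_{\mathrm{Sch}}$ (which is what your centering by $\tfrac12 I$ spells out), and $\|\widetilde\rho_{\mathrm{Sch}}-\rho_{\mathrm{Sch}}\|_1\le 2\epsilon_{\mathrm{Sch}}$.

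Your closing aside on non-unitary $\widetilde U_{\mathrm{Sch}}$ is unnecessary: the implemented transform is a circuit, hence unitary, and the paper also relies on this for tracelessness. The aside would also not work as written. Without trace cancellation the centering gain of $\tfrac12$ is lost, and the fidelity bound $\tfrac12\|\cdot\|_1\le\sqrt{1-F}$ requires normalized states.
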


\begin{proof}
The ideal and approximate post-Schur states satisfy
\begin{equation}
\left\|
\widetilde{\rho}_{\mathrm{Sch}}
-
\rho_{\mathrm{Sch}}
\right\|_1
\leq
2\epsilon_{\mathrm{Sch}}.
\end{equation}
Since $\mathcal{R}_{m,k}$ is completely positive and
trace-nonincreasing,
\begin{equation}
0
\preceq
\mathcal{R}_{m,k}^\dagger(I)
\preceq
I.
\end{equation}
Moreover,
$\widetilde{\rho}_{\mathrm{Sch}}-\rho_{\mathrm{Sch}}$
is Hermitian and traceless. Therefore,
\begin{align}
\left|
\widetilde p_{\mathrm{sym}}^{(k)}-p_{\mathrm{sym}}^{(k)}
\right|
&=
\left|
\Tr\left[
\mathcal{R}_{m,k}^\dagger(I)
\left(
\widetilde{\rho}_{\mathrm{Sch}}
-
\rho_{\mathrm{Sch}}
\right)
\right]
\right|
\notag\\
&\leq
\frac12
\left\|
\widetilde{\rho}_{\mathrm{Sch}}
-
\rho_{\mathrm{Sch}}
\right\|_1
\leq
\epsilon_{\mathrm{Sch}}.
\end{align}
\end{proof}

\begin{lemma}[Median-of-means estimation of $p_{\mathrm{sym}}^{(k)}$ with an approximate Schur transform]
\label{lem:approximate-schur-mom}
Let
\begin{equation}
m:=k^2(k-1),
\end{equation}
and suppose that
\begin{equation}
\left\|
\widetilde U_{\mathrm{Sch}}^{(m)}
-
U_{\mathrm{Sch}}^{(m)}
\right\|_\infty
\leq
\epsilon_{\mathrm{Sch}}.
\end{equation}
For any $0<\varepsilon<1$ and $0<\delta<1$, if
\begin{equation}
\epsilon_{\mathrm{Sch}}
\leq
\frac{1}{(k!)^2},
\end{equation}
then there exists a median-of-means estimator
$\widehat p_{\mathrm{sym}}^{(k)}$ satisfying
\begin{equation}
\Pr\left[
\left|
\widehat p_{\mathrm{sym}}^{(k)}
-
\widetilde p_{\mathrm{sym}}^{(k)}
\right|
>
\varepsilon p_{\mathrm{sym}}^{(k)}
\right]
\leq
\delta.
\label{eq:approximate-schur-mom}
\end{equation}
Consequently,
\begin{equation}
\Pr\left[
\left|
\widehat p_{\mathrm{sym}}^{(k)}
-
p_{\mathrm{sym}}^{(k)}
\right|
>
\varepsilon p_{\mathrm{sym}}^{(k)}
+
\epsilon_{\mathrm{Sch}}
\right]
\leq
\delta.
\end{equation}
The estimator uses
\begin{equation}
N
=
k^2(k-1)
\left\lceil
\frac{6(e+1)}{\varepsilon^2}
\right\rceil
\left[
2\left\lceil
\frac94\ln\frac1\delta-\frac12
\right\rceil+1
\right]
\end{equation}
copies of $\rho$.
\end{lemma}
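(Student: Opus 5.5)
We will reuse the median-of-means argument of Lemma~\ref{lem:estimate-symmetric-probability}, now applied to the samples generated with the approximate Schur transform. Concretely, each sample takes a fresh batch of $m=k^2(k-1)$ copies, applies $\widetilde U_{\mathrm{Sch}}^{(m)}$, measures the irrep label $\gamma$, and outputs $\widetilde X:=a_\gamma^{(k)}$. The label is read in the (approximate) Schur basis. Outcome $\gamma$ therefore occurs with probability $\widetilde\mu_\gamma:=\Tr[\Pi_\gamma\widetilde\rho_{\mathrm{Sch}}]$. By definition of $\mathcal R_{m,k}$ in Eq.~\eqref{eq:Rmk}, this gives $\mathbb E[\widetilde X]=\sum_\gamma\widetilde\mu_\gamma a_\gamma^{(k)}=\widetilde p_{\mathrm{sym}}^{(k)}$. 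The estimator is therefore unbiased for $\widetilde p_{\mathrm{sym}}^{(k)}$, not for $p_{\mathrm{sym}}^{(k)}$. This is why Eq.~\eqref{eq:approximate-schur-mom} is centered at $\widetilde p_{\mathrm{sym}}^{(k)}$.

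The key step is a variance bound relative to $(p_{\mathrm{sym}}^{(k)})^2$. We bound $\operatorname{Var}(\widetilde X)\leq\mathbb E[\widetilde X^2]=\sum_\gamma\widetilde\mu_\gamma(a_\gamma^{(k)})^2$. Since $0\le (a_\gamma^{(k)})^2\le1$, the map $X\mapsto\sum_\gamma (a_\gamma^{(k)})^2\Tr[\Pi_\gamma X]$ is a linear functional with operator norm at most $1$. The same argument as in Lemma~\ref{lem:estimation-schur-error} then gives
\[
\Bigl|\sum_\gamma(\widetilde\mu_\gamma-\mu_\gamma)(a_\gamma^{(k)})^2\Bigr|\le\tfrac12\norm{\widetilde\rho_{\mathrm{Sch}}-\rho_{\mathrm{Sch}}}_1\le\epsilon_{\mathrm{Sch}}.
\]
Next, Lemma~\ref{lem:relative-concentration} with $m=k^2(k-1)$ gives $\sum_\gamma\mu_\gamma(a_\gamma^{(k)})^2\le e\,(p_{\mathrm{sym}}^{(k)})^2$. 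The hypothesis $\epsilon_{\mathrm{Sch}}\le 1/(k!)^2$, together with $p_{\mathrm{sym}}^{(k)}\ge1/k!$ (proved inside Lemma~\ref{lem:relative-concentration}), gives $\epsilon_{\mathrm{Sch}}\le(p_{\mathrm{sym}}^{(k)})^2$. Hence
\[
\operatorname{Var}(\widetilde X)\le(e+1)(p_{\mathrm{sym}}^{(k)})^2 .
\]
This accounts for the constant $e+1$ replacing $e-1$ in the sample count. Absorbing the Schur error into the variance, rather than into the mean, is the step needing care. It is the only place where the assumption on $\epsilon_{\mathrm{Sch}}$ enters, and we must make sure we compare against the ideal $p_{\mathrm{sym}}^{(k)}$ rather than $\widetilde p_{\mathrm{sym}}^{(k)}$.

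The remainder copies the earlier proof. Set $L=\lceil 6(e+1)/\varepsilon^2\rceil$. Chebyshev then gives that each group mean $Y_r$ deviates from $\widetilde p_{\mathrm{sym}}^{(k)}$ by more than $\varepsilon p_{\mathrm{sym}}^{(k)}$ with probability at most $1/6$. With odd $R=2\lceil\frac94\ln\frac1\delta-\frac12\rceil+1\ge\frac92\ln\frac1\delta$, Hoeffding's inequality applied to the failure indicators bounds the failure probability of the median by $e^{-2R/9}\le\delta$, which proves Eq.~\eqref{eq:approximate-schur-mom}. The second claim follows from the triangle inequality and Lemma~\ref{lem:estimation-schur-error}: on the success event,
\[
|\widehat p_{\mathrm{sym}}^{(k)}-p_{\mathrm{sym}}^{(k)}|\le\varepsilon p_{\mathrm{sym}}^{(k)}+\epsilon_{\mathrm{Sch}}.
\]
Finally, the copy count is $N=mLR$, which is the stated expression.
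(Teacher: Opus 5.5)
Your proposal is correct and follows essentially the same route as the paper. It uses the same sample $\widetilde X=a_\gamma^{(k)}$, which is unbiased for $\widetilde p_{\mathrm{sym}}^{(k)}$, and the same second-moment bound $\mathbb{E}[\widetilde X^2]\le e\,(p_{\mathrm{sym}}^{(k)})^2+\epsilon_{\mathrm{Sch}}\le (e+1)(p_{\mathrm{sym}}^{(k)})^2$ via the norm-one operator $\sum_\gamma (a_\gamma^{(k)})^2\Pi_\gamma$ and $p_{\mathrm{sym}}^{(k)}\ge 1/k!$, followed by the same median-of-means argument and the triangle inequality with the Schur-error lemma.
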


\begin{proof}
One repetition of the estimation procedure applies
$\widetilde U_{\mathrm{Sch}}^{(m)}$ to $\rho^{\otimes m}$,
measures the Schur label $\gamma$, and outputs
\begin{equation}
\widetilde X
:=
a_\gamma^{(k)}.
\end{equation}
By definition,
\begin{equation}
\mathbb{E}[\widetilde X]
=
\widetilde p_{\mathrm{sym}}^{(k)}.
\end{equation}

Define
\begin{equation}
A
:=
\sum_{\substack{\gamma\vdash m\\ \ell(\gamma)\leq D}}
\left(a_\gamma^{(k)}\right)^2
\Pi_\gamma.
\end{equation}
Since $0\leq a_\gamma^{(k)}\leq1$,
\begin{equation}
0\preceq A\preceq I.
\end{equation}
Therefore,
\begin{align}
\mathbb{E}\!\left[\widetilde X^2\right]
&=
\Tr\left[
\widetilde{\rho}_{\mathrm{Sch}}A
\right]
\notag\\
&\leq
\Tr\left[
\rho_{\mathrm{Sch}}A
\right]
+
\epsilon_{\mathrm{Sch}}
\notag\\
&=
\sum_\gamma
\mu_\gamma(\rho)
\left(a_\gamma^{(k)}\right)^2
+
\epsilon_{\mathrm{Sch}}.
\end{align}
By Lemma~\ref{lem:relative-concentration} and
$m=k^2(k-1)$,
\begin{equation}
\sum_\gamma
\mu_\gamma(\rho)
\left(a_\gamma^{(k)}\right)^2
\leq
e\left(p_{\mathrm{sym}}^{(k)}\right)^2.
\end{equation}
Moreover,
\begin{equation}
p_{\mathrm{sym}}^{(k)}
\geq
\frac{1}{k!},
\end{equation}
so that
\begin{equation}
\epsilon_{\mathrm{Sch}}
\leq
\frac{1}{(k!)^2}
\leq
\left(p_{\mathrm{sym}}^{(k)}\right)^2.
\end{equation}
Hence
\begin{equation}
\operatorname{Var}(\widetilde X)
\leq
\mathbb{E}\!\left[\widetilde X^2\right]
\leq
(e+1)\left(p_{\mathrm{sym}}^{(k)}\right)^2.
\end{equation}

We now apply the median-of-means procedure.
Set
\begin{align}
L
&:=
\left\lceil
\frac{6(e+1)}{\varepsilon^2}
\right\rceil,
\\
R
&:=
2\left\lceil
\frac94\ln\frac1\delta-\frac12
\right\rceil+1.
\end{align}
Perform $LR$ independent repetitions of the estimation procedure.
Denote the measured Schur labels by $\gamma_{r,\ell}$,
where $1\leq r\leq R$ and $1\leq\ell\leq L$, and define
\begin{equation}
\widetilde X_{r,\ell}
:=
a_{\gamma_{r,\ell}}^{(k)}.
\end{equation}
For each group, define
\begin{equation}
\widetilde Y_r
:=
\frac1L
\sum_{\ell=1}^L
\widetilde X_{r,\ell}.
\end{equation}
Then
\begin{equation}
\mathbb{E}[\widetilde Y_r]
=
\widetilde p_{\mathrm{sym}}^{(k)},
\qquad
\operatorname{Var}(\widetilde Y_r)
\leq
\frac{e+1}{L}
\left(p_{\mathrm{sym}}^{(k)}\right)^2.
\end{equation}
Chebyshev's inequality gives
\begin{equation}
\Pr\left[
\left|
\widetilde Y_r
-
\widetilde p_{\mathrm{sym}}^{(k)}
\right|
>
\varepsilon p_{\mathrm{sym}}^{(k)}
\right]
\leq
\frac16.
\end{equation}
Taking the median of $\widetilde Y_1,\ldots,\widetilde Y_R$ and applying the same
median-of-means argument as in
Lemma~\ref{lem:estimate-symmetric-probability} gives
\begin{equation}
\Pr\left[
\left|
\widehat p_{\mathrm{sym}}^{(k)}
-
\widetilde p_{\mathrm{sym}}^{(k)}
\right|
>
\varepsilon p_{\mathrm{sym}}^{(k)}
\right]
\leq
\delta.
\end{equation}
Combining this with
Lemma~\ref{lem:estimation-schur-error}
gives
\begin{equation}
\Pr\left[
\left|
\widehat p_{\mathrm{sym}}^{(k)}
-
p_{\mathrm{sym}}^{(k)}
\right|
>
\varepsilon p_{\mathrm{sym}}^{(k)}
+
\epsilon_{\mathrm{Sch}}
\right]
\leq
\delta.
\end{equation}

Finally, each sample consumes $m=k^2(k-1)$ copies of $\rho$,
so the total number of copies is
\begin{align}
N
&=
mLR
\notag\\
&=
k^2(k-1)
\left\lceil
\frac{6(e+1)}{\varepsilon^2}
\right\rceil
\left[
2\left\lceil
\frac94\ln\frac1\delta-\frac12
\right\rceil+1
\right].
\end{align}
\end{proof}

\section{Estimation of operator expectation values}\label{app:expectation-value}
\subsection{Overview}
In this section, we aim to efficiently evaluate $\mathrm{Tr}\!\left[{\chi_{\mathrm{Sc}}^{(k)}O_k}\right]$, assuming access to registers containing many copies of $\rho$. Here we denote the operator as $O_k$ to indicate that it acts on the Hilbert space $\mathcal{H}^{\otimes k}$.

Using the polynomial approximation, we evaluate the following quantity
as an approximation to the true expectation value:
\begin{equation}
   \mathrm{Tr}
\!\left[\chi_{\mathrm{PA}}^{(k)}(\lambda)O_k\right]=\sum_{i=1}^{\lambda}\binom{\lambda}{i}(-1)^{i-1} D^{k+a_i}
\mathrm{Tr}
\!\left[
\rho^{\otimes (k+a_i)}
\chi_{\mathrm{Haar}}^{(k+a_i)}\left(O_k\otimes I^{\otimes a_i}\right)
\right].
\end{equation}

To achieve this goal, we evaluate the $\lambda$ terms separately.
Namely, we estimate $\mathrm{Tr}
\!\left[
\rho^{\otimes k_i}
\chi_{\mathrm{Haar}}^{(k_i)}\left(O_k\otimes I^{\otimes a_i}\right)
\right]$ and combine the resulting estimates with the corresponding coefficients. Let 
\begin{equation}
\rho_{\mathrm{sym}}^{(k_i)}
:=
\frac{
\Pi_{\mathrm{sym}}^{(k_i)}
\rho^{\otimes k_i}
\Pi_{\mathrm{sym}}^{(k_i)}
}{
p_{\mathrm{sym}}^{(k_i)}
},
\end{equation}
where
$p_{\mathrm{sym}}^{(k_i)}
:=\mathrm{Tr}\!\left[\rho^{\otimes k_i}\Pi_{\mathrm{sym}}^{(k_i)}\right]$
is the normalization factor.
For the $i$-th term, define $o_i :=\mathrm{Tr}
\!\left[
\rho_{\mathrm{sym}}^{(k_i)}\left(O_k\otimes I^{\otimes a_i}\right)
\right]$.
Accordingly, the quantity we estimate is
\begin{equation}
       \mathrm{Tr}
\!\left[\chi_{\mathrm{PA}}^{(k)}(\lambda)O_k\right]=\sum_{i=1}^{\lambda}\binom{\lambda}{i}(-1)^{i-1} c_i p_{\mathrm{sym}}^{(k_i)}o_i .
\end{equation}
Completing our task requires estimating $p_{\mathrm{sym}}^{(k_i)}o_i$ for each $i$.

\subsection{Detailed procedure}
To reduce the sample complexity, we directly estimate the product
$p_{\mathrm{sym}}^{(k_i)}o_i$. 
Define 
\begin{equation} 
m_i:=k_i^2(k_i-1). 
\end{equation} 
The estimator is constructed by the following procedure: 
\begin{enumerate} 
    \item Prepare $\rho^{\otimes m_i}$ and apply the Schur transform 
    $U_{\mathrm{Sch}}^{(m_i)}$. 
    Measuring the irrep register produces a partition 
    $\gamma\vdash m_i$ with probability $\mu_\gamma(\rho)$. 

    \item For the measured sector $\gamma$, define 
    \begin{equation} 
    a_\gamma^{(k_i)} 
    := 
    \frac{ 
    \trace{}{P_\gamma^{(k_i)}} 
    }{ 
    r_\gamma 
    }, 
    \end{equation} 
    where $r_\gamma$ is the dimension of the permutation-space 
    irrep $\mathcal{P}_\gamma$, and $P_\gamma^{(k_i)}$ is the 
    projector onto the subspace of $\mathcal{P}_\gamma$ that is 
    symmetric under permutations of the first $k_i$ registers. 

    Conditioned on $\gamma$, replace the permutation register by 
    \begin{equation} 
    \frac{ 
    P_\gamma^{(k_i)} 
    }{ 
    \trace{}{P_\gamma^{(k_i)}} 
    } 
    \end{equation} 
    whenever $\trace{}{P_\gamma^{(k_i)}}>0$. 
    If $\trace{}{P_\gamma^{(k_i)}}=0$, prepare an arbitrary fixed 
    state in the permutation register and set 
    $a_\gamma^{(k_i)}=0$. 

    \item Apply the inverse Schur transform and denote the resulting 
    state on the $m_i$ registers by $\tau_\gamma^{(i)}$. 
    Measure 
    \begin{equation} 
    \left( 
    O_k\otimes I^{\otimes a_i} 
    \right) 
    \otimes 
    I^{\otimes(m_i-k_i)} 
    = 
    O_k\otimes I^{\otimes(m_i-k)} 
    \end{equation} 
    on $\tau_\gamma^{(i)}$. In repetition $(r,\ell)$, denote the
    measured Schur label and measurement outcome by
    $\gamma_{i,r,\ell}$ and $W_{i,r,\ell}$, respectively, and define
    the corresponding estimator sample by
    \begin{equation} 
    X_{i,r,\ell}
    := 
    a_{\gamma_{i,r,\ell}}^{(k_i)}W_{i,r,\ell}.
    \end{equation}

    \item Repeat the above procedure independently and use the
    median-of-means estimator. Specifically, divide the samples into
    $R_i$ groups, each containing $L_i$ samples. For the $r$-th
    group, define the sample mean
    \begin{equation}
    Y_{i,r}
    :=
    \frac{1}{L_i}
    \sum_{\ell=1}^{L_i}X_{i,r,\ell},
    \end{equation}
    and define
    \begin{equation}
    \widehat{p_{\mathrm{sym}}^{(k_i)}o_i}
    :=
    \operatorname{median}
    \left\{
    Y_{i,1},\ldots,Y_{i,R_i}
    \right\}.
    \end{equation}
\end{enumerate} 
Each sample satisfies
\begin{equation} 
\mathbb{E}[X_{i,r,\ell}]
= 
p_{\mathrm{sym}}^{(k_i)}o_i. 
\end{equation} 
Therefore, the above median-of-means estimator directly estimates
$p_{\mathrm{sym}}^{(k_i)}o_i$, without separately estimating
$p_{\mathrm{sym}}^{(k_i)}$ and $o_i$.

\begin{lemma}[Median-of-means estimation of $p_{\mathrm{sym}}^{(k)}o$]
\label{lem:estimate-symmetric-contribution}
Let $\rho$ be a density operator on $\mathbb{C}^D$,
$k\geq2$, and let $O_k$ be a Hermitian operator on
$(\mathbb{C}^D)^{\otimes k}$ satisfying
$\lVert O_k\rVert_\infty\leq1$.
Define
\begin{equation}
o:=
\trace{}{
\rho_{\mathrm{sym}}^{(k)}O_k
}.
\end{equation}
For any $0<\varepsilon<1$ and $0<\delta<1$,
there exists an estimator
$\widehat{p_{\mathrm{sym}}^{(k)}o}$ such that
\begin{align}
\Pr\!\left[
\left|
\widehat{p_{\mathrm{sym}}^{(k)}o}
-
p_{\mathrm{sym}}^{(k)} o
\right|
>
\varepsilon p_{\mathrm{sym}}^{(k)}
\right]
\leq\delta.
\end{align}
The estimator uses
\begin{align}
N
&=
k^2(k-1)
\left\lceil
\frac{6e}{\varepsilon^2}
\right\rceil
\left[
2\left\lceil
\frac94\ln\frac1\delta-\frac12
\right\rceil+1
\right]
\end{align}
copies of $\rho$.
\end{lemma}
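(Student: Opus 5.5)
I will mirror the proof of Lemma~\ref{lem:estimate-symmetric-probability}, with the Schur-label weight now multiplied by the outcome of measuring $O_k$. Take $m:=k^2(k-1)$. One sample is produced by applying the ideal Schur transform to $\rho^{\otimes m}$, measuring $\gamma$, and replacing the permutation register by $P_\gamma^{(k)}/\trace{}{P_\gamma^{(k)}}$ (the channel $\calN_{m,k}$). We then apply $U_{\mathrm{Sch}}^\dagger$, measure $O_k\otimes I^{\otimes(m-k)}$ to obtain $W\in[-1,1]$, and output $X:=a_\gamma^{(k)}W$.

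\textbf{Step 1: unbiasedness.} Conditioned on $\gamma$, the post-replacement state is $\tau_\gamma=U_{\mathrm{Sch}}^\dagger\bigl(Q_\gamma(\rho)\otimes P_\gamma^{(k)}/\trace{}{P_\gamma^{(k)}}\bigr)U_{\mathrm{Sch}}$, so that
\begin{equation}
\mathbb{E}[X]=\sum_\gamma\mu_\gamma(\rho)a_\gamma^{(k)}\Tr[\tau_\gamma(O_k\otimes I)].
\end{equation}
Now $\mu_\gamma a_\gamma^{(k)}\tau_\gamma$ is exactly the $\gamma$-block of $\Pi_{\mathrm{sym}}^{m\to k}\rho^{\otimes m}$ in the Schur basis, by Eqs.~\eqref{eq:projector} and \eqref{eq:density-matrix}. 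Summing over $\gamma$ therefore gives
\begin{equation}
\mathbb{E}[X]=\Tr[\Pi_{\mathrm{sym}}^{m\to k}\rho^{\otimes m}(O_k\otimes I)]=p_{\mathrm{sym}}^{(k)}\,o,
\end{equation}
using Eq.~\eqref{eq:target}.

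\textbf{Step 2: variance.} Since $|W|\le\|O_k\|_\infty\le1$, we have $X^2\le (a_\gamma^{(k)})^2$, and hence
\begin{equation}
\operatorname{Var}(X)\le\mathbb{E}[X^2]\le\sum_\gamma\mu_\gamma(\rho)(a_\gamma^{(k)})^2.
\end{equation}
Lemma~\ref{lem:relative-concentration} with $m=k^2(k-1)$ bounds this by $e\,(p_{\mathrm{sym}}^{(k)})^2$, because $\sum_\gamma\mu_\gamma a_\gamma^{(k)}=p_{\mathrm{sym}}^{(k)}$. This is the only real step. Note that we do not subtract the mean: we use the crude second-moment bound, which is why the constant is $6e$ rather than $6(e-1)$.

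\textbf{Step 3: median of means.} Take $L=\lceil 6e/\varepsilon^2\rceil$ samples per group. Chebyshev's inequality gives that each group mean deviates by more than $\varepsilon p_{\mathrm{sym}}^{(k)}$ with probability at most $1/6$. With $R=2\lceil\frac94\ln\frac1\delta-\frac12\rceil+1$ groups, the Hoeffding argument from Lemma~\ref{lem:estimate-symmetric-probability} gives overall failure probability at most $\delta$. The total number of copies is $mLR$, which is the stated $N$.

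I expect no genuine obstacle. The only care needed is the block identification in Step~1, where sectors with $\trace{}{P_\gamma^{(k)}}=0$ have $a_\gamma^{(k)}=0$ and so contribute nothing to either side. This holds with the ideal Schur transform assumed, as in Lemma~\ref{lem:estimate-symmetric-probability}.
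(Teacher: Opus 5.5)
Your proposal is correct and follows the paper's proof step for step. It uses the same $m=k^2(k-1)$ and the same unbiasedness computation via the Schur decomposition. It also uses the same crude second-moment bound $\mathbb{E}[X^2]\le\sum_\gamma\mu_\gamma(\rho)(a_\gamma^{(k)})^2\le e\,(p_{\mathrm{sym}}^{(k)})^2$ from Lemma~\ref{lem:relative-concentration}, and the same median-of-means parameters $L=\lceil 6e/\varepsilon^2\rceil$, $R$, and $N=mLR$. Your explicit identification of $\mu_\gamma(\rho)a_\gamma^{(k)}\tau_\gamma$ with the $\gamma$-block of $\Pi_{\mathrm{sym}}^{m\rightarrow k}\rho^{\otimes m}$ just spells out what the paper compresses into ``the same Schur decomposition.''
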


\begin{proof}
Set $m:=k^2(k-1)$ and let
$X:=a_\gamma^{(k)}W$ be one sample produced by the procedure above.
Let $\tau_\gamma$ denote the conditional state obtained after the
permutation-register replacement and the inverse Schur transform.
The same Schur decomposition used in
Lemma~\ref{lem:estimate-symmetric-probability} gives
\begin{align}
\mathbb{E}[X]
&=
\sum_\gamma
\mu_\gamma(\rho)a_\gamma^{(k)}
\Tr\!\left[
\tau_\gamma
\left(O_k\otimes I^{\otimes(m-k)}\right)
\right]
\notag\\
&=
p_{\mathrm{sym}}^{(k)}o.
\end{align}
Moreover, $|W|\leq1$, and Lemma~\ref{lem:relative-concentration}
implies
\begin{align}
\mathbb{E}[X^2]
&\leq
\sum_{\gamma}
\mu_\gamma(\rho)
\left(a_\gamma^{(k)}\right)^2
\leq
e\left(p_{\mathrm{sym}}^{(k)}\right)^2.
\end{align}
Hence
$\operatorname{Var}(X)\leq
e\left(p_{\mathrm{sym}}^{(k)}\right)^2$.

Choose
\begin{equation}
L
:=
\left\lceil\frac{6e}{\varepsilon^2}\right\rceil,
\qquad
R
:=
2\left\lceil
\frac94\ln\frac1\delta-\frac12
\right\rceil+1.
\end{equation}
Applying the median-of-means argument from
Lemma~\ref{lem:estimate-symmetric-probability} to $R$ groups of
$L$ samples proves the stated probability bound. Each sample uses
$m$ copies of $\rho$, so $N=mLR$, as claimed.
\end{proof}

In practice, however, only an approximate Schur transform
$\widetilde{U}_{\mathrm{Sch}}^{(m_i)}$ is implemented.
Let $\widetilde{X}_{i,r,\ell}$ denote the estimator sample obtained from
the same procedure as above, but with the forward and inverse Schur
transforms replaced by
$\widetilde{U}_{\mathrm{Sch}}^{(m_i)}$ and
$\widetilde{U}_{\mathrm{Sch}}^{(m_i)\dagger}$, respectively.
We denote the corresponding expectation value by
\begin{equation}
\widetilde{p_{\mathrm{sym}}^{(k_i)}o_i}
:=
\mathbb{E}\!\left[\widetilde{X}_{i,r,\ell}\right].
\end{equation}
For the exact Schur transform,
\begin{equation}
\mathbb{E}\!\left[X_{i,r,\ell}\right]
=
p_{\mathrm{sym}}^{(k_i)}o_i.
\end{equation}
We next bound the deviation
\begin{equation}
\left|
\widetilde{p_{\mathrm{sym}}^{(k_i)}o_i}
-
p_{\mathrm{sym}}^{(k_i)} o_i
\right|
\end{equation}
in terms of the Schur-transform error
$\epsilon_{\mathrm{Sch}}^i$.

\begin{lemma}[Error induced by an approximate Schur transform]
\label{lem:joint-estimation-schur-error}
Let
\begin{equation}
\left\|
\widetilde{U}_{\mathrm{Sch}}^{(m_i)}
-
U_{\mathrm{Sch}}^{(m_i)}
\right\|_{\infty}
\leq
\epsilon_{\mathrm{Sch}}^i.
\end{equation}
Then
\begin{equation}
\left|
\widetilde{p_{\mathrm{sym}}^{(k_i)}o_i}
-
p_{\mathrm{sym}}^{(k_i)} o_i
\right|
\leq
4\epsilon_{\mathrm{Sch}}^i
\lVert O_k\rVert_{\infty}.
\end{equation}
\end{lemma}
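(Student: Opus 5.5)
The plan is to express both expectation values as traces of a fixed operator against the output of a single map. Then the only difference between the ideal and approximate procedures is which Schur transform is applied, at the two places it appears.

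Define the map $\mathcal{R}_{m_i,k_i}$ of Eq.~\eqref{eq:Rmk}, with $m=m_i$ and $k=k_i$. It measures the Schur label, multiplies by the weight $a_\gamma^{(k_i)}$, and replaces the permutation register. This map is completely positive and trace-nonincreasing. Set $\widetilde{O}:=O_k\otimes I^{\otimes(m_i-k)}$. One sample has conditional expectation $a_\gamma^{(k_i)}\Tr[\tau_\gamma \widetilde{O}]$, so averaging over $\gamma$ gives
\begin{equation}
\mathbb{E}[X_{i,r,\ell}]=\Tr\!\left[U_{\mathrm{Sch}}^{(m_i)\dagger}\mathcal{R}_{m_i,k_i}\!\left(U_{\mathrm{Sch}}^{(m_i)}\rho^{\otimes m_i}U_{\mathrm{Sch}}^{(m_i)\dagger}\right)U_{\mathrm{Sch}}^{(m_i)}\widetilde{O}\right].
\end{equation}
The same expression with tildes on all four Schur transforms gives $\widetilde{p_{\mathrm{sym}}^{(k_i)}o_i}$. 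I would first check this identity carefully, including the convention that sectors with $\trace{}{P_\gamma^{(k_i)}}=0$ contribute zero, which matches setting $X=0$ in Algorithm~\ref{alg:symmetric-sector-estimation}.

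Next I would bound the difference by a hybrid argument in two steps. The first step replaces the forward transform. The states $\rho_{\mathrm{Sch}}$ and $\widetilde\rho_{\mathrm{Sch}}$ differ in trace norm by at most $2\epsilon_{\mathrm{Sch}}^i$, as in Lemma~\ref{lem:estimation-schur-error}. A trace-nonincreasing positive map does not increase trace norm on Hermitian inputs, so $\mathcal{R}$ preserves this bound. Conjugating by the ideal inverse transform then costs nothing. Pairing with $\widetilde{O}$ by Hölder's inequality gives an error of at most $2\epsilon_{\mathrm{Sch}}^i\norm{O_k}_\infty$.

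The second step replaces the inverse transform, with the fixed input $\sigma:=\mathcal{R}(\widetilde\rho_{\mathrm{Sch}})$. This is a positive operator with $\Tr\sigma\le 1$. I would use $\norm{\widetilde U^\dagger\sigma\widetilde U-U^\dagger\sigma U}_1\le 2\norm{U-\widetilde U}_\infty\Tr\sigma$, which comes from telescoping $\widetilde U^\dagger\sigma\widetilde U-U^\dagger\sigma U=(\widetilde U-U)^\dagger\sigma\widetilde U+U^\dagger\sigma(\widetilde U-U)$. This gives a second contribution of at most $2\epsilon_{\mathrm{Sch}}^i\norm{O_k}_\infty$. Adding the two contributions yields the claimed bound $4\epsilon_{\mathrm{Sch}}^i\norm{O_k}_\infty$.

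The main obstacle is the first step. Here one should not use the cruder bound $\tfrac12\norm{\cdot}_1$ from Lemma~\ref{lem:estimation-schur-error}, because $\widetilde O$ is not positive, so that bound does not directly apply. It also matters that $\mathcal{R}$ is contractive in trace norm even though it involves the weights $a_\gamma^{(k_i)}\le 1$. I would justify this by writing $\mathcal{R}$ as a direct sum over sectors of a pinching $\Pi_\gamma\cdot\Pi_\gamma$, a partial trace, tensoring with a fixed state, and scaling by $a_\gamma^{(k_i)}\in[0,1]$. Each of these operations is trace-norm contractive on Hermitian operators, and the direct sum over orthogonal sectors preserves this.
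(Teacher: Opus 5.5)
Your proposal is correct and is essentially the paper's proof: both write the two expectations as traces of $\mathcal{R}_{m_i,k_i}$ applied to the ideal or approximate post-Schur state, paired with the back-rotated observable, and split the difference into two pieces. The forward-transform piece is bounded by $2\epsilon_{\mathrm{Sch}}^i\lVert O_k\rVert_\infty$ using trace-norm contractivity of $\mathcal{R}_{m_i,k_i}$, and the inverse-transform piece by another $2\epsilon_{\mathrm{Sch}}^i\lVert O_k\rVert_\infty$. The differences are cosmetic: the paper's intermediate hybrid uses the ideal forward and the approximate inverse transform, and bounds the second piece in the Heisenberg picture via $\lVert \widetilde U(O_k\otimes I)\widetilde U^\dagger-U(O_k\otimes I)U^\dagger\rVert_\infty\le 2\epsilon_{\mathrm{Sch}}^i\lVert O_k\rVert_\infty$, whereas you use the other cross term and a Schrödinger-picture bound, and you also spell out why $\mathcal{R}_{m_i,k_i}$ is trace-norm contractive, which the paper only asserts.
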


\begin{proof}
Recall the completely positive trace-nonincreasing map
$\mathcal{R}_{m_i,k_i}$ defined in Eq.~\eqref{eq:Rmk}, and set
\begin{align}
\rho_{\mathrm{Sch}}^i
&:=
U_{\mathrm{Sch}}^{(m_i)}
\rho^{\otimes m_i}
U_{\mathrm{Sch}}^{(m_i)\dagger},
\\
\widetilde{\rho}_{\mathrm{Sch}}^i
&:=
\widetilde U_{\mathrm{Sch}}^{(m_i)}
\rho^{\otimes m_i}
\widetilde U_{\mathrm{Sch}}^{(m_i)\dagger}.
\end{align}
The unitary error assumption implies
\begin{equation}
\left\|
\widetilde{\rho}_{\mathrm{Sch}}^i-\rho_{\mathrm{Sch}}^i
\right\|_1
\leq
2\epsilon_{\mathrm{Sch}}^i.
\end{equation}
By construction,
\begin{align}
p_{\mathrm{sym}}^{(k_i)}o_i
&=
\Tr\!\left[
\mathcal{R}_{m_i,k_i}\!\left(\rho_{\mathrm{Sch}}^i\right)
U_{\mathrm{Sch}}^{(m_i)}
\left(
O_k\otimes I^{\otimes(m_i-k)}
\right)
U_{\mathrm{Sch}}^{(m_i)\dagger}
\right],
\\
\widetilde{p_{\mathrm{sym}}^{(k_i)}o_i}
&=
\Tr\!\left[
\mathcal{R}_{m_i,k_i}\!\left(
\widetilde{\rho}_{\mathrm{Sch}}^i
\right)
\widetilde U_{\mathrm{Sch}}^{(m_i)}
\left(
O_k\otimes I^{\otimes(m_i-k)}
\right)
\widetilde U_{\mathrm{Sch}}^{(m_i)\dagger}
\right].
\end{align}
Adding and subtracting the corresponding cross term and applying
H\"older's inequality gives
\begin{align}
&
\left|
\widetilde{p_{\mathrm{sym}}^{(k_i)}o_i}
-
p_{\mathrm{sym}}^{(k_i)}o_i
\right|
\nonumber\\
&\quad\leq
\left\|
\mathcal{R}_{m_i,k_i}\!\left(
\widetilde{\rho}_{\mathrm{Sch}}^i
-
\rho_{\mathrm{Sch}}^i
\right)
\right\|_1
\lVert O_k\rVert_\infty
\nonumber\\
&\qquad+
\left\|
\mathcal{R}_{m_i,k_i}\!\left(
\rho_{\mathrm{Sch}}^i
\right)
\right\|_1
\Bigl\|
\widetilde U_{\mathrm{Sch}}^{(m_i)}
\left(
O_k\otimes I^{\otimes(m_i-k)}
\right)
\widetilde U_{\mathrm{Sch}}^{(m_i)\dagger}
-
U_{\mathrm{Sch}}^{(m_i)}
\left(
O_k\otimes I^{\otimes(m_i-k)}
\right)
U_{\mathrm{Sch}}^{(m_i)\dagger}
\Bigr\|_\infty.
\end{align}
Since $\mathcal{R}_{m_i,k_i}$ is trace-nonincreasing and
trace-norm contractive on Hermitian operators,
\begin{equation}
\left\|
\mathcal{R}_{m_i,k_i}\!\left(
\widetilde{\rho}_{\mathrm{Sch}}^i
-
\rho_{\mathrm{Sch}}^i
\right)
\right\|_1
\leq
2\epsilon_{\mathrm{Sch}}^i,
\qquad
\left\|
\mathcal{R}_{m_i,k_i}\!\left(
\rho_{\mathrm{Sch}}^i
\right)
\right\|_1
\leq1.
\end{equation}
Moreover,
\begin{align}
&
\Bigl\|
\widetilde U_{\mathrm{Sch}}^{(m_i)}
\left(
O_k\otimes I^{\otimes(m_i-k)}
\right)
\widetilde U_{\mathrm{Sch}}^{(m_i)\dagger}
-
U_{\mathrm{Sch}}^{(m_i)}
\left(
O_k\otimes I^{\otimes(m_i-k)}
\right)
U_{\mathrm{Sch}}^{(m_i)\dagger}
\Bigr\|_\infty
\nonumber\\
&\quad\leq
2\epsilon_{\mathrm{Sch}}^i
\lVert O_k\rVert_\infty.
\end{align}
Therefore,
\begin{equation}
\left|
\widetilde{p_{\mathrm{sym}}^{(k_i)}o_i}
-
p_{\mathrm{sym}}^{(k_i)}o_i
\right|
\leq
4\epsilon_{\mathrm{Sch}}^i
\lVert O_k\rVert_\infty.
\end{equation}
\end{proof}

\begin{lemma}[Median-of-means estimation of $p_{\mathrm{sym}}^{(k_i)} o_i$ with an approximate Schur transform]
\label{lem:approximate-schur-joint-mom}
Let $m_i:=k_i^2(k_i-1)$, and let $O_k$ be a Hermitian operator
satisfying $\lVert O_k\rVert_\infty\leq1$. Suppose that
\begin{equation}
\left\|
\widetilde{U}_{\mathrm{Sch}}^{(m_i)}
-
U_{\mathrm{Sch}}^{(m_i)}
\right\|_\infty
\leq
\epsilon_{\mathrm{Sch}}^i.
\end{equation}
For any $0<\varepsilon<1$ and $0<\delta<1$, if
\begin{equation}
\epsilon_{\mathrm{Sch}}^i
\leq
\frac{1}{(k_i!)^2},
\end{equation}
there exists a
median-of-means estimator $\widehat{p_{\mathrm{sym}}^{(k_i)}o_i}$ satisfying
\begin{equation}
\Pr\left[
\left|
\widehat{p_{\mathrm{sym}}^{(k_i)}o_i}
-
\widetilde{p_{\mathrm{sym}}^{(k_i)}o_i}
\right|
>
\varepsilon p_{\mathrm{sym}}^{(k_i)}
\right]
\leq
\delta.
\label{eq:approximate-schur-joint-mom}
\end{equation}
The estimator uses
\begin{equation}
N_i
=
k_i^2(k_i-1)
\left\lceil
\frac{6(e+1)}{\varepsilon^2}
\right\rceil
\left[
2\left\lceil
\frac94\ln\frac1\delta-\frac12
\right\rceil+1
\right]
\end{equation}
copies of $\rho$.
\end{lemma}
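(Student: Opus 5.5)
The proof follows Lemma~\ref{lem:approximate-schur-mom} almost verbatim, replacing $a_\gamma^{(k)}$ by the weighted sample $a_\gamma^{(k_i)}W$. One repetition with the approximate transform applies $\widetilde U_{\mathrm{Sch}}^{(m_i)}$ to $\rho^{\otimes m_i}$ and measures the Schur label $\gamma$. It then performs the permutation-register replacement, applies $\widetilde U_{\mathrm{Sch}}^{(m_i)\dagger}$, measures $O_k\otimes I^{\otimes(m_i-k)}$ with outcome $W$, and outputs $\widetilde X=a_\gamma^{(k_i)}W$. By definition, $\mathbb{E}[\widetilde X]=\widetilde{p_{\mathrm{sym}}^{(k_i)}o_i}$, so the estimator is unbiased for the quantity appearing in Eq.~\eqref{eq:approximate-schur-joint-mom}. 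The whole task therefore reduces to a variance bound of the form $\operatorname{Var}(\widetilde X)\le(e+1)(p_{\mathrm{sym}}^{(k_i)})^2$, followed by the standard median-of-means step.

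For the variance, I use $|W|\le\|O_k\|_\infty\le1$, which gives $\widetilde X^2\le (a_\gamma^{(k_i)})^2$. The second moment is then controlled by the label distribution alone: $\mathbb{E}[\widetilde X^2]\le\Tr[\widetilde\rho^i_{\mathrm{Sch}}A]$ with $A:=\sum_\gamma (a_\gamma^{(k_i)})^2\Pi_\gamma$. This $A$ satisfies $0\preceq A\preceq I$ because $0\le a_\gamma^{(k_i)}\le1$. Since $\|\widetilde\rho^i_{\mathrm{Sch}}-\rho^i_{\mathrm{Sch}}\|_1\le2\epsilon^i_{\mathrm{Sch}}$ and the difference is traceless, $\Tr[\widetilde\rho^i_{\mathrm{Sch}}A]\le\sum_\gamma\mu_\gamma(\rho)(a_\gamma^{(k_i)})^2+\epsilon^i_{\mathrm{Sch}}$.

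Lemma~\ref{lem:relative-concentration} with $m_i=k_i^2(k_i-1)$ then gives $\sum_\gamma\mu_\gamma(a_\gamma^{(k_i)})^2\le e\,(p_{\mathrm{sym}}^{(k_i)})^2$. The hypothesis $\epsilon^i_{\mathrm{Sch}}\le 1/(k_i!)^2$, together with $p_{\mathrm{sym}}^{(k_i)}\ge1/k_i!$ (established in that lemma's proof), gives $\epsilon^i_{\mathrm{Sch}}\le (p_{\mathrm{sym}}^{(k_i)})^2$. Combining these yields $\operatorname{Var}(\widetilde X)\le\mathbb{E}[\widetilde X^2]\le(e+1)(p_{\mathrm{sym}}^{(k_i)})^2$. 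The only slightly delicate point is that the variance is measured around the perturbed mean $\widetilde{p_{\mathrm{sym}}^{(k_i)}o_i}$ while the scale is the ideal $p_{\mathrm{sym}}^{(k_i)}$. Bounding the variance by the raw second moment sidesteps this.

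Finally, I choose $L_i=\lceil 6(e+1)/\varepsilon^2\rceil$ and $R_i=2\lceil\frac94\ln\frac1\delta-\frac12\rceil+1$. Chebyshev gives each group mean $Y_{i,r}$ a failure probability of at most $1/6$ for deviation $\varepsilon p_{\mathrm{sym}}^{(k_i)}$ from $\widetilde{p_{\mathrm{sym}}^{(k_i)}o_i}$. The Hoeffding argument for the odd median from Lemma~\ref{lem:estimate-symmetric-probability} then gives failure probability $\le e^{-2R_i/9}\le\delta$. Each sample uses $m_i$ copies, so $N_i=m_iL_iR_i$, which matches the stated count. I expect no real obstacle; the main care point is the second-moment perturbation step.
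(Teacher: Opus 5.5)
Your proposal is correct and follows the paper's proof essentially step for step. It uses the same unbiased sample $\widetilde X=a_\gamma^{(k_i)}\widetilde W$ and the same operator $A_i=\sum_\gamma(a_\gamma^{(k_i)})^2\Pi_\gamma$. It applies the same trace-distance perturbation to get $\mathbb{E}[\widetilde X^2]\le e\,(p_{\mathrm{sym}}^{(k_i)})^2+\epsilon_{\mathrm{Sch}}^i\le(e+1)(p_{\mathrm{sym}}^{(k_i)})^2$, and the same median-of-means parameters $L_i,R_i$ yielding $N_i=m_iL_iR_i$.
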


\begin{proof}
For repetition $(r,\ell)$, let
\begin{equation}
\widetilde X_{i,r,\ell}
:=
a_{\gamma_{i,r,\ell}}^{(k_i)}
\widetilde W_{i,r,\ell},
\qquad
\mathbb{E}[\widetilde X_{i,r,\ell}]
=
\widetilde{p_{\mathrm{sym}}^{(k_i)}o_i}.
\end{equation}
Here $|\widetilde W_{i,r,\ell}|\leq1$. Define
\begin{equation}
A_i
:=
\sum_{\gamma\vdash m_i}
\left(a_\gamma^{(k_i)}\right)^2\Pi_\gamma,
\qquad
0\preceq A_i\preceq I.
\end{equation}
Let $\rho_{\mathrm{Sch}}^i$ and
$\widetilde\rho_{\mathrm{Sch}}^i$ denote the ideal and approximate
post-Schur states, respectively, as in the proof of
Lemma~\ref{lem:joint-estimation-schur-error}. Using
$\|\widetilde{\rho}_{\mathrm{Sch}}^i-\rho_{\mathrm{Sch}}^i\|_1
\leq2\epsilon_{\mathrm{Sch}}^i$, the second-moment bound from
Lemma~\ref{lem:estimate-symmetric-contribution}, and
$p_{\mathrm{sym}}^{(k_i)}\geq1/k_i!$, we obtain
\begin{align}
\mathbb{E}\!\left[\widetilde X_{i,r,\ell}^{\,2}\right]
&\leq
\Tr\!\left[\widetilde{\rho}_{\mathrm{Sch}}^iA_i\right]
\leq
e\left(p_{\mathrm{sym}}^{(k_i)}\right)^2
+\epsilon_{\mathrm{Sch}}^i
\nonumber\\
&\leq
(e+1)\left(p_{\mathrm{sym}}^{(k_i)}\right)^2.
\end{align}
Thus
$\operatorname{Var}(\widetilde X_{i,r,\ell})
\leq(e+1)(p_{\mathrm{sym}}^{(k_i)})^2$.

Choose
\begin{equation}
L_i
:=
\left\lceil\frac{6(e+1)}{\varepsilon^2}\right\rceil,
\qquad
R_i
:=
2\left\lceil
\frac94\ln\frac1\delta-\frac12
\right\rceil+1,
\end{equation}
and define
\begin{equation}
\widetilde Y_{i,r}
:=
\frac1{L_i}
\sum_{\ell=1}^{L_i}
\widetilde X_{i,r,\ell}.
\end{equation}
Define
\begin{equation}
\widehat{p_{\mathrm{sym}}^{(k_i)}o_i}
:=
\operatorname{median}\!\left(
\widetilde Y_{i,1},\ldots,\widetilde Y_{i,R_i}
\right).
\end{equation}
The median-of-means argument from
Lemma~\ref{lem:estimate-symmetric-probability} then yields
Eq.~\eqref{eq:approximate-schur-joint-mom}. Since each sample uses
$m_i=k_i^2(k_i-1)$ copies of $\rho$, the total copy complexity is
$N_i=m_iL_iR_i$, as stated.
\end{proof}

\subsection{Supporting lemmas}

In this section we present necessary technical lemmas.

The union bound is useful for our analysis.
\begin{lemma}[Union bound for a sum of random variables]
\label{lem:union-bound}
Let $X_1,\ldots,X_k$ be arbitrary random variables, and let
$a_1,\ldots,a_k$ be positive real numbers. Then
\begin{equation}
\Pr\!\left[
\sum_{i=1}^k X_i
>
\sum_{i=1}^k a_i
\right]
\leq
\sum_{i=1}^k
\Pr\!\left[
X_i>a_i
\right].
\end{equation}
\end{lemma}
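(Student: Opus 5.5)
The statement to prove is the union bound for sums: $\Pr[\sum_i X_i>\sum_i a_i]\le\sum_i\Pr[X_i>a_i]$ for arbitrary (possibly dependent) random variables. The plan is to prove an event inclusion and then apply subadditivity of probability; no independence or integrability is needed.

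First I will show that
\begin{equation}
\left\{\sum_{i=1}^k X_i>\sum_{i=1}^k a_i\right\}\subseteq\bigcup_{i=1}^k\{X_i>a_i\}.
\end{equation}
I argue by contraposition. Take an outcome lying in none of the events $\{X_i>a_i\}$. Then $X_i\le a_i$ holds for every $i$. Summing these $k$ inequalities gives $\sum_i X_i\le\sum_i a_i$, so the outcome is not in the left-hand event.

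Second, I apply monotonicity of probability to this inclusion. I then bound the probability of the union by the sum of the individual probabilities, which is countable (here finite) subadditivity, i.e. the elementary union bound. This yields the claim directly.

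There is no real obstacle. The only point to state explicitly is that the argument is purely pointwise, which is exactly why it applies to the correlated estimators used later. The positivity of the $a_i$ is not actually needed.
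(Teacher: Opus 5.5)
Your proof is correct and follows the paper's argument: the pointwise inclusion $\{\sum_i X_i>\sum_i a_i\}\subseteq\bigcup_i\{X_i>a_i\}$, obtained by summing the inequalities $X_i\le a_i$, followed by the union bound. Your remark that the positivity of the $a_i$ is not needed is also correct.
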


\begin{proof}
If $X_i\leq a_i$ holds for every $i$, then
\begin{equation}
\sum_{i=1}^k X_i
\leq
\sum_{i=1}^k a_i.
\end{equation}
Therefore,
\begin{equation}
\left\{
\sum_{i=1}^k X_i
>
\sum_{i=1}^k a_i
\right\}
\subseteq
\bigcup_{i=1}^k
\left\{
X_i>a_i
\right\}.
\end{equation}
Applying the union bound gives
\begin{equation}
\Pr\!\left[
\sum_{i=1}^k X_i
>
\sum_{i=1}^k a_i
\right]
\leq
\sum_{i=1}^k
\Pr\!\left[
X_i>a_i
\right].
\end{equation}
\end{proof}

We also need to bound the post-selection probability.
\begin{lemma}[Bounds on the post-selection probability]
\label{lem:weighted-symmetric-probability}
Let $\rho$ be a density operator on a $D$-dimensional Hilbert space $\mathcal{H}$, with eigenvalues $\beta_1,\ldots,\beta_D$ and the largest eigenvalue $\beta_{\mathrm{max}}$.
Then for integer $k\geq1$,
\begin{equation}
\mathrm{Tr}\!\left[
\rho^{\otimes k}\Pi_{\mathrm{sym}}^{(k)}
\right]
\leq
\frac{1}{k!}
\exp\left[
\frac{k(k-1)\beta_{\mathrm{max}}}{2}
\right],
\label{eq:weighted-symmetric-probability-bound}
\end{equation}
where $\Pi_{\mathrm{sym}}^{(k)}
=
\frac{1}{k!}
\sum_{\pi\in\mathcal{S}_{k}}\pi$
is the projector onto the totally symmetric subspace, and each
$\pi\in\mathcal{S}_{k}$ denotes the unitary representation of a permutation in the symmetric group $\mathcal{S}_{k}$ acting on the $k$ copies of $\mathcal{H}$.
\end{lemma}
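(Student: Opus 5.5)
We use the eigenbasis expression already used in the proof of Lemma~\ref{lem:relative-concentration}: $p_{\mathrm{sym}}^{(k)}=\mathrm{Tr}[\rho^{\otimes k}\Pi_{\mathrm{sym}}^{(k)}]=h_k(\beta_1,\ldots,\beta_D)$, the complete homogeneous symmetric polynomial. The plan is to compare $h_k$ with $e_1^k/k!=1/k!$ (recall $\sum_i\beta_i=1$), working through generating functions.

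\emph{Step 1: generating function.} We write
\begin{equation}
\sum_{k\ge0}h_k t^k=\prod_{i=1}^D\frac{1}{1-\beta_i t}=\exp\Big(\sum_{j\ge1}\frac{p_j t^j}{j}\Big),
\end{equation}
where $p_j=\sum_i\beta_i^j$. Since $p_1=1$, this equals $e^{t}\exp\big(\sum_{j\ge2}p_jt^j/j\big)$. Every power-series coefficient here is nonnegative, so we may compare coefficients termwise against a majorant.

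\emph{Step 2: majorize the correction.} We use $p_j\le\beta_{\max}^{j-1}p_1=\beta_{\max}^{j-1}$. The correction factor is therefore coefficientwise dominated by $\exp\big(\sum_{j\ge2}\beta_{\max}^{j-1}t^j/j\big)$. Extracting $[t^k]$ from $e^t\cdot F(t)$ then gives $h_k\le\sum_{\ell}\frac{1}{(k-\ell)!}[t^\ell]F$. This route is workable but messy.

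\emph{Cleaner alternative, which we will actually use.} Expand $\Pi_{\mathrm{sym}}^{(k)}=\frac{1}{k!}\sum_{\pi}\pi$, so that
\begin{equation}
p_{\mathrm{sym}}^{(k)}=\frac{1}{k!}\sum_{\pi\in\mathcal S_k}\prod_{c\in\mathrm{cycles}(\pi)}p_{|c|}.
\end{equation}
With $p_1=1$ and $p_j\le\beta_{\max}^{j-1}$, each term is at most $\beta_{\max}^{k-\#\mathrm{cycles}(\pi)}$. Hence
\begin{equation}
p_{\mathrm{sym}}^{(k)}\le\frac1{k!}\sum_\pi x^{k-c(\pi)},\qquad x=\beta_{\max}.
\end{equation}
The generating function for cycle counts is $\sum_\pi y^{c(\pi)}=y(y+1)\cdots(y+k-1)$. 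Setting $y=1/x$ and multiplying by $x^k$ gives $\sum_\pi x^{k-c(\pi)}=\prod_{j=0}^{k-1}(1+jx)$.

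\emph{Step 3: exponentiate.} Using $1+jx\le e^{jx}$, we get $\prod_{j=0}^{k-1}(1+jx)\le\exp\big(x\,k(k-1)/2\big)$. This yields Eq.~\eqref{eq:weighted-symmetric-probability-bound}.

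The only step needing care is the cycle-count identity $\sum_{\pi\in\mathcal S_k}y^{c(\pi)}=y^{(k)}$ (the rising factorial, i.e.\ unsigned Stirling numbers of the first kind). It is standard and can be shown by induction on $k$: inserting element $k$ either creates a new fixed point (factor $y$) or is placed after one of the $k-1$ existing elements (factor $k-1$). We also need $p_j\le\beta_{\max}^{j-1}$, which holds because $\beta_i^j\le\beta_{\max}^{j-1}\beta_i$. Neither step should be a real obstacle.
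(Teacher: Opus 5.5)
Your proof is correct and is essentially the paper's: both expand $\Pi_{\mathrm{sym}}^{(k)}$ over permutations into cycle products of power sums $p_j=\mathrm{Tr}\,\rho^j$, use $p_j\le\beta_{\max}^{j-1}$ to reach $\prod_{j=1}^{k-1}(1+j\beta_{\max})$, and finish with $1+x\le e^x$. The paper actually takes the generating-function route you set aside, but without splitting off $e^t$: it applies $p_m\le\beta_{\max}^{m-1}$ for every $m\ge1$ inside $\log\sum_l h_l t^l=\sum_m p_m t^m/m$, which gives the closed-form coefficientwise majorant $(1-\beta_{\max}t)^{-1/\beta_{\max}}$ with $t^k$ coefficient $\frac{1}{k!}\prod_{j=0}^{k-1}(1+j\beta_{\max})$; your permutation-by-permutation bound combined with $\sum_{\pi}y^{c(\pi)}=y(y+1)\cdots(y+k-1)$ is the same exponential formula made explicit, and it avoids having to justify that exponentiation preserves coefficientwise inequalities.
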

\begin{proof}
    
We use $\pi$ to denote both a group element and its unitary
representation. Let $\pi\in\calS_k$ have cycle structure described by
the nonnegative integers $\nu_{\pi,1},\ldots,\nu_{\pi,k}$; that is,
$\pi$ has $\nu_{\pi,1}$ one-cycles, $\nu_{\pi,2}$ two-cycles, and so
on. These integers satisfy
$\nu_{\pi,1}+2\nu_{\pi,2}+\cdots+k\nu_{\pi,k}=k$.
Then we have
\begin{align}
    \mathrm{Tr}\!\left[\rho^{\otimes k}\pi\right]
&=\prod_{m=1}^k\left(\mathrm{Tr}\!\left[\rho^m\right]\right)^{\nu_{\pi,m}}
    =
    \prod_{m=1}^k\left(\sum_{i=1}^D \beta_i^m\right)^{\nu_{\pi,m}},
\end{align}
where $\beta_1,\ldots,\beta_D$ denote the eigenvalues of $\rho$. The sum of all symmetric group elements yields a closed form,
\begin{align}
    \sum_{\pi\in\calS_k}\mathrm{Tr}\!\left[\rho^{\otimes k}\pi\right]
    =
    k!h_k(\beta_1,\ldots,\beta_D),
\end{align}
where
\begin{align}
    h_k(x_1,\ldots,x_D)
    =
    \sum_{1\leq i_1\leq\cdots\leq i_k\leq D}
    \prod_{m=1}^k x_{i_m}
\end{align}
is the complete symmetric polynomial of degree $k$. It has a well-known generating function
\begin{align}
    \log\left(
    \sum_{l=0}^{\infty}
    h_l(\beta_1,\ldots,\beta_D)t^l
    \right)
    &=
    \sum_{m=1}^\infty
    \frac{t^m}{m}
    \sum_{i=1}^D\beta_i^m
    \notag\\
    &\leq 
    \sum_{m=1}^\infty
    \frac{t^m}{m}
    \beta_{\mathrm{max}}^{m-1}
    \sum_{i=1}^D\beta_i
    \notag\\
    &=
    \sum_{m=1}^\infty
    \frac{t^m}{m}
    \beta_{\mathrm{max}}^{m-1}
    \notag\\
    &=
    -\frac{1}{\beta_{\mathrm{max}}}
    \log\left(1-\beta_{\mathrm{max}}t\right).
\end{align}
The preceding inequality holds coefficientwise as a formal power series.
Exponentiation preserves coefficientwise inequalities between series with
nonnegative coefficients and zero constant term. Therefore,
\begin{align}
    \sum_{l=0}^{\infty}
    h_l(\beta_1,\ldots,\beta_D)t^l
    &\leq
    \frac{1}
    {\left(1-\beta_{\mathrm{max}}t\right)^{1/\beta_{\mathrm{max}}}}
    \notag\\
    &=
    \sum_{l=0}^\infty
    \frac{
    \frac{1}{\beta_{\mathrm{max}}}
    \left(\frac{1}{\beta_{\mathrm{max}}}+1\right)
    \cdots
    \left(\frac{1}{\beta_{\mathrm{max}}}+l-1\right)
    }{l!}
    (t\beta_{\mathrm{max}})^l
\end{align}
again coefficientwise. Hence,
\begin{align}
    h_k(\beta_1,\ldots,\beta_D)
    \leq 
    \frac{
    \frac{1}{\beta_{\mathrm{max}}}
    \left(\frac{1}{\beta_{\mathrm{max}}}+1\right)
    \cdots
    \left(\frac{1}{\beta_{\mathrm{max}}}+k-1\right)
    }{k!}
    \cdot
    \beta_{\mathrm{max}}^k.
\end{align}
This gives
\begin{align}
    \mathrm{Tr}\!\left[\rho^{\otimes k}\sum_{\pi\in\calS_k}\pi\right]
    &\leq 
    \frac{1}{\beta_{\mathrm{max}}}
    \left(\frac{1}{\beta_{\mathrm{max}}}+1\right)
    \cdots
    \left(\frac{1}{\beta_{\mathrm{max}}}+k-1\right)
    \cdot
    \beta_{\mathrm{max}}^k
    \notag\\
    &=
    \prod_{j=1}^{k-1}
    \left(1+j\beta_{\mathrm{max}}\right)
    \notag\\
    &\leq
    e^{\frac{k(k-1)\beta_{\mathrm{max}}}{2}}.
\end{align}
The last inequality is obtained by $1+x\leq e^x$.
\end{proof}

\subsection{Sample and circuit complexity analysis}

Combining the preceding results, we now give the sample and circuit
complexities of the expectation-value estimation algorithm. We consider an
$n$-qubit system, so that $D=2^n$, and retain the definition
$\M=\lfloor\lVert\rho\rVert_\infty^{-1}\rfloor$.

\begin{theorem}
[Sample and circuit complexities of operator expectation value estimation algorithm]
\label{Sample and circuit complexities of operator expectation value estimation algorithm}
Let $\rho$ be an $n$-qubit density matrix, let $k\geq2$, and suppose
$12(k-1)^2\leq \M$. Let $O_k$ be a Hermitian operator on
$(\mathbb C^{2^n})^{\otimes k}$ satisfying
$\lVert O_k\rVert_\infty\leq1$, and let $0<\epsilon_{\mathrm t}<1$ satisfy
\begin{equation}
\epsilon_{\mathrm t}
\geq
2C_{\add}
\exp\left[
-\frac{\M}{24(k-1)^2}
\right],
\label{eq:bound-epsilon}
\end{equation}
where $C_{\add}=e^{11/12}$ is the constant in
Corollary~\ref{coro:minimal-additive-error}. For any $0<\delta<1$,
assuming an efficient implementation of the measurement of $O_k$, there
is a quantum algorithm that estimates
$\trace{}{\chi_{\mathrm{Sc}}^{(k)}(\rho)O_k}$ up to additive error
$\epsilon_{\mathrm t}$ with success probability at least $1-\delta$.

A sufficient total number of copies of $\rho$ is
\begin{align}
N_{\mathrm{total}}^{\mathrm{suff}}
=
\mathcal O\Bigg(
k^3
\log^4\left(
\frac{2C_{\add}}{\epsilon_{\mathrm t}}
\right)
\left[
\log\frac{1}{\delta}
+
\log\log\left(
\frac{2C_{\add}}{\epsilon_{\mathrm t}}
\right)
\right]
\left(
\frac{2C_{\add}}{\epsilon_{\mathrm t}}
\right)^5
\Bigg).
\label{eq:sample-complexity-final-epsilon}
\end{align}
Each independent estimator sample can be implemented with circuit
complexity
\begin{equation}
\operatorname{poly}\left(
n,k,\log\frac{1}{\epsilon_{\mathrm t}}
\right).
\end{equation}
Consequently, a sufficient total circuit complexity is
\begin{equation}
C_{\mathrm{total}}^{\mathrm{suff}}
=
\operatorname{poly}\!\left(
n,k,\frac{1}{\epsilon_{\mathrm t}},
\log\frac{1}{\delta}
\right).
\label{eq:total-circuit-complexity}
\end{equation}
\end{theorem}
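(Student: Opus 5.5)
Split the target error $\epsilon_{\mathrm t}$ into two halves: approximation error and statistical error. First invoke Corollary~\ref{coro:minimal-additive-error} with $\epsilon=\epsilon_{\mathrm t}/2$. The hypothesis \eqref{eq:bound-epsilon} is exactly what that corollary needs, so $\lambda=\max\{\lfloor 2\log(2C_{\add}/\epsilon_{\mathrm t})\rfloor,1\}$ gives $\|\chi_{\mathrm{Sc}}^{(k)}-\chi_{\mathrm{PA}}^{(k)}(\lambda)\|_1\le\epsilon_{\mathrm t}/2$. Since $\|O_k\|_\infty\le1$, this yields $|\Tr[(\chi_{\mathrm{Sc}}^{(k)}-\chi_{\mathrm{PA}}^{(k)})O_k]|\le\epsilon_{\mathrm t}/2$. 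It then suffices to estimate $\sum_{i=1}^\lambda \binom{\lambda}{i}(-1)^{i-1}c_i\,p_{\mathrm{sym}}^{(k_i)}o_i$ to within $\epsilon_{\mathrm t}/2$ with probability at least $1-\delta$.

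For each $i$, run the estimator of Lemma~\ref{lem:approximate-schur-joint-mom} with relative precision $\varepsilon_i$, failure probability $\delta/\lambda$, and Schur error $\epsilon_{\mathrm{Sch}}^i$. By Lemma~\ref{lem:joint-estimation-schur-error}, the error of the $i$-th term is at most $\varepsilon_i p_{\mathrm{sym}}^{(k_i)}+4\epsilon_{\mathrm{Sch}}^i$. Multiplied by its coefficient, it becomes $\binom{\lambda}{i}c_i(\varepsilon_i p_{\mathrm{sym}}^{(k_i)}+4\epsilon^i_{\mathrm{Sch}})$. The key step is to control $c_ip_{\mathrm{sym}}^{(k_i)}$. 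Since $c_i=D^{k_i}k_i!/D^{(k_i)}\le k_i!$, Lemma~\ref{lem:weighted-symmetric-probability} gives $c_ip_{\mathrm{sym}}^{(k_i)}\le \exp[k_i(k_i-1)\|\rho\|_\infty/2]$. With $k_i\le\lambda(k-1)+1$, $\lambda\le \M/(12(k-1)^2)$ and $\|\rho\|_\infty<1/\M$ (up to floor effects), the exponent is $O(\lambda^2(k-1)^2/\M)=O(\lambda)$. More carefully, it is $\le \lambda/12+O(1)$. So $c_ip_{\mathrm{sym}}^{(k_i)}\le e^{O(1)}e^{\lambda/12}$.

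Hence $\sum_i\binom{\lambda}{i}c_ip_{\mathrm{sym}}^{(k_i)}\le e^{O(1)}2^\lambda e^{\lambda/12}$. Since $\lambda\le 2\log(2C_{\add}/\epsilon_{\mathrm t})$, this is $\mathrm{poly}(1/\epsilon_{\mathrm t})$, namely at most $(2C_{\add}/\epsilon_{\mathrm t})^{2\log 2+1/6}\lesssim(2C_{\add}/\epsilon_{\mathrm t})^{3/2}$. I would choose a uniform $\varepsilon_i=\varepsilon$ with $\varepsilon\sum_i\binom{\lambda}{i}c_ip_{\mathrm{sym}}^{(k_i)}\le\epsilon_{\mathrm t}/4$, so $\varepsilon=\Theta(\epsilon_{\mathrm t}^{5/2})$, up to constants absorbed into $(2C_{\add}/\epsilon_{\mathrm t})$. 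I would also choose $\epsilon_{\mathrm{Sch}}^i\le\min\{1/(k_i!)^2,\ \epsilon_{\mathrm t}/(16\lambda\binom{\lambda}{i}c_i)\}$. Since $c_i\le k_i!$, this requires only $\log(1/\epsilon^i_{\mathrm{Sch}})=\mathrm{poly}(k,\lambda,\log(1/\epsilon_{\mathrm t}))$. By Lemma~\ref{lem:union-bound} and a union bound over the $\lambda$ failure events, the total error is at most $\epsilon_{\mathrm t}$ with probability at least $1-\delta$.

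Copy count: $N_i=m_iL_iR_i$ with $m_i\le k_i^3\le(\lambda k)^3$, $L_i=O(1/\varepsilon^2)=O((2C_{\add}/\epsilon_{\mathrm t})^5)$, and $R_i=O(\log(\lambda/\delta))$. Summing over $i\le\lambda$ gives $O(\lambda^4k^3(2C_{\add}/\epsilon_{\mathrm t})^5[\log(1/\delta)+\log\lambda])$. Substituting $\lambda=O(\log(2C_{\add}/\epsilon_{\mathrm t}))$ yields \eqref{eq:sample-complexity-final-epsilon}. Circuit cost per sample is one approximate Schur transform on $m_i=\mathrm{poly}(k,\log(1/\epsilon_{\mathrm t}))$ copies, via Lemma~\ref{lem:time-schur}, plus the measurement-feedback replacement from Theorem~\ref{thm:efficient-projection} and the $O_k$ measurement. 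This gives $\mathrm{poly}(n,k,\log(1/\epsilon_{\mathrm t}))$ per sample; multiplying by the number of samples gives \eqref{eq:total-circuit-complexity}.

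The main obstacle is the coefficient blow-up bound: verifying that $\sum_i\binom{\lambda}{i}c_ip_{\mathrm{sym}}^{(k_i)}$ is only polynomial in $1/\epsilon_{\mathrm t}$ with exponent small enough to produce the fifth power. This is where the constraint $\lambda\le\lambda_*$ and the exponential bound of Lemma~\ref{lem:weighted-symmetric-probability} must be combined carefully.
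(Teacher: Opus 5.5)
Your route is the same as the paper's. You spend half the budget on the polynomial approximation via Corollary~\ref{coro:minimal-additive-error} and a quarter each on statistical and Schur-transform error. You bound $c_ip_{\mathrm{sym}}^{(k_i)}\le \exp[k_i(k_i-1)/(2\M)]$ using $c_i\le k_i!$ and Lemma~\ref{lem:weighted-symmetric-probability}; here $\|\rho\|_\infty\le 1/\M$ holds exactly, so no floor caveat is needed. You then run Lemma~\ref{lem:approximate-schur-joint-mom} with failure probability $\delta/\lambda$, take a union bound, and sum $N_i=m_iL_iR_i$ over $i$.

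The one problem is the arithmetic at exactly the step you single out as the crux. Write $x:=2C_{\add}/\epsilon_{\mathrm t}$. Your estimate $c_ip_{\mathrm{sym}}^{(k_i)}\le e^{O(1)}e^{\lambda/12}$, together with $\lambda\le 2\ln x$, only gives $\sum_i\binom{\lambda}{i}c_ip_{\mathrm{sym}}^{(k_i)}=\mathcal O(x^{2\ln 2+1/6})$. But $2\ln2+1/6\approx1.55>3/2$, so the claimed $\mathcal O(x^{3/2})$ does not follow. The resulting $1/\varepsilon^2$ then scales as $x^{2+4\ln 2+1/3}\approx x^{5.11}$, which misses the claimed fifth power.

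The margin is thin, so you need the sharper bound your hypotheses already provide. Since $\lambda(k-1)^2\le\M/12$,
\[
\frac{k_\lambda(k_\lambda-1)}{2\M}=\frac{\lambda^2(k-1)^2+\lambda(k-1)}{2\M}\le\frac{\lambda}{24}+\frac{1}{24}.
\]
The sum is then $\mathcal O(x^{2\ln2+1/12})$ with $2\ln 2+1/12\approx1.47<3/2$. Hence $1/\varepsilon^2=\mathcal O(x^{13/6+2\ln 4})$ with $13/6+2\ln4\approx4.94<5$.

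This is precisely the paper's exponent count: it bounds $\exp[k_\lambda(k_\lambda-1)/\M]\le e^{1/12}x^{1/6}$ and $(2^\lambda-1)^2\le x^{2\ln4}$. With this correction your argument goes through. The remaining differences are cosmetic: you choose a uniform $\varepsilon$ from an upper bound depending only on $\lambda$, and your Schur precision uses $\lambda\binom{\lambda}{i}$ in place of $2^\lambda-1$.
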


\begin{proof}
We first allocate an error budget $\epsilon_{\mathrm t}/2$ to the
polynomial approximation. By
Corollary~\ref{coro:minimal-additive-error},
\begin{align}
\left|
\trace{}{
\left(
\chi_{\mathrm{PA}}^{(k)}(\lambda)
-
\chi_{\mathrm{Sc}}^{(k)}
\right)
O_k
}
\right|
\leq
\left\|
\chi_{\mathrm{PA}}^{(k)}(\lambda)
-
\chi_{\mathrm{Sc}}^{(k)}
\right\|_1
\lVert O_k\rVert_\infty.
\end{align}
Since $\lVert O_k\rVert_\infty\leq1$, it suffices to choose
\begin{equation}
\lambda
=
\max\left\{
\left\lfloor
2\log\left(
\frac{2C_{\add}}{\epsilon_{\mathrm t}}
\right)
\right\rfloor,
1
\right\}.
\label{eq:bound-lambda}
\end{equation}
Then
\begin{equation}
\left|
\trace{}{
\left(
\chi_{\mathrm{PA}}^{(k)}(\lambda)
-
\chi_{\mathrm{Sc}}^{(k)}
\right)
O_k
}
\right|
\leq
\frac{\epsilon_{\mathrm t}}{2}.
\end{equation}

It remains to estimate
\begin{equation}
\trace{}{\chi_{\mathrm{PA}}^{(k)}(\lambda)O_k}
=
\sum_{i=1}^{\lambda}
\binom{\lambda}{i}
(-1)^{i-1}
c_i p_{\mathrm{sym}}^{(k_i)} o_i.
\end{equation}
We divide the remaining error budget equally between the statistical
estimation error and the error induced by the approximate Schur
transforms.

For the latter, Lemma~\ref{lem:joint-estimation-schur-error} gives
\begin{equation}
\left|
\widetilde{p_{\mathrm{sym}}^{(k_i)}o_i}
-
p_{\mathrm{sym}}^{(k_i)} o_i
\right|
\leq
4\epsilon_{\mathrm{Sch}}^i
\lVert O_k\rVert_\infty
\leq
4\epsilon_{\mathrm{Sch}}^i.
\end{equation}
Hence the total contribution of the Schur-transform errors is bounded by
\begin{equation}
4
\sum_{i=1}^{\lambda}
\binom{\lambda}{i}
c_i\epsilon_{\mathrm{Sch}}^i.
\end{equation}
To make this contribution at most $\epsilon_{\mathrm t}/4$, it
suffices to require
\begin{equation}
\epsilon_{\mathrm{Sch}}^i
\leq
\frac{\epsilon_{\mathrm t}}
{16(2^\lambda-1)c_i},
\end{equation}
since
\begin{equation}
\sum_{i=1}^{\lambda}
\binom{\lambda}{i}
=
2^\lambda-1.
\end{equation}
In addition, Lemma~\ref{lem:approximate-schur-joint-mom} requires
$\epsilon_{\mathrm{Sch}}^i\leq1/(k_i!)^2$. We therefore choose
\begin{equation}
\epsilon_{\mathrm{Sch}}^i
=
\min\left\{
\frac{\epsilon_{\mathrm t}}
{16(2^\lambda-1)c_i},
\frac{1}{(k_i!)^2}
\right\}.
\label{eq:schur-precision-choice}
\end{equation}

We next control the statistical estimation error.
For each $i$, let $\varepsilon_i$ denote the relative statistical
accuracy parameter in
Lemma~\ref{lem:approximate-schur-joint-mom}, so that
\begin{equation}
\left|
\widehat{p_{\mathrm{sym}}^{(k_i)}o_i}
-
\widetilde{p_{\mathrm{sym}}^{(k_i)}o_i}
\right|
\leq
\varepsilon_i p_{\mathrm{sym}}^{(k_i)}.
\end{equation}
By Lemma~\ref{lem:weighted-symmetric-probability} and $c_i\leq k_i!$,
\begin{equation}
c_i p_{\mathrm{sym}}^{(k_i)}
\leq
\exp\left[
\frac{k_i(k_i-1)}{2\M}
\right]
\leq
\exp\left[
\frac{k_\lambda(k_\lambda-1)}{2\M}
\right].
\end{equation}
Therefore, the total statistical error is bounded by
\begin{align}
&\sum_{i=1}^{\lambda}
\binom{\lambda}{i}
c_i
\left|
\widehat{p_{\mathrm{sym}}^{(k_i)}o_i}
-
\widetilde{p_{\mathrm{sym}}^{(k_i)}o_i}
\right|\leq
\sum_{i=1}^{\lambda}
\binom{\lambda}{i}
\varepsilon_i
\exp\left[
\frac{k_\lambda(k_\lambda-1)}{2\M}
\right].
\end{align}
We choose the same relative accuracy for all $i$,
\begin{equation}
\varepsilon_i
:=
\frac{\epsilon_{\mathrm t}}
{4(2^\lambda-1)}
\exp\left[
-\frac{k_\lambda(k_\lambda-1)}{2\M}
\right],
\label{eq:relative-statistical-error-choice}
\end{equation}
which makes the total statistical contribution at most
$\epsilon_{\mathrm t}/4$.

Applying Lemma~\ref{lem:approximate-schur-joint-mom} to each contribution
with relative accuracy $\varepsilon_i$ and failure probability
$\delta/\lambda$ gives
\begin{equation}
\Pr\left[
\left|
\widehat{p_{\mathrm{sym}}^{(k_i)}o_i}
-
\widetilde{p_{\mathrm{sym}}^{(k_i)}o_i}
\right|
>
\varepsilon_i p_{\mathrm{sym}}^{(k_i)}
\right]
\leq
\frac{\delta}{\lambda}.
\label{eq:individual-mom-error}
\end{equation}
The corresponding number of copies is
\begin{align}
N_i
=
k_i^2(k_i-1)
\left\lceil
\frac{6(e+1)}{\varepsilon_i^2}
\right\rceil
\left[
2\left\lceil
\frac94
\ln\frac{\lambda}{\delta}
-\frac12
\right\rceil+1
\right].
\label{eq:individual-sample-complexity}
\end{align}
By the union bound, with probability at least $1-\delta$,
Eq.~\eqref{eq:individual-mom-error} holds simultaneously for all
$1\leq i\leq\lambda$.

Combining the above bounds, we obtain
\begin{align}
&\left|
\sum_{i=1}^{\lambda}
\binom{\lambda}{i}
(-1)^{i-1}
c_i
\widehat{p_{\mathrm{sym}}^{(k_i)}o_i}
-
\trace{}{\chi_{\mathrm{PA}}^{(k)}(\lambda)O_k}
\right|
\nonumber\\
&\leq
\sum_{i=1}^{\lambda}
\binom{\lambda}{i}
c_i
\left|
\widehat{p_{\mathrm{sym}}^{(k_i)}o_i}
-
\widetilde{p_{\mathrm{sym}}^{(k_i)}o_i}
\right|+
\sum_{i=1}^{\lambda}
\binom{\lambda}{i}
c_i
\left|
\widetilde{p_{\mathrm{sym}}^{(k_i)}o_i}
-
p_{\mathrm{sym}}^{(k_i)} o_i
\right|
\nonumber\\
&\quad\leq
\frac{\epsilon_{\mathrm t}}{4}
+
\frac{\epsilon_{\mathrm t}}{4}
=
\frac{\epsilon_{\mathrm t}}{2}.
\end{align}
Combining this with the polynomial-approximation error, we
obtain a total additive error at most $\epsilon_{\mathrm t}$ with
probability at least $1-\delta$.

\textbf{Sample complexity.}---
By Eq.~\eqref{eq:individual-sample-complexity}, the total number of
copies of $\rho$ is
\begin{equation}
N_{\mathrm{total}}
=
\sum_{i=1}^{\lambda}N_i.
\end{equation}
Since
\begin{equation}
k_i^2(k_i-1)
\leq
k_\lambda^3,
\end{equation}
and
\begin{equation}
\frac{1}{\varepsilon_i^2}
=
\frac{16(2^\lambda-1)^2}{\epsilon_{\mathrm t}^2}
\exp\left[
\frac{k_\lambda(k_\lambda-1)}{\M}
\right],
\end{equation}
we obtain
\begin{align}
N_{\mathrm{total}}^{\mathrm{suff}}
=
\mathcal O\left(
\lambda k_\lambda^3
\frac{(2^\lambda-1)^2}{\epsilon_{\mathrm t}^2}
\exp\left[
\frac{k_\lambda(k_\lambda-1)}{\M}
\right]
\log\frac{\lambda}{\delta}
\right).
\label{eq:sample-complexity}
\end{align}

We next express this bound in terms of the target error
$\epsilon_{\mathrm t}$. Equation~\eqref{eq:bound-epsilon} implies
\begin{equation}
\log\left(
\frac{2C_{\add}}{\epsilon_{\mathrm t}}
\right)
\leq
\frac{\M}{24(k-1)^2}.
\end{equation}
Moreover, by Eq.~\eqref{eq:bound-lambda},
\begin{equation}
\lambda
\leq
2\log\left(
\frac{2C_{\add}}{\epsilon_{\mathrm t}}
\right),
\end{equation}
and hence
\begin{equation}
k_\lambda
=
\lambda(k-1)+1
\leq
2(k-1)
\log\left(
\frac{2C_{\add}}{\epsilon_{\mathrm t}}
\right)
+1.
\label{eq:lambda-range-bound}
\end{equation}
It follows that
\begin{align}
\frac{k_\lambda(k_\lambda-1)}{\M}
&\leq
\frac{
\left[
2(k-1)
\log\left(
\frac{2C_{\add}}{\epsilon_{\mathrm t}}
\right)
+1
\right]
2(k-1)
\log\left(
\frac{2C_{\add}}{\epsilon_{\mathrm t}}
\right)
}{\M}
\nonumber\\
&\leq
\frac{1}{6}
\log\left(
\frac{2C_{\add}}{\epsilon_{\mathrm t}}
\right)
+
\frac{1}{12},
\end{align}
and therefore
\begin{equation}
\exp\left[
\frac{k_\lambda(k_\lambda-1)}{\M}
\right]
\leq
e^{1/12}
\left(
\frac{2C_{\add}}{\epsilon_{\mathrm t}}
\right)^{1/6}.
\end{equation}
Similarly,
\begin{equation}
(2^\lambda-1)^2
\leq
4^\lambda
\leq
\left(
\frac{2C_{\add}}{\epsilon_{\mathrm t}}
\right)^{2\log4}.
\end{equation}
Substituting these bounds into
Eq.~\eqref{eq:sample-complexity}, together with
\begin{equation}
\log\frac{\lambda}{\delta}
=
\mathcal O\left(
\log\frac1\delta
+
\log\log\frac{2C_{\add}}{\epsilon_{\mathrm t}}
\right),
\end{equation}
gives
\begin{align}
N_{\mathrm{total}}^{\mathrm{suff}}
=
\mathcal O\Bigg(
&k^3
\log^4\left(
\frac{2C_{\add}}{\epsilon_{\mathrm t}}
\right)
\left[
\log\frac1\delta
+
\log\log\left(
\frac{2C_{\add}}{\epsilon_{\mathrm t}}
\right)
\right]
\left(
\frac{2C_{\add}}{\epsilon_{\mathrm t}}
\right)^{
\frac{13}{6}+2\log4
}
\Bigg).
\end{align}
Since
\begin{equation}
\frac{13}{6}+2\log4<5,
\end{equation}
and $2C_{\add}/\epsilon_{\mathrm t}>1$, this implies
\begin{align}
N_{\mathrm{total}}^{\mathrm{suff}}
=
\mathcal O\Bigg(
&k^3
\log^4\left(
\frac{2C_{\add}}{\epsilon_{\mathrm t}}
\right)
\left[
\log\frac1\delta
+
\log\log\left(
\frac{2C_{\add}}{\epsilon_{\mathrm t}}
\right)
\right]
\left(
\frac{2C_{\add}}{\epsilon_{\mathrm t}}
\right)^5
\Bigg),
\end{align}
which proves
Eq.~\eqref{eq:sample-complexity-final-epsilon}.

\textbf{Circuit complexity.}---
For the $i$-th contribution, each estimator sample applies a Schur
transform to
\begin{equation}
m_i
=
k_i^2(k_i-1)
\end{equation}
copies of $\rho$. By Lemma~\ref{lem:time-schur}, the approximate Schur
transform can be implemented with circuit complexity
\begin{equation}
\operatorname{poly}\left(
m_i,n,\log\frac{1}{\epsilon_{\mathrm{Sch}}^i}
\right).
\end{equation}
The permutation-register replacement and the measurement of $O_k$ can
also be implemented efficiently by assumption.

It therefore remains to bound
$\log(1/\epsilon_{\mathrm{Sch}}^i)$. From
Eq.~\eqref{eq:schur-precision-choice},
\begin{align}
\log\frac{1}{\epsilon_{\mathrm{Sch}}^i}
&=
\max\left\{
\log\left[
\frac{16(2^\lambda-1)c_i}
{\epsilon_{\mathrm t}}
\right],
2\log(k_i!)
\right\}.
\end{align}
Using $c_i\leq k_i!$, we obtain
\begin{equation}
\log\frac{1}{\epsilon_{\mathrm{Sch}}^i}
=
\mathcal O\left(
\lambda
+
\log\frac1{\epsilon_{\mathrm t}}
+
\log(k_i!)
\right).
\end{equation}
Moreover,
\begin{equation}
\log(k_i!)
\leq
k_i\log k_i
\leq
k_\lambda\log k_\lambda.
\end{equation}
By Eq.~\eqref{eq:lambda-range-bound},
\begin{align}
\log(k_\lambda!)
=
\mathcal O\Bigg(
&k
\log\left(
\frac{2C_{\add}}{\epsilon_{\mathrm t}}
\right)
\left[
\log k
+
\log\log\left(
\frac{2C_{\add}}{\epsilon_{\mathrm t}}
\right)
\right]
\Bigg),
\end{align}
while
\begin{equation}
m_i
\leq
k_\lambda^3
=
\mathcal O\left(
k^3
\log^3\left(
\frac{2C_{\add}}{\epsilon_{\mathrm t}}
\right)
\right).
\end{equation}
Therefore, each independent experiment has circuit complexity
\begin{equation}
\operatorname{poly}\left(
n,k,\log\frac1{\epsilon_{\mathrm t}}
\right).
\end{equation}
There are at most $N_{\mathrm{total}}^{\mathrm{suff}}$ independent
samples, because every sample consumes at least one copy of $\rho$.
Multiplying the per-sample bound by this upper bound gives
Eq.~\eqref{eq:total-circuit-complexity}.
\end{proof}

\section{Construction of block encoding}\label{ap:Construction of block encoding}

In this section, we provide an efficient construction of the block encoding of Scrooge moment $\chi_\mathrm{Sc}^{(k)}$ when an efficient preparation circuit of a purification of $\rho$ is available. Our method is to construct the block encoding of $\chi_{\mathrm{PA}}^{(k)}(\lambda)$ as a good approximation. Below, we first review necessary algorithmic preliminaries.

\subsection{Algorithmic preliminaries}
\subsubsection{Block encoding}

Block encoding provides a convenient way to represent a generally non-unitary operator as a sub-block of a larger unitary, allowing it to be manipulated coherently within a quantum circuit.
\begin{definition}
    [Block encoding]
    \label{def:block-encoding}
    Suppose that $A$ is an $n$-qubit operator, $\alpha>0$,
$\epsilon\geq0$, and $s\in\mathbb N_0$. We say a $(n+s)$-qubit unitary $U$ is an $(\alpha,s,\epsilon)$-block-encoding of $A$, if
    \begin{align}
        \norm{A-\alpha\left(\bra{0}^{\otimes s}\otimes I\right)U\left(\ket{0}^{\otimes s}\otimes I\right)}_\infty\leq \epsilon
    \end{align}
\end{definition}

In particular, coherent access to a unitary that prepares a purification of a density matrix $\rho$ directly yields an exact block encoding of $\rho$.

\begin{lemma}[Block encoding from a purification circuit, cf.~Lemma~45 in Ref.~\cite{gilyen2019quantum}]
\label{Block encoding from a purification circuit}
Let $\sfS$ be an $n$-qubit system register and $\sfA$ an $s$-qubit ancilla register. Suppose that $U_{\sfA,\sfS}$ prepares a purification of an $n$-qubit density matrix $\rho$, namely, $U_{\sfA,\sfS}\bigl(|0\rangle_{\sfA}^{\otimes {s}}|0\rangle_{\sfS}^{\otimes n}\bigr)=|\Psi_\rho\rangle_{\sfA,\sfS}$, with $\operatorname{Tr}_{\sfA}\bigl(|\Psi_\rho\rangle\langle\Psi_\rho|\bigr)=\rho$. Introduce an additional $n$-qubit register $\sfA'$, and define $W_{\sfA,\sfS,\sfA'}=\bigl(U_{\sfA,\sfS}^{\dagger}\otimes I_{\sfA'}\bigr)\bigl(I_{\sfA}\otimes \operatorname{SWAP}_{\sfS,\sfA'}\bigr)\bigl(U_{\sfA,\sfS}\otimes I_{\sfA'}\bigr)$. Then $W_{\sfA,\sfS,\sfA'}$ is an exact $(1,n+s,0)$-block-encoding of $\rho$ acting on $\sfA'$.
\end{lemma}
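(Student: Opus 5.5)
We verify directly that $W$ has $\rho$ as its top-left block, by computing the matrix elements $(\bra{0}^{\otimes s}_{\sfA}\bra{0}^{\otimes n}_{\sfS}\otimes\bra{i}_{\sfA'})\,W\,(\ket{0}^{\otimes s}_{\sfA}\ket{0}^{\otimes n}_{\sfS}\otimes\ket{j}_{\sfA'})$ for computational basis states $\ket{i},\ket{j}$ of $\sfA'$. The point is that the outer $U$ and $U^\dagger$ each turn the reference state $\ket{0}_{\sfA}\ket{0}_{\sfS}$ into the purification $\ket{\Psi_\rho}$. The middle SWAP then pairs the $\sfS$ register of one copy of the purification with $\sfA'$. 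Since the block-encoding definition uses all $n+s$ ancilla qubits in the state $\ket{0}$, with the system on $\sfA'$, we obtain $\alpha=1$ and error $0$ as soon as the block equals $\rho$ exactly.

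\textbf{Step 1.} Apply $U_{\sfA,\sfS}\otimes I_{\sfA'}$ to the input $\ket{0}\ket{0}\ket{j}$, giving $\ket{\Psi_\rho}_{\sfA\sfS}\ket{j}_{\sfA'}$. Similarly, the bra side becomes $\bra{\Psi_\rho}_{\sfA\sfS}\bra{i}_{\sfA'}$. The matrix element is therefore
\begin{equation}
\bigl(\bra{\Psi_\rho}\otimes\bra{i}\bigr)\bigl(I_{\sfA}\otimes\operatorname{SWAP}_{\sfS,\sfA'}\bigr)\bigl(\ket{\Psi_\rho}\otimes\ket{j}\bigr).
\end{equation}

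\textbf{Step 2.} Write the purification in Schmidt form, or simply as $\ket{\Psi_\rho}=\sum_{a,x}\psi_{a x}\ket{a}_{\sfA}\ket{x}_{\sfS}$, so that $\rho_{xy}=\sum_a\psi_{ax}\overline{\psi_{ay}}$. Applying the SWAP gives $\sum_{a,x}\psi_{ax}\ket{a}\ket{j}_{\sfS}\ket{x}_{\sfA'}$. Taking the inner product with $\sum_{b,y}\psi_{by}\ket{b}\ket{y}\ket{i}$ forces $b=a$, $y=j$ and $x=i$, which leaves
\begin{equation}
\sum_a\overline{\psi_{aj}}\psi_{ai}=\rho_{ij}.
\end{equation}
The block is therefore exactly $\rho$.

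\textbf{Step 3.} Check the remaining conditions. $W$ is a product of unitaries, hence unitary. It acts on $s+n+n$ qubits, of which $n+s$ are ancillas ($\sfA$ and $\sfS$). This is precisely a $(1,n+s,0)$-block-encoding according to Definition~\ref{def:block-encoding}.

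The only delicate point is the index bookkeeping in Step 2, namely making sure the conjugation yields $\rho$ and not $\rho^{T}$. Writing the SWAP action explicitly on basis states, as above, resolves this.
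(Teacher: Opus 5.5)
Your proof is correct: the computation gives $\sum_a \psi_{ai}\overline{\psi_{aj}}=\rho_{ij}$, so the block is $\rho$ rather than $\rho^{T}$, and unitarity together with the $n+s$ ancilla count completes the claim. The paper does not prove this lemma itself but cites Lemma~45 of Ref.~\cite{gilyen2019quantum}, whose proof is the same direct evaluation of the top-left block; it is organized there via a Schmidt decomposition and the identity $(\bra{\psi}\otimes I)\operatorname{SWAP}(\ket{\phi}\otimes I)=\ket{\phi}\bra{\psi}$, which is equivalent to your computational-basis bookkeeping.
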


\subsubsection{Linear combination of unitaries} 
When block encodings are available for individual terms of an operator, the linear-combination-of-unitaries (LCU) technique allows them to be combined into a block encoding of their linear combination. 
 
\begin{definition}  
    [State preparation pair]  
    \label{def:preparation-pair}  
    Let $y \in \mathbb{C}^m$ and $\|y\|_1 \le \beta$. The pair of unitaries $(P_L, P_R)$  
is called a $(\beta,r,\epsilon)$-state-preparation-pair if  
$P_L \lvert 0 \rangle^{\otimes r} = \sum_{j=0}^{2^r-1} b_j \lvert j \rangle$  
and  
$P_R \lvert 0 \rangle^{\otimes r} = \sum_{j=0}^{2^r-1} d_j \lvert j \rangle$  
such that  
$\sum_{j=0}^{m-1} \left| \beta(b_j^* d_j) - y_j \right| \le \epsilon$  
and for all $j \in \{m, \ldots, 2^r - 1\}$ we have  
$b_j^* d_j = 0.$  
\end{definition}  
  
\begin{lemma}  
    [Linear combination of block encodings, cf.~Lemma~52 in Ref.~\cite{gilyen2019quantum}]  
    \label{lem:lc-BE}  
    Let $A = \sum_{j=0}^{m-1} y_j A_j$ be an $n$-qubit operator and $\epsilon \in \mathbb{R}_+$.  
Suppose that $(P_L, P_R)$ is a $(\beta,r,\epsilon_1)$-state-preparation-pair for $y$,  
$W = \sum_{j=0}^{m-1} \lvert j \rangle \langle j \rvert \otimes U_j + \left( I - \sum_{j=0}^{m-1} \lvert j \rangle \langle j \rvert \right)\otimes I_{\sf A} \otimes I_{\sf S}$  
is an $r+s+n$ qubit unitary such that for all $j \in\{ 0,\ldots,m-1\}$ we have that  
$U_j$ is an $(\alpha,s,\epsilon_2)$-block-encoding of $A_j$.  
Then we can implement an $(\alpha\beta, s+r, \alpha\epsilon_1+\alpha\beta\epsilon_2)$-block-encoding of $A$,  
with a single use of $W$, $P_R$ and $P_L^\dagger$.  
\end{lemma}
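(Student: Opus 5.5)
This is Lemma~52 of Ref.~\cite{gilyen2019quantum}, and I will re-derive it through the standard LCU sandwich. The candidate block encoding is
\begin{equation}
\widetilde W:=\left(P_L^\dagger\otimes I_{\sf A}\otimes I_{\sf S}\right)W\left(P_R\otimes I_{\sf A}\otimes I_{\sf S}\right),
\end{equation}
acting on the $r$-qubit index register, the $s$-qubit ancilla register of the $U_j$, and the $n$-qubit system. By construction it uses $W$, $P_R$ and $P_L^\dagger$ exactly once, and its ancilla count is $s+r$. The proof then amounts to computing its top-left block and bounding the error.

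The first step is to compute $(\bra{0}^{\otimes r}\otimes\bra{0}^{\otimes s}\otimes I)\widetilde W(\ket{0}^{\otimes r}\otimes\ket{0}^{\otimes s}\otimes I)$. Inserting $P_R\ket{0}^{\otimes r}=\sum_j d_j\ket j$ and $\bra{0}^{\otimes r}P_L^\dagger=\sum_j b_j^*\bra j$, and using that $W$ is block diagonal in the index register, this block equals
\begin{equation}
\sum_{j=0}^{m-1}b_j^*d_j\,\widetilde A_j+\sum_{j\geq m}b_j^*d_j\,I,\qquad \widetilde A_j:=(\bra{0}^{\otimes s}\otimes I)U_j(\ket{0}^{\otimes s}\otimes I).
\end{equation}
The second sum vanishes because $b_j^*d_j=0$ for $j\geq m$ by Definition~\ref{def:preparation-pair}.

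The next step is to multiply this block by $\alpha\beta$ and compare it with $A=\sum_j y_jA_j$. I add and subtract $\sum_j\beta b_j^*d_j A_j$, which splits the error into two pieces:
\begin{equation}
\Bigl\|A-\alpha\beta\sum_j b_j^*d_j\widetilde A_j\Bigr\|_\infty\le\Bigl\|\sum_j(y_j-\beta b_j^*d_j)A_j\Bigr\|_\infty+\beta\sum_j|b_j^*d_j|\,\bigl\|A_j-\alpha\widetilde A_j\bigr\|_\infty .
\end{equation}
The second term is at most $\beta\epsilon_2\sum_j|b_j^*d_j|\le\beta\epsilon_2$, using Cauchy--Schwarz and the normalization of the two prepared states, $\sum|b_j^*d_j|\le\|b\|_2\|d\|_2=1$. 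Multiplying that contribution by the prefactor $\alpha$ would give $\alpha\beta\epsilon_2$, which matches the target form.

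For the first term I need $\|A_j\|_\infty$, and this is the main obstacle. The natural bound comes from $\|\alpha\widetilde A_j\|_\infty\le\alpha$, so $\|A_j\|_\infty\le\alpha+\epsilon_2$. That gives $(\alpha+\epsilon_2)\epsilon_1$ rather than the claimed $\alpha\epsilon_1$. The fix I would try first is the convention used in Ref.~\cite{gilyen2019quantum}: compare against $\alpha\widetilde A_j$ instead of $A_j$ in the coefficient-mismatch term. Concretely, I would write $A-\alpha\beta\sum_j b_j^*d_j\widetilde A_j=\sum_j y_j(A_j-\alpha\widetilde A_j)+\alpha\sum_j(y_j-\beta b_j^*d_j)\widetilde A_j$. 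The second piece is bounded by $\alpha\epsilon_1$ because $\|\widetilde A_j\|_\infty\le1$. The first piece is bounded by $\|y\|_1\epsilon_2\le\beta\epsilon_2$.

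Combining the pieces, the total error is at most $\alpha\epsilon_1+\beta\epsilon_2$. This is no larger than the stated $\alpha\epsilon_1+\alpha\beta\epsilon_2$ provided $\alpha\ge1$, which holds whenever $U_j$ is a genuine block encoding in the normalization used here. If $\alpha<1$ must also be covered, I would instead state the error as $\alpha\epsilon_1+\beta\epsilon_2$, or rescale so that $\alpha\ge1$; either choice suffices for the later LCU application.
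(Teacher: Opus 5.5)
Your argument is correct and is the standard one. The paper gives no proof of its own and simply defers to Lemma~52 of Ref.~\cite{gilyen2019quantum}. That argument is exactly your sandwich $\widetilde W=(P_L^\dagger\otimes I)W(P_R\otimes I)$, followed by the add-and-subtract of $\alpha\sum_j y_j\widetilde A_j$ that you eventually use. Your first decomposition can simply be discarded. So can the sentence about ``multiplying by the prefactor $\alpha$'': that term is already at most $\beta\epsilon_2$, and there is no extra factor of $\alpha$ to multiply in.

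Your caveat about $\alpha\geq 1$ is genuine, but the justification you give for it is wrong. The definition of a block encoding allows any $\alpha>0$; for example, $U_j=I$ is an exact $(1/10,s,0)$ block encoding of $\tfrac{1}{10}I$. So $\alpha\geq1$ is not automatic.

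Worse, for $\alpha<1$ and $\epsilon_2>0$ the stated error $\alpha\epsilon_1+\alpha\beta\epsilon_2$ is false, not merely unproved. Take $m=1$, $y_0=1$, a trivial state-preparation pair ($\beta=1$, $\epsilon_1=0$), $U_0=I$, $\alpha=1/10$, $A_0=\tfrac{3}{20}I$ and $\epsilon_2=\tfrac{1}{20}$. The lemma would then promise a unitary whose top-left block $B$ satisfies $\norm{\tfrac32 I-B}_\infty\le\tfrac{1}{20}$. This is impossible, since $\norm{B}_\infty\le1$ forces $\norm{\tfrac32 I-B}_\infty\ge\tfrac12$.

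So the correct general statement is yours, with error $\alpha\epsilon_1+\beta\epsilon_2$; the stated form holds only when $\alpha\geq1$ or $\epsilon_2=0$. The paper's only use of the lemma, in the proof of Theorem~\ref{thm:block-encoding-PA}, is unaffected. There the $U_i$ are exact $(1,s,0)$ block encodings and the state-preparation pair is exact, so every error term vanishes.
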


\subsection{Detailed Procedure}\label{app:be:procedure}
The polynomial approximation moment can be written as
\begin{align}
\chi_{\mathrm{PA}}^{(k)}(\lambda)
=
\sum_{i=1}^{\lambda}
\binom{\lambda}{i}(-1)^{i-1}
c_i p_{\mathrm{sym}}^{(k_i)}\,
\Tr_{\mathrm{aux}}
\!\left[
\rho_{\mathrm{sym}}^{(k_i)}
\right].
\label{eq:block-direct-decomposition}
\end{align}

For convenience, define
\begin{equation}
\sigma_i
:=
\Tr_{\mathrm{aux}}
\!\left[
\rho_{\mathrm{sym}}^{(k_i)}
\right].
\end{equation}
Then
\begin{equation}
\chi_{\mathrm{PA}}^{(k)}(\lambda)
=
\sum_{i=1}^{\lambda}
\binom{\lambda}{i}(-1)^{i-1}
c_i p_{\mathrm{sym}}^{(k_i)} \sigma_i.
\end{equation}

Our construction proceeds in two steps for each term.

First, we estimate $p_{\mathrm{sym}}^{(k_i)}$ and denote the resulting
estimate by $\widehat p_{\mathrm{sym}}^{(k_i)}$.

Second, assume access to $U_\rho$ which denotes an efficient circuit that prepares a purification of 
$\rho$, satisfying 
\begin{equation} 
    \left(U_\rho\right)_{{\sf SA}}
    \left( 
    \ket{0}_{\sf S}^{\otimes n} 
    \ket{0}_{\sf A}^{\otimes s_\rho} 
    \right) 
    = 
    \ket{\Psi_\rho}_{{\sf SA}}, 
\end{equation} 
with 
\begin{equation} 
    \rho 
    = 
    \Tr_{\sf A} 
    \left[ 
    \ket{\Psi_\rho}\!\bra{\Psi_\rho}_{{\sf SA}} 
    \right], 
\end{equation} 
where $n$ denotes the number of qubits required to represent a single copy of the system, $s_\rho=\left\lceil\log_2\operatorname{rank}(\rho)\right\rceil\leq n$ denotes the number of ancillary qubits required to purify $\rho$, ${\sf S}$ denotes the system register and ${\sf A}$ denotes the purification ancilla register. 

Using the construction in
Corollary~\ref{purification-of-channel},
we obtain a purification circuit for a state that approximates the
normalized symmetric state
$\rho_{\mathrm{sym}}^{(k_i)}$. By treating the $a_i$ system registers traced out in the definition of
$\sigma_i$ as part of the purification ancilla, the same circuit also
provides an approximate purification of a state approximating $\sigma_i$.

Finally, we combine the resulting block encodings using the LCU technique
with coefficients
\begin{equation}
\widehat y_i
:=
\binom{\lambda}{i}(-1)^{i-1}
c_i\widehat p_{\mathrm{sym}}^{(k_i)},
\end{equation}
thereby obtaining an approximate block encoding of
\begin{equation}
\widehat{\chi}_{\mathrm{PA}}^{(k)}(\lambda)
:=
\sum_{i=1}^{\lambda}
\widehat y_i\sigma_i.
\end{equation}

\subsection{Complexity analysis}

We first analyze the resources required to construct a block encoding
of each normalized symmetric-sector operator $\sigma_i$.

\begin{lemma}
[Complexity of the block encoding of each symmetric-sector term]
\label{lem:block-encoding-single-term}
Let $\rho$ be an $n$-qubit density operator, and suppose that
$U_\rho$ efficiently prepares a purification of $\rho$.
Fix $1\leq i\leq\lambda$, and let
$0<\varepsilon_i^{(\sigma)}<1$. Choose
\begin{equation}
m_i
:=
\left\lceil
\frac{
5k_i^2(k_i-1)
}{
\left(\varepsilon_i^{(\sigma)}\right)^2
}
\right\rceil.
\label{eq:block-single-term-state-copies}
\end{equation}
Then one can construct a unitary $U_i$ that is a
$
\left(
1,\,
a_{\mathrm{BE}}^i,\,
\varepsilon_i^{(\sigma)}
\right)
$
block encoding of
$
\sigma_i
=
\Tr_{\mathrm{aux}}
\left[
\rho_{\mathrm{sym}}^{(k_i)}
\right]$,
acting on the first $k$ system registers. The number of block-encoding ancilla qubits satisfies
$
a_{\mathrm{BE}}^i
=
\operatorname{poly}(m_i,n)
$,
and the circuit complexity is
$
\operatorname{poly}\left(
m_i,
n,\log(1/\varepsilon_i^{(\sigma)})
\right)
$
Each use of $U_i$ makes $m_i$ uses of $U_\rho$ and $m_i$ uses of
$U_\rho^\dagger$.
\end{lemma}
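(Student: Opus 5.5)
The plan is to compose three ingredients already in hand: the purified channel of Corollary~\ref{purification-of-channel}, the $m_i$-fold tensor power of $U_\rho$, and the purification-to-block-encoding construction of Lemma~\ref{Block encoding from a purification circuit}.

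\textbf{Step 1: build an approximate purification of $\sigma_i$.}
Apply $U_\rho^{\otimes m_i}$ to $m_i$ fresh system-plus-ancilla registers. This prepares $m_i$ copies of $\ket{\Psi_\rho}$, whose reduced state on the $m_i$ system registers is $\rho^{\otimes m_i}$. It uses $m_i$ calls to $U_\rho$.

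Next, on the system registers, apply the unitary $U_{m_i,k_i}$ of Corollary~\ref{purification-of-channel} with accuracy parameter $\varepsilon_i^{(\sigma)}$, together with its $s=\operatorname{poly}(m_i,n)$ ancillas in $\ket{0^s}$. Corollary~\ref{purification-of-channel} is stated for input $\rho^{\otimes m_i}$, and the input here is a purification of it. Because $U_{m_i,k_i}$ does not act on the purifying registers, its output reduced to the first $k_i$ registers is exactly $\Tr_{\mathrm{aux}}[U_{m_i,k_i}(\rho^{\otimes m_i}\otimes\ketbra{0^s}{0^s})U_{m_i,k_i}^\dagger]$.

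The choice of $m_i$ in Eq.~\eqref{eq:block-single-term-state-copies} meets the hypothesis $m_i\geq 5k_i^2(k_i-1)/(\varepsilon_i^{(\sigma)})^2$ of Theorem~\ref{thm:efficient-projection}. Hence this reduced state $\widetilde\rho^{(k_i)}$ satisfies $\|\widetilde\rho^{(k_i)}-\rho_{\mathrm{sym}}^{(k_i)}\|_1\leq\varepsilon_i^{(\sigma)}$. We need $k_i<m_i$, which holds since $\varepsilon_i^{(\sigma)}<1$ makes $m_i\geq 5k_i^2(k_i-1)>k_i$ whenever $k_i\geq2$.

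Now relabel. We keep the first $k$ registers as the system and regard everything else as the purifying ancilla: the remaining $a_i$ registers of the first $k_i$, the last $m_i-k_i$ system registers, the $\rho$-purification ancillas, and the $s$ channel ancillas. Call the resulting state preparation $V_i:=U_{m_i,k_i}U_\rho^{\otimes m_i}$. It prepares a purification of $\widetilde\sigma_i:=\Tr_{\mathrm{aux}}[\widetilde\rho^{(k_i)}]$. Since the partial trace is trace-norm contractive, $\|\widetilde\sigma_i-\sigma_i\|_1\leq\varepsilon_i^{(\sigma)}$.

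\textbf{Step 2: convert to a block encoding.}
Apply Lemma~\ref{Block encoding from a purification circuit} to $V_i$, with a fresh $kn$-qubit register. This gives $U_i=(V_i^\dagger\otimes I)(I\otimes\mathrm{SWAP})(V_i\otimes I)$, an exact $(1,\cdot,0)$ block encoding of $\widetilde\sigma_i$. Then
\[
\|\sigma_i-\widetilde\sigma_i\|_\infty\leq\|\sigma_i-\widetilde\sigma_i\|_1\leq\varepsilon_i^{(\sigma)},
\]
so $U_i$ is a $(1,a_{\mathrm{BE}}^i,\varepsilon_i^{(\sigma)})$ block encoding of $\sigma_i$.

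Each use of $U_i$ calls $U_\rho^{\otimes m_i}$ once and its inverse once. That is $m_i$ uses of $U_\rho$ and $m_i$ uses of $U_\rho^\dagger$, as claimed.

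\textbf{Step 3: count resources.}
The ancilla count $a_{\mathrm{BE}}^i$ is the sum of:
\begin{itemize}
\item $(m_i-k)n$ qubits for the traced system registers;
\item at most $m_in$ qubits for the $\rho$-purification ancillas, since $s_\rho\leq n$;
\item $s=\operatorname{poly}(m_i,n)$ qubits for the channel ancillas.
\end{itemize}
This total is $\operatorname{poly}(m_i,n)$. The SWAP register is the system register the encoding acts on, in the convention of Lemma~\ref{Block encoding from a purification circuit}.

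The gate count is twice that of $U_{m_i,k_i}$, which is $\operatorname{poly}(m_i,n,\log(1/\varepsilon_i^{(\sigma)}))$, plus $kn$ SWAPs. Counting the $U_\rho$ calls as queries, the circuit complexity is as stated.

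\textbf{Main obstacle.}
The delicate point is that Corollary~\ref{purification-of-channel} is stated for mixed input $\rho^{\otimes m_i}$, whereas here we feed it a purification. The step that needs care is the argument that acting only on the system registers leaves the output marginal identical. In particular, the measurement-feedback and classical-randomness parts of $\mathcal M_{m,k}$ must be coherently purified so that the whole circuit is a single unitary. This is fine because Corollary~\ref{purification-of-channel} already supplies such a unitary. One must also check that the Schur-transform error is already absorbed into $\varepsilon_i^{(\sigma)}$, which Theorem~\ref{thm:efficient-projection} does by allotting $\epsilon/2$ to it.
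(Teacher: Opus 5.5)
Your proposal is correct and follows essentially the same route as the paper: compose $U_\rho^{\otimes m_i}$ with the purified channel $U_{m_i,k_i}$, treat every register except the first $k$ outputs as purification environment, apply the purification-to-block-encoding lemma to get an exact block encoding of $\widetilde\sigma_i$, and conclude via $\|\widetilde\sigma_i-\sigma_i\|_\infty\leq\|\widetilde\sigma_i-\sigma_i\|_1\leq\varepsilon_i^{(\sigma)}$ using contractivity of the partial trace. The only slip is in your ancilla tally, which is cosmetic: you omit the original $kn$-qubit output register of $V_i$, which also becomes a block-encoding ancilla once the SWAP moves $\widetilde\sigma_i$ onto the fresh register, as in the paper. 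This does not affect the $\operatorname{poly}(m_i,n)$ bound.
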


\begin{proof}
By Corollary~\ref{purification-of-channel}, the symmetric-state
preparation channel $\mathcal M_{m_i,k_i}$ admits a purified
implementation $U_{m_i,k_i}$ whose output
\begin{equation}
\widetilde{\rho}_{\mathrm{sym}}^{(k_i)}
:=
\mathcal M_{m_i,k_i}
\left(
\rho^{\otimes m_i}
\right)
\end{equation}
satisfies
\begin{equation}
\left\|
\widetilde{\rho}_{\mathrm{sym}}^{(k_i)}
-
\rho_{\mathrm{sym}}^{(k_i)}
\right\|_1
\leq
\varepsilon_i^{(\sigma)}.
\label{eq:block-single-term-full-state-error}
\end{equation}
The input $\rho^{\otimes m_i}$ is coherently prepared by applying
$U_\rho^{\otimes m_i}$ to $m_i$ system registers and their
purification ancillas. Composing this preparation with the purified
implementation of the symmetric-state preparation channel therefore
gives a purification circuit for
$\widetilde{\rho}_{\mathrm{sym}}^{(k_i)}$.

We next regard the last $a_i$ output system registers as part
of the purification environment. The same circuit then prepares a
purification of
\begin{equation}
\widetilde{\sigma}_i
:=
\Tr_{\mathrm{aux}}
\left[
\widetilde{\rho}_{\mathrm{sym}}^{(k_i)}
\right],
\end{equation}
where the first $k$ output system registers are retained.
Since the trace norm is contractive under the partial trace,
Eq.~\eqref{eq:block-single-term-full-state-error} gives
\begin{align}
\left\|
\widetilde{\sigma}_i-\sigma_i
\right\|_1
&=
\left\|
\Tr_{\mathrm{aux}}
\left[
\widetilde{\rho}_{\mathrm{sym}}^{(k_i)}
-
\rho_{\mathrm{sym}}^{(k_i)}
\right]
\right\|_1
\nonumber\\
&\leq
\left\|
\widetilde{\rho}_{\mathrm{sym}}^{(k_i)}
-
\rho_{\mathrm{sym}}^{(k_i)}
\right\|_1
\nonumber\\
&\leq
\varepsilon_i^{(\sigma)}.
\label{eq:block-single-term-reduced-state-error}
\end{align}

Applying
Lemma~\ref{Block encoding from a purification circuit}
to this purification circuit gives an exact
$(1,a_{\mathrm{BE}}^i,0)$ block encoding of
$\widetilde{\sigma}_i$. Moreover,
\begin{equation}
\left\|
\widetilde{\sigma}_i-\sigma_i
\right\|_\infty
\leq
\left\|
\widetilde{\sigma}_i-\sigma_i
\right\|_1
\leq
\varepsilon_i^{(\sigma)}.
\end{equation}
Therefore, the same unitary is a
\begin{equation}
\left(
1,\,
a_{\mathrm{BE}}^i,\,
\varepsilon_i^{(\sigma)}
\right)
\end{equation}
block encoding of $\sigma_i$.

It remains to count the  size of ancilla systems and circuit complexity.
By Corollary~\ref{purification-of-channel}, the purified
symmetric-state preparation channel uses
$\operatorname{poly}(m_i,n)$ ancilla qubits and has circuit
complexity
\begin{equation}
\operatorname{poly}\left(
m_i,\,
n,\,
\log\frac{1}{\varepsilon_i^{(\sigma)}}
\right).
\end{equation}
The $a_i$ system registers traced out in the definition of
$\widetilde{\sigma}_i$ are included among the purification ancillas.
Applying Lemma~\ref{Block encoding from a purification circuit}
requires an additional $kn$-qubit signal register on which
$\widetilde{\sigma}_i$ is encoded. The registers used by the
purification circuit, including its original $kn$-qubit output
register and its purification environment, serve as the
block-encoding ancillas.
Since $k\leq k_i<m_i$, the total number of block-encoding ancilla
qubits remains
\begin{equation}
a_{\mathrm{BE}}^i
=
\operatorname{poly}(m_i,n).
\end{equation}

Finally, one use of the block-encoding unitary consists of one use of
the purification circuit, one use of its inverse, and a SWAP operation
on the $k$ retained system registers. The purification circuit
contains $U_\rho^{\otimes m_i}$, while its inverse contains
$U_\rho^{\dagger\otimes m_i}$. Hence, each use of $U_i$ makes
$m_i$ uses of $U_\rho$ and $m_i$ uses of $U_\rho^\dagger$, and its
circuit complexity is
\begin{equation}
\operatorname{poly}\left(
m_i,\,
n,\,
\log\frac{1}{\varepsilon_i^{(\sigma)}}
\right).
\end{equation}
This completes the proof.
\end{proof}

Combining all the ingredients, we obtain the complexity of the
complete block-encoding construction.

\begin{theorem}
[Complexity of the block encoding of the Scrooge $k$-th moment]
\label{thm:block-encoding-PA}
Let $\rho$ be an $n$-qubit density operator, let $k\geq2$, and suppose
that $12(k-1)^2\leq\M$. Let $0<\epsilon_{\mathrm t}<1$ satisfy
\begin{equation}
\epsilon_{\mathrm t}
\geq
2C_{\add}
\exp\left[
-\frac{\M}{24(k-1)^2}
\right],
\label{eq:block-error-threshold}
\end{equation}
where $C_{\add}=e^{11/12}$. 
Assume the query access to state-preparation circuit $U_\rho$ and $U_\rho^\dagger$.
Let $\delta$ be the failure probability of the whole construction. Our construction takes the following two steps.

\begin{itemize}
    \item We first estimate all
$p_{\mathrm{sym}}^{(k_i)}$ precisely using Lemma~\ref{lem:approximate-schur-mom} and
\begin{align}
N_{\mathrm{prob}}^{\mathrm{suff}}
=
\mathcal O\Bigg(
&k^3
\log^4\left(
\frac{2C_{\add}}{\epsilon_{\mathrm t}}
\right)
\left[
\log\frac1\delta
+
\log\log\left(
\frac{2C_{\add}}{\epsilon_{\mathrm t}}
\right)
\right]
\left(
\frac{2C_{\add}}{\epsilon_{\mathrm t}}
\right)^5
\Bigg).
\label{eq:block-probability-copy-complexity}
\end{align}
number of copies of $\rho$.
\item We then proceed with the construction outlined in Section~\ref{app:be:procedure}. The procedure 
constructs an
$(\alpha,a_{\mathrm{BE}},\epsilon_{\mathrm t})$ block encoding of
$\chi_{\mathrm{Sc}}^{(k)}(\rho)$, where
\begin{equation}
\alpha
=
\mathcal O\left[
\left(
\frac{2C_{\add}}{\epsilon_{\mathrm t}}
\right)^2
\right].
\label{eq:block-normalization-complexity}
\end{equation}
The number of block-encoding ancilla qubits satisfies
\begin{equation}
a_{\mathrm{BE}}
=
\operatorname{poly}\left(
n,k,\frac1{\epsilon_{\mathrm t}}
\right).
\label{eq:block-final-ancilla-complexity}
\end{equation}
The query complexity of $U_\rho$ and $U_\rho^\dagger$ is at most
\begin{equation}
Q_{\rho}^{\mathrm{BE}}
=
\operatorname{poly}\left(
k,\frac1{\epsilon_{\mathrm t}}
\right)
\label{eq:block-final-purification-query-complexity}
\end{equation}
\end{itemize}

The total circuit complexity of the entire process is
\begin{equation}
\operatorname{poly}\left(
n,k,\frac1{\epsilon_{\mathrm t}},
\log\frac1\delta
\right).
\label{eq:block-final-circuit-complexity}
\end{equation}

\end{theorem}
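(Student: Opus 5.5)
We follow the same error-budget template as Theorem~\ref{Sample and circuit complexities of operator expectation value estimation algorithm}. The total error $\epsilon_{\mathrm t}$ is split into three parts of sizes $\epsilon_{\mathrm t}/2$, $\epsilon_{\mathrm t}/4$ and $\epsilon_{\mathrm t}/4$.

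\textbf{Polynomial approximation.} Choose $\lambda$ as in Eq.~\eqref{eq:bound-lambda}. Corollary~\ref{coro:minimal-additive-error} then gives $\|\chi_{\mathrm{Sc}}^{(k)}-\chi_{\mathrm{PA}}^{(k)}(\lambda)\|_\infty\le\|\cdot\|_1\le\epsilon_{\mathrm t}/2$.

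\textbf{Estimating the coefficients.} Estimate each $p_{\mathrm{sym}}^{(k_i)}$ with Lemma~\ref{lem:approximate-schur-mom}. Use relative accuracy $\varepsilon_i$ chosen as in Eq.~\eqref{eq:relative-statistical-error-choice}, failure probability $\delta/\lambda$, and Schur precision $\epsilon_{\mathrm{Sch}}^i$ chosen as in Eq.~\eqref{eq:schur-precision-choice}. A union bound (Lemma~\ref{lem:union-bound}) makes all $\lambda$ estimates good simultaneously with probability at least $1-\delta$.

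On that event, use $\|\sigma_i\|_\infty\le1$, $c_ip_{\mathrm{sym}}^{(k_i)}\le e^{k_\lambda(k_\lambda-1)/(2\M)}$ (Lemma~\ref{lem:weighted-symmetric-probability}) and $c_i\le k_i!$. This gives
\[
\Bigl\|\textstyle\sum_i(\widehat y_i-y_i)\sigma_i\Bigr\|_\infty\le\sum_i\binom{\lambda}{i}c_i\bigl(\varepsilon_ip_{\mathrm{sym}}^{(k_i)}+\epsilon_{\mathrm{Sch}}^i\bigr),
\]
which is at most $\epsilon_{\mathrm t}/8+\epsilon_{\mathrm t}/16$ after halving the constants. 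The copy count is then the same computation as in Eq.~\eqref{eq:sample-complexity}. Substituting $\lambda\le2\log(2C_{\add}/\epsilon_{\mathrm t})$ and $k_\lambda(k_\lambda-1)/\M\le\frac16\log(2C_{\add}/\epsilon_{\mathrm t})+\frac1{12}$ yields Eq.~\eqref{eq:block-probability-copy-complexity}.

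\textbf{Block encoding of $\sum_i\widehat y_i\sigma_i$.} For each $i$, Lemma~\ref{lem:block-encoding-single-term} with accuracy $\varepsilon_i^{(\sigma)}$ gives a $(1,a_{\mathrm{BE}}^i,\varepsilon_i^{(\sigma)})$ block encoding $U_i$. Pad all of them to the common ancilla count $\max_ia_{\mathrm{BE}}^i$, and add a controlled sign for the $(-1)^{i-1}$ factors.

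Assemble $W=\sum_i\ketbra{i}{i}\otimes U_i$ and take an exact state-preparation pair for $|\widehat y_i|$ with $\beta=\alpha=\sum_i|\widehat y_i|$, using $\lceil\log_2\lambda\rceil$ index qubits. This pair is classically computable from the estimates. Lemma~\ref{lem:lc-BE} then produces an $(\alpha,\cdot,\alpha\max_i\varepsilon_i^{(\sigma)})$ block encoding.

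Set $\varepsilon_i^{(\sigma)}=\epsilon_{\mathrm t}/(8\alpha_{\max})$, where $\alpha_{\max}$ is the upper bound on $\alpha$ from the next step. Then all contributions sum to at most $\epsilon_{\mathrm t}$.

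The resource counts follow directly. Lemma~\ref{lem:block-encoding-single-term} needs $m_i=\lceil5k_i^2(k_i-1)/(\varepsilon_i^{(\sigma)})^2\rceil=\operatorname{poly}(k,1/\epsilon_{\mathrm t})$. Since $k_i\le k_\lambda=\mathcal O(k\log(1/\epsilon_{\mathrm t}))$, summing $2m_i$ over $i\le\lambda$ gives a query count of $\operatorname{poly}(k,1/\epsilon_{\mathrm t})$. The ancilla count is $\operatorname{poly}(m_i,n)$ plus the $\lceil\log_2\lambda\rceil$ index qubits, i.e. $\operatorname{poly}(n,k,1/\epsilon_{\mathrm t})$. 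The gate counts follow from Corollary~\ref{purification-of-channel} and Lemma~\ref{lem:time-schur}, plus the estimation phase.

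\textbf{Main obstacle: bounding $\alpha$.} On the good event,
\[
\alpha\le(1+\max_i\varepsilon_i)\sum_i\binom{\lambda}{i}c_ip_{\mathrm{sym}}^{(k_i)}+\sum_i\binom{\lambda}{i}c_i\epsilon_{\mathrm{Sch}}^i\le2\cdot2^\lambda e^{k_\lambda(k_\lambda-1)/(2\M)}+1 .
\]
With $\lambda\le2\log(2C_{\add}/\epsilon_{\mathrm t})$ we get $2^\lambda\le(2C_{\add}/\epsilon_{\mathrm t})^{2\log2}$, and the exponential factor contributes $(2C_{\add}/\epsilon_{\mathrm t})^{1/12}$. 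The total exponent is $2\log2+1/12<2$, which gives Eq.~\eqref{eq:block-normalization-complexity}.

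The delicate point is circularity. $\varepsilon_i^{(\sigma)}$ must be fixed using the a priori bound $\alpha_{\max}$, not the random $\alpha$, so that $m_i$ stays deterministic and polynomial. It must also be checked that $\epsilon_{\mathrm{Sch}}^i$ is small enough to keep the $+1$ term harmless. Both are handled by the explicit choices above.
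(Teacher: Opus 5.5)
Your proposal is correct and follows the paper's proof essentially step for step: the same $\epsilon_{\mathrm t}/2$ budget for the polynomial approximation, median-of-means estimation of each $p_{\mathrm{sym}}^{(k_i)}$ with failure probability $\delta/\lambda$ and a union bound, per-term purification block encodings from Lemma~\ref{lem:block-encoding-single-term} combined by an LCU with an exact state-preparation pair, and the same bound on $\alpha$ via Lemma~\ref{lem:weighted-symmetric-probability} with exponent $2\log2+1/12<2$. The differences are only bookkeeping: the paper treats the LCU as an exact block encoding of $\sum_i\widehat y_i\widetilde\sigma_i$ and weights $\widetilde\sigma_i-\sigma_i$ by $c_ip_{\mathrm{sym}}^{(k_i)}$ rather than by $\alpha$, so it needs no a priori $\alpha_{\max}$ and the circularity you flag never arises; it also folds the Schur-transform error into the relative accuracy by taking $\epsilon_{\mathrm{Sch}}\le\varepsilon_i^{(p)}/(2k_i!)$ and using $p_{\mathrm{sym}}^{(k_i)}\ge1/k_i!$, whereas you keep it as a separate additive term, and both routes give the stated bounds.
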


\begin{proof}
We allocate an error budget of $\epsilon_{\mathrm t}/2$ to the
polynomial approximation and the remaining $\epsilon_{\mathrm t}/2$
to the block-encoding construction.

By Corollary~\ref{coro:minimal-additive-error}, it suffices to choose
\begin{equation}
\lambda
:=
\max\left\{
\left\lfloor
2\log\left(
\frac{2C_{\add}}{\epsilon_{\mathrm t}}
\right)
\right\rfloor,
1
\right\}
\label{eq:block-lambda-choice}
\end{equation}
which guarantees that
\begin{equation}
\left\|
\chi_{\mathrm{Sc}}^{(k)}(\rho)
-
\chi_{\mathrm{PA}}^{(k)}(\lambda)
\right\|_1
\leq
\frac{\epsilon_{\mathrm t}}2.
\label{eq:block-polynomial-error}
\end{equation}

For every $1\leq i\leq\lambda$, choose
\begin{equation}
\varepsilon_i^{(p)}
=
\varepsilon_i^{(\sigma)}
:=
\frac{\epsilon_{\mathrm t}}
{4(2^\lambda-1)}
\exp\left[
-\frac{k_\lambda(k_\lambda-1)}{2\M}
\right],
\qquad
\delta_i^{(p)}
:=
\frac{\delta}{\lambda},
\label{eq:block-individual-parameters}
\end{equation}
where $k_\lambda=\lambda(k-1)+1$.

We first estimate $p_{\mathrm{sym}}^{(k_i)}$.
Apply Lemma~\ref{lem:approximate-schur-mom} with statistical relative
accuracy $\varepsilon_i^{(p)}/2$, failure probability
$\delta_i^{(p)}$, and Schur-transform precision
\begin{equation}
\epsilon_{\mathrm{Sch}}^{i,(p)}
:=
\min\left\{
\frac{\varepsilon_i^{(p)}}{2k_i!},
\frac{1}{(k_i!)^2}
\right\}.
\label{eq:block-probability-schur-precision}
\end{equation}
Lemma~\ref{lem:approximate-schur-mom} then gives
\begin{equation}
\Pr\left[
\left|
\widehat p_{\mathrm{sym}}^{(k_i)}
-
p_{\mathrm{sym}}^{(k_i)}
\right|
>
\varepsilon_i^{(p)}
p_{\mathrm{sym}}^{(k_i)}
\right]
\leq
\delta_i^{(p)}.
\label{eq:block-probability-estimation-error}
\end{equation}
By the union bound, with probability at least $1-\delta$,
\begin{equation}
\left|
\widehat p_{\mathrm{sym}}^{(k_i)}
-
p_{\mathrm{sym}}^{(k_i)}
\right|
\leq
\varepsilon_i^{(p)}
p_{\mathrm{sym}}^{(k_i)}
\label{eq:block-simultaneous-probability-error}
\end{equation}
holds simultaneously for all $1\leq i\leq\lambda$. We condition on
this event for the remainder of the proof.

For each $i$, choose
\begin{equation}
m_i
:=
\left\lceil
\frac{
5k_i^2(k_i-1)
}{
\left(\varepsilon_i^{(\sigma)}\right)^2
}
\right\rceil.
\label{eq:block-state-preparation-copies}
\end{equation}
By Lemma~\ref{lem:block-encoding-single-term}, there exists a unitary
$U_i$ that is a
$(1,a_{\mathrm{BE}}^i,\varepsilon_i^{(\sigma)})$
block encoding of $\sigma_i$, with
$a_{\mathrm{BE}}^i=\operatorname{poly}(m_i,n)$ and circuit complexity
\[
\operatorname{poly}\left(
m_i,n,\log\frac{1}{\varepsilon_i^{(\sigma)}}
\right).
\]
More precisely, the construction in the proof of that lemma gives an
exact block encoding of a density operator
$\widetilde{\sigma}_i$ satisfying
\begin{equation}
\left\|
\widetilde{\sigma}_i-\sigma_i
\right\|_1
\leq
\varepsilon_i^{(\sigma)}.
\label{eq:block-sigma-error}
\end{equation}

Let
\begin{equation}
s
:=
\max_{1\leq i\leq\lambda}
a_{\mathrm{BE}}^i,
\qquad
r
:=
\left\lceil
\log_2\lambda
\right\rceil.
\label{eq:block-common-ancilla}
\end{equation}
By appending idle ancilla qubits, each $U_i$ can be regarded as an
exact $(1,s,0)$ block encoding of $\widetilde{\sigma}_i$.

Recall that
\begin{equation}
\widehat y_i
=
\binom{\lambda}{i}
(-1)^{i-1}
c_i
\widehat p_{\mathrm{sym}}^{(k_i)}.
\end{equation}
Define
\begin{equation}
\alpha
:=
\sum_{i=1}^{\lambda}
|\widehat y_i|
=
\sum_{i=1}^{\lambda}
\binom{\lambda}{i}
c_i
\widehat p_{\mathrm{sym}}^{(k_i)}.
\label{eq:block-lcu-normalization}
\end{equation}
An exact state-preparation pair for these coefficients is
\begin{align}
P_L\ket{0}^{\otimes r}
&=
\sum_{i=1}^{\lambda}
\sqrt{
\frac{
\binom{\lambda}{i}
c_i
\widehat p_{\mathrm{sym}}^{(k_i)}
}{
\alpha
}
}
\ket{i-1},
\\
P_R\ket{0}^{\otimes r}
&=
\sum_{i=1}^{\lambda}
(-1)^{i-1}
\sqrt{
\frac{
\binom{\lambda}{i}
c_i
\widehat p_{\mathrm{sym}}^{(k_i)}
}{
\alpha
}
}
\ket{i-1}.
\end{align}
Hence, $(P_L,P_R)$ is an $(\alpha,r,0)$-state-preparation pair for
$(\widehat y_1,\ldots,\widehat y_\lambda)$.

Applying Lemma~\ref{lem:lc-BE} gives an exact
$(\alpha,a_{\mathrm{BE}},0)$ block encoding of
\begin{equation}
\sum_{i=1}^{\lambda}
\widehat y_i\widetilde{\sigma}_i,
\end{equation}
where
\begin{equation}
a_{\mathrm{BE}}
=
s+\left\lceil\log_2\lambda\right\rceil.
\label{eq:block-total-ancilla}
\end{equation}

We next compare the encoded operator with
$\chi_{\mathrm{PA}}^{(k)}(\lambda)$. Since
$\widetilde{\sigma}_i$ is a density operator,
$\lVert\widetilde{\sigma}_i\rVert_\infty\leq1$. Therefore,
Eqs.~\eqref{eq:block-simultaneous-probability-error} and
\eqref{eq:block-sigma-error} imply
\begin{align}
&
\left\|
\sum_{i=1}^{\lambda}
\widehat y_i\widetilde{\sigma}_i
-
\chi_{\mathrm{PA}}^{(k)}(\lambda)
\right\|_\infty
\nonumber\\
&\quad\leq
\sum_{i=1}^{\lambda}
\binom{\lambda}{i}c_i
\left[
\left|
\widehat p_{\mathrm{sym}}^{(k_i)}
-
p_{\mathrm{sym}}^{(k_i)}
\right|
\left\|
\widetilde{\sigma}_i
\right\|_\infty
+
p_{\mathrm{sym}}^{(k_i)}
\left\|
\widetilde{\sigma}_i-\sigma_i
\right\|_\infty
\right]
\nonumber\\
&\quad\leq
\sum_{i=1}^{\lambda}
\binom{\lambda}{i}
c_i p_{\mathrm{sym}}^{(k_i)}
\left(
\varepsilon_i^{(p)}
+
\varepsilon_i^{(\sigma)}
\right).
\label{eq:block-total-construction-error}
\end{align}
Here we have also used
$\lVert X\rVert_\infty\leq\lVert X\rVert_1$.

By Lemma~\ref{lem:weighted-symmetric-probability},
\begin{equation}
c_i p_{\mathrm{sym}}^{(k_i)}
\leq
\exp\left[
\frac{k_i(k_i-1)}{2\M}
\right]
\leq
\exp\left[
\frac{k_\lambda(k_\lambda-1)}{2\M}
\right].
\label{eq:block-weight-bound}
\end{equation}
Using Eq.~\eqref{eq:block-individual-parameters} and
$\sum_{i=1}^{\lambda}\binom{\lambda}{i}=2^\lambda-1$, we obtain
\begin{equation}
\left\|
\sum_{i=1}^{\lambda}
\widehat y_i\widetilde{\sigma}_i
-
\chi_{\mathrm{PA}}^{(k)}(\lambda)
\right\|_\infty
\leq
\frac{\epsilon_{\mathrm t}}{2}.
\label{eq:block-polynomial-construction-error}
\end{equation}

Combining this with Eq.~\eqref{eq:block-polynomial-error} and using
$\lVert X\rVert_\infty\leq\lVert X\rVert_1$, we obtain
\begin{equation}
\left\|
\sum_{i=1}^{\lambda}
\widehat y_i\widetilde{\sigma}_i
-
\chi_{\mathrm{Sc}}^{(k)}(\rho)
\right\|_\infty
\leq
\epsilon_{\mathrm t}.
\end{equation}
Hence, the constructed unitary is an
$(\alpha,a_{\mathrm{BE}},\epsilon_{\mathrm t})$ block encoding of
$\chi_{\mathrm{Sc}}^{(k)}(\rho)$.

We now bound $\alpha$. Assuming
\eqref{eq:block-simultaneous-probability-error} holds,
\begin{align}
\alpha
&\leq
\left(
1+\max_{1\leq i\leq\lambda}\varepsilon_i^{(p)}
\right)
\sum_{i=1}^{\lambda}
\binom{\lambda}{i}
c_i p_{\mathrm{sym}}^{(k_i)}
\nonumber\\
&\leq
2(2^\lambda-1)
\exp\left[
\frac{k_\lambda(k_\lambda-1)}{2\M}
\right].
\label{eq:block-alpha-preliminary}
\end{align}
Equation~\eqref{eq:block-error-threshold} implies
\begin{equation}
\log\left(
\frac{2C_{\add}}{\epsilon_{\mathrm t}}
\right)
\leq
\frac{\M}{24(k-1)^2}.
\label{eq:block-log-upper-bound}
\end{equation}
Moreover, Eq.~\eqref{eq:block-lambda-choice} gives
\begin{equation}
\lambda
\leq
2\log\left(
\frac{2C_{\add}}{\epsilon_{\mathrm t}}
\right),
\qquad
k_\lambda
\leq
2(k-1)
\log\left(
\frac{2C_{\add}}{\epsilon_{\mathrm t}}
\right)+1.
\label{eq:block-lambda-upper-bound}
\end{equation}
It follows that
\begin{equation}
\frac{k_\lambda(k_\lambda-1)}{\M}
\leq
\frac16
\log\left(
\frac{2C_{\add}}{\epsilon_{\mathrm t}}
\right)
+
\frac1{12}.
\label{eq:block-exponential-bound}
\end{equation}
Substituting these bounds into
Eq.~\eqref{eq:block-alpha-preliminary} yields
\begin{align}
\alpha
&\leq
2e^{1/24}
\left(
\frac{2C_{\add}}{\epsilon_{\mathrm t}}
\right)^{2\log2+1/12}
\nonumber\\
&=
\mathcal O\left[
\left(
\frac{2C_{\add}}{\epsilon_{\mathrm t}}
\right)^2
\right],
\end{align}
which proves Eq.~\eqref{eq:block-normalization-complexity}.

We next bound the copies required to estimate the coefficients.
Lemma~\ref{lem:approximate-schur-mom} gives
\begin{align}
N_i^{(p)}
&=
k_i^2(k_i-1)
\left\lceil
\frac{24(e+1)}
{\left(\varepsilon_i^{(p)}\right)^2}
\right\rceil
\left[
2\left\lceil
\frac94
\ln\frac{\lambda}{\delta}
-\frac12
\right\rceil
+1
\right].
\end{align}
Consequently,
\begin{equation}
N_{\mathrm{prob}}^{\mathrm{suff}}
:=
\sum_{i=1}^{\lambda}N_i^{(p)}
=
\mathcal O\left(
\frac{
\lambda k_\lambda^3
}{
\left(\varepsilon_i^{(p)}\right)^2
}
\log\frac{\lambda}{\delta}
\right).
\label{eq:block-total-probability-copies}
\end{equation}
Using Eqs.~\eqref{eq:block-individual-parameters},
\eqref{eq:block-lambda-upper-bound}, and
\eqref{eq:block-exponential-bound}, we obtain
\begin{align}
N_{\mathrm{prob}}^{\mathrm{suff}}
=
\mathcal O\Bigg(
&k^3
\log^4\left(
\frac{2C_{\add}}{\epsilon_{\mathrm t}}
\right)
\left[
\log\frac1\delta
+
\log\log\left(
\frac{2C_{\add}}{\epsilon_{\mathrm t}}
\right)
\right]
\left(
\frac{2C_{\add}}{\epsilon_{\mathrm t}}
\right)^5
\Bigg),
\end{align}
which proves
Eq.~\eqref{eq:block-probability-copy-complexity}.

Finally, the largest $m_i$ is attained at $i=\lambda$. From
Eqs.~\eqref{eq:block-state-preparation-copies} and
\eqref{eq:block-individual-parameters},
\begin{equation}
m_\lambda
=
\mathcal O\left(
\frac{
k_\lambda^3(2^\lambda-1)^2
}{
\epsilon_{\mathrm t}^2
}
\exp\left[
\frac{k_\lambda(k_\lambda-1)}{\M}
\right]
\right)
=
\operatorname{poly}\left(
k,\frac1{\epsilon_{\mathrm t}}
\right).
\label{eq:block-largest-state-preparation-size}
\end{equation}
Lemma~\ref{lem:block-encoding-single-term} therefore gives
$s=\operatorname{poly}(m_\lambda,n)$.
Together with Eq.~\eqref{eq:block-total-ancilla}, this proves
\begin{equation}
a_{\mathrm{BE}}
=
\operatorname{poly}\left(
n,k,\frac1{\epsilon_{\mathrm t}}
\right).
\end{equation}

It remains to count the calls to the purification circuit.
For the $i$-th term, the block-encoding unitary $U_i$ consists of
one use of a purification circuit for
$\widetilde{\sigma}_i$, one use of its inverse, and a SWAP operation.
The forward purification circuit contains
$U_\rho^{\otimes m_i}$, whereas its inverse contains
$U_\rho^{\dagger\otimes m_i}$. Hence, one use of $U_i$ requires
$m_i$ uses of $U_\rho$ and $m_i$ uses of $U_\rho^\dagger$.

In the direct implementation of the controlled unitary in the LCU
construction, one controlled copy of $U_i$ is included for each
$1\leq i\leq\lambda$. Therefore, the query costs of the individual
terms add, and one use of the resulting block-encoding unitary
requires
\begin{equation}
Q_{\rho}^{\mathrm{BE}}
=
\sum_{i=1}^{\lambda}m_i
\leq
\lambda m_\lambda
\end{equation}
uses of $U_\rho$ and the same number of uses of $U_\rho^\dagger$.

Using Eq.~\eqref{eq:block-largest-state-preparation-size}, we obtain
\begin{align}
Q_{\rho}^{\mathrm{BE}}
&=
\mathcal O\left(
\frac{
\lambda k_\lambda^3(2^\lambda-1)^2
}{
\epsilon_{\mathrm t}^2
}
\exp\left[
\frac{k_\lambda(k_\lambda-1)}{\M}
\right]
\right)
\nonumber\\
&=
\mathcal O\Bigg(
k^3
\log^4\left(
\frac{2C_{\add}}{\epsilon_{\mathrm t}}
\right)
\left(
\frac{2C_{\add}}{\epsilon_{\mathrm t}}
\right)^5
\Bigg),
\end{align}
which proves
Eq.~\eqref{eq:block-final-purification-query-complexity}.

The controlled unitary $W$ appearing in
Lemma~\ref{lem:lc-BE} applies $U_i$ conditioned on the state
$\ket{i-1}$ of the index register. A direct implementation of $W$ consists of $\lambda$ controlled
block-encoding circuits and therefore has circuit complexity
polynomial in $\lambda$ and in the circuit complexities of the
unitaries $U_i$. The state-preparation circuits
$P_L$ and $P_R$ each have circuit complexity $O(\lambda)$. Moreover,
estimating the coefficients requires the number of Schur-label
measurement circuits given in
Eq.~\eqref{eq:block-total-probability-copies}. Combining the costs of
coefficient estimation, state preparation, and the single-term block
encodings, the total circuit complexity of the construction is
\begin{equation}
\operatorname{poly}\left(
n,k,\frac{1}{\epsilon_{\mathrm t}},
\log\frac{1}{\delta}
\right).
\end{equation}
\end{proof}

\end{document}